\newcommand{\tbl}{\caption}
\newtheorem{proposition}{Proposition}
\DeclareMathOperator*{\argmin}{arg\,min}
\DeclareMathOperator*{\nd}{d}
\newcommand{\cT}{\mathcal{T}}
\newcommand{\cC}{\mathcal{C}}
\newcommand{\cS}{\mathcal{S}}
\newcommand{\bR}{\mathbb{R}}
\newcommand{\bP}{\mathbb{P}}
\newcommand{\cM}{\mathcal{M}}
\newcommand{\cN}{\mathcal{N}}
\newcommand{\cX}{\mathcal{X}}
\newcommand{\cB}{\mathcal{B}}
\newcommand{\cY}{\mathcal{Y}}
\newcommand{\cZ}{\mathcal{Z}}
\newcommand{\cO}{\mathcal{O}}
\newcommand{\dsp}{\mathds{P}}
\newcommand{\nD}{\textnormal{D}}
\newcommand{\nJ}{\textnormal{J}}
\newcommand{\ID}{\mathbbm{1}}
\newcommand{\reach}{{\operatorname{reach}}}
\newcommand{\rank}{{\operatorname{rank}}}
\newcommand{\supp}{{\operatorname{supp}}}
\newcommand{\vol}{{\operatorname{vol}}}
\newcommand{\sm}{{\operatorname{sm}}}
\algrenewcommand\algorithmicrequire{\textbf{Input:}}
\algrenewcommand\algorithmicensure{\textbf{Output:}}
\title{Principal Decomposition with Nested Submanifolds}
\newcommand{\copyrightnotice}[1]{%
\begin{tikzpicture}[remember picture,overlay]
  \node[anchor=south, xshift=0pt, yshift=5pt] at (current page.south) {\fbox{\parbox{\dimexpr\textwidth-\fboxsep-\fboxrule\relax}{\footnotesize This article is \copyright{} \the\year{} by author(s) as listed above. The article is licensed under a Creative Commons Attribution (CC BY 4.0) International license (https://creativecommons.org/licenses/by/4.0/legalcode), except where otherwise indicated with respect to particular material included in the article. The article should be attributed to the author(s) identified above.}}};
\end{tikzpicture}%
}
\definecolor{backgroundcolor}{RGB}{227,237,205}
\author{ 
	\href{https://orcid.org/0000-0003-3675-1407}{\includegraphics[scale=0.06]{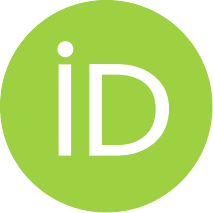}\hspace{1mm}Jiaji Su} \\
	College of Mathematics and Statistics \\
	Chongqing University\\
	Chongqing, 401331 China\\
	\texttt{su\_jiaji@cqu.edu.cn} \\
	\And
        Zhigang Yao\footnotemark \\
	Department of Statistics and Data Science \\
	National University of Singapore\\
	Singapore, 117546\\
	\texttt{zhigang.yao@nus.edu.sg} \\
}
\newtheorem{theorem}{Theorem}
\newtheorem*{remark}{Remark}
\newtheorem{assumption}{Assumption}
\newtheorem{lemma}{Lemma}
\newtheorem{definition}{Definition}
\newtheorem{corollary}{Corollary}[lemma]
\newtheorem{condition}{Condition}
\begin{document}
\maketitle
\begin{abstract}
Over the past decades, the increasing dimensionality of data has increased the need for effective data decomposition methods. Existing approaches, however, often rely on linear models or lack sufficient interpretability or flexibility.
To address this issue, we introduce a novel nonlinear decomposition technique called the principal nested submanifolds, which builds on the foundational concepts of principal component analysis. This method exploits the local geometric information of data sets by projecting samples onto a series of nested principal submanifolds with progressively decreasing dimensions.
It effectively isolates complex information within the data in a backward stepwise manner by targeting variations associated with smaller eigenvalues in local covariance matrices. Unlike previous methods, the resulting subspaces are smooth manifolds, not merely linear spaces or special shape spaces.
Validated through extensive simulation studies and applied to real-world RNA sequencing data, our approach surpasses existing models in delineating intricate nonlinear structures. It provides more flexible subspace constraints that improve the extraction of significant data components and facilitate noise reduction.
This innovative approach not only advances the non-Euclidean statistical analysis of data with low-dimensional intrinsic structure within Euclidean spaces, but also offers new perspectives for dealing with high-dimensional noisy data sets in fields such as bioinformatics and machine learning.
\end{abstract}
\keywords{Geometric statistics; Nonlinear dimension reduction; Manifold fitting; Flag spaces; Backward PCA.}


\section{Introduction}
\label{Sec:Intro}
In recent decades, the dimensionality of data has increased significantly, driven by the advent of fields such as machine learning and genomics and supported by enhanced computational capabilities. For instance, the scope of gene analysis spans from tens in mass cytometry to tens of thousands in single-cell RNA sequencing, and even millions in spatial transcriptomics. This escalation in data dimensionality has not only transformed everyday life but also posed fresh challenges for statistics and data science. A primary challenge is how to conduct effective dimensionality reduction whilst minimizing information loss. Historically, principal component analysis has been one of the most insightful and prevalently employed methods. The concept of principal component analysis dates back over a century; \citet{pearson1901liii} explores fitting lines or planes to data points and notes the difficulty of handling data exceeding four dimensions at the time, yet predicts the utility of this method. Following the development in matrix theory and decomposition methods, together with a series of significant statistical works, such as \citep{hotelling1933analysis, hotelling1936simplified, girshick1936principal, girshick1939sampling, anderson1963asymptotic}, principal component analysis has increasingly penetrated various fields. While principal component analysis provides low-dimensional linear representations with adjustable dimensionality that effectively reduces data dimensionality, its limitations in capturing nonlinear structures have become an increasing concern.

To address nonlinear structure in high-dimensional data, nonlinear dimensionality reduction techniques have been extensively explored. Key methodologies in this field include Isomap \citep{isomap}, locally linear embedding \citep{lle}, local tangent space alignment \citep{zhang2004LTSA}, t-distributed stochastic neighbour embedding \citep{tsne}, and uniform manifold approximation and projection \citep{mcinnes2018umap}. These methods are effective for constructing low-dimensional representations and facilitating visualization. However, they are typically designed for one target dimension at a time, and do not naturally provide a sequence of mutually compatible representations across dimensions.

Beyond these nonlinear embedding techniques, several researchers have endeavoured to extend the principles of principal component analysis into nonlinear realms. \citet{donnell1994analysis} introduce and elaborate on the concept of the smallest additive principal component, which facilitates learning a nonlinear subspace with codimension one from data points, akin to fitting a surface to the data. In the works of \citet{panaretos2014principal} and \citet{yao2024principalsubmanifolds}, the authors developed methodologies for principal flows and principal submanifolds. These are curves or surfaces where tangent velocity vectors align with the local leading eigenvector derived from tangent space principal component analysis, effectively traversing paths of maximum variability in the data. While these methods capture important nonlinear features, they mainly produce structures at selected dimensions and are less suited to a dimension-by-dimension decomposition in which adjacent dimensions are intrinsically connected.

Additionally, some researchers have explored extending principal component analysis to special non-Euclidean spaces. One notable method is the principal nested spheres \citep{PNS}, which is tailored for angle-related data positioned on high-dimensional spheres. The method employs an angle-based metric on the sphere to cut out the maximum sub-spheres with hyper-planes, leading to a series of spheres with decreasing intrinsic dimensions and a nested structure, with the residuals being `orthogonal' in the spherical sense. This approach is further extended to tori by \citet{tpca-aoas}, accommodating data in the form of a Cartesian product of angles. These methods are valuable for exploring the decomposition of high-dimensional non-Euclidean spaces and have demonstrated their effectiveness on biological data such as protein structures. Nevertheless, the vertical structure inherited from principal component analysis imparts a model bias, which restricts their capacity to analyze data with complex structures.

To address these challenges, this paper introduces a novel framework, referred to as principal nested submanifolds, designed to fit a series of smooth submanifolds with sequentially decreasing intrinsic dimensions and nested structures from high-dimensional observations. The main objective is a hierarchical geometric decomposition of the observed distribution, guided by local concentration and local covariance information, rather than the recovery of a unique latent manifold.
The concept of principal nested submanifolds draws inspiration from recent advances in manifold fitting methods, as detailed in the works of \citet{fefferman2018fitting, fefferman2021fitting, yao2019manifold, yao2023manifold, yao2023submanifold}. Unlike traditional approaches, these manifold fitting methods utilize local covariance structure to reveal geometric information in data. Building on this perspective, we seek not a collection of unrelated fits at different target dimensions, but a sequence of representations that remain geometrically compatible as the dimension is reduced. Additionally, drawing on insights from backward principal component analysis \citep{huckemann2006principal,huckemann2010intrinsic}, these submanifolds are organized from higher to lower dimensions, so that each level may be interpreted as a structured simplification of the previous one. Related work on barycentric subspace analysis and backward nested descriptors studies nested families of geometric descriptors on manifolds and other nonlinear spaces \cite{10.1214/17-AOS1636,10.1214/17-AOS1609}, whereas the present paper focuses on a distribution-induced hierarchy constructed from local covariance structure. In this sense, the nested order is part of the methodological goal of the procedure, supporting layered interpretation across dimensions in a manner analogous in spirit to principal component analysis while allowing nonlinear geometry.
This paper introduces the principal nested submanifolds as a nested sequence of mutually compatible geometric representations across dimensions, together with an estimation procedure for constructing this hierarchy from data. We then study this construction under a fixed-radius framework, which provides a common scale for the successive levels and supports the nested hierarchy as a coherent object.
Through simulation studies and the analysis of real data, we further illustrate how the resulting hierarchy can be used for interpretation and exploratory data analysis. Practically, this allows broad organization and finer refinement to be examined within one compatible sequence of representations across dimensions. In this way, the proposed framework enriches our understanding of nonlinear space decomposition and provides a basis for layered geometric insight in complex data.

The remaining sections are organized as follows. Section 2 introduces the notation and mathematical preliminaries and provides a brief review of manifold fitting methods. Section 3 presents the framework of principal nested submanifolds, including its population definition, main theorem, and estimation procedure. Section 4 contains simulation studies that demonstrate the effectiveness of principal nested submanifolds in various settings. Section 5 illustrates the application of our method through a real data analysis case. Finally, Section 6 summarizes the key findings and conclusions of our study and discusses several directions for future research.

\section{Preliminary}
\subsection{Notations and mathematical concepts}
In this paper, the symbols $x$ or $y$ represent points associated with the observation, and $z$ indicates an arbitrary point of interest in the ambient space. The symbol $r$ is used to specify the radius in relevant contexts. 
Mathematical entities related to sets are denoted using capitalized calligraphy letters, such as $\mathcal{M}$ for the manifold and $\mathcal{B}_D(z, r)$ for a $D$-dimensional Euclidean ball centered at $z$ with radius $r$. 
The distance between a point $a$ and a set $\mathcal{A}$ is given by 
\[
\nd(a,\mathcal{A}) = \min_{a^\prime \in \mathcal{A}} \|a-a^\prime\|,
\]
where $\|\cdot\|$ is the Euclidean norm.
For a matrix $A$, $\bP_{k}(A)$ denotes the projection of $A$ onto the linear subspace spanned by its eigenvectors corresponding to the largest $k$ eigenvalues. 
For a function $f:\bR^{d_1} \to \bR^{d_2}$, the derivative with respect to a subset of variables $(x_{s_1},\dots, x_{s_k})$ is denoted by
\[
    \nD_{x_{s_1},\dots, x_{s_k}} f = \left( \frac{\partial f_i}{\partial x_{s_j}} \right)_{d_2\times k}.
\]
The Jacobian matrix of $f$ is given by $\nJ_f = \bigl( \partial f_i/ \partial x_j \bigl)_{d_2\times d_1}$. For a unit vector $u\in\bR^{d_1}$, the directional derivative of $f$ along $u$ is defined as $\partial_u f = \nJ_f u$.

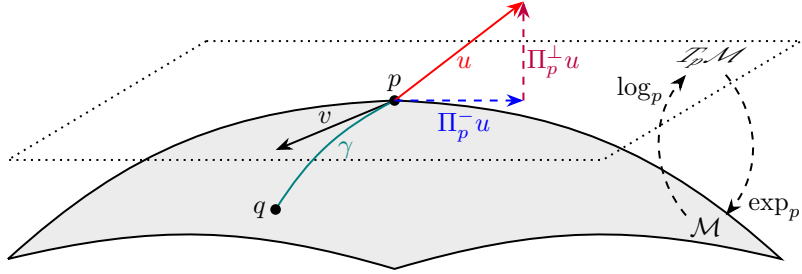
\begin{figure}[htbp]
    \centering
    \tikzset{every picture/.style={line width=0.75pt}}       
    
    \begin{tikzpicture}[x=0.75pt,y=0.75pt,yscale=-1,xscale=1]
    \fill[gray!15] (0,110) to[bend right=20] (195,30) 
                   to[bend right=20] (390,110)
                   to[bend left=15] (195,115)
                   to[bend left=15] (0,110);
                   
    \draw (0,110) to[bend right=20] (195,30) 
                   to[bend right=20] (390,110)
                   to[bend left=15] (195,115)
                   to[bend left=15] (0,110);
                   
    \draw [-Stealth] (195,30) -- (135,55) ;
    \draw [teal] (195,30) edge[bend left=15] (135,85) ;
    \draw [dotted] (100,0) -- (400,0) -- (300,60) -- (0,60) -- cycle ;
    
    \draw (342, 88) edge[bend right=40, dashed, -Stealth] (342, 17); 
    \draw (362, 17) edge[bend right=40, dashed, -Stealth] (362, 88) ;
    
    \fill [black] (195,30) circle (2pt) node[above] {$p$};
    \fill [black] (135,85) circle (2pt) node[left] {$q$};
    \draw (360,  2) node [anchor=north][inner sep=0.75pt,xslant =1, yscale=0.8] {$T_{p}\cM$};
    \draw (352,100) node [anchor=south] [inner sep=0.75pt] {$\cM$};
    \draw (332,25) node [anchor=east] [inner sep=0.75pt] {$\log_p$};
    \draw (372,85) node [anchor=west] [inner sep=0.75pt] {$\exp_p$};
    \draw (160,43) node [anchor=south] [inner sep=0.75pt] {$v$};
    \draw (165,55) node [anchor=west, teal] [inner sep=0.75pt] {$\gamma$};
    
    \draw [red,-Stealth] (195,30) -- (260,-20) ;
    \draw (230,15) node [anchor=south] [red, inner sep=0.75pt] {$u$};
    \draw [blue,dashed,-Stealth] (195,30) -- (260,30) ;
    \draw (230,52) node [anchor=south] [blue,inner sep=0.75pt] {$\Pi_p^- u$};
    \draw [purple,dashed,-Stealth] (260,30) -- (260,-20) ;
    \draw (275,20) node [anchor=south] [purple,inner sep=0.75pt] {$\Pi_p^\perp u$};
    \end{tikzpicture}
    \caption{Illustration for the involved geometrical concepts.}
\end{figure}

We also require some essential geometrical concepts relevant to the study of principal submanifolds. A $d$-dimensional \textit{topological manifold} is defined as a second-countable, Hausdorff topological space that is locally Euclidean of dimension $d$. Specifically, this implies that every point has a neighbourhood homeomorphic to an open subset of $\mathbb{R}^d$. The \textit{tangent space} at a point $p \in \mathcal{M}$, denoted $T_p\mathcal{M}$, is a $d$-dimensional affine space consisting of all vectors tangent to $\mathcal{M}$ at $p$. A \textit{Riemannian metric} $g$ on $\mathcal{M}$ is a smoothly varying family of inner products on the tangent spaces, where each inner product $g_x: T_x\mathcal{M} \times T_x\mathcal{M} \to \mathbb{R}$ is defined at $x \in \mathcal{M}$. Formally, a \textit{Riemannian manifold} is a pair $(\mathcal{M}, g)$, where $\mathcal{M}$ is a smooth manifold equipped with the metric $g$. In our setting, $\mathcal{M}$ is assumed to be a subset of $\mathbb{R}^D$ with $g$ induced by the Euclidean metric of $\mathbb{R}^D$, thereby simplifying $\mathcal{M}$ to a $d$-dimensional Riemannian manifold.

For each pair of points $p, q \in \cM$, the \textit{Riemannian distance} from $p$ to $q$, denoted $g(p, q)$ is defined as the infimum of the lengths of all admissible curves connecting $p$ to $q$.
When $\cM$ is connected, an admissible curve $\gamma$ is called a \emph{minimizing curve} if and only if its length equals the Riemannian distance between its endpoints. A unit-speed minimizing curve is referred to as a \emph{geodesic}. Thus, we use \emph{geodesic distance} and Riemannian distance interchangeably. For each $p \in \cM$, the \emph{exponential map} at ${p}$, denoted by $\exp_p$, is defined by
$
\exp_p (v)=\gamma_v(1),
$
where $v\in T_p\cM$ and $\gamma_v$ are the unique geodesic with initial location $\gamma_v(0) = p$ and $\dot{\gamma}_v(0) = v$. The exponential map is a diffeomorphism in a neighbourhood of the tangent space. The \emph{logarithm map} $\log_p$ is defined as the inverse of $\exp_p$. For a curve $\gamma$, we use $\dot{\gamma}$ and $\ddot{\gamma}$ to represent $\nd\!\gamma(t)/\nd\!t$ and $\nd^2\!\gamma(t)/\nd\!t^2$ respectively.

The projection matrices $\Pi_x^-$ and $\Pi_x^\perp$ project any vector $u \in \mathbb{R}^D$ onto the tangent space $T_x\mathcal{M}$ and its orthogonal complement, respectively. These matrices satisfy the relation $\Pi_x^\perp = I_D - \Pi_x^-$, where $I_D$ is the identity matrix in $\mathbb{R}^D$, and we denote estimators for these projections as $\widehat{\Pi}_z^\perp$ and $\widehat{\Pi}_z^-$. For an arbitrary point $z \not\in \mathcal{M}$, its projection onto the manifold is defined by $z^* = \arg\min_{x \in \mathcal{M}} \|x - z\|$.

The curvature of $\cM$ is described by the \emph{reach} of $\cM$. The concept of reach, as introduced by \citet{federer1959curvature}, is pivotal in assessing the regularity of manifolds embedded in Euclidean space and finds extensive applications in signal processing and machine learning. It can be defined as follows:
\begin{definition}
    [Reach] Let $\mathcal A$ be a closed subset of $\bR^D$. The reach of $\mathcal A$, denoted by $\reach(\mathcal A)$, is the largest number $\tau$ such that for any $z$ with $\nd(z,\mathcal A)<\tau$, its projection on $\mathcal{A}$ is uniquely defined.
\end{definition}
\begin{remark}
    The value of $\reach(\cM)$ can be interpreted as a second-order differential quantity if $\cM$ is treated as a function. Namely, let $\gamma$ be an arc-length parametrized geodesic of $\cM$; then, according to \citet{niyogi2008finding}, $\|\ddot{\gamma}(t)\|\leq \reach(\cM)^{-1}$ for all $t$.
\end{remark}
For example, the reach of a circle is its radius, and the reach of a linear subspace is infinite. Intuitively, a large reach implies that the manifold is locally close to the tangent space, which leads to another definition of reach given by \citet{federer1959curvature}:
\begin{lemma}[Federer's reach condition]
    \label{Lemma:ReachCond}
     Let $\cM$ be an embedded submanifold of $\bR^{D}$. Then,
    $$\reach(\cM)^{-1} = \sup \left\{\frac{2\textnormal{d} \mathit{(b,T_a\cM)}}{\|\mathit{a-b}\|^2} \mid \mathit{a,b}\in\cM,~\mathit{a\neq b}\right\}.$$
\end{lemma}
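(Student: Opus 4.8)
The plan is to establish the identity by proving the two inequalities $\reach(\cM)^{-1}\le\rho$ and $\reach(\cM)^{-1}\ge\rho$ separately, where I write $\tau:=\reach(\cM)$ and $\rho:=\sup\{2\,\nd(b,T_a\cM)/\|a-b\|^2:a,b\in\cM,\ a\ne b\}$. I take $\tau>0$ to be implicit in the statement, so that $\cM$ is in particular a closed $C^{1,1}$ submanifold; then nearest points to an arbitrary $z\in\bR^D$ exist, and any nearest point $a$ of $z$ satisfies the first-order condition $z-a\perp T_a\cM$ (obtained by differentiating $t\mapsto\|z-\gamma(t)\|^2$ along curves $\gamma\subset\cM$ through $a$). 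With the convention $1/0=\infty$, both inequalities will also cover the degenerate case $\rho=0$ (in which $\cM$ is a full affine $d$-plane and $\tau=\infty$).

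\textbf{Step 1: $\rho\le\tau^{-1}$} (the affine tangent plane is a tight local approximant). Fix $a\ne b$ in $\cM$ and put $h:=\nd(b,T_a\cM)$; if $h=0$ there is nothing to prove, so assume $h>0$ and let $\nu$ be the unit vector along $\Pi_a^\perp(b-a)$, so that $\nu\perp T_a\cM$ and $\langle\nu,b-a\rangle=h$. For any $0<t<\tau$ put $z:=a+t\nu$; then $\nd(z,\cM)\le t<\tau$, and here I invoke the single genuinely geometric ingredient --- the structural property of sets of positive reach that, since $t\nu$ is normal to $\cM$ at $a$ with norm below $\tau$, the nearest point of $z$ in $\cM$ is $a$, i.e.\ $\nd(z,\cM)=\|z-a\|=t$. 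Consequently $\|z-b\|^2\ge t^2$, and expanding $\|z-b\|^2=\|a-b\|^2-2th+t^2$ gives $h\le\|a-b\|^2/(2t)$; letting $t\uparrow\tau$ yields $2h/\|a-b\|^2\le\tau^{-1}$, and taking the supremum over $a\ne b$ gives $\rho\le\tau^{-1}$.

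\textbf{Step 2: $\reach(\cM)^{-1}\le\rho$}, equivalently $\tau\ge 1/\rho$. I argue by contradiction, using only the first-order condition. Suppose some $z\in\bR^D$ has $r:=\nd(z,\cM)<1/\rho$ but admits two distinct nearest points $a,a'\in\cM$. By first-order optimality $z-a\perp T_a\cM$; set $\nu:=(z-a)/r$, a unit normal at $a$. Since $\|z-a'\|=\|z-a\|=r$, expanding $r^2=\|(z-a)-(a'-a)\|^2$ gives $2r\,\langle\nu,a'-a\rangle=\|a'-a\|^2$. On the other hand, as $\nu\perp T_a\cM$, $\langle\nu,a'-a\rangle=\langle\nu,\Pi_a^\perp(a'-a)\rangle\le\|\Pi_a^\perp(a'-a)\|=\nd(a',T_a\cM)$ by Cauchy--Schwarz, and by definition of $\rho$ this is $\le\tfrac{\rho}{2}\|a'-a\|^2$. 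Combining these and cancelling $\|a'-a\|^2>0$ yields $1\le r\rho$, i.e.\ $r\ge 1/\rho$, contradicting $r<1/\rho$. Hence every $z$ with $\nd(z,\cM)<1/\rho$ has a unique nearest point, so $\tau=\reach(\cM)\ge 1/\rho$; with Step 1 this gives $\reach(\cM)^{-1}=\rho$.

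The hard part --- really the only non-elementary step --- is the structural fact used in Step 1: for $p\in\cM$ and $v\perp T_p\cM$ with $\|v\|<\tau$, the nearest point of $p+v$ in $\cM$ is $p$. This is precisely the tubular-neighbourhood description of a set of positive reach and can be quoted from \citet{federer1959curvature}. If a self-contained proof is wanted, the route is: represent $\cM$ near $p$ as the graph of a map over $T_p\cM$ whose quadratic part is bounded in terms of $\tau^{-1}$ (as in the Remark above, via \citet{niyogi2008finding}); deduce that $p$ is the unique nearest point of $p+sv$ for all small $s>0$; and then propagate this to $s=1$ using continuity of the nearest-point map on the open $\tau$-tube together with a connectedness argument in $s$. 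Every remaining step above is a one-line manipulation of squared Euclidean norms.
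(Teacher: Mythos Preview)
The paper does not prove this lemma; it is stated as a known characterization of reach due to \citet{federer1959curvature}, with no argument supplied. Your proposal therefore goes beyond the paper.

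Your two-inequality strategy is the standard one and is correct. Step~2 is entirely elementary and complete as written: the expansion $2r\langle\nu,a'-a\rangle=\|a'-a\|^2$ together with $\langle\nu,a'-a\rangle\le\nd(a',T_a\cM)\le\tfrac{\rho}{2}\|a'-a\|^2$ gives the contradiction cleanly. In Step~1 you have correctly isolated the only nontrivial input --- that a point $p+v$ with $v\perp T_p\cM$ and $\|v\|<\tau$ has $p$ as its nearest point in $\cM$ --- and rightly identified this as Federer's tubular-neighbourhood description of positive-reach sets. The sketch you offer for a self-contained proof (local graph representation with curvature controlled by $\tau^{-1}$, then continuity of the metric projection plus a connectedness argument in the parameter $s$) is the right outline; making the connectedness step rigorous needs a small amount of care (one shows the set $\{s\in[0,1):\text{nearest point of }p+sv\text{ is }p\}$ is both open and closed in $[0,1)$, using Lipschitz continuity of the projection on the open $\tau$-tube), but this is routine once the ingredients are in place.
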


Calculating the average of a group of points on a manifold is not straightforward due to the nonlinearity of the space. To address this, the concept of the \emph{Fr\'echet mean} is widely used. The Fr\'echet mean generalizes the  the notion of centroids to metric spaces and provides a representative point that captures the central tendency of a set of points.
\begin{definition}
    [Fr\'echet Mean] Let $\{x_1, \dots, x_n\}$ be a collection of points on a manifold $\cM$. For any point $z$ on $\cM$, define the Fr\'echet function to be the sum of squared distances from $z$ to each $x_i$:
    \[
        F(z) = \sum_{i=1}^n g^2(z, x_i).
    \]
    The set of points $\{\mu_F \in \cM : F(\mu_F) = \arg\min_{z \in \cM} F(z)\}$, where $F(z)$ is minimized, is called the Fr\'echet mean set. If the minimizer is unique, $\mu_F$ is simply called the Fr\'echet mean of the set $\{x_1, \dots, x_n\}$ on $\cM$.
\end{definition}

\subsection{Fitting manifolds from data}
In the manifold fitting literature, a common goal is to model low-dimensional latent data-generating manifolds in a high-dimensional ambient space from noisy observations. A standard formulation centers on a random vector $Y = X + \xi$, where $X$ is drawn from a distribution supported on a smooth $m$-manifold $\cM\subset\mathcal{A}$, and $\xi$ represents the observation noise in the ambient space $\mathcal{A}$. Manifold fitting methods then utilize the local covariance structure to produce smooth manifold estimators from $\cY_n = \{y_i\}_{i=1}^n$, a set of independent and identically distributed realizations of $Y$. In contrast, the present paper uses these ideas as background and defines its target directly through the observed distribution and its local covariance structure.

\citet{yao2019manifold} explore the scenario where $\mathcal{A} = \bR^D$ and $\xi \sim \cN(0,\sigma I_D)$. For each $y_i$, let $y_i^*$ be its projection on $\cM$. The projection matrix onto $N_{y_i^*}\cM$, denoted as $\widehat{\Pi}_{i}^\perp$, is estimated with the smallest $D-m$ eigenvectors from local principal component analysis centered at $y_i$ with a radius $r$.
Then, for an arbitrary point $z$ around $\cM$, its bias vector is defined as  
\begin{align*}
\widehat{b}(z) = \widehat{\Pi}_z^\perp(z - \widehat{\mu}_{z}),    
\end{align*}
with $\widehat{\mu}_{z} = \sum_{i=1}^n\alpha_i(z)y_i$ and $\widehat{\Pi}_z^\perp = \bP_{D-m}\left(\sum_{i=1}^n\alpha_i(z)\widehat\Pi_{i}^\perp\right)$, where the weights are defined as 
\begin{equation*}
    \widetilde{\alpha}_i(z)=\left(1 - \frac{\|z - y_i\|^2}{r^2}\right)^{\beta} \ID(\|z-y_i\|\leq r), \quad \alpha_i(z) = \frac{\widetilde{\alpha}_i(z)}{\sum_{i=1}^n\widetilde{\alpha}_i(z)},
\end{equation*}
with a fixed smoothness parameter $\beta\geq 2$ ensuring that $\widehat{b}(z)$ is second-order differentiable and $\ID$ denoting the indicator function. The manifold estimator is then defined as 
\begin{equation*}
\widehat{\cM} = \{z\in\bR^D:~\nd(z,\cM)\leq cr,~ \widehat{b}(z) = 0\},
\end{equation*}
which is proved to be $\cO(\sigma)$-close to $\cM$ in Hausdorff distance and maintains a lower bounded reach with high probability.

\cite{yao2024principalsubmanifolds} and \citet{yao2023submanifold} investigate the problem of learning low-dimensional structures for random objects on a known manifold. The former approach is based on collections of principal flows, while the latter introduces a tubular noise model on manifolds and develops a corresponding manifold fitting method. In their framework, the sample set $\cY_n$ is posited to lie precisely on a known $d$-dimensional manifold $\mathcal{A}$, which is embedded in $\bR^D$. \cite{yao2023submanifold} concentrate on cases where the samples cluster around a certain submanifold $\cM \subset \mathcal{A}$ with codimension $1$. Their proposed iterative algorithm projects $\cY_n$ onto the latent submanifold. For each point $y_i$, they initialize $z_0$ as $y_i$, and iteratively update it as
\begin{equation*}
z_{k+1} = \argmin_{z\in\mathcal{A}}  \|V_{z}^\top (z - z_k)\|^2 +\|z - \widehat{\mu}_z\|^2,
\end{equation*}
where  $V_{z}$ is made up of the eigenvectors with respect to the top $d-1$ eigenvalues of the local sample covariance matrix at $z$. Under certain assumptions, the sequence $\{z_k\}$ will converge to the latent submanifold $\cM$.

In conclusion, manifold fitting methods provide a useful background for describing low-dimensional geometric structure in data by employing local sample means as reference points and managing the dimensions of fitting through a projection matrix that encapsulates the local covariance structure, which ultimately yields manifold estimators described as level sets. In the present paper, this perspective is adapted to construct a nested hierarchy from the observed distribution rather than to recover a uniquely specified latent manifold.

\section{Methodology}
\label{Sec:Method}
Manifold fitting techniques emphasize the importance of local covariance structure in the analysis of data adhering to low-dimensional manifolds. Building on this concept and the principles of local principal component analysis, we introduce a framework of fitting submanifolds with a nested structure, referred to as the principal nested submanifolds. In this section, we establish the essential notations, define principal nested submanifolds, and outline their estimators along with a corresponding algorithm for practical implementation.

\subsection{The population principal submanifolds}
Given the model complexity and its relevance to practical applications, we proceed by considering a random vector $X\in\bR^D$, with its distribution denoted by $F_X$. To investigate the local nonlinear structure of $F_X$, a proper radius parameter $r>0$ is necessary. Then, for any point $z\in\bR^D$ satisfying $\nd\bigl(z,\supp(F_x)\bigl)<r$, we define the local average and the local covariance matrix at $z$ as follows:
\begin{gather*}
    \mu_{r,z} = E\bigl\{X \mid X \in \cB(z,r)\bigl\},\\
    \Sigma_{r,z} = E\bigl\{ (X-\mu_{r,z})(X-\mu_{r,z})^\top \mid X \in \cB(z,r) \bigl\}.
\end{gather*}
These measures aim to capture the essential behaviour of the distribution in the neighbourhood of $z$.

The radius $r$ therefore controls the locality, or smoothing scale, of the construction. In the present paper, we keep $r$ fixed across dimensions so that the resulting nested sequence is defined under a shared geometric scale. Allowing the radius to vary with the dimension may be of interest in future work, but the resulting objects may weaken the interpretation of the hierarchy as a single coherent object under a common geometric scale. Different fixed choices of $r$ may instead be viewed as producing different nested hierarchies at different resolutions.

Through eigendecomposition of $\Sigma_{r,z}$, we identify the principal directions of $F_X$ at $z$ as the unit eigenvectors $v_{r,z,1},\dots,v_{r,z,D}$, arranged such that their corresponding eigenvalues are in non-decreasing order. For $k=1\dots,D$, the $k$th principal direction gives rise to a projection matrix $\Pi_{r,z,k}$ and yields the $k$th bias vector as
\[
    b_{r,k}(z) = \Pi_{r,z,k} (z - \mu_{r,z}) = v_{r,z,k} v_{r,z,k}^\top (z - \mu_{r,z}).
\]
These bias vectors lead to a root set given by
\begin{equation}\label{eq:def-pop-d-fold}
    \cM_{r,d} = \left\{z\in\bR^D : \nd\bigl(z,\supp(F_X)\bigl)<r,~\sum_{k=1}^{D-d}b_{r,k}(z) = 0\right\}.
\end{equation}
\begin{proposition}
    If the radius $r$ is sufficiently large, $\cM_{r,d}$ will degenerate to the linear subspace of $\bR^D$, which is corresponding to the leading $d$ principal components.
\end{proposition}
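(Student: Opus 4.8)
The plan is to show that once $r$ is large enough to engulf the whole support, the local moments $\mu_{r,z}$ and $\Sigma_{r,z}$ cease to depend on $z$ and reduce to the global mean and covariance, after which the root set in \eqref{eq:def-pop-d-fold} becomes a purely linear-algebraic object. First I would assume $\supp(F_X)$ is bounded --- without this, ``$r$ sufficiently large'' must be read as the limit $r\to\infty$ --- and set $R=\operatorname{diam}(\supp(F_X))$, $\mu=\E(X)$, $\Sigma=\cov(X)$. I would also impose the spectral gap $\lambda_{D-d}<\lambda_{D-d+1}$ on the eigenvalues of $\Sigma$, which is what makes ``the leading $d$ principal components'' unambiguous. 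For any $z$ well inside the admissible tube, say with $\nd(z,\supp(F_X))<r-R$, the triangle inequality gives $\supp(F_X)\subseteq\cB(z,r)$, hence $\pr(X\in\cB(z,r))=1$, and therefore $\mu_{r,z}=\mu$ and $\Sigma_{r,z}=\Sigma$.

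The next step is to unwind the defining equation with these constant moments. Since $\Sigma_{r,z}=\Sigma$, its ordered unit eigenvectors $v_{r,z,k}$ are the fixed eigenvectors $v_1,\dots,v_D$ of $\Sigma$ (eigenvalues non-decreasing), so $b_{r,k}(z)=v_kv_k^\top(z-\mu)$, and by orthonormality of the $v_k$ we get $\sum_{k=1}^{D-d}b_{r,k}(z)=\Pi^-(z-\mu)$, where $\Pi^-$ is the orthogonal projector onto $\cL^-=\operatorname{span}(v_1,\dots,v_{D-d})$, the span of the $D-d$ smallest-variance directions. Hence $\sum_{k=1}^{D-d}b_{r,k}(z)=0$ holds exactly when $z-\mu\perp\cL^-$, i.e.\ when $z$ lies on the affine subspace $\mu+\cL_d$ with $\cL_d=\operatorname{span}(v_{D-d+1},\dots,v_D)$ --- precisely the classical rank-$d$ principal component subspace through the mean.

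Combining the two steps, $\cM_{r,d}$ restricted to the inner tube $\{z:\nd(z,\supp(F_X))<r-R\}$ equals $\mu+\cL_d$ restricted to the same set, for every $r>R$; letting $r\to\infty$ the restricting tube exhausts $\bR^D$ and $\cM_{r,d}$ increases to all of $\mu+\cL_d$. This is exactly the asserted degeneration of $\cM_{r,d}$ to the $d$-dimensional subspace spanned by the leading $d$ principal components (up to the harmless affine shift by the mean).

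I expect the proof to be short; the only points needing genuine care are (i) reconciling ``$r$ sufficiently large'' with the constraint $\nd(z,\supp(F_X))<r$ built into \eqref{eq:def-pop-d-fold} --- one must check that the admissible tube does not clip the PCA subspace before the localizing ball has become trivial, which is why I pass to the inner tube $\nd(z,\supp(F_X))<r-R$ (or, equivalently, to the limit) --- and (ii) the spectral-gap assumption, in whose absence the limiting set is only determined up to rotations within an eigenvalue block. I would list both as explicit hypotheses; beyond them the argument is just the triangle inequality and orthonormality of eigenvectors.
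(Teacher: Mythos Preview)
Your argument is correct and, in fact, more careful than the paper's own treatment: the paper states this proposition without proof, evidently regarding it as an immediate observation linking the framework to classical PCA. Your identification of the two delicate points --- the clipping constraint $\nd(z,\supp(F_X))<r$ built into \eqref{eq:def-pop-d-fold}, which forces the statement to be read as a limit $r\to\infty$ rather than equality for a single finite $r$, and the implicit spectral-gap assumption --- is exactly right, and neither is addressed in the paper.
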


\begin{figure}[htbp]
    \centering
    \tikzset{every picture/.style={line width=0.75pt}}       
    
    \begin{tikzpicture}[x=0.75pt,y=0.75pt,yscale=-1,xscale=1]
    \fill[gray!15] (40,110) to[bend right=28] (195,30) 
                   to[bend right=28] (350,110)
                   to[bend left=15] (195,115)
                   to[bend left=15] (40,110);
                   
    \draw [teal!50] (275,45) .. controls (170,50) and (200,80) .. (120,100) ;
    \draw (240,60) node [anchor=west, teal] [inner sep=0.75pt] {$\cM_{r,D-2}$};
                
    \draw [gray] (40,110) to[bend right=28] (195,30) 
                   to[bend right=28] (350,110)
                   to[bend left=15] (195,115)
                   to[bend left=15] (40,110);

    \draw (85,100) node [anchor=south] [inner sep=0.75pt,gray] {$\cM_{r,D-1}$};
    
    \draw [blue,-Stealth] (165, 20) -- (165, 60);
    \draw [blue,-Stealth] (165, 20) -- (195, 25);
    \draw [blue,densely dotted] (165, 60) -- (195, 65) -- (195, 25);

    \fill[blue, opacity=0.1] (165, 20) -- (165, 60) -- (195, 65) -- (195, 25) -- cycle;

    \draw [blue] (165, 48) node[left] {$b_{r,1}(z)$};
    \draw [blue] (220, 12) node[left] {$b_{r,2}(z)$};

    \fill [red] (185, 67) circle (2pt) node[below=10pt,right=-5pt] {$\mu_{r,z}$};
    \draw [red,-Stealth] (165, 20) -- (185, 67);
    \fill [black] (165, 20) circle (2pt);
    \draw [black] (165, 13) node[left] {$z$};

    \draw[-Stealth] (300,45)  to[bend right=40] (360,45);
    

    \begin{scope}[shift={(370,10)}]
        \draw [teal] (22.53,15.91) .. controls (42.53,1.57) and (135.2,-9.09) .. (118.87,20.57) .. controls (102.53,50.24) and (117.2,57.91) .. (139.87,96.24) .. controls (162.53,134.57) and (32.53,119.24) .. (12.53,89.24) .. controls (-7.47,59.24) and (2.53,30.24) .. (22.53,15.91) -- cycle;
        \fill [teal!15] (22.53,15.91) .. controls (42.53,1.57) and (135.2,-9.09) .. (118.87,20.57) .. controls (102.53,50.24) and (117.2,57.91) .. (139.87,96.24) .. controls (162.53,134.57) and (32.53,119.24) .. (12.53,89.24) .. controls (-7.47,59.24) and (2.53,30.24) .. (22.53,15.91) -- cycle;
        \draw (60,15) node [anchor=west, teal] [inner sep=0.75pt] {$\cM_{r,D-2}$};

        \draw [purple] (4.2,41.24) .. controls (56.87,31.91) and (94.2,117.24) .. (140.87,100.24);
        \draw (80,110) node [anchor=west, purple] [inner sep=0.75pt] {$\cM_{r,D-3}$};
        
        \draw[-Stealth] (130,35)  to[bend right=40] (190,35) node[below]{$\dots$};

        \draw [blue,-Stealth] (140, 60) -- (110, 60);
        \draw [blue,-Stealth] (140, 60) -- (120, 85);
        \draw [blue,densely dotted] (110, 60) -- (90, 85) -- (120, 85);
    
        \fill[blue, opacity=0.1] (140, 60) -- (110, 60) -- (90, 85) -- (120, 85) -- cycle;

        \draw [blue] (215, 80) node[left] {$\sum_{k=1}^2 b_{r,k}(z)$};
        \draw [blue] (137, 50) node[left] {$b_{r,3}(z)$};

        \fill [red] (90, 80) circle (2pt) node[left=3pt] {$\mu_{r,z}$};
        \draw [red,-Stealth] (140, 60) -- (90, 80);
        \fill [black] (140, 60) circle (2pt);
        \draw [black] (140, 60) node[right] {$z$};
    \end{scope}

    \end{tikzpicture}
    \caption{Conceptional illustration of the principal nested submanifolds. Each $\cM_{r,d}$ is determined as the root set of $\sum_{k=1}^{D-d} b_{r,k}(z)$ and is nested within $\cM_{r,d+1}$.}
\end{figure}
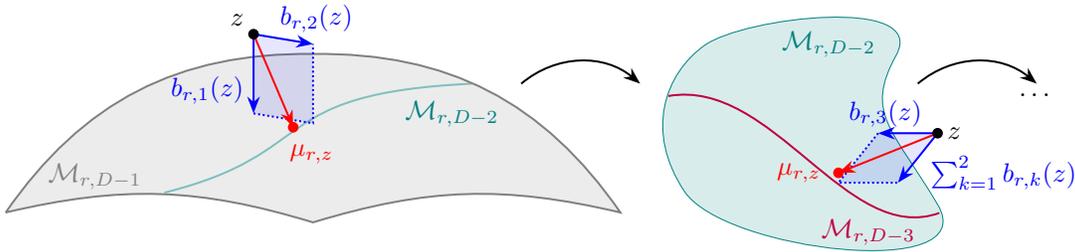

Similar to other decomposition methods, the definition of a well-defined principal submanifold requires a significant eigen-gap. Meanwhile, to ensure that such root sets are smooth manifolds, $F_X$ should satisfy additional smoothness conditions.
\begin{condition}\label{asm:1}
    The local mean function is smooth enough with respect to $z$, that is,
        \[
            \|\partial_u\mu_{r,z}\| < \ell_1,
            \text{ and }
            \|\partial_u\partial_{u^\prime}\mu_{r,z}\| < \ell_2,
        \]
    for any unit vectors $u,u^\prime \in\bR^D$ and some constant $\ell_1$ and $\ell_2$.
\end{condition}
\begin{condition}\label{asm:2}
    The $k$th projection matrix is smooth enough with respect to $z$, that is, 
        \[
            \|\partial_u\Pi_{r,z,k}\| < \ell_3, 
            \text{ and }
            \|\partial_u\partial_{u^\prime }\Pi_{r,z,k}\| < \ell_4,
        \]
        for any unit vectors $u,u^\prime \in\bR^D$ and some constant $\ell_3$ and $\ell_4$.
\end{condition}
\begin{condition}\label{asm:3}
    The distribution of $X$ is concentrate towards $\mu_{r,z}$ along the $k$th principal direction $v_{r,z,k}$, that is,
        \[\|v_{r,z,k}^\top \nJ_\mu(z)\| < \ell_5,\]
        for some constant $\ell_5$.
\end{condition}

With these conditions and additional mild assumptions, the level set can be proved to be submanifolds, which leads to the following theorem:
\begin{theorem}\label{thm:population}
    Let $\cM_{r,d}$ be the level set defined by \eqref{eq:def-pop-d-fold}. Assume there exists a radius $r>0$ such that $\cM_{r,d} \neq \varnothing$, and a constant $\eta$ with
    \begin{equation}\label{eq:def-eta}
        \frac{1 - (D-d) (r \ell_3 + \ell_5)}{(D-d) \bigl\{r\ell_4 + \ell_2 +2 \ell_3(1+\ell_1)\bigl\}} > \eta > 0,
    \end{equation}
    such that Conditions 1--3 hold for any point $z$ where $\nd(z,\cM_{r,d}) < \eta$ and $k \in \{ 1,\dots,D-d\}$. Then, $\cM_{r,d}$ is locally homeomorphic to $\bR^d$. In other words, $\cM_{r,d}$ is a $d$-dimensional topological submanifold of $\bR^D$, and we call it the $d$-dimensional population principal submanifold of $X$ with respect to radius $r$.

    Furthermore, if $\cM_{r,d}$ is connected, then it is a $\mathcal{C}^1$ manifold and the reach of $\cM_{r,d}$, satisfies $\reach(\cM_{r,d}) \geq \eta/2$.
\end{theorem}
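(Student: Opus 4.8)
The plan is to exhibit $\cM_{r,d}$ as a regular level set of a $\mathcal{C}^{1,1}$ map and apply the implicit function theorem, then bound its reach through Federer's characterisation (Lemma~\ref{Lemma:ReachCond}). First I would rewrite the defining equation. Put $\Pi_{r,z}^\perp := \sum_{k=1}^{D-d}\Pi_{r,z,k}$, the orthogonal projection onto the span of the $D-d$ smallest eigenvectors of $\Sigma_{r,z}$, and $B(z) := \sum_{k=1}^{D-d} b_{r,k}(z) = \Pi_{r,z}^\perp\,(z-\mu_{r,z})$, so that $\cM_{r,d} = \{z : \nd(z,\supp F_X) < r,\ B(z)=0\}$. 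Conditions~\ref{asm:1}--\ref{asm:2} make $z\mapsto\mu_{r,z}$ and $z\mapsto\Pi_{r,z}^\perp$ continuously differentiable with Lipschitz derivatives, so $B\in\mathcal{C}^{1,1}$; moreover $\|z-\mu_{r,z}\|\le r$ since $\mu_{r,z}$ is an average of points in $\cB(z,r)$. Both facts are used repeatedly below.

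Fix $a\in\cM_{r,d}$ and write $V_a=\operatorname{Im}\Pi_{r,a}^\perp$ (dimension $D-d$) and $H_a=V_a^\perp$ (dimension $d$). Differentiating gives $\partial_u B(a) = (\partial_u\Pi_{r,z}^\perp)\big|_{z=a}(a-\mu_{r,a}) + \Pi_{r,a}^\perp(u-\partial_u\mu_{r,a})$; set $M:=\Pi_{r,a}^\perp\nJ_B(a)$ as a map $\bR^D\to V_a$. For a unit $u\in V_a$ the leading term of $Mu$ is $u$ itself, and Conditions~\ref{asm:1}--\ref{asm:3} --- via $\|a-\mu_{r,a}\|\le r$, $\|\partial_u\Pi_{r,z,k}\|<\ell_3$, and $\|v_{r,a,k}^\top\nJ_\mu(a)\|<\ell_5$ --- bound the remainder $\|Mu-u\|$ by $(D-d)(r\ell_3+\ell_5)=1-c_1$, where $c_1:=1-(D-d)(r\ell_3+\ell_5)>0$ is exactly the numerator of~\eqref{eq:def-eta}; for a unit $u\in H_a$ the same estimates give $\|Mu\|\le 1-c_1$. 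A short computation with these two bounds shows $\lambda_{\min}(MM^\top)\ge c_1^2$, so $M$ is onto $V_a$ with smallest nonzero singular value at least $c_1$; in particular $M|_{V_a}$ is invertible since $\|I_{V_a}-M|_{V_a}\|\le 1-c_1<1$. Because $\Pi_{r,a}^\perp$ is injective on $\operatorname{Im}\Pi_{r,z}^\perp$ for $z$ near $a$ (continuity of $z\mapsto\Pi_{r,z}^\perp$), locally $B(z)=0\iff\Pi_{r,a}^\perp B(z)=0$; applying the implicit function theorem to $(h,v)\mapsto\Pi_{r,a}^\perp B(a+h+v)$ on $H_a\times V_a$, whose $v$-differential at the origin is the invertible $M|_{V_a}$, presents $\cM_{r,d}$ near $a$ as a $\mathcal{C}^1$ graph over $H_a\cong\bR^d$. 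Thus $\cM_{r,d}$ is a $d$-dimensional topological (indeed $\mathcal{C}^1$) submanifold, and differentiating the graph relation identifies $T_a\cM_{r,d}=\ker M$.

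For the reach, take $a,b\in\cM_{r,d}$. If $\|a-b\|<\eta$, consider $g(t)=\Pi_{r,a}^\perp B\bigl(a+t(b-a)\bigr)$ on $[0,1]$; since $B(a)=B(b)=0$ we have $g(0)=g(1)=0$, so $\|g'(0)\|\le\tfrac12\operatorname{Lip}_{[0,1]}(g')$. Estimating the second derivative of $B$ along the segment by Conditions~\ref{asm:1}--\ref{asm:2} --- the contributions being of the forms $(\partial\partial\Pi^\perp)(z-\mu)$, $(\partial\Pi^\perp)\,\partial\mu$ (twice), and $\Pi^\perp\partial\partial\mu$ --- yields $\operatorname{Lip}_{[0,1]}(g')\le L_2\|a-b\|^2$ with $L_2:=(D-d)\{r\ell_4+\ell_2+2\ell_3(1+\ell_1)\}$, the denominator of~\eqref{eq:def-eta}; hence $\|M(b-a)\|=\|g'(0)\|\le\tfrac{L_2}{2}\|a-b\|^2$. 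Combining with $s_{\min}(M)\ge c_1$ and $T_a\cM_{r,d}=\ker M$ gives $\nd\bigl(b,T_a\cM_{r,d}\bigr)\le\tfrac{L_2}{2c_1}\|a-b\|^2$, so $2\,\nd(b,T_a\cM_{r,d})/\|a-b\|^2\le L_2/c_1<1/\eta$ by~\eqref{eq:def-eta}. If instead $\|a-b\|\ge\eta$, the trivial bound $\nd(b,T_a\cM_{r,d})\le\|a-b\|$ makes the same ratio at most $2/\eta$. Taking the supremum over $a\neq b$ and invoking Lemma~\ref{Lemma:ReachCond} gives $\reach(\cM_{r,d})^{-1}\le 2/\eta$, i.e.\ $\reach(\cM_{r,d})\ge\eta/2$; connectedness enters only so that Federer's condition is applied to $\cM_{r,d}$ as a single embedded submanifold.

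The main obstacle is the quantitative linear algebra: showing that the ``vertical'' differential $M$ is not merely invertible on $V_a$ but has smallest nonzero singular value bounded below by $c_1$ as a full map $\bR^D\to V_a$, with every error term tracked tightly enough that the resulting constants coincide with the numerator and denominator of~\eqref{eq:def-eta}. Once that estimate is secured, the implicit function theorem chart, the identification $T_a\cM_{r,d}=\ker M$, and the Federer bound are routine, the only additional care being to keep the segment $a+t(b-a)$ (and a neighbourhood of $a$) inside the region $\{\nd(\cdot,\cM_{r,d})<\eta\}$ where Conditions~\ref{asm:1}--\ref{asm:3} are assumed.
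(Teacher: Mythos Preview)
Your proposal is correct and follows the same overall strategy as the paper: introduce the auxiliary map $z\mapsto \Pi_{r,a}^\perp B(z)$ (the paper writes this as $h(z)=V_p^\top b(z)$, which is the same up to the isometry $V_p^\top:V_a\to\bR^{D-d}$), use the first--order estimate $\|\nJ_B-\Psi\|\le(D-d)(r\ell_3+\ell_5)$ to show its differential has full rank $D-d$, apply the implicit/constant--rank theorem for the manifold structure, and then combine the second--order bound $(D-d)\{r\ell_4+\ell_2+2\ell_3(1+\ell_1)\}$ with Federer's condition for the reach.

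The only noteworthy difference is in how the reach inequality is extracted. The paper passes through the implicit graph $t=g(s)$, bounds $\|\nJ_g(s)\|$ via the formula $\nJ_g=-(\nJ_h W_p^\perp)^{-1}\nJ_h W_p$, and then estimates $\|g(s_p)-g(s_q)\|$. Your route is a little more direct: from $g(0)=g(1)=0$ you use the integral identity $g'(0)=\int_0^1\bigl(g'(0)-g'(t)\bigr)\,dt$ to get $\|M(b-a)\|\le\tfrac12 L_2\|a-b\|^2$, and then convert via the singular--value lower bound $s_{\min}(M)\ge c_1$ and $T_a\cM_{r,d}=\ker M$. This sidesteps the explicit graph computation but lands on the same constants and the same $\reach\ge\eta/2$.
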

\begin{corollary}
    If there exists a radius $r$ and a sequence of dimensions $\{d_j\}_{j=1}^{k} \subseteq \{1,\dots,D\}$ such that for each $d\in\{d_j\}_{j=1}^{k}$, $\cM_{r,d}$ is a $d$-dimensional population principal submanifold given by Theorem \ref{thm:population}, the principal submanifolds are nested. That is,
    \[
        \cM_{r,d} \subset \cM_{r,d^\prime}, \quad \text{for } d < d^\prime \in \{d_j\}_{j=1}^{k}.
    \]
    The collection of principal submanifolds, $\{\cM_{r,d_j}\}_{j=1}^{k}$, is called the principal nested submanifolds.
\end{corollary}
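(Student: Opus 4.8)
The plan is to reduce the nesting claim to an elementary fact about orthogonal projections, so that essentially no analysis is needed. The first step is to record that, because $\Sigma_{r,z}$ is symmetric, its unit eigenvectors $v_{r,z,1},\dots,v_{r,z,D}$ may be taken mutually orthogonal; hence the rank-one projectors $\Pi_{r,z,k} = v_{r,z,k}v_{r,z,k}^\top$ satisfy $\Pi_{r,z,j}\Pi_{r,z,k}=0$ for $j\neq k$, and the partial sums $P_{r,z,m} := \sum_{k=1}^{m}\Pi_{r,z,k}$ are orthogonal projections onto the span of the eigenvectors attached to the $m$ smallest eigenvalues of $\Sigma_{r,z}$. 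These cumulative projectors are nested: $P_{r,z,m}P_{r,z,m'} = P_{r,z,\min(m,m')}$. I would phrase the whole argument through $P_{r,z,m}$ rather than through the individual bias vectors, because when some bottom eigenvalues coincide the single $v_{r,z,k}$ are not unique, whereas $P_{r,z,m}$ is well defined as soon as there is an eigen-gap between the $m$th and $(m+1)$th eigenvalue --- which is exactly the gap already built into the hypotheses of Theorem~\ref{thm:population} making each $\cM_{r,d_j}$ well defined.

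With this notation, $\sum_{k=1}^{D-d}b_{r,k}(z) = P_{r,z,D-d}(z-\mu_{r,z})$, so by \eqref{eq:def-pop-d-fold} a point $z$ lies in $\cM_{r,d}$ precisely when $\nd(z,\supp(F_X))<r$ and $P_{r,z,D-d}(z-\mu_{r,z})=0$. Fix $d<d'$ in $\{d_j\}_{j=1}^{k}$, so that $D-d'<D-d$, and take any $z\in\cM_{r,d}$. The distance condition is identical for the two sets, so it remains to verify the projection condition at level $d'$; using nestedness of the cumulative projectors,
\[
    P_{r,z,D-d'}(z-\mu_{r,z}) = P_{r,z,D-d'}\,P_{r,z,D-d}(z-\mu_{r,z}) = P_{r,z,D-d'}\cdot 0 = 0,
\]
so $z\in\cM_{r,d'}$. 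This yields $\cM_{r,d}\subseteq\cM_{r,d'}$ for every pair $d<d'$ in the list, which is the asserted nesting; the family $\{\cM_{r,d_j}\}_{j=1}^{k}$ is then a chain of nested submanifolds of increasing dimension.

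I do not anticipate a real obstacle: the content is just orthogonality of eigenspaces together with the monotonicity of the index set $\{1,\dots,D-d\}$ in $d$. The only place requiring mild care is the treatment of repeated eigenvalues, handled above by working with $P_{r,z,m}$; one could alternatively note that the hypotheses of Theorem~\ref{thm:population} force an eigen-gap at position $D-d_j$ for each $j$, making every bias vector in sight unambiguous, and then argue directly that $\sum_{k=1}^{D-d}b_{r,k}(z)=0$ forces each orthogonal summand $b_{r,k}(z)$ to vanish, in particular those with $k\le D-d'$. As a consistency check one may also observe that the inclusion runs from the $d$-dimensional $\cM_{r,d}$ into the $d'$-dimensional $\cM_{r,d'}$ with $d<d'$, i.e.\ a lower-dimensional principal submanifold sitting inside a higher-dimensional one, which is the intended meaning of ``nested.''
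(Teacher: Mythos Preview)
Your argument is correct. The paper does not spell out a separate proof of this corollary; it is stated immediately after Theorem~\ref{thm:population} and treated as self-evident from the definition \eqref{eq:def-pop-d-fold}, with the appendix devoted only to the proof of Theorem~\ref{thm:population} itself. Your write-up supplies exactly the natural justification the paper leaves implicit: since the $b_{r,k}(z)$ lie in mutually orthogonal one-dimensional subspaces, the vanishing of $\sum_{k=1}^{D-d}b_{r,k}(z)$ forces each summand to vanish, and in particular the shorter sum $\sum_{k=1}^{D-d'}b_{r,k}(z)$ vanishes as well. Your phrasing via the cumulative projectors $P_{r,z,m}$ and the identity $P_{r,z,D-d'}P_{r,z,D-d}=P_{r,z,D-d'}$ is a clean way to package this, and your remark that working with $P_{r,z,m}$ sidesteps any ambiguity in individual eigenvectors when eigenvalues repeat is a nice touch that the paper does not make explicit.
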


\subsection{Estimating the sequence of principal nested submanifolds}
Consider a data set $\cX_n = \{x_i\}_{i=1}^n$ comprising points that are sampled from the distribution $F_X$. For each sample point $x_i$, we define its local covariance matrix within a pre-specified radius $r$ as
\begin{equation*}
\widehat\Sigma_{r,i} = \frac{\sum_{j=1}^n (x_j-x_i)(x_j-x_i)^\top \ID(\|x_j-x_i\|\leq r)}{\sum_{j=1}^n \ID(\|x_j-x_i\|\leq r)},
\end{equation*}
where $\ID$ is the indicator function. The unit eigenvectors of $\widehat\Sigma_{r,i}$, denoted as $\widehat{v}_{r,i,1},\dots,\widehat{v}_{r,i,D}$, are arranged such that their corresponding eigenvalues are in ascending order. The projection matrix for each eigenvector is computed as
\begin{equation*}
\widehat\Pi_{r,i,k} = \widehat{v}_{r,i,k}\widehat{v}_{r,i,k}^\top, \quad \text{for }k = 1,\dots,D.
\end{equation*}

For any point of interest $z \in \bigcup_{i=1}^n \cB(x_i,cr)$ with a constant $c<1$, we estimate its projection matrix and bias vectors relative to the principal nested submanifolds using the manifold fitting approach. Specifically, we define the weight functions by
\begin{equation}\label{eq:alpha}
    \widetilde{\alpha}_i(z)=\left(1 - \frac{\|z - x_i\|^2}{r^2}\right)^{3} \ID({\|z - x_i\|\leq r}), \quad \alpha_i(z) = \frac{\widetilde{\alpha}_i(z)}{\sum_{i=1}^n\widetilde{\alpha}_i(z)}.
\end{equation}
The projection matrix onto the $k$th principal direction is then given by
\begin{equation*}
    \widehat\Pi_{r,z,k} = \bP_1 \left(\sum_{i=1}^n\alpha_i(z)\widehat\Pi_{r,i,k}\right).
\end{equation*}
Setting the reference point $\widehat{\mu}_{r,z} = \sum_{i=1}^n\alpha_i(z)x_i$, the $k$th bias vector is
\begin{equation*}
\widehat{b}_{r,k}(z) = \widehat\Pi_{r,z,k} \left(z - \widehat{\mu}_{r,z}\right).
\end{equation*}
Finally, the sequence of principal nested submanifolds can be established as
\begin{equation}\label{def:M_hat}
\widehat{\cM}_{r,d} = \left\{z \in \bigcup_{i=1}^n \cB(x_i,cr):~ \sum_{k=1}^{D-d} \widehat{b}_{r,k}(z) = 0\right\}, \quad \text{for } d = 1,\dots,D-1.
\end{equation}

The population principal submanifold $\cM_{r,d}$ in Definition~\ref{eq:def-pop-d-fold} is defined through the hard neighborhood $\cB(z,r)$. In contrast, for numerical stability, the empirical construction may be implemented by a smooth localized weighting scheme. To relate these two objects, we introduce in Appendix~\ref{sec:appendix-consistency} an auxiliary smooth population version based on the same kernel as in the empirical procedure. This leads to a bias--variance decomposition: the discrepancy between the empirical and population bias fields consists of a stochastic term from local moment estimation and eigenspace perturbation, and a deterministic term caused by the mismatch between the hard and smooth localizations.

\begin{theorem}[Consistency of empirical principal submanifolds]
\label{thm:consistency-main}
Let
\[
g_r(z)=\sum_{k=1}^{D-d} b_{r,k}(z),
\qquad
\widehat g_r(z)=\sum_{k=1}^{D-d}\widehat b_{r,k}(z),
\]
and let $\widehat{\cM}_{r,d}$ be defined by \eqref{def:M_hat}. Suppose Assumptions~\ref{asm:consistency-mass}--\ref{asm:consistency-transverse} in Appendix~\ref{sec:appendix-consistency} hold. Then, for a fixed radius $r>0$,
\[
\sup_{z\in\cZ}\|\widehat g_r(z)-g_r(z)\|
\le
\cO_p\!\left(\sqrt{\frac{\log n}{nr^D}}\right)+B_r,
\]
where $B_r$ is a deterministic kernel-mismatch bias term. Moreover, there exists a constant $C>0$ such that
\[
\textnormal{d}_\textnormal{H}(\widehat{\cM}_{r,d},\cM_{r,d})
\le
C\left\{
\cO_p\!\left(\sqrt{\frac{\log n}{nr^D}}\right)+B_r
\right\}.
\]
A more precise formulation, together with the auxiliary smooth population object and the proofs, is given in Appendix~\ref{sec:appendix-consistency}.
\end{theorem}

In practice, working with real data sets often involves data that resides on a known manifold. To make better use of the geometric information contained in the data, it is desirable to embed the data in a higher dimensional space and then fit the principal nested submanifolds. Here the embedding set serves as the ambient Euclidean representation in which the local covariance construction and recursive fitting are carried out. For some topological structures, such an embedding is needed before these Euclidean-space operations can be applied, but it is used only as a modeling and computational scaffold for the geometric hierarchy under study. In certain cases, the maximum dimensionality of the manifolds to be fitted is predetermined, thereby allowing certain dimensions to be omitted during the fitting process. For example, consider data representing two-dimensional angles positioned on a torus; an isometric embedding in four-dimensional Euclidean space is required. In this scenario, only fitting the one-dimensional principal submanifold makes sense. The algorithms that details this process can be found in the Appendix.

\section{Simulation studies}
\subsection{Simulation in Euclidean space}
In this subsection, we explore the decomposition of sample points around a low-dimensional structure in Euclidean space, comparing our proposed method with principal component analysis in three different cases. For each scenario, we consider a generating curve $\{\gamma(t)\mid t\in\cT\}\subset\bR^3$ and uniformly generate $\{t_i\}_{i=1}^n\subset\cT$. Next, in the normal space of $\gamma$ at $\gamma(t_i)$, we introduce noise along two perpendicular directions, $v_1(t_i)$ and $v_2(t_i)$, with amplitudes $\xi_1\sim \cN(0,\sigma_1^2)$ and $\xi_2 \sim \cN(0,\sigma_2^2)$ respectively. Specifically, an observed point is given by 
\[x_i = \gamma(t_i) + \xi_{1,i} v_1(t_i) + \xi_{2,i} v_2(t_i),\quad i=1,\dots,n.\]
This process results in a point cloud $\cX_n \subset\bR^3$, with a local dimensionality of three. We apply our proposed method, using a radius $r = 0.5$, to $\cX_n$ to project it onto the principal nested submanifolds $\widehat{\cM}_2$ and $\widehat{\cM}_1$. We also perform principal component analysis for comparison.

The simplest generating curve considered is a line segment in $\bR^3$. Specifically, we define
\[
\gamma_1(t) = \bigl(t,0,0\bigl)^\top,~t\in(0,1),
\]
where $n=10^4$ random points are generated, and noise is introduced in the directions $v_1 = (0,1,0)^\top$ and $v_2 = (0,0,1)^\top$, with standard deviations $\sigma_1=0.1$ and $\sigma_2=0.05$, respectively. The resulting data set, shown in Figure \ref{fig:Simu-Eucli-line-Input}, resembles an elliptic cylinder with three identifiable fixed principal directions and can be efficiently decomposed in to linear subspaces.
\begin{figure}[htbp]
    \centering
    \subfigure[]{\includegraphics[width=0.3\textwidth]{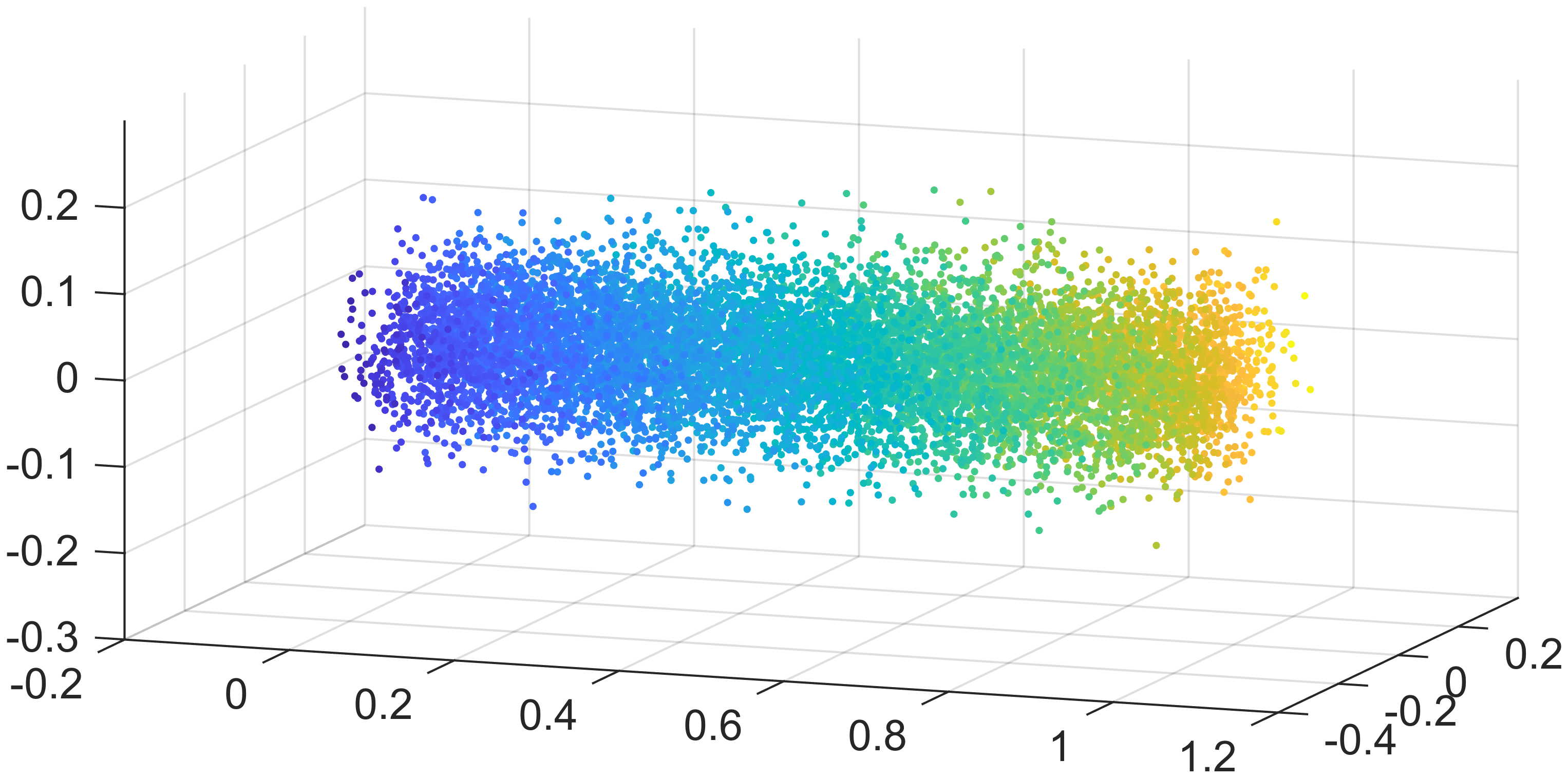}\label{fig:Simu-Eucli-line-Input}}
    \subfigure[]{\includegraphics[width=0.3\textwidth]{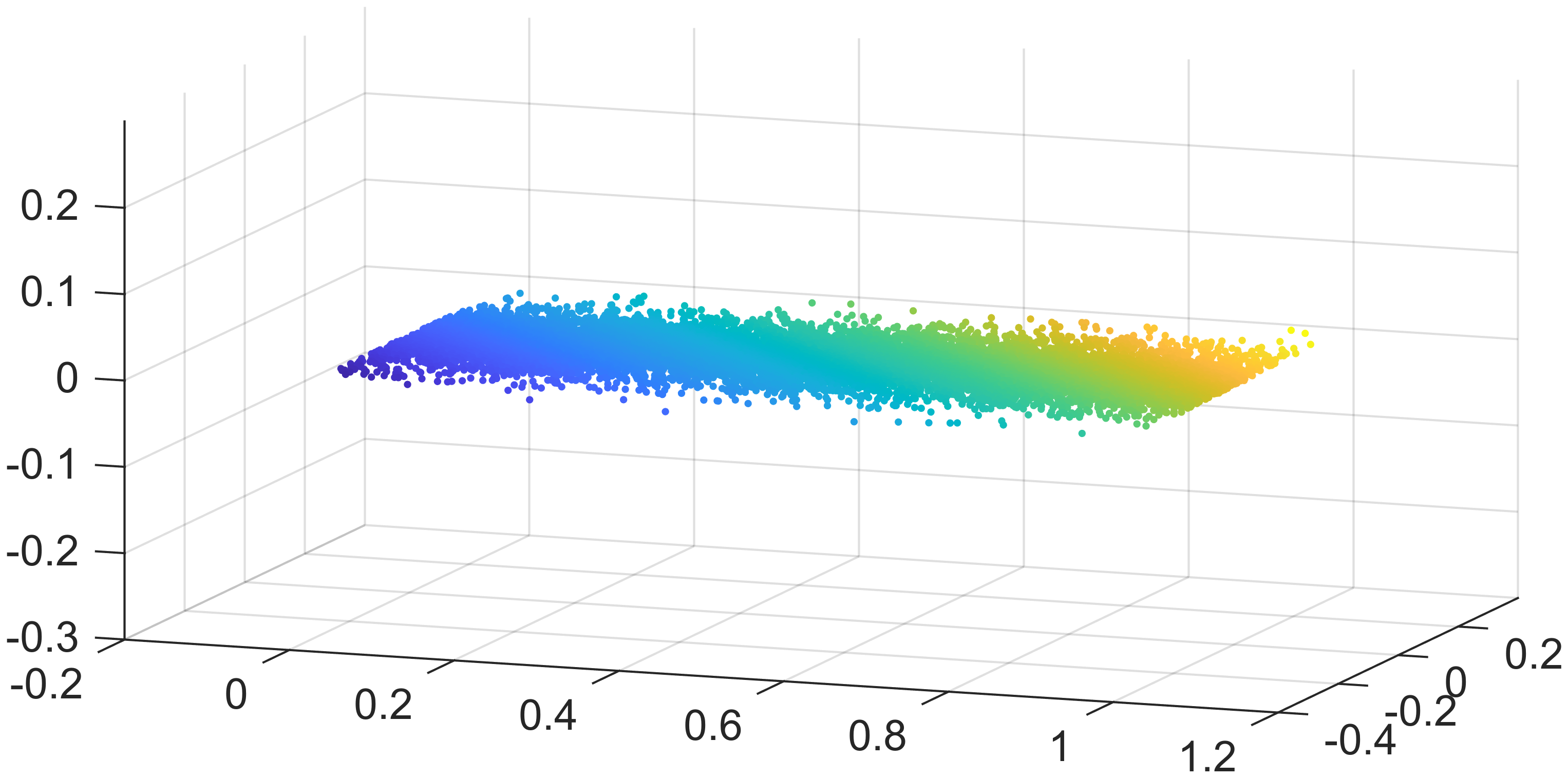}\label{fig:Simu-Eucli-line-PNSM2}}
    \subfigure[]{\includegraphics[width=0.3\textwidth]{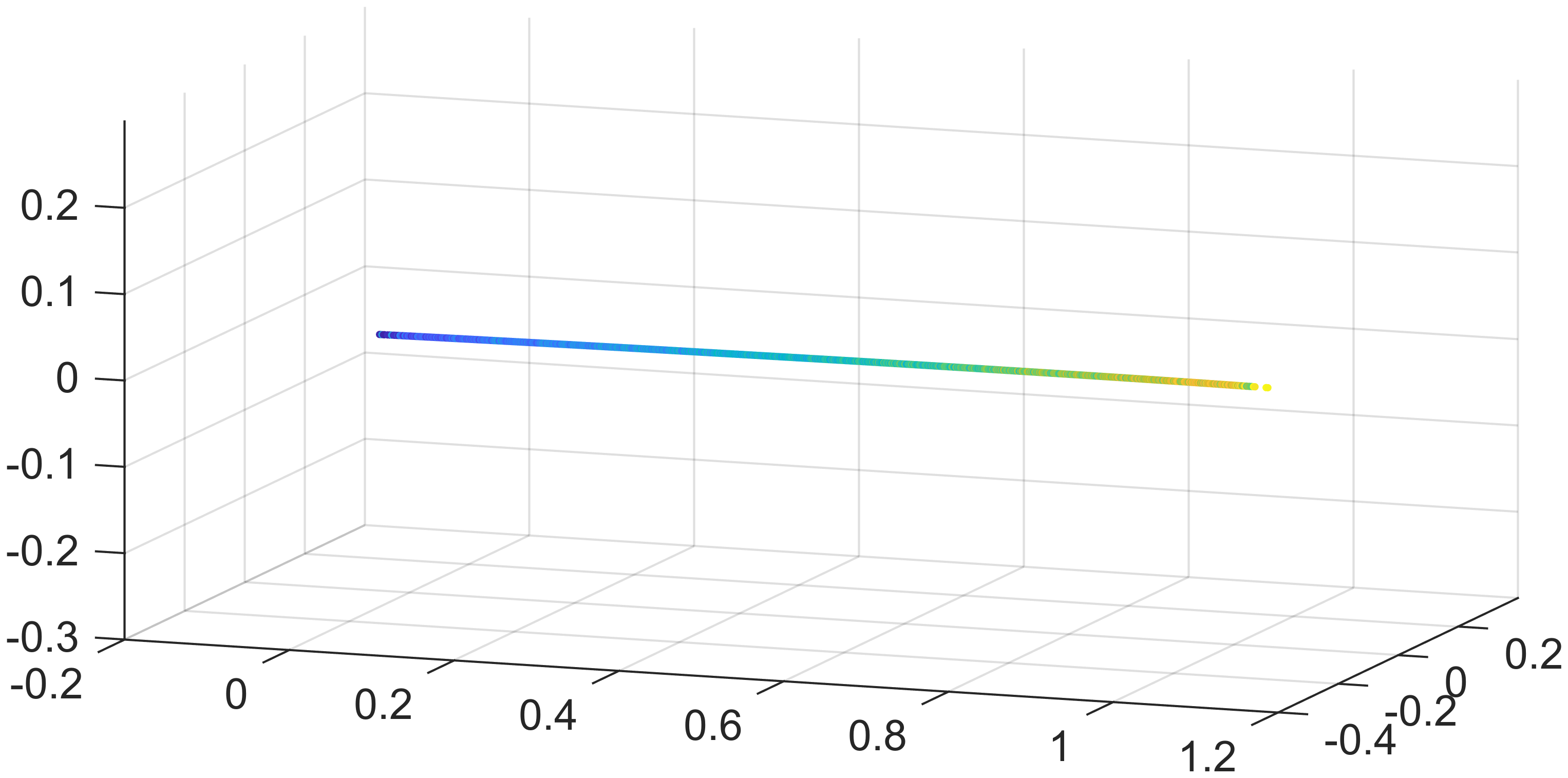}\label{fig:Simu-Eucli-line-PNSM1}}\\
    \subfigure[]{\includegraphics[width=0.3\textwidth]{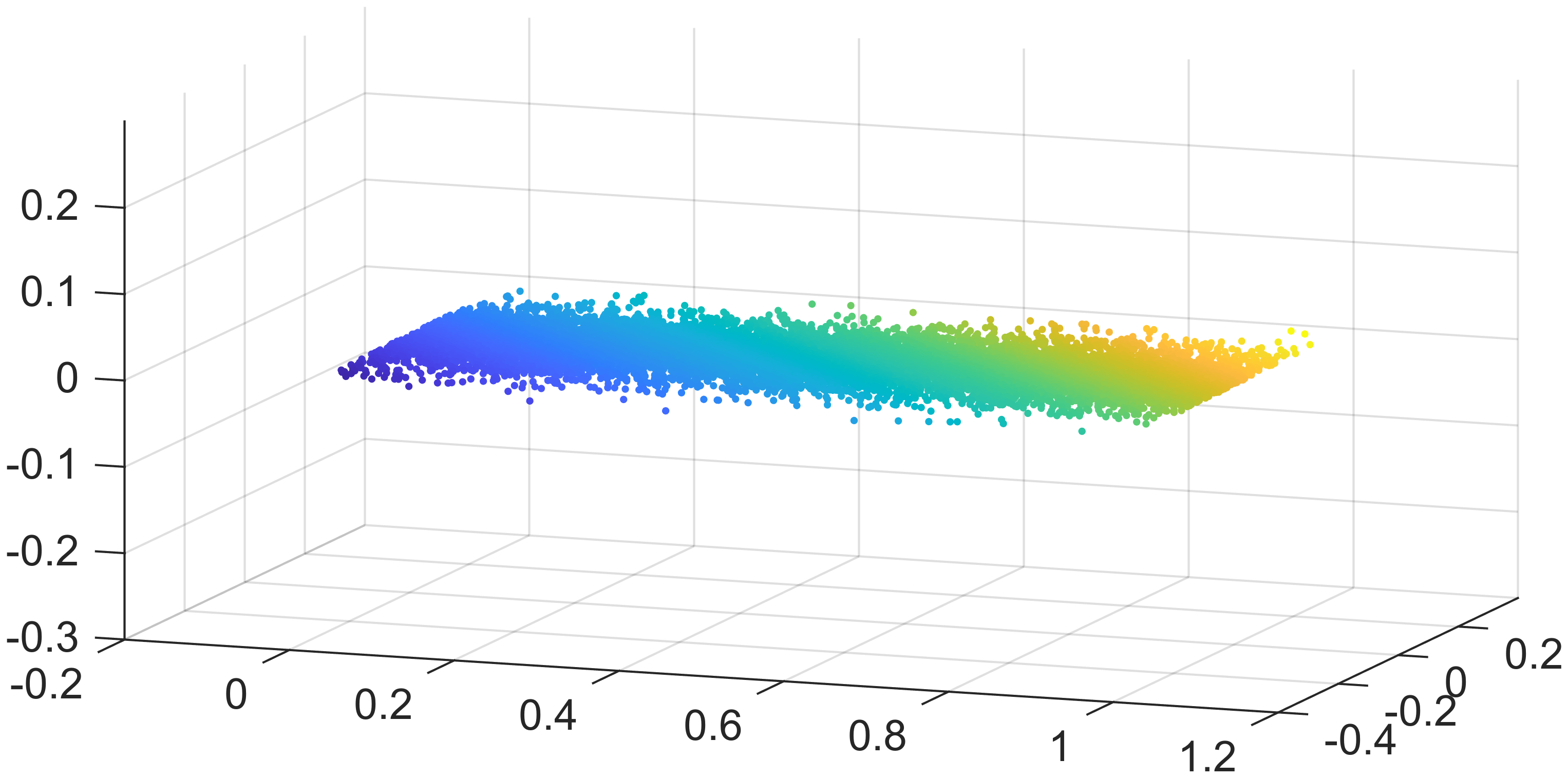}\label{fig:Simu-Eucli-line-PCA2}}
    \subfigure[]{\includegraphics[width=0.3\textwidth]{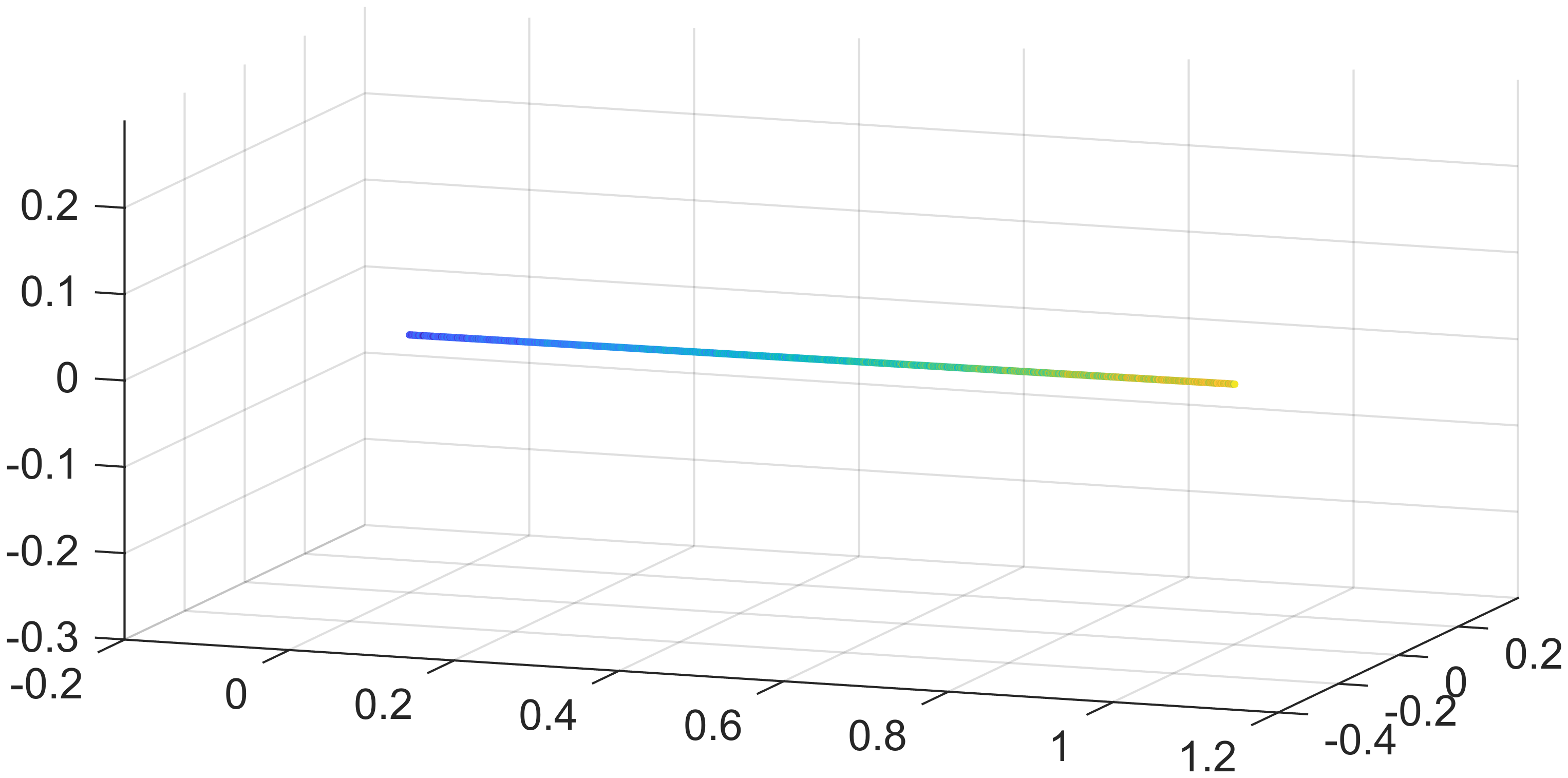}\label{fig:Simu-Eucli-line-PCA1}}
    \caption{Scatter plots for the line segment case in $\bR^3$ with colors added to enhance the visualization of sample adjacency. (a) Input data set; (b,c) Projection results onto the principal nested submanifolds of dimensions 2 and 1, respectively; (d,e) Projections onto the first two and single principal components, respectively.}
    \label{fig:Simu-Eucli-line}
\end{figure}

Figure \ref{fig:Simu-Eucli-line} presents the input data and projection results from the two methods. The points projected with our method are shown in Figure \ref{fig:Simu-Eucli-line-PNSM2} and Figure \ref{fig:Simu-Eucli-line-PNSM1}, where the points lie very close to the xy-plane and x-axis, respectively. This illustrates that the proposed method handles the scenario with identifiable fixed principal directions and approximates the linear subspaces effectively. The projection results are remarkably similar to those obtained from principal component analysis, as illustrated in Figure \ref{fig:Simu-Eucli-line-PCA2} and Figure \ref{fig:Simu-Eucli-line-PCA1}. This similarity indicates that our method remains compatible with linearly generated scenarios.

Subsequently, we introduce nonlinearity into the generatring curve. Consider the unit circle
\[\gamma_2(t) = \bigl(\cos(t),\sin(t),0\bigl)^\top,~t\in(0,1),\]
where $n=10^4$ points are sampled, and noise is introduced in the directions $v_1 = (0,0,1)^\top$ and $v_2 = (\cos(t),\sin(t),0)^\top$, with standard deviations $\sigma_1=0.1$ and $\sigma_2=0.05$, respectively. The resulting data set, shown in Figure \ref{fig:Simu-Eucli-circle-Input}, resembles a torus that is thicker in the z-axis direction.
\begin{figure}[htbp]
    \centering
    \subfigure[]{\includegraphics[width=0.3\textwidth]{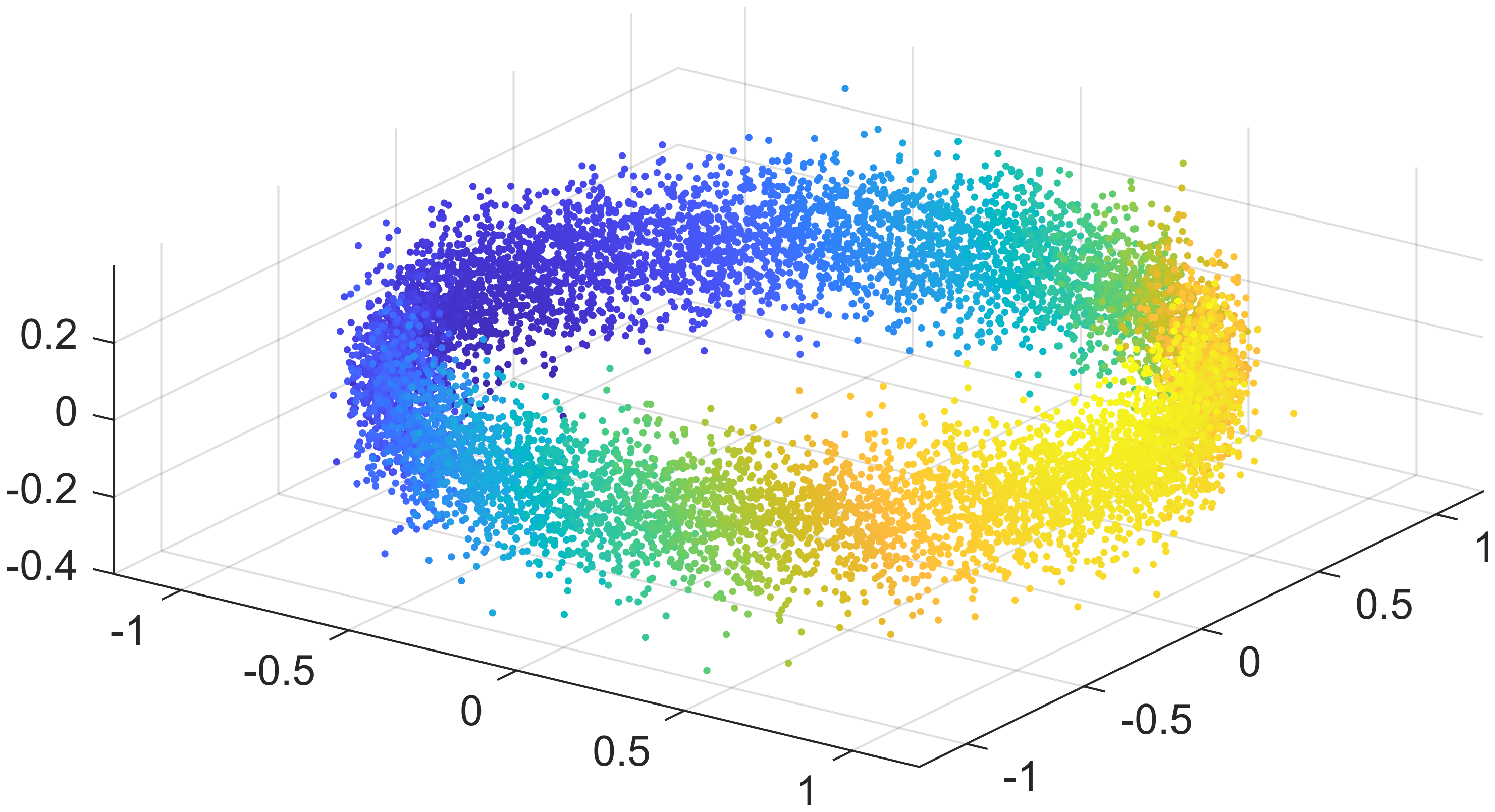}\label{fig:Simu-Eucli-circle-Input}}
    \subfigure[]{\includegraphics[width=0.3\textwidth]{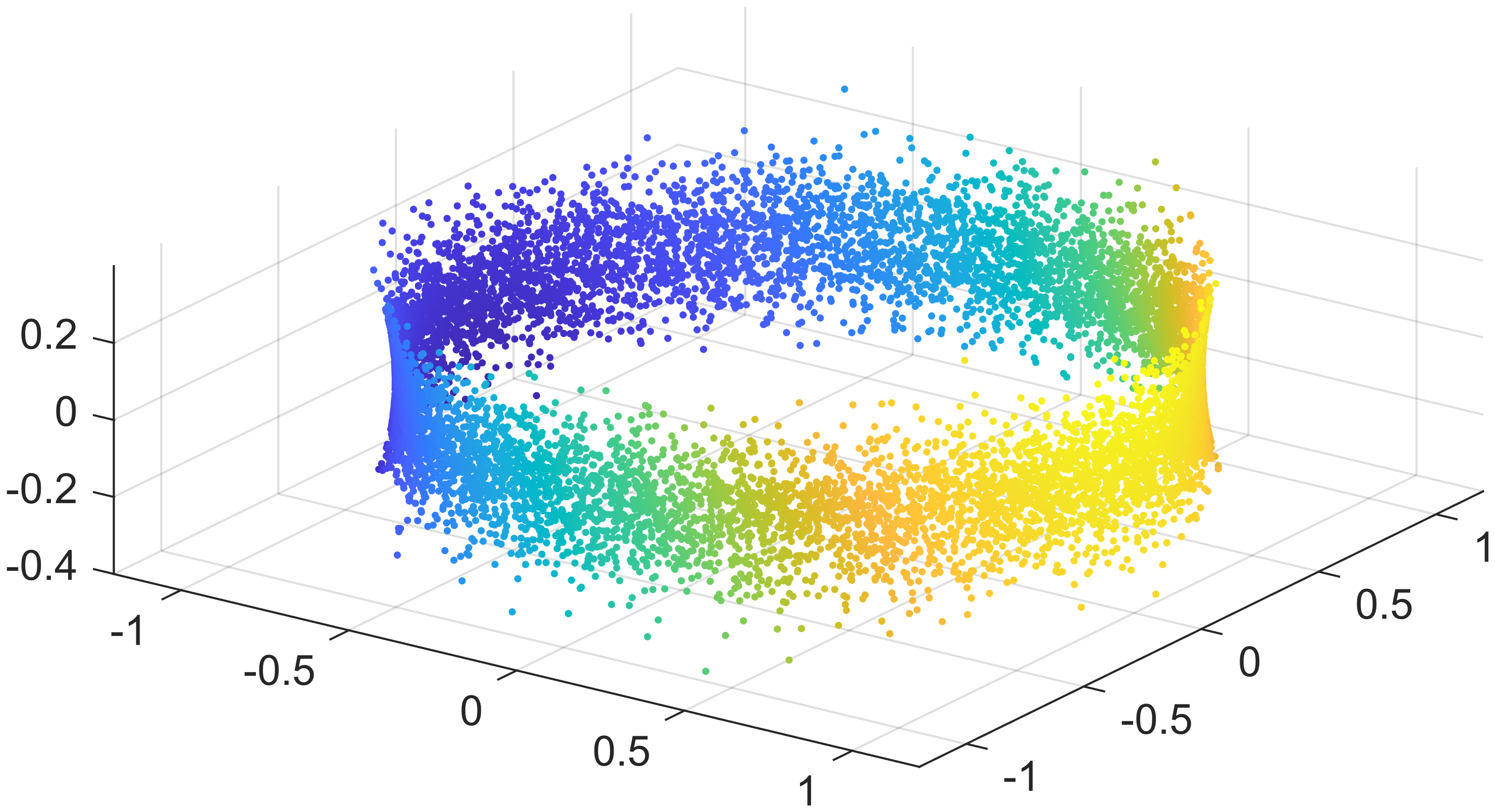}\label{fig:Simu-Eucli-circle-PNSM2}}
    \subfigure[]{\includegraphics[width=0.3\textwidth]{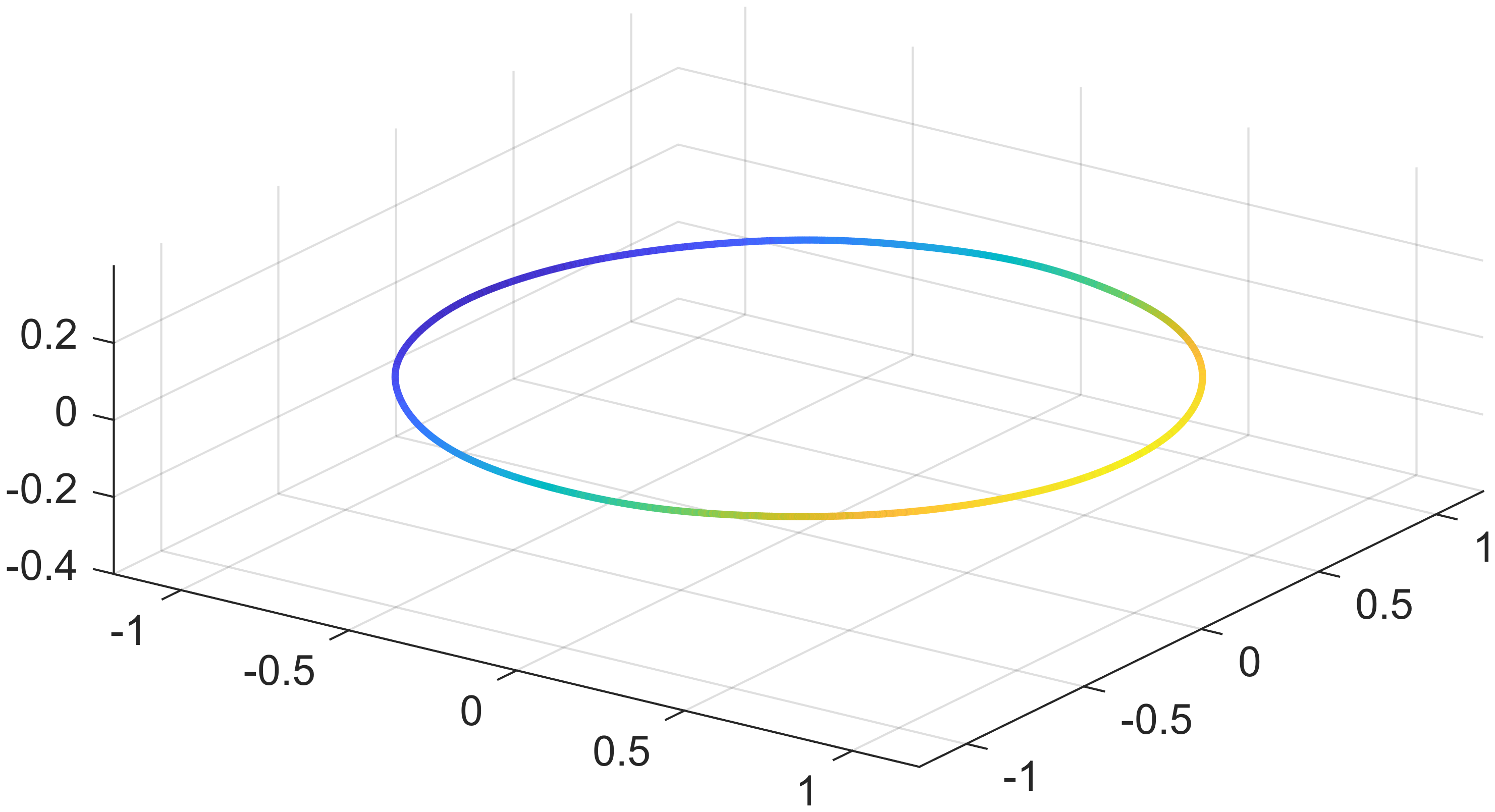}\label{fig:Simu-Eucli-circle-PNSM1}}\\
    \subfigure[]{\includegraphics[width=0.3\textwidth]{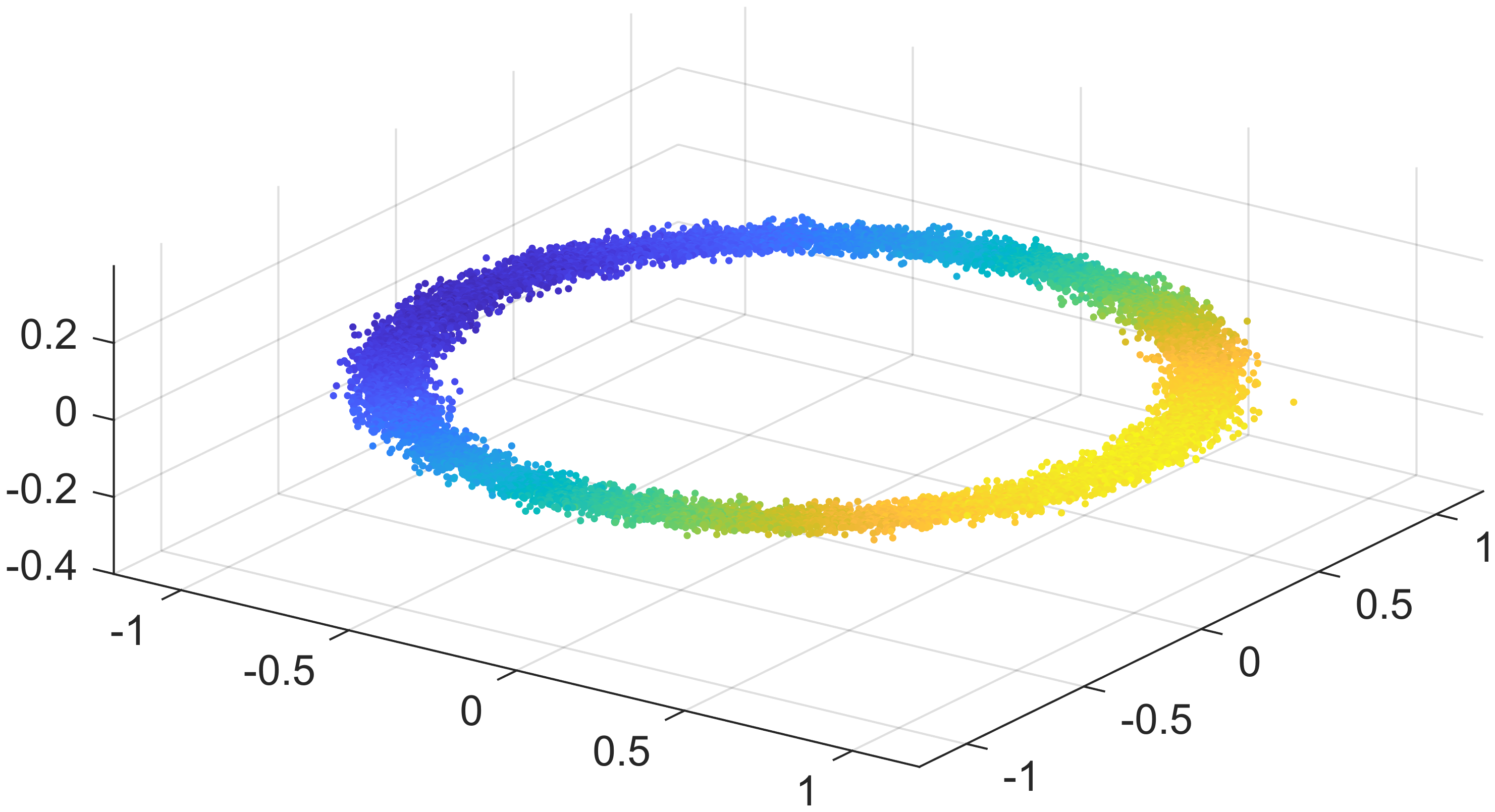}\label{fig:Simu-Eucli-circle-PCA2}}
    \subfigure[]{\includegraphics[width=0.3\textwidth]{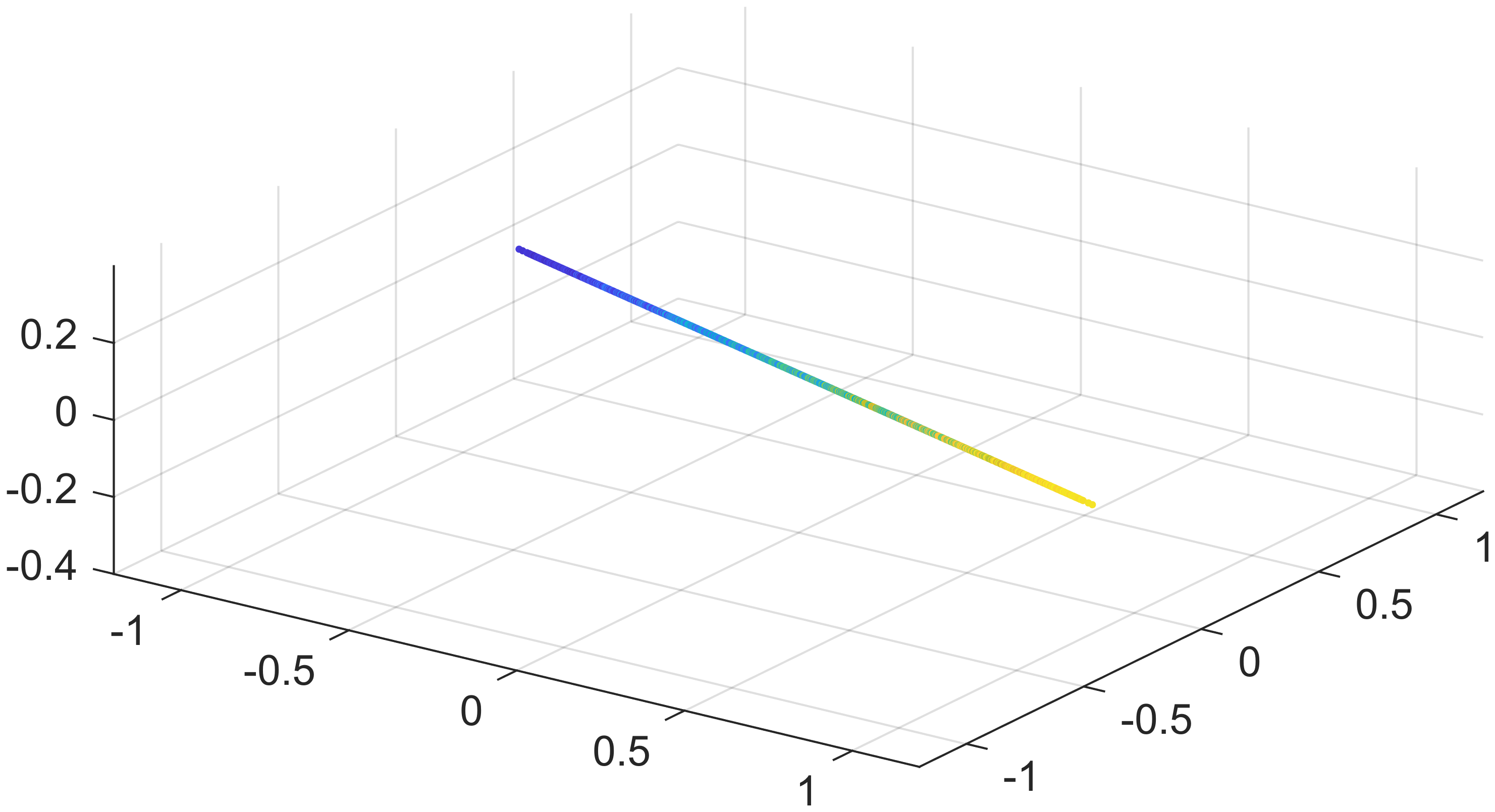}\label{fig:Simu-Eucli-circle-PCA1}}
    \caption{Scatter plots for the circle case in $\bR^3$ with colors added to enhance the visualization of sample adjacency. (a) Input data set; (b,c) Projection results onto the principal nested submanifolds of dimensions 2 and 1, respectively; (d,e) Projections onto the first two and single principal components, respectively.}
    \label{fig:Simu-Eucli-circle}
\end{figure}

The input data and projection results from the two methods are visualized in Figure \ref{fig:Simu-Eucli-circle}. Figure \ref{fig:Simu-Eucli-circle-PNSM2} presents the two dimensional projection with our method, which is a circular strip with width in the z-axis direction. This suggests that our method fits a two-dimensional submanifold that captures the majority of variation in $\cX_n$ by removing the variation along the smallest principal direction. In contrast, the 2-dimensional projection with principal component analysis, as shown in Figure \ref{fig:Simu-Eucli-circle-PCA2}, is an annulus approximately on the xy-plane. This indicates that the principal component analysis primarily focuses on the global covariance structure and overlooks the local principal direction corresponding to the smallest eigenvalue. These points are further projected on one-dimensional structures, and the results are present in Figure \ref{fig:Simu-Eucli-circle-PNSM1} and Figure \ref{fig:Simu-Eucli-circle-PCA1}. The resulting comparison is consistent with the view that our method provides a more faithful geometric summary of the circle structure of $\gamma_2$ in this example.

Finally, we consider a more complex correlation structure. Let the generating curve be
\[
\gamma_3(t) = \left(\frac{t}{6}\cos(t),~\frac{t}{6}\sin(t),~\frac{t}{6}\right)^\top, \quad t\in(0,6\pi),
\]
where $n=10^4$ random points are generated. At point $\gamma_3(t)$, consider an orthonormal basis
\[
v_T = \frac{\dot{\gamma_3}(t)}{\|\dot{\gamma_3}(t)\|}, \quad v_N = \frac{\ddot{\gamma_3}(t)}{\|\ddot{\gamma_3}(t)\|}, \quad v_B = \frac{v_T\times v_N}{\left\|v_T\times v_N\right\|}.
\]
Noise with standard deviations $\sigma_1=0.09$ and $\sigma_2=0.03$ is introduced in the directions 
\[
v_1 = \sin(2t)v_B + \cos(2t)v_N~\text{ and }~v_2= \cos(2t)v_B - \sin(2t)v_N,
\]
 respectively. The resulting data set, shown in Figure \ref{fig:Simu-Eucli-involute-Input}, can be regarded as the region swept by a rotating ellipse whose center translates along $\gamma_3$.
\begin{figure}[htbp]
    \centering
    \subfigure[]{\includegraphics[width=0.3\textwidth]{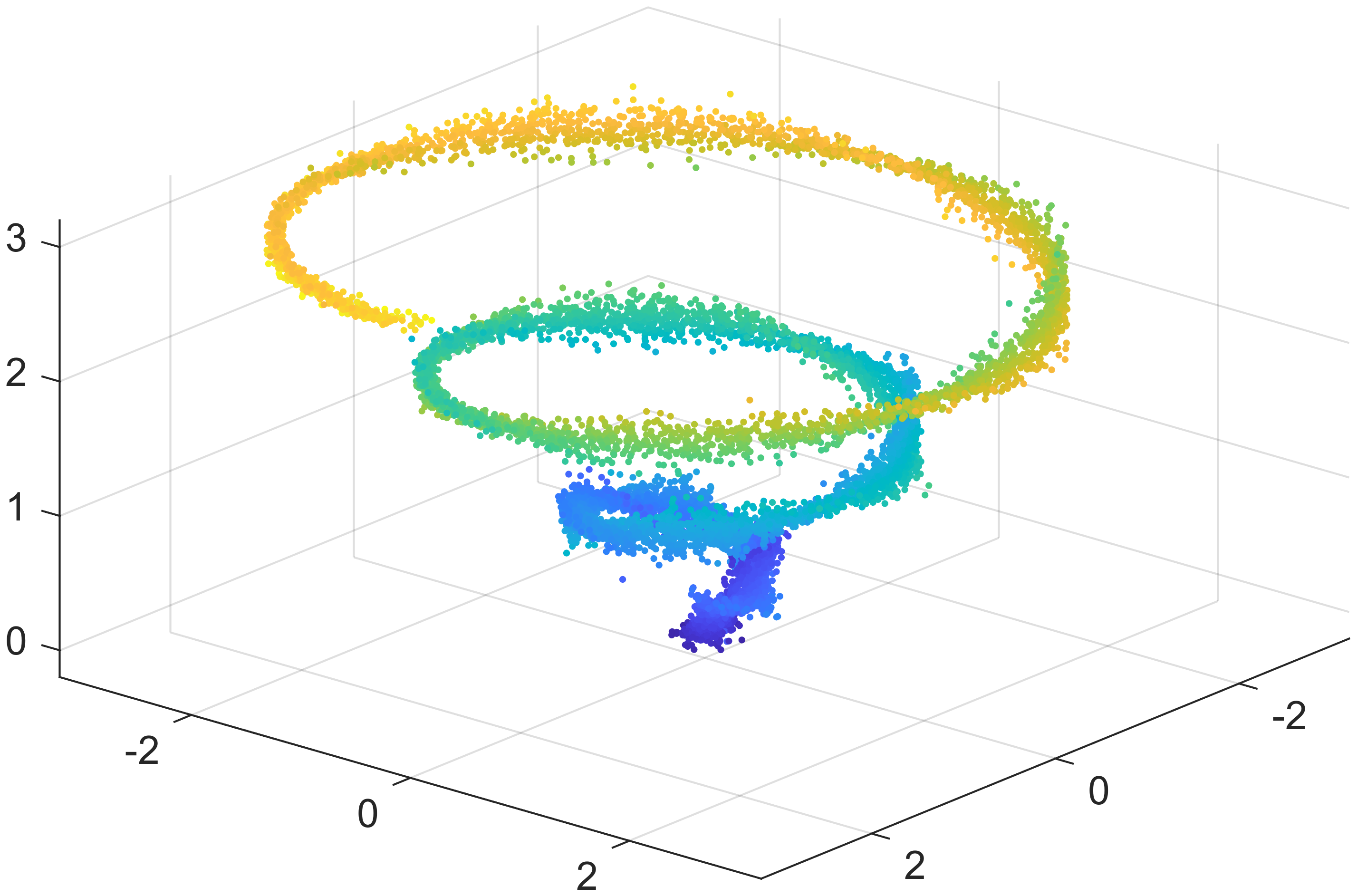}\label{fig:Simu-Eucli-involute-Input}}
    \subfigure[]{\includegraphics[width=0.3\textwidth]{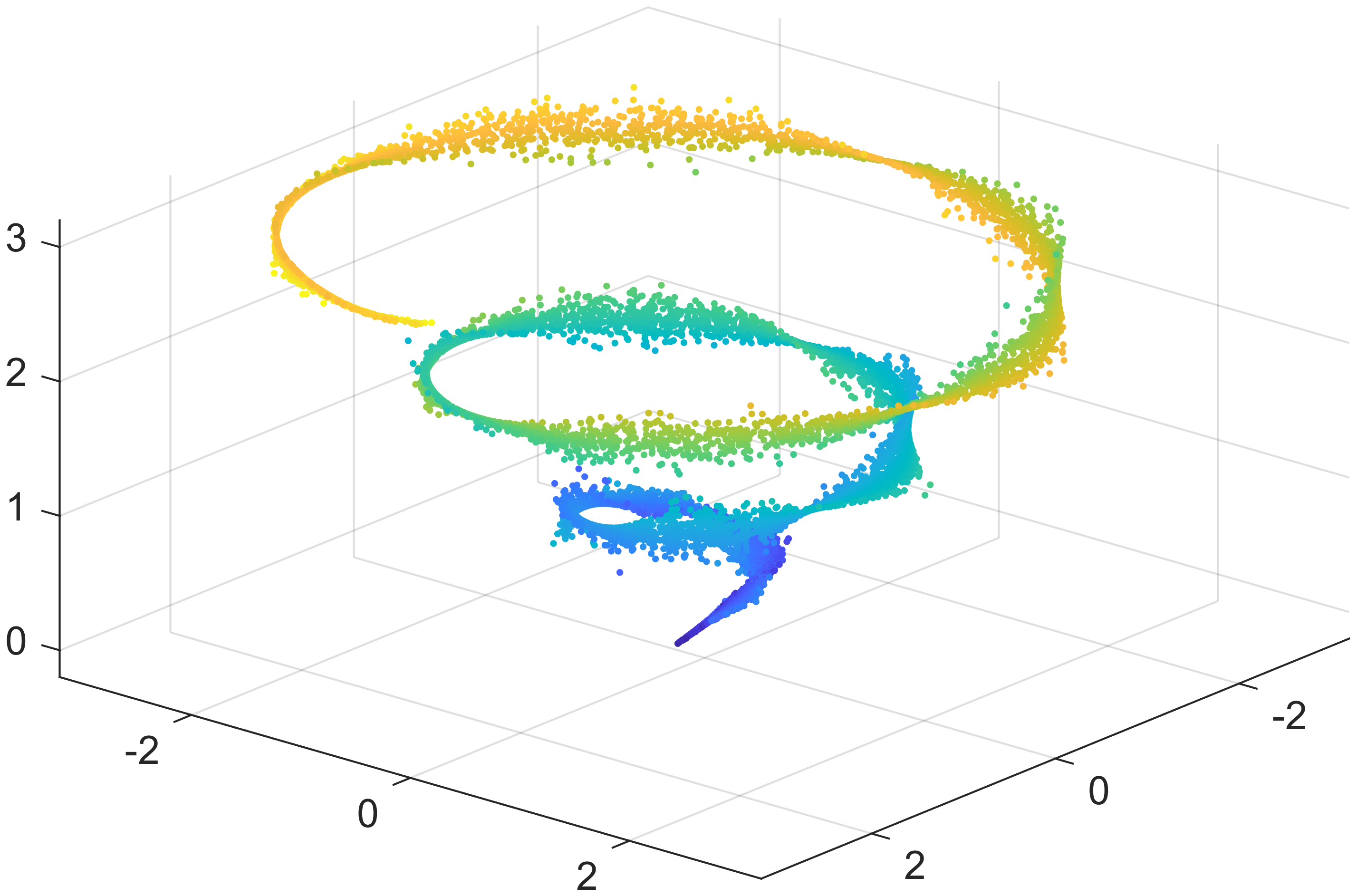}\label{fig:Simu-Eucli-involute-PNSM2}}
    \subfigure[]{\includegraphics[width=0.3\textwidth]{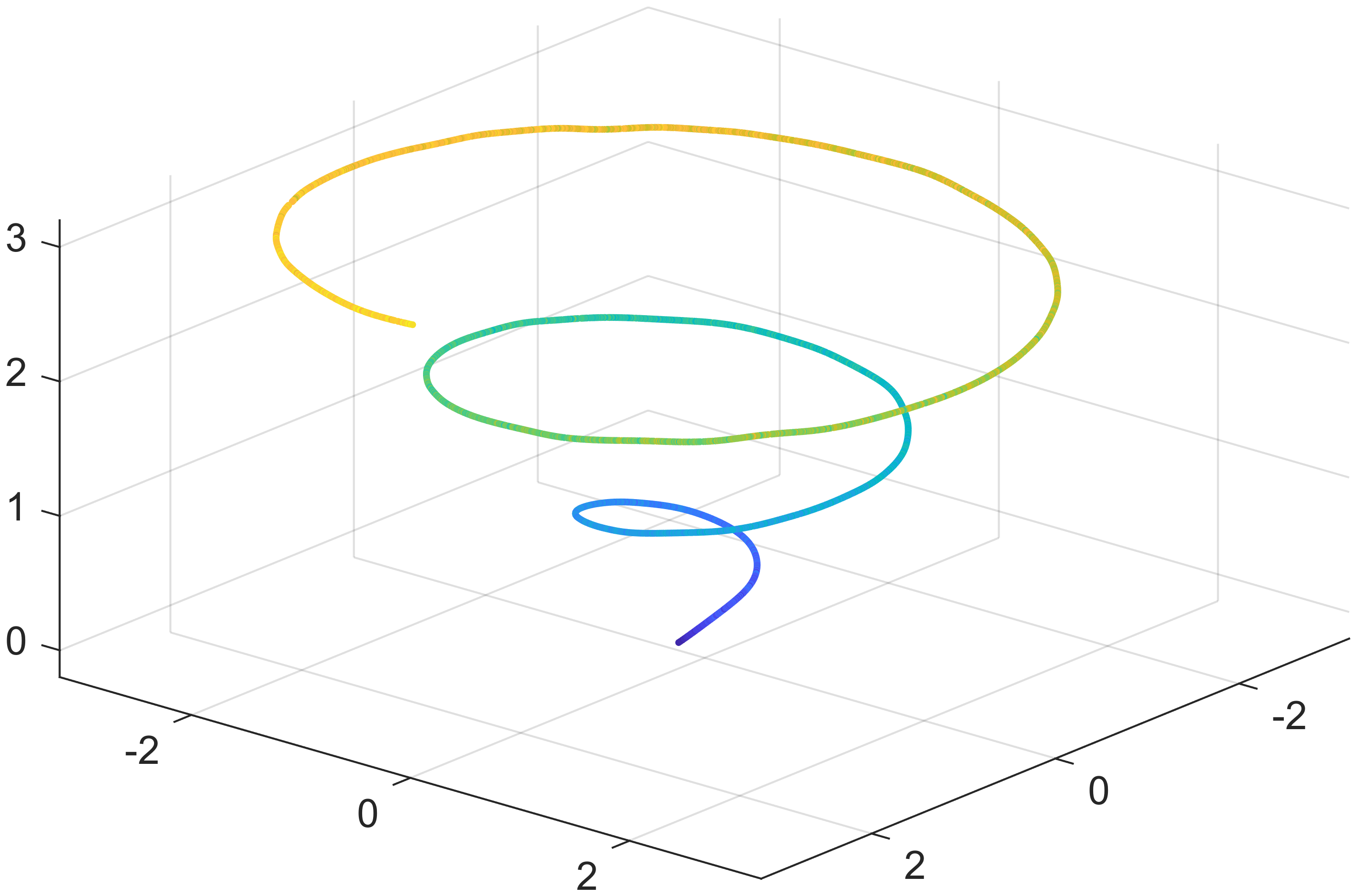}\label{fig:Simu-Eucli-involute-PNSM1}}\\
    \subfigure[]{\includegraphics[width=0.3\textwidth]{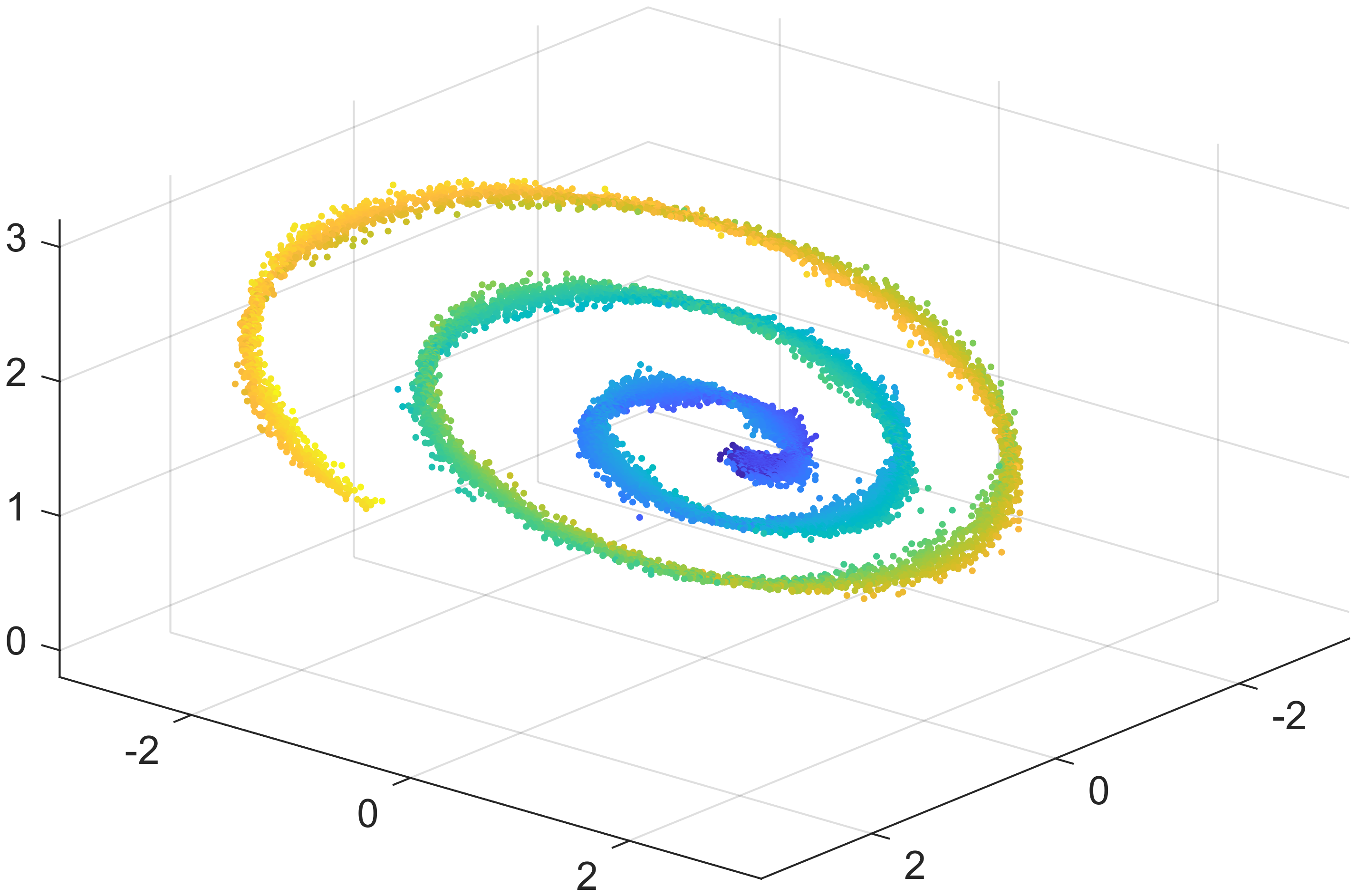}\label{fig:Simu-Eucli-involute-PCA2}}
    \subfigure[]{\includegraphics[width=0.3\textwidth]{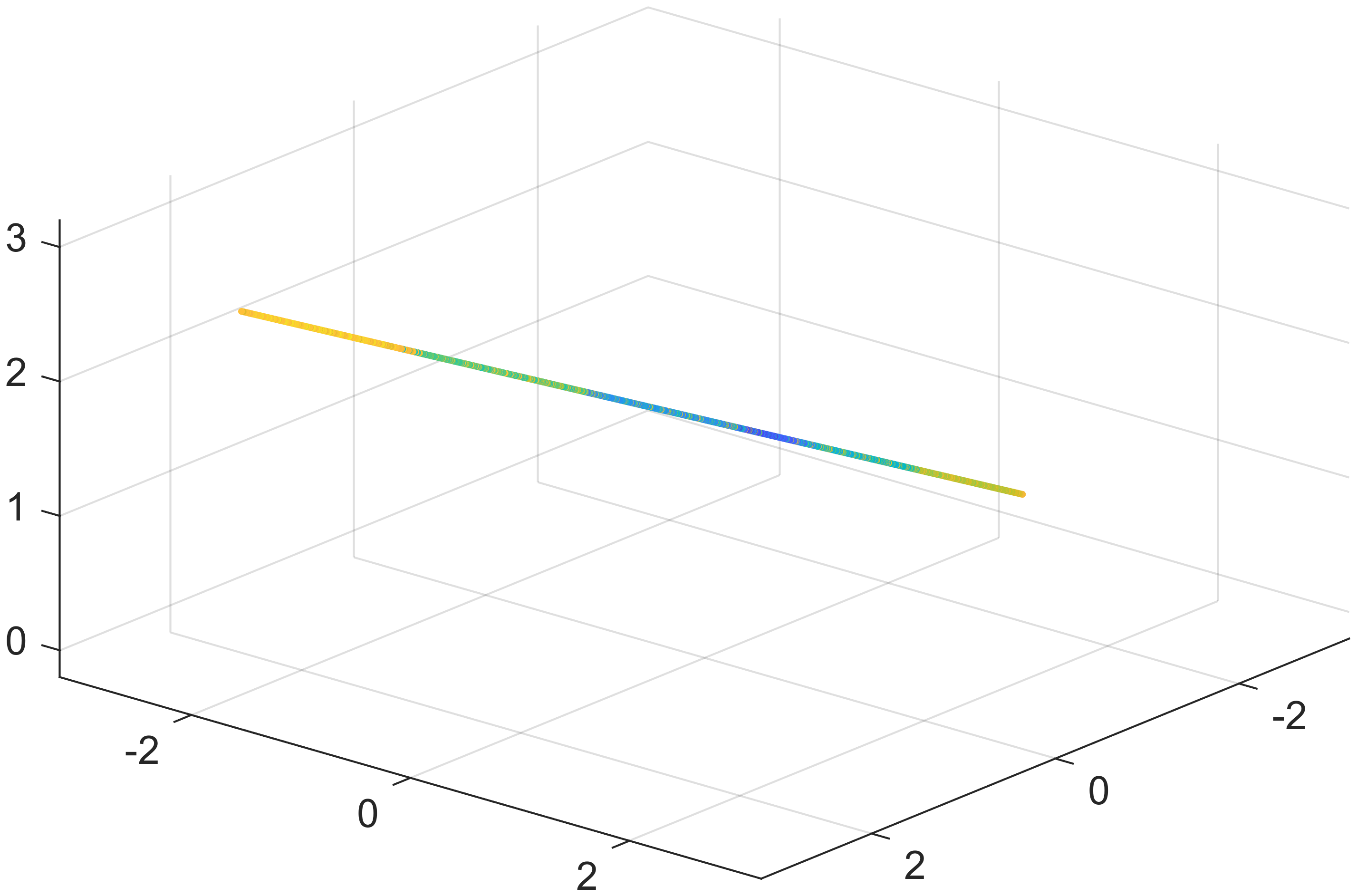}\label{fig:Simu-Eucli-involute-PCA1}}
    \caption{Scatter plots for the involute case in $\bR^3$ with colors added to enhance the visualization of sample adjacency. (a) Input data set; (b,c) Projection results onto the principal nested submanifolds of dimensions 2 and 1, respectively; (d,e) Projections onto the first two and single principal components, respectively.}
    \label{fig:Simu-Eucli-involute}
\end{figure}

Figure \ref{fig:Simu-Eucli-involute} presents the input data and projection results from the two methods. The two-dimensional projection with our method is shown in Figure \ref{fig:Simu-Eucli-involute-PNSM2}, which is a rotating strip along the involute with the surface suits the direction of $v_T$ and $v_1$. This result suggests that our method is still able to capture the covariance structure in the form of smooth functions and fit low-dimension structure by removing variation along the principal direction corresponding to the smallest eigenvalue. In contrast, as shown in Figure \ref{fig:Simu-Eucli-involute-PCA2}, principal component analysis projects the points by flattening the point cloud onto a plane, which loses some details of the structure. These points are further projected on one-dimensional subspaces, and the results are present in Figure \ref{fig:Simu-Eucli-involute-PNSM1} and Figure \ref{fig:Simu-Eucli-involute-PCA1}. In this example, the resulting hierarchy is more consistent with the involute structure in $\cX_n$.

Overall, these simulations illustrate how the proposed method can be used for decomposition and dimension reduction for data around low-dimensional structures in Euclidean space. This is achieved by adaptively exploiting the local sample covariance structure and combining it with the manifold fitting approach. In contrast, while principal component analysis is able to project samples into nested low-dimensional subspaces, its inherent linearity limits its ability to dynamically exploit the covariance structure when sample correlations are nonlinear.

\subsection{Simulation in shape spaces}
In this subsection, we explore the decomposition of sample points around low-dimensional structures in shape spaces like spheres and tori, comparing our proposed method with the principal nested spheres and the torus principal component analysis in different cases. We still consider a generating curve $\{\gamma(t)\mid t\in\cT\}\subset[0,2\pi]^2$ and uniformly generate $n$ sample points $\{t_i\}_{i=1}^n\subset\cT$ in each case. Next, we introduce noise in the normal direction of $\gamma$ in $[0,2\pi]^2$. Specifically, a noisy angle pair in the angle plane is given by 
\begin{equation}\label{eq:simu-shapes-generating}
    (\phi_i, \psi_i)^\top = \gamma(t_i) + \xi_{i} \frac{\ddot \gamma(t_i)}{\|\ddot\gamma(t_i)\|},\quad i=1,\dots,n,
\end{equation}
where $\{\xi_{i}\}_{i=1}^n$ denotes a set of independent and identically distributed random noise amplitudes, each drawn from a normal distribution, $\cN(0,0.1^2)$. Given that scaling the entire sample set does not influence the analysis, we embed these angles onto $\cS^2$ and $T^2=\cS^1\times\cS^1$ respectively, which leads to the manifold-valued sample set $\cX_n$. This embedding results in a locally two-dimensional subset within the shape space. We then apply our proposed method, using a radius $r = 0.5$, to $\cX_n$ to project it onto the one-dimensional principal nested submanifolds. For comparative analysis, we perform both the principal nested spheres and torus principal component analyses.

\begin{figure}[htbp]
    \centering
    \subfigure{\includegraphics[width=0.25\textwidth]{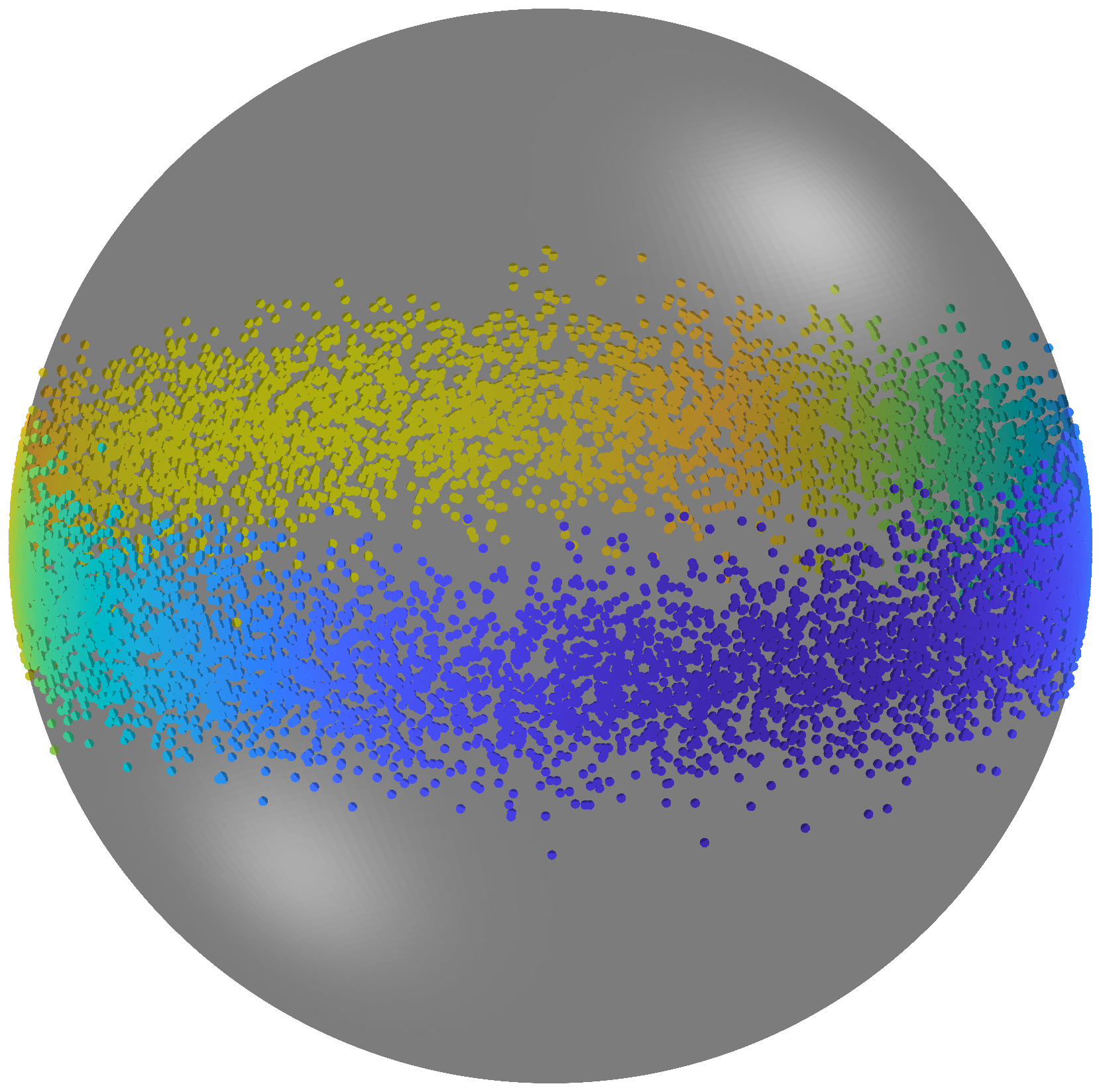}}
    \subfigure{\includegraphics[width=0.25\textwidth]{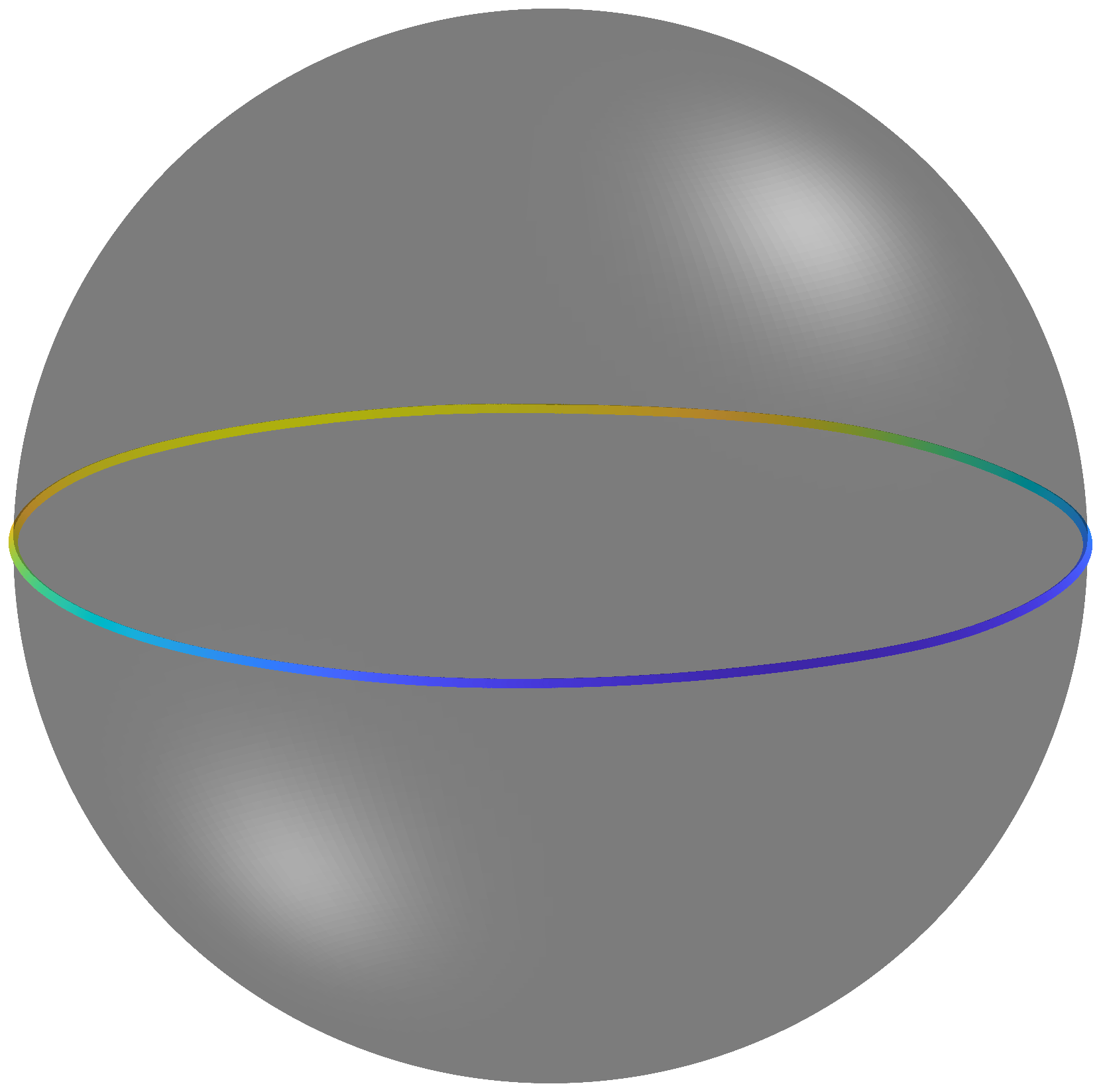}}
    \subfigure{\includegraphics[width=0.25\textwidth]{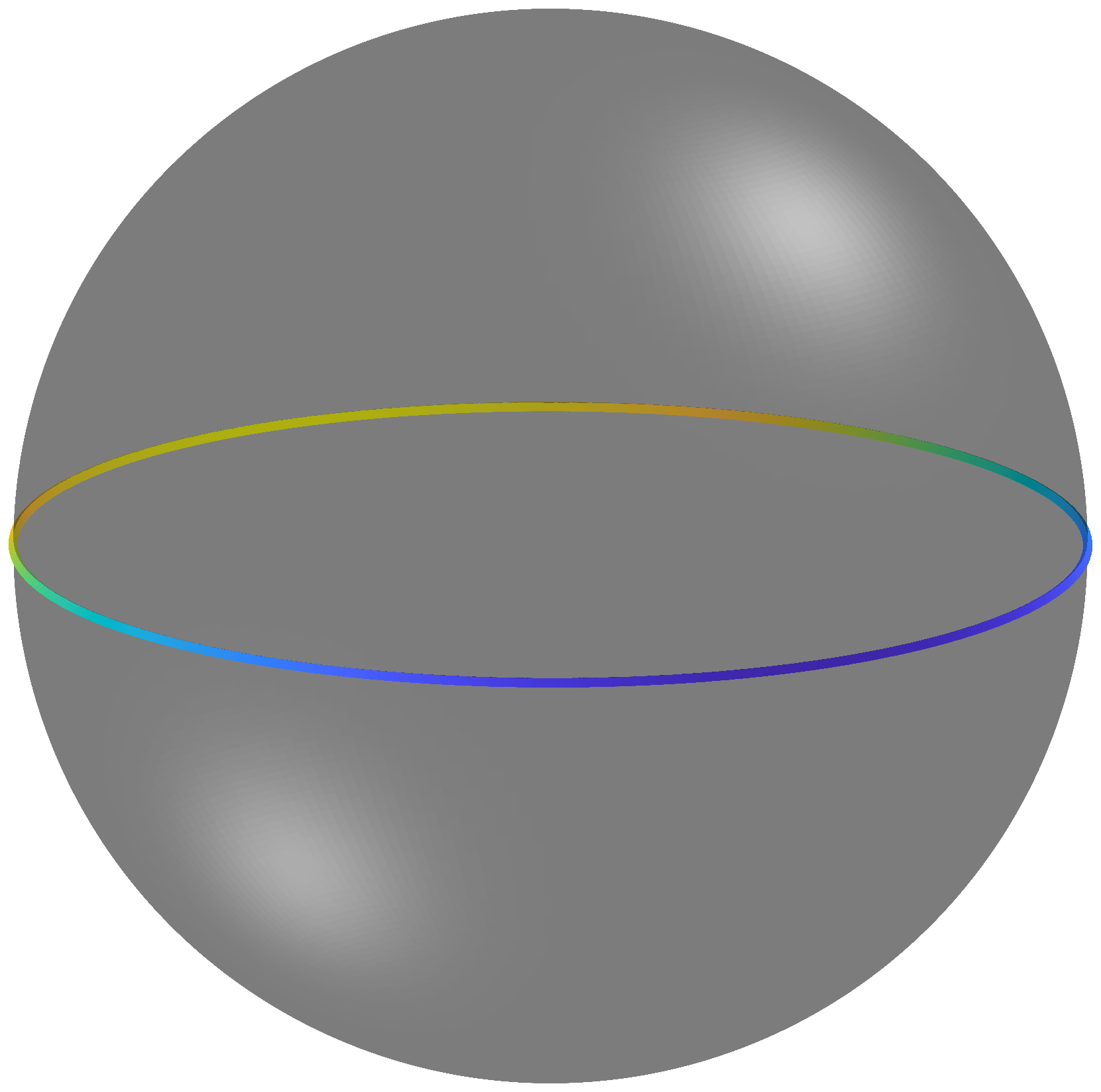}}\\
    \subfigure{\includegraphics[width=0.25\textwidth]{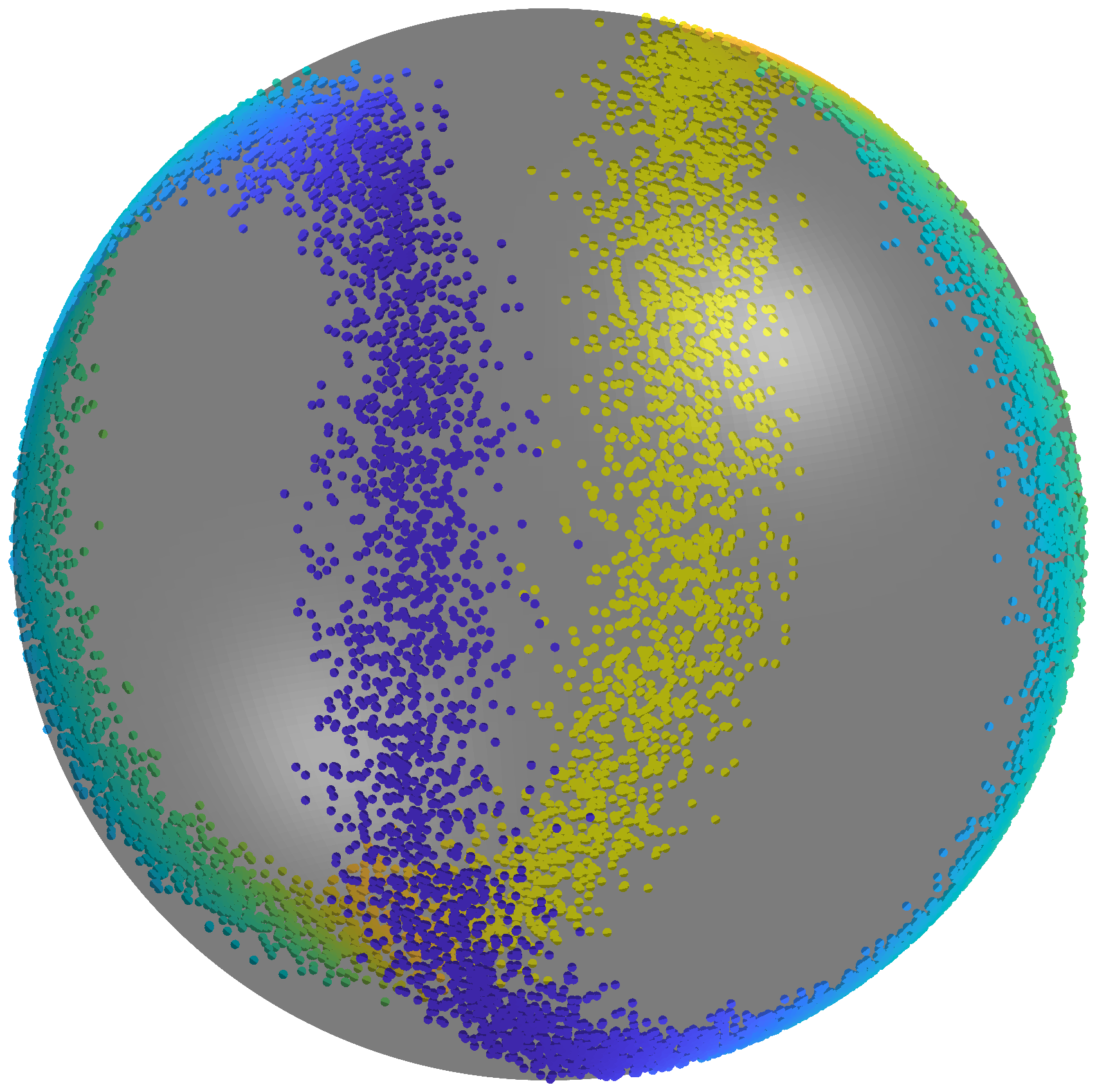}}
    \subfigure{\includegraphics[width=0.25\textwidth]{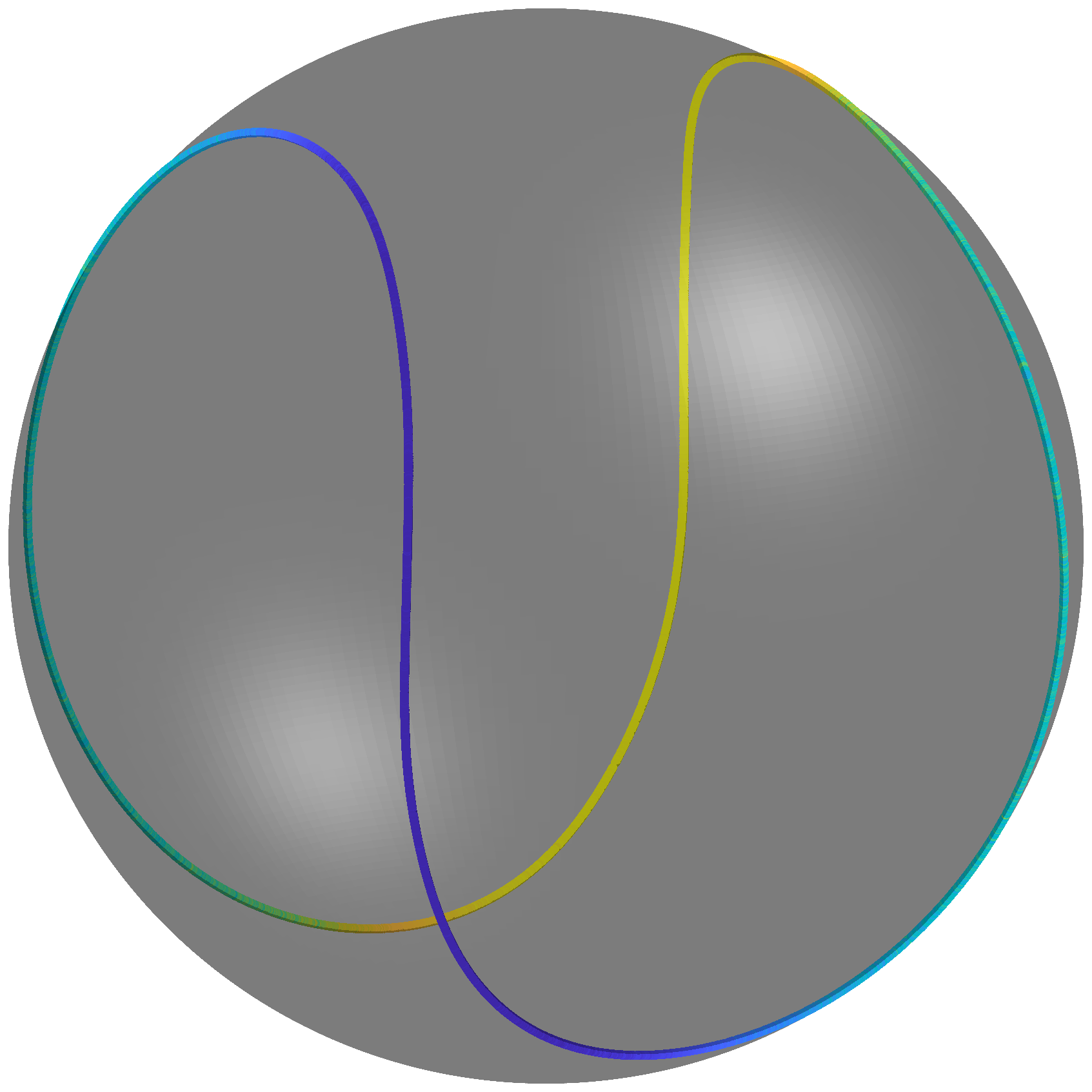}}
    \subfigure{\includegraphics[width=0.25\textwidth]{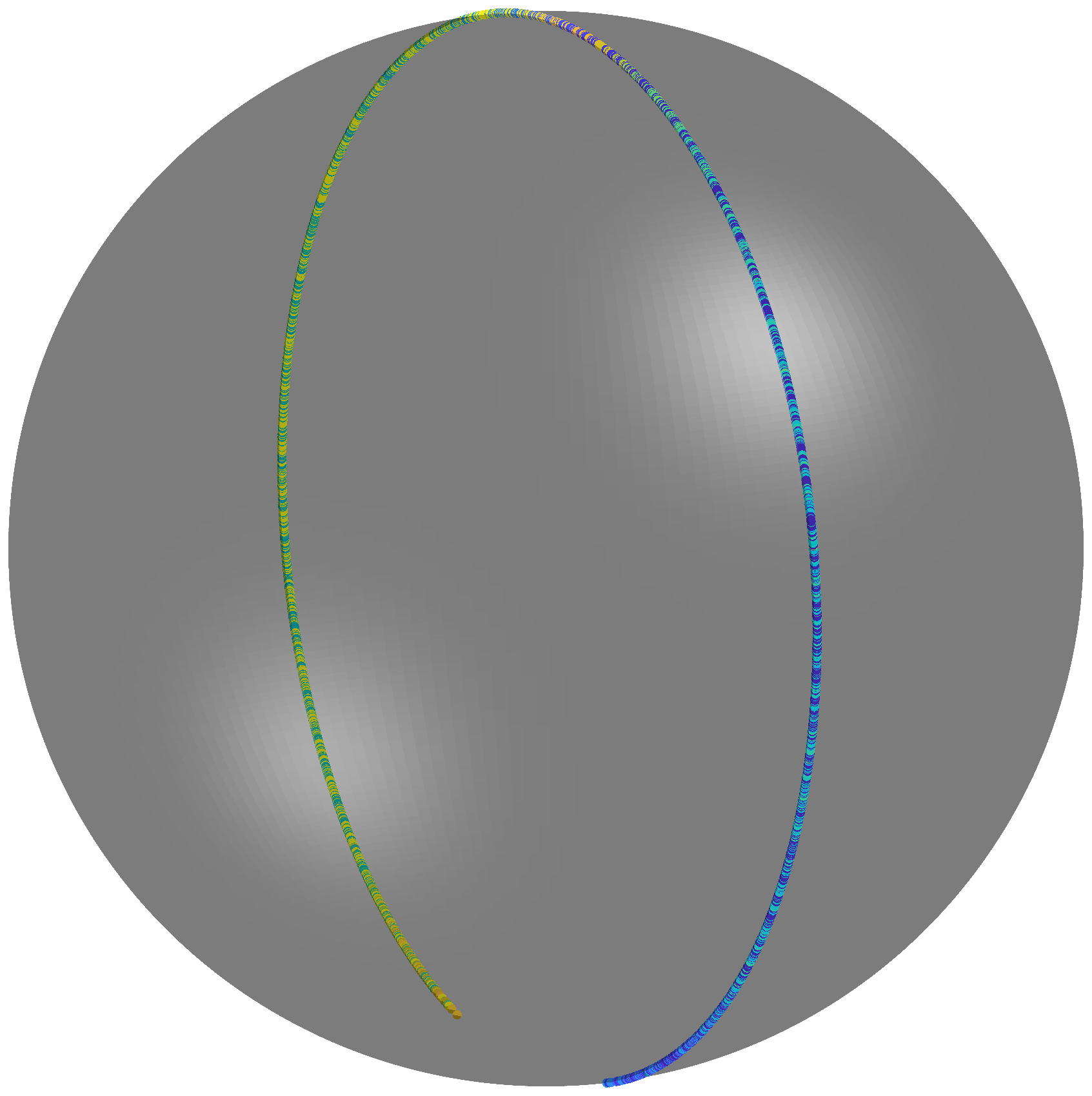}}\\
    \subfigure{\includegraphics[width=0.25\textwidth]{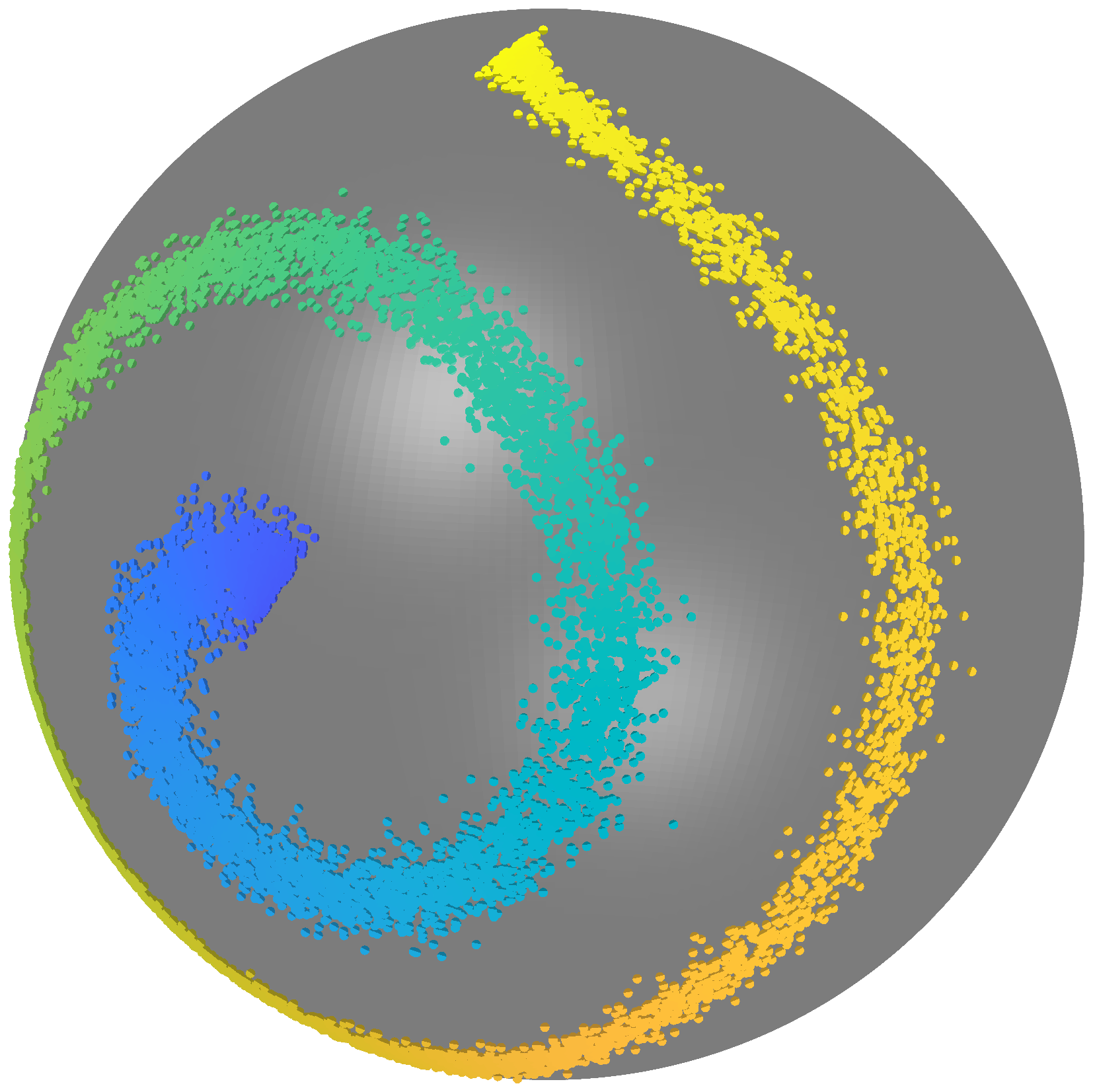}}
    \subfigure{\includegraphics[width=0.25\textwidth]{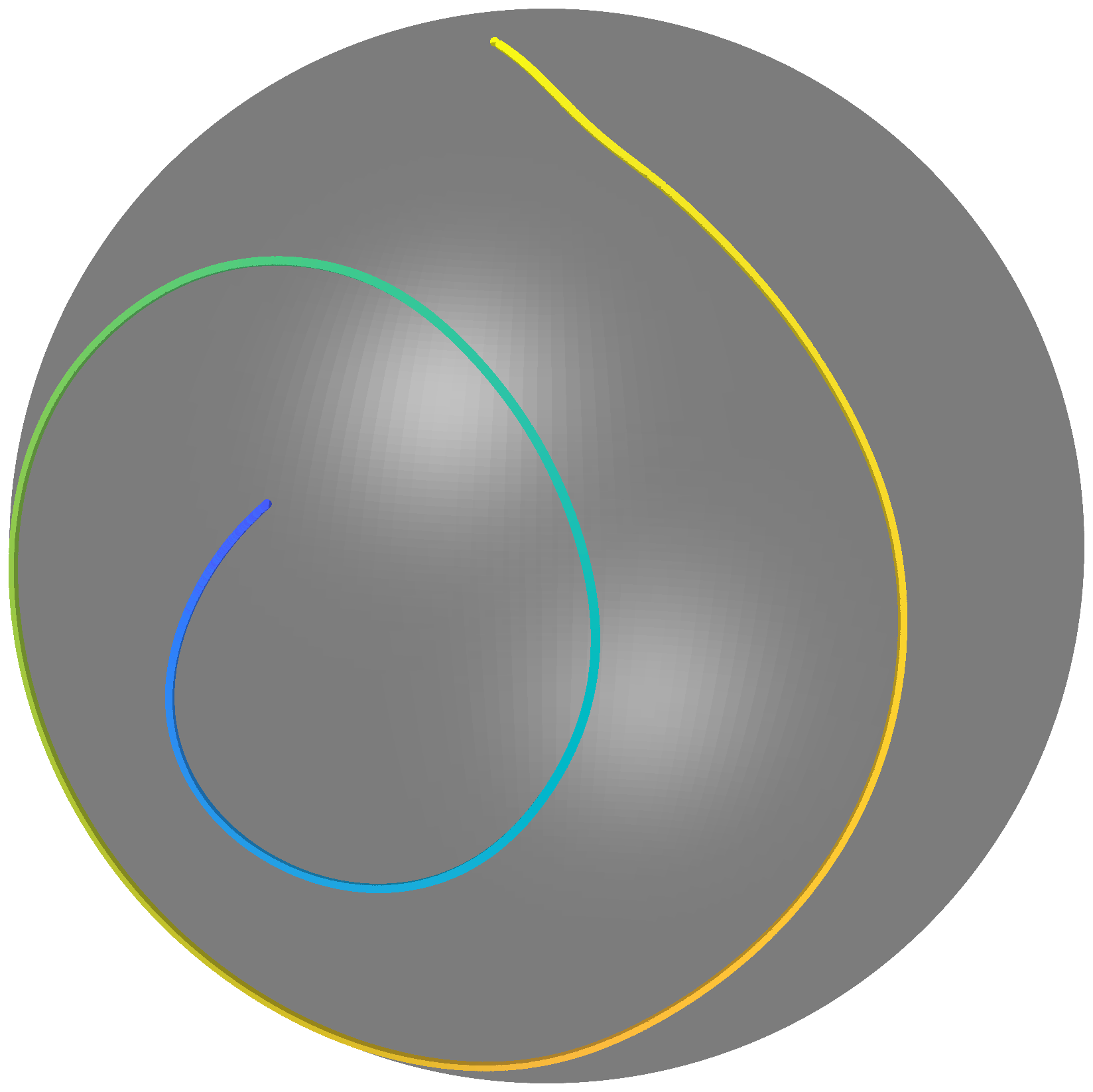}}
    \subfigure{\includegraphics[width=0.25\textwidth]{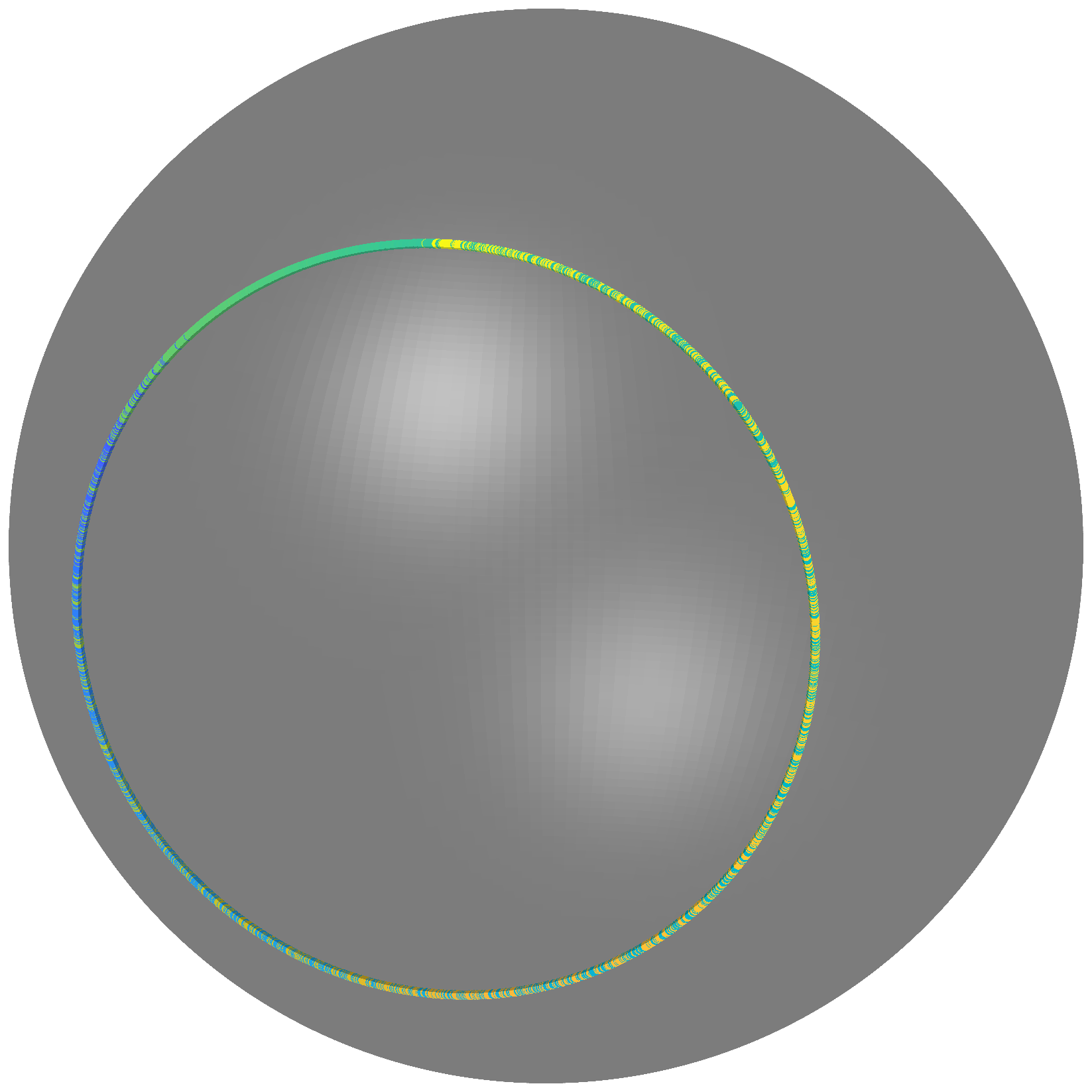}}\\
    \caption{Scatter plots for the sphere case with colors added to enhance the visualization of sample adjacency. Each row corresponds to one generative curve. The left column panels display the input data; the middle panels show the results with principal nested submanifolds, and the right column presents the results from principal nested spheres.}
    \label{fig:Simu-Sphere}
\end{figure}

First, we investigate the decomposition of the data on spheres. In this case, samples are generated with three curves
\[
\begin{aligned}
    \gamma_1(t) &= \left(t,0\right)^\top, \quad t\in(0,2\pi);\\
    \gamma_2(t) &= \begin{cases}
        \left(\arctan\left\{\tan^3(t) - \frac{\pi}{2}\right\},\arccos\left\{\surd 3 \sin(t)\cos(t)\right\}\right)^\top, &t\in(0,\pi),\\
        \left(\arctan\left\{\tan^3(t) + \frac{\pi}{2}\right\},\arccos\left\{\surd 3 \sin(t)\cos(t)\right\}\right)^\top, &t\in(\pi,2\pi);\\
    \end{cases}\\
    \gamma_3(t) &= \left(\frac{t}{10}\cos(t),\frac{t}{10}\sin(t)\right)^\top, \quad t\in\left(\frac{\pi}{2},\frac{9\pi}{2}\right).
\end{aligned}
\]
with the data generating function in \eqref{eq:simu-shapes-generating}, a set of $n=10^4$ noisy angle pairs in the angle plane for each case, and these angle pairs are embedded into a unit sphere $\cS^2\subset\bR^3$ by
\[x_i = \bigl( \cos(\psi_i)\cos(\phi_i),~\cos(\psi_i)\sin(\phi_i),~\sin(\psi_i)\bigl)^\top, \quad i=1,\dots,n.\]
The resulting data sets are shown in the left-most column of Figure \ref{fig:Simu-Sphere}, which resemble a circle, a tennis curve, and an involute on $\cS^2$ respectively. Then, the proposed method and the principal nested spheres are applied to these data to project them on one-dimensional submanifolds of $\cS^2$.

The middle column of Figure \ref{fig:Simu-Sphere} presents the projection results of our method. In each case, the points projected by our method are approximately on a curve that passes through the center of the sample, and the neighbourhood relationship between the samples is well maintained. This suggests that our method can fit submanifolds on spheres to explain the majority of sample variation. In contrast to our approach, the principal nested spheres can only projects the samples on circles cut from the sphere, as shown in the last column of Figure \ref{fig:Simu-Sphere}, which leads to some loss of neighbourhood information. This difference is similar to principal component analysis and reflects the restriction to special subspaces.

\begin{figure}[htbp]
    \centering
    \subfigure{\includegraphics[width=0.25\textwidth]{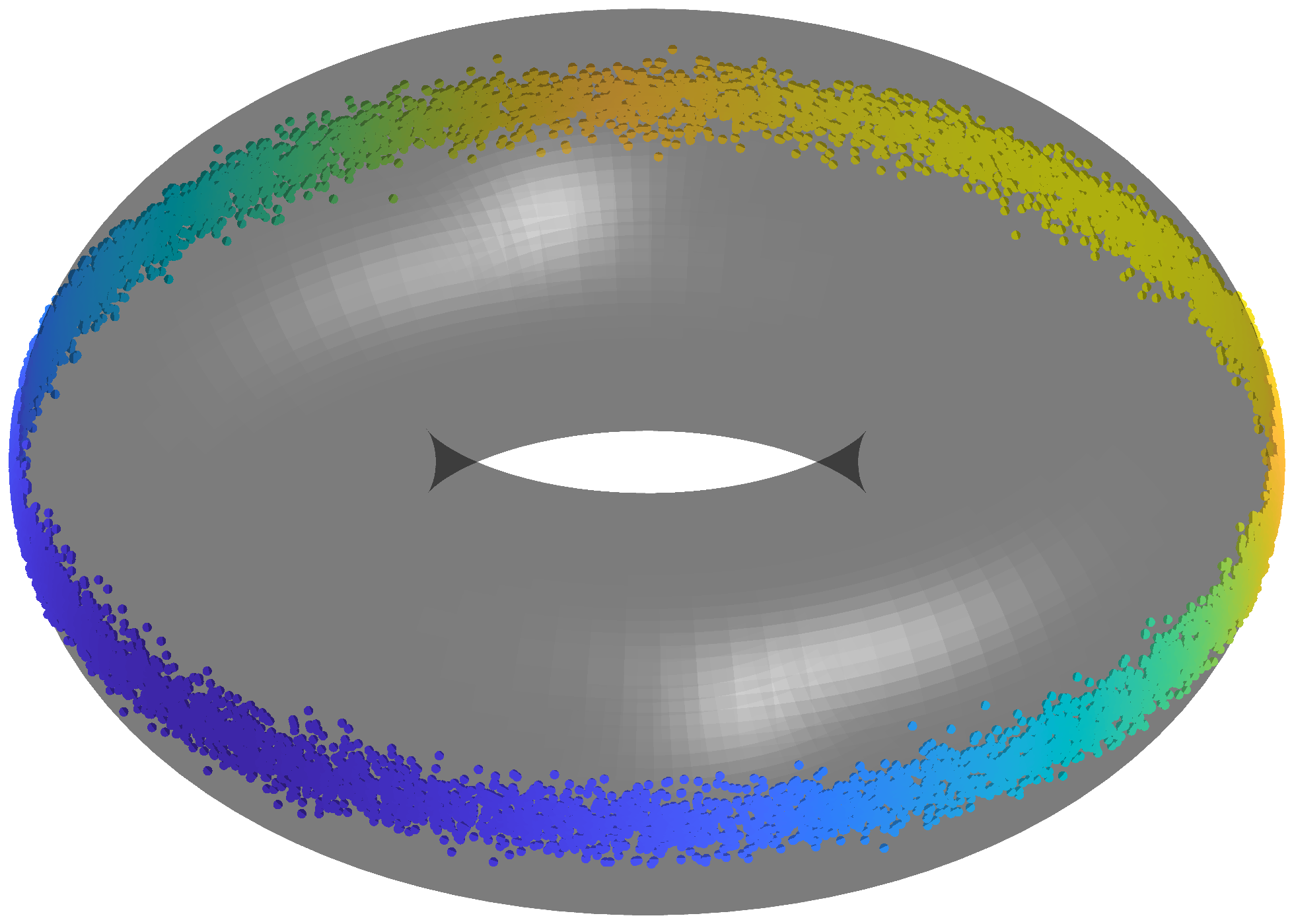}}
    \subfigure{\includegraphics[width=0.25\textwidth]{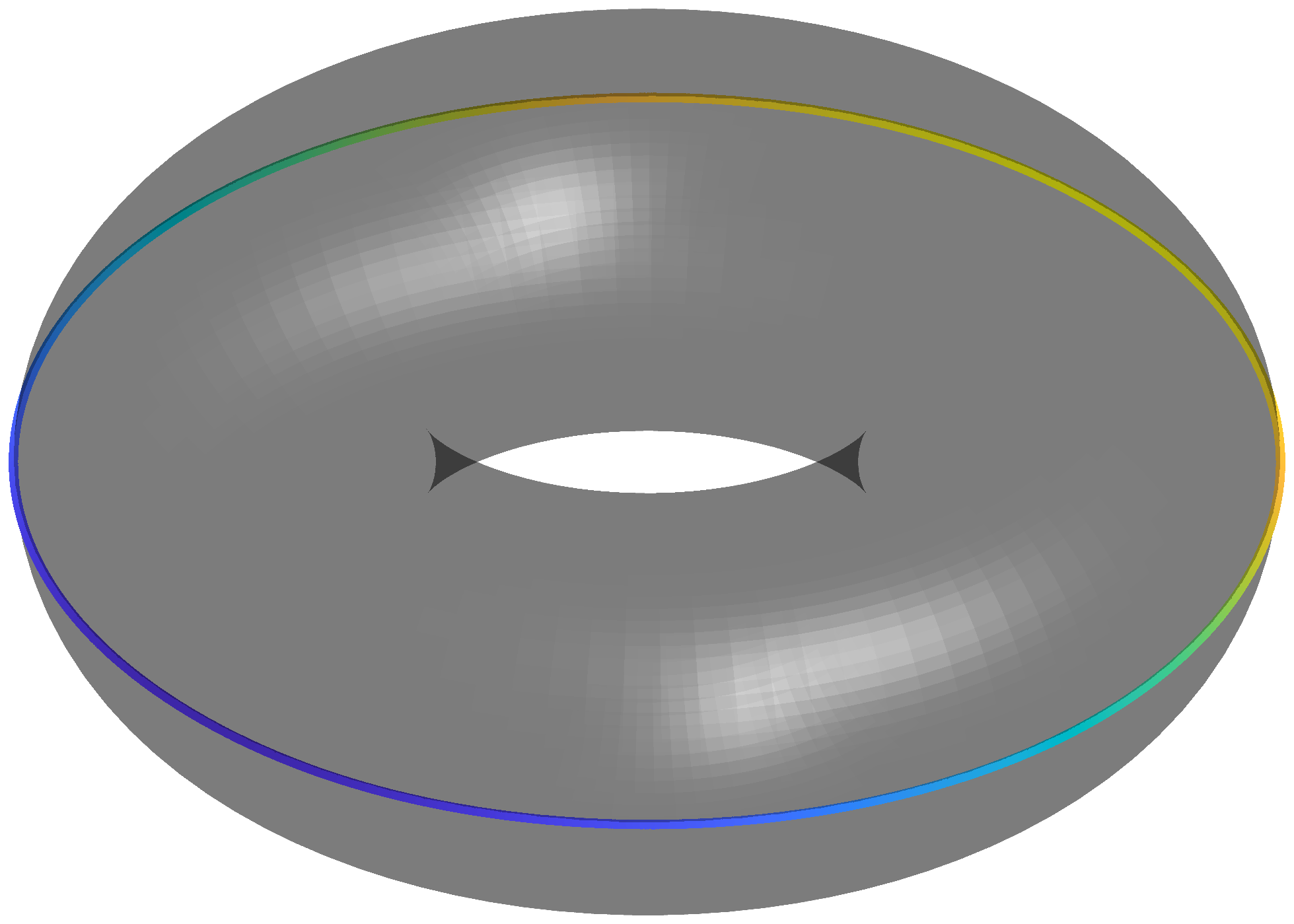}}
    \subfigure{\includegraphics[width=0.25\textwidth]{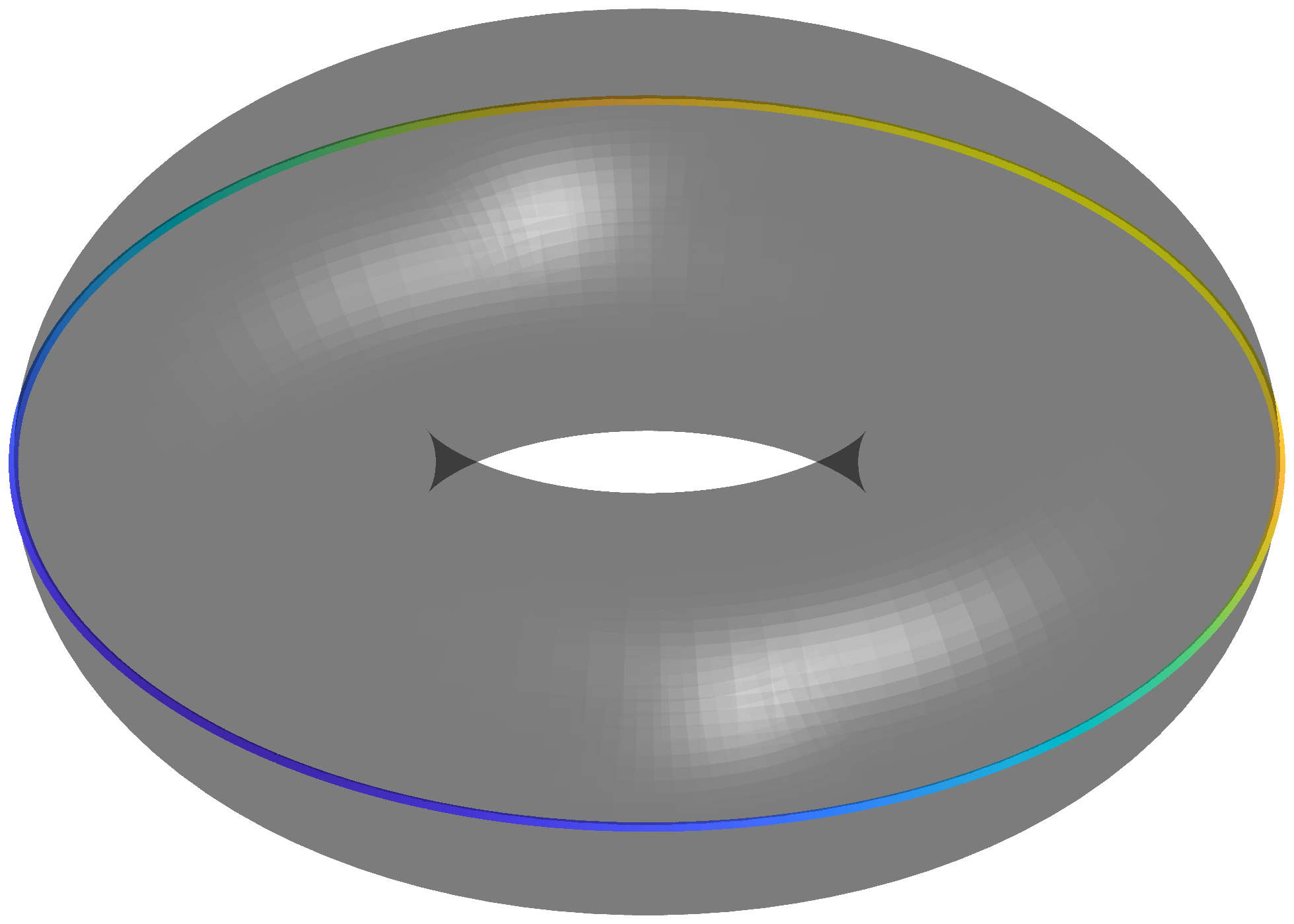}}\\
    \subfigure{\includegraphics[width=0.25\textwidth]{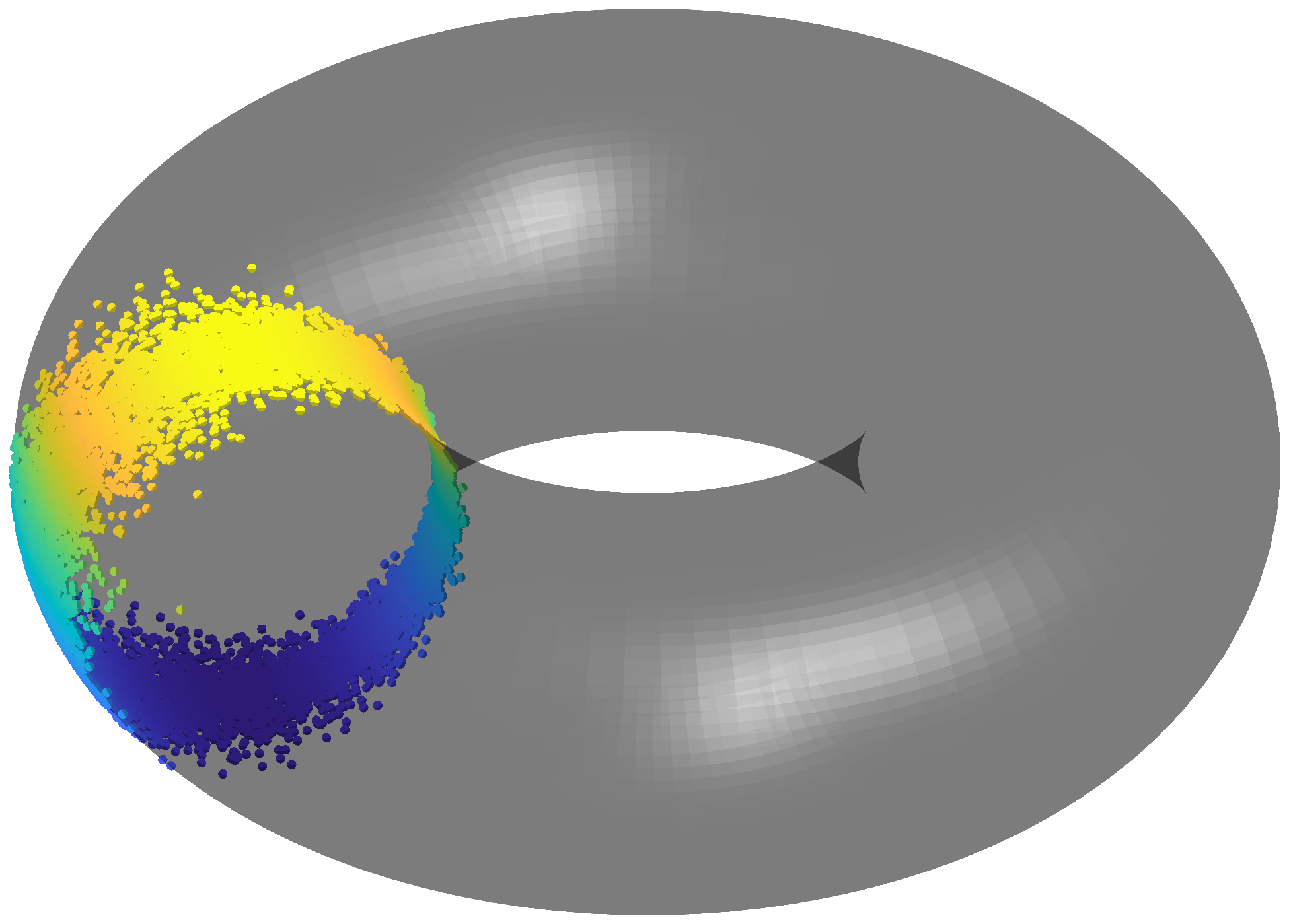}}
    \subfigure{\includegraphics[width=0.25\textwidth]{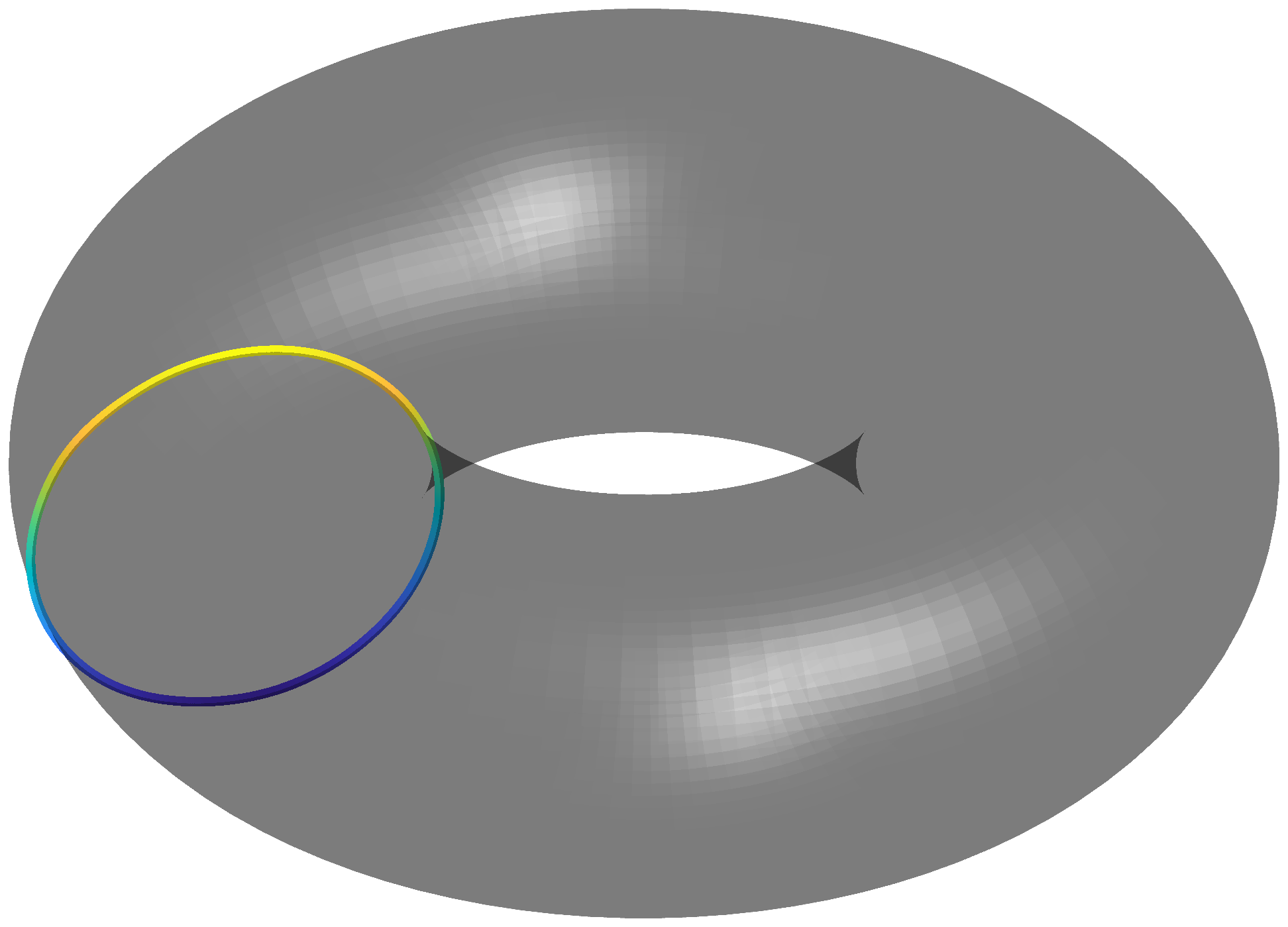}}
    \subfigure{\includegraphics[width=0.25\textwidth]{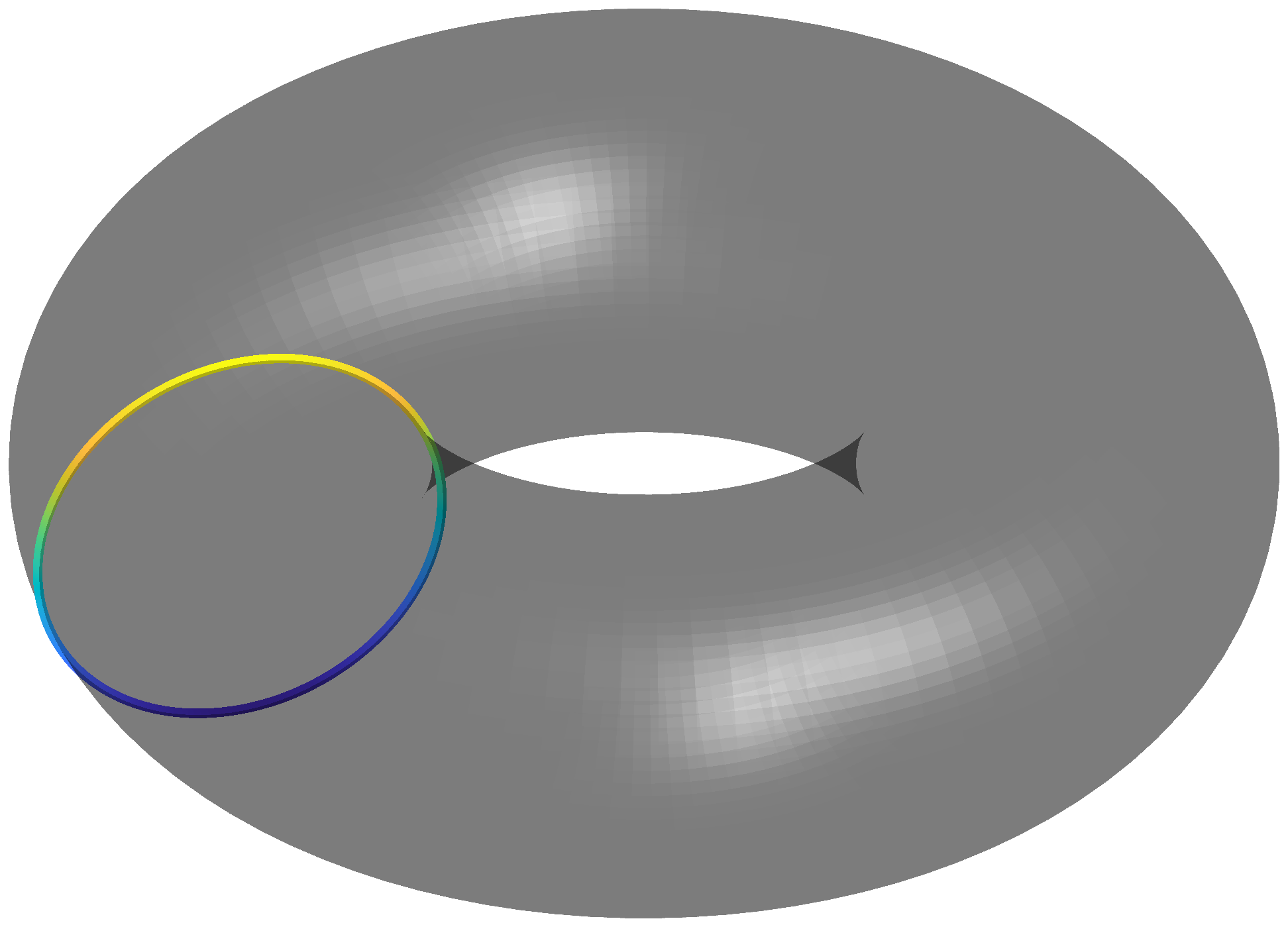}}\\
    \subfigure{\includegraphics[width=0.25\textwidth]{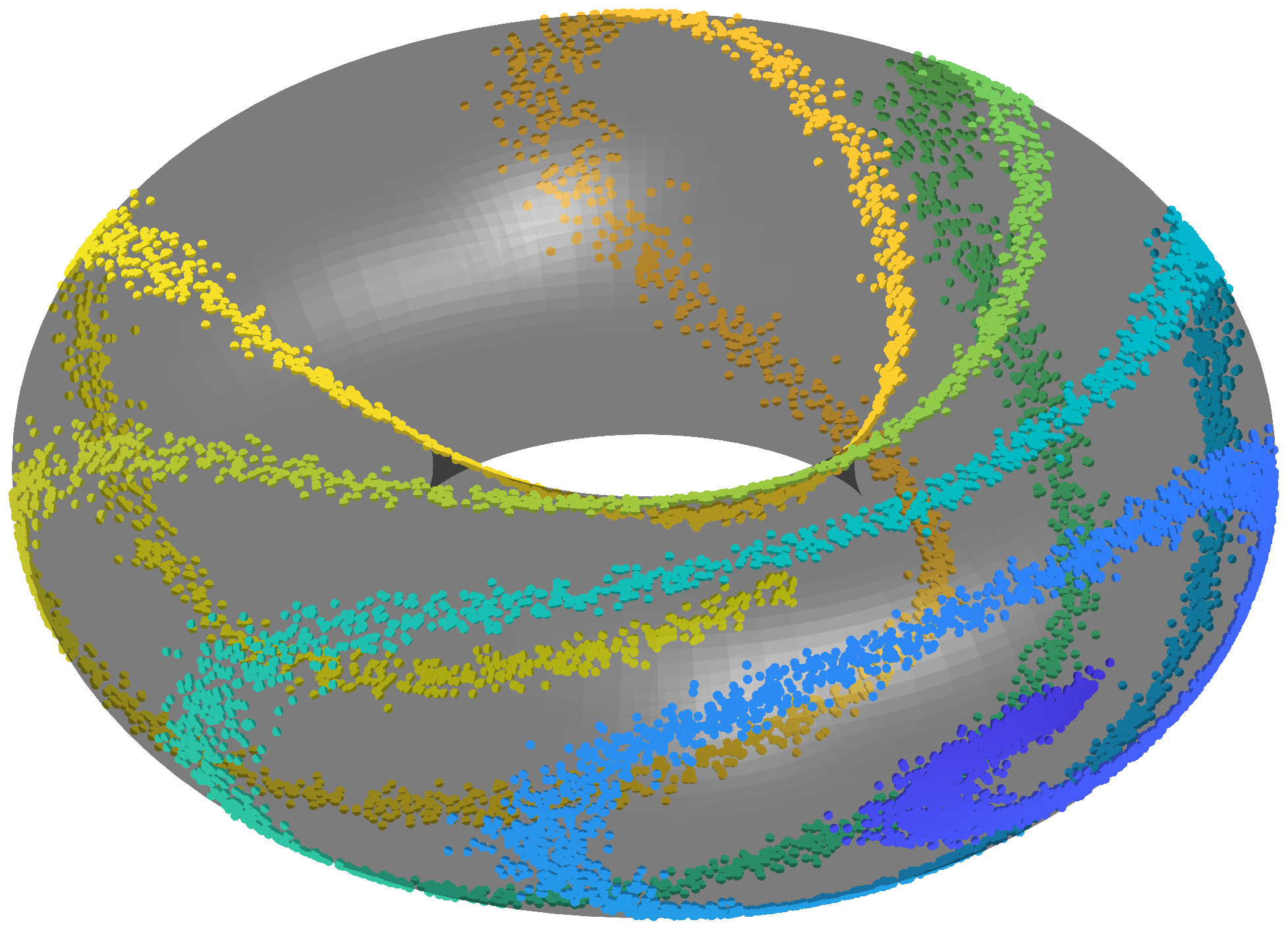}}
    \subfigure{\includegraphics[width=0.25\textwidth]{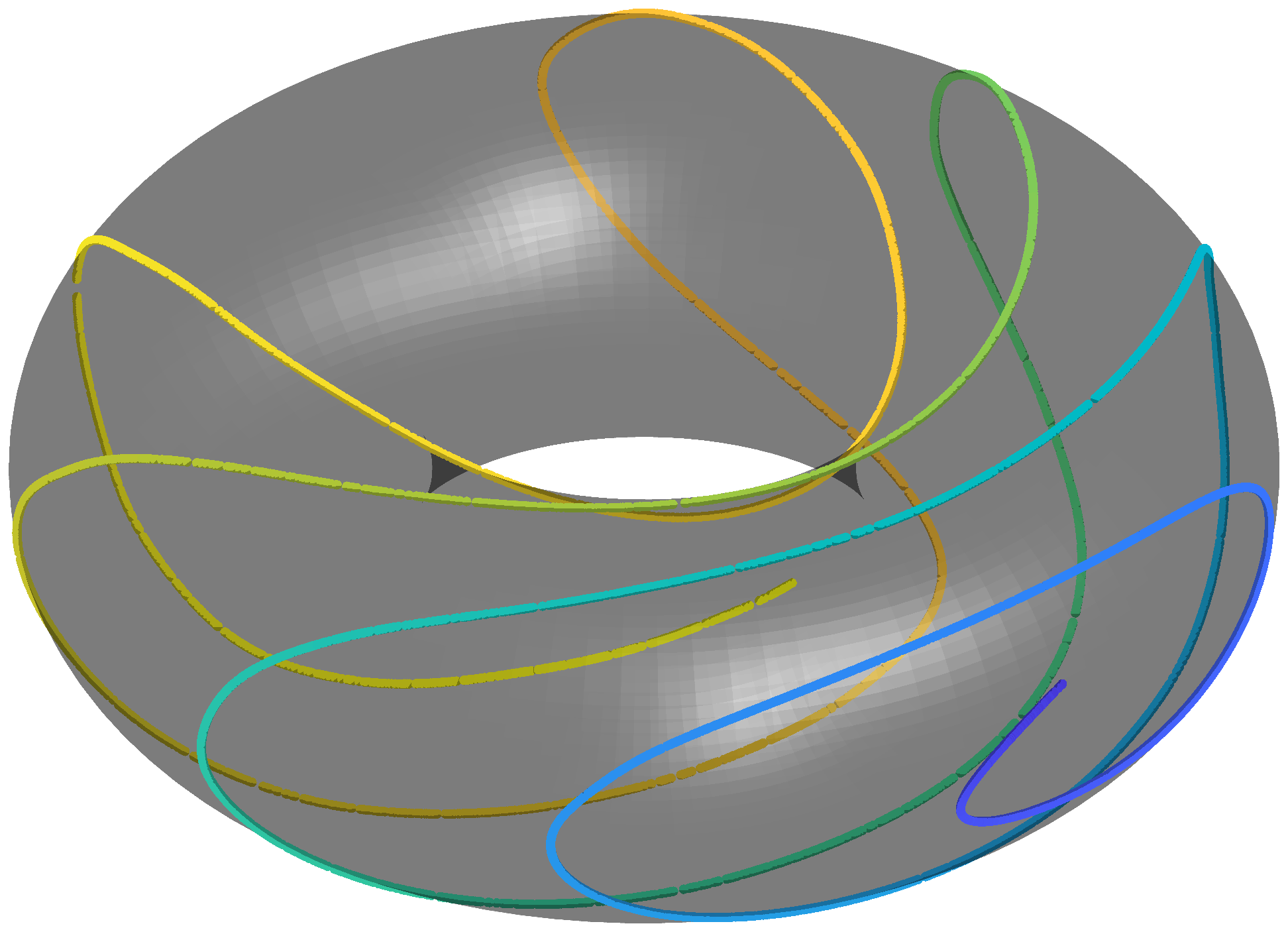}}
    \subfigure{\includegraphics[width=0.25\textwidth]{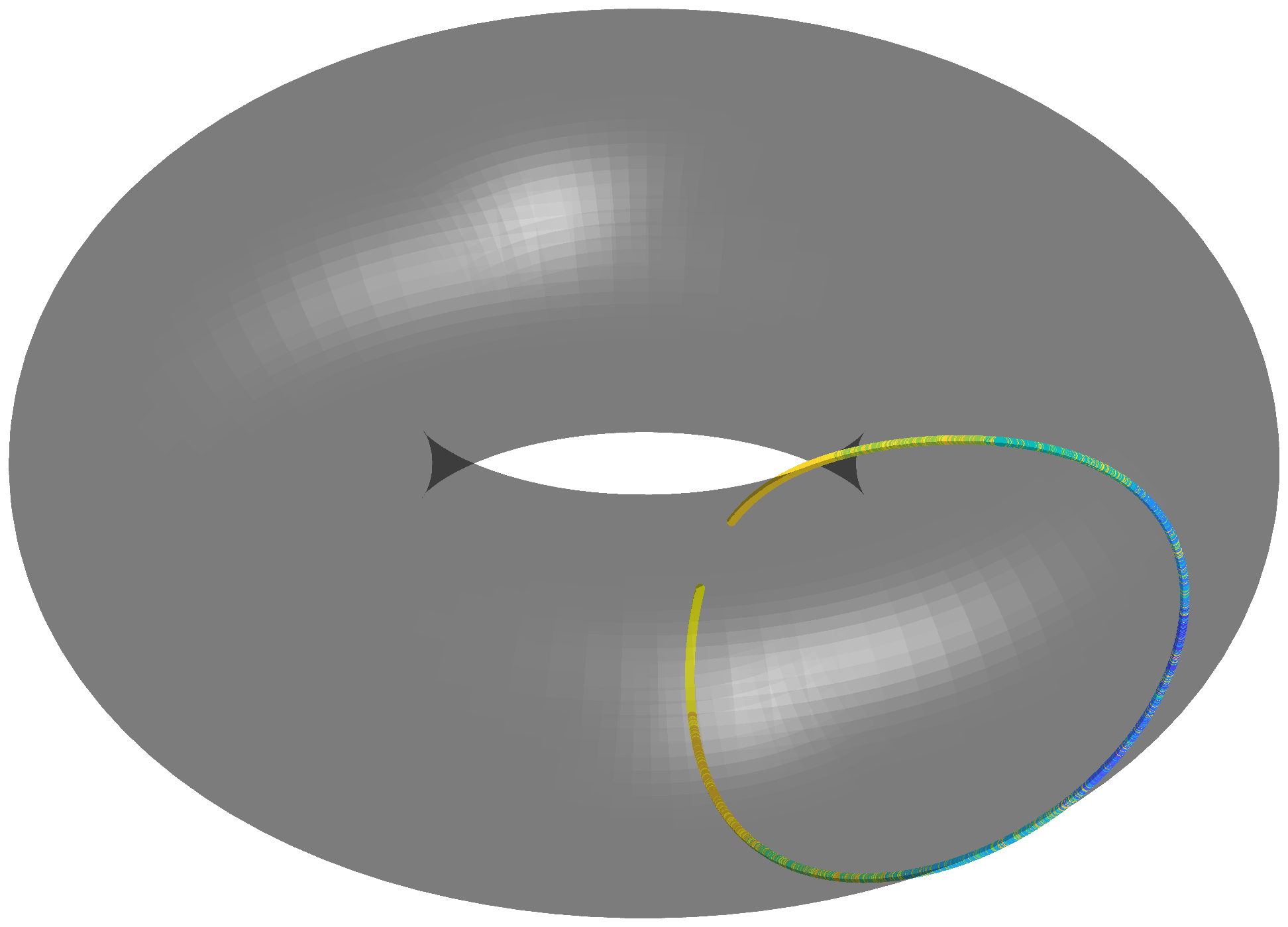}}\\
    \caption{Scatter plots for the torus case with colors added to enhance the visualization of sample adjacency. Each row corresponds to one generative curve. The left column panels display the input data; the middle panels show the results with principal nested submanifolds, and the right column presents the results from the torus principal component analysis.}
    \label{fig:Simu-Torus}
\end{figure}

Then, we investigate the decomposition of the data on torus. In this case, samples are generated with three curves
\[
\begin{aligned}
    \gamma_1(t) &= \left(t,0\right)^\top, \quad t\in(0,2\pi);\\
    \gamma_2(t) &= \left(-2\pi/3,t\right)^\top, \quad t\in(0,2\pi);\\
    \gamma_3(t) &= \left(\frac{t}{10}\cos(t),~\frac{t}{10}\sin(t)\right)^\top, \quad t\in\left(\frac{\pi}{2},\frac{19\pi}{2}\right),
\end{aligned}
\]
with the data generating function in \eqref{eq:simu-shapes-generating}, a set of $n=10^4$ noisy angle pairs in the angle plane for each case, and these angle pairs are embedded into a torus in $\bR^4$ by
\[
x_i = \bigl( \cos(\phi_i),~\sin(\phi_i),~\cos(\psi_i),~\sin(\psi_i)\bigl)^\top, \quad i=1,\dots,n.
\]
For better visualization, all the results are transformed back to the angles and further embedded in a torus in $\bR^3$ by 
\[
(\phi,\psi) \mapsto \bigl(\{1 + 0.5\cos(\psi)\}\cos(\phi),~\{1 + 0.5\cos(\psi)\}\sin(\phi),~0.5\sin(\psi)\bigl)^\top.
\]
The resulting data sets are shown in the left-most column of Figure \ref{fig:Simu-Torus}, which resemble two circles and an involute on the torus respectively. Then, the proposed method is applied to these data sets in $\bR^4$ to project them on a one-dimensional submanifold while the torus principal component analysis is directly applied on the angle pairs.

The projection results of our method are shown in the middle column of Figure \ref{fig:Simu-Torus}. In each case, the points projected by our method are approximately on a curve that passes through the center of the sample, and the neighbourhood relationship between the samples is well maintained. This suggests that our method can fit submanifolds on torus to explain the majority of sample variation. In contrast to our approach, the torus principal component analysis only projects the samples on circles cut from the torus, as shown in the last column of Figure \ref{fig:Simu-Torus}, which mixes points from different regions of the sample. This difference is also inherited from principal component analysis and the principal nested spheres, and highlights the descriptive flexibility of the present construction on the torus.

In summary, these examples illustrate how our method decomposes and performs dimension reduction on data around low-dimensional structures in spheres and tori. This is achieved by adaptively exploiting the local sample covariance structure and using manifold fitting methods. In contrast, methods like principal nested spheres and torus principal component analysis, which also project samples onto subspaces of specific dimensions, impose restrictions on the form of subspaces. This can lead to a less informative geometric summary when samples from different parts are mixed together.

\section{Application to single-cell RNA sequencing data}
In this section, we investigate the application of the proposed method to the single-cell RNA-sequencing data set reported by \citet{scRNAData}. This data set originates from the small intestinal epithelium of six mice. The processing of the samples is detailed in \citep{scRNAData}. In summary, the epithelium tissues undergo cell isolation, crypt isolation, cell sorting, and RNA sequencing, resulting in an expression matrix of 7,216 cells across 15,971 genes. The data is preprocessed by removing lowly expressed genes from individual cells and performing dimension reduction on highly expressed genes with principal component analysis following a logarithmic transformation.
 This initial PCA representation follows the preprocessing pipeline of \citet{scRNAData} and is used here as the input representation for subsequent geometric analysis, rather than as a claim that it provides a unique biological description of the data. 
 The 13 significant components are then used to construct a 200-nearest-neighbour graph, which serves as the input for Infomap \citep{infomap}. This process clusters the cells into 15 biologically significant groups (see Table \ref{tab:Cell-count} for these groups). 
 At the same time, single-cell RNA-sequencing data are often organized not only by discrete cell classes but also by continuous developmental relationships, so our analysis will use the 13 significant components to examine how biologically meaningful structure appears across a sequence of dimensions, with the 15 cell types serving as external labels for reference rather than as the target structure itself.

\begin{figure}[htbp]
    \centering
    \subfigure[]{\includegraphics[height=0.42\textwidth]{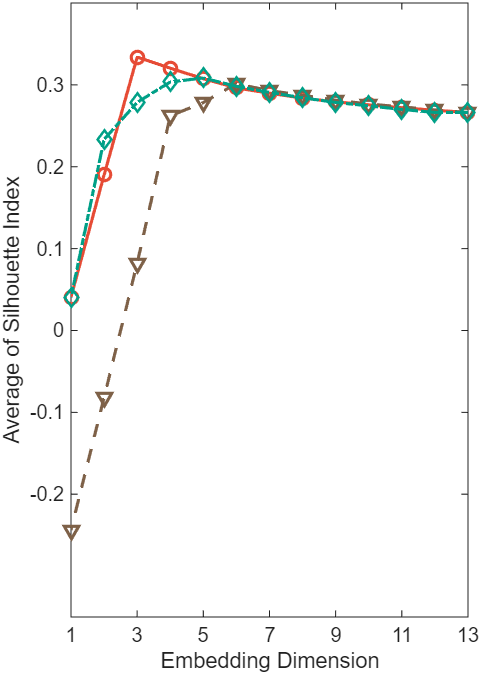}\label{fig:Real-metric-SI}}
    \subfigure[]{\includegraphics[height=0.42\textwidth]{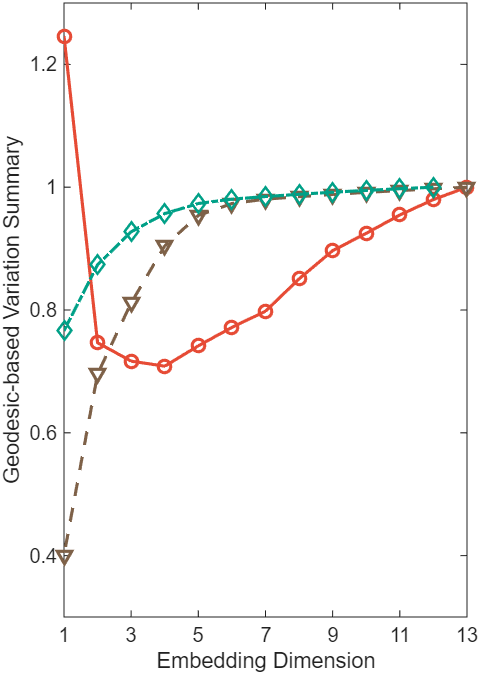}\label{fig:Real-metric-var}}
    \subfigure[]{\includegraphics[height=0.42\textwidth]{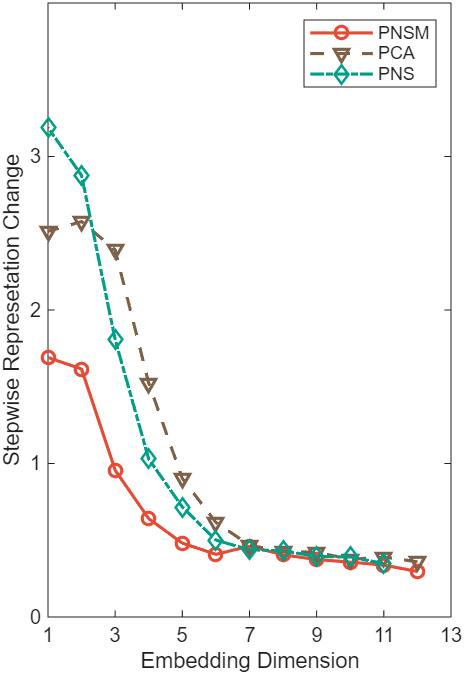}\label{fig:Real-metric-mse}}
    \caption{The average silhouette index (a), geodesic-based variation summary across dimensions (b), and stepwise representation change between consecutive levels of the hierarchy (c) for principal nested submanifolds (PNSM), principal component analysis (PCA), and principal nested spheres (PNS)}
    \label{fig:Real-metric}
\end{figure}

Prior to dimensionality reduction and subsequent analysis, the data undergoes further processing. To ensure the accurate estimation of principal directions, we initially remove outliers. We calculate the pairwise Euclidean distance between sample points, construct a neighbourhood network with $r = 6$, and eliminate points with fewer than 25 neighbours. This refinement process results in 6,528 samples remaining for further analysis. The distribution of these samples is summarized in the Appendix and illustrated through a two-dimensional t-distributed stochastic neighbour embedding in Figure \ref{fig:Real-tsne}(a). Compared to the original sample set (refer to Figure 1b in \citep{scRNAData}), this process predominantly excludes most Enteroendocrine and Tuft cells, as they are positioned farther from the majority of cells and exhibit a broader spread. Although the clustering of cells appears distinct, there is potential for further enhancement.

\begin{figure}[htbp]
    \centering
   \includegraphics[width=0.9\textwidth]{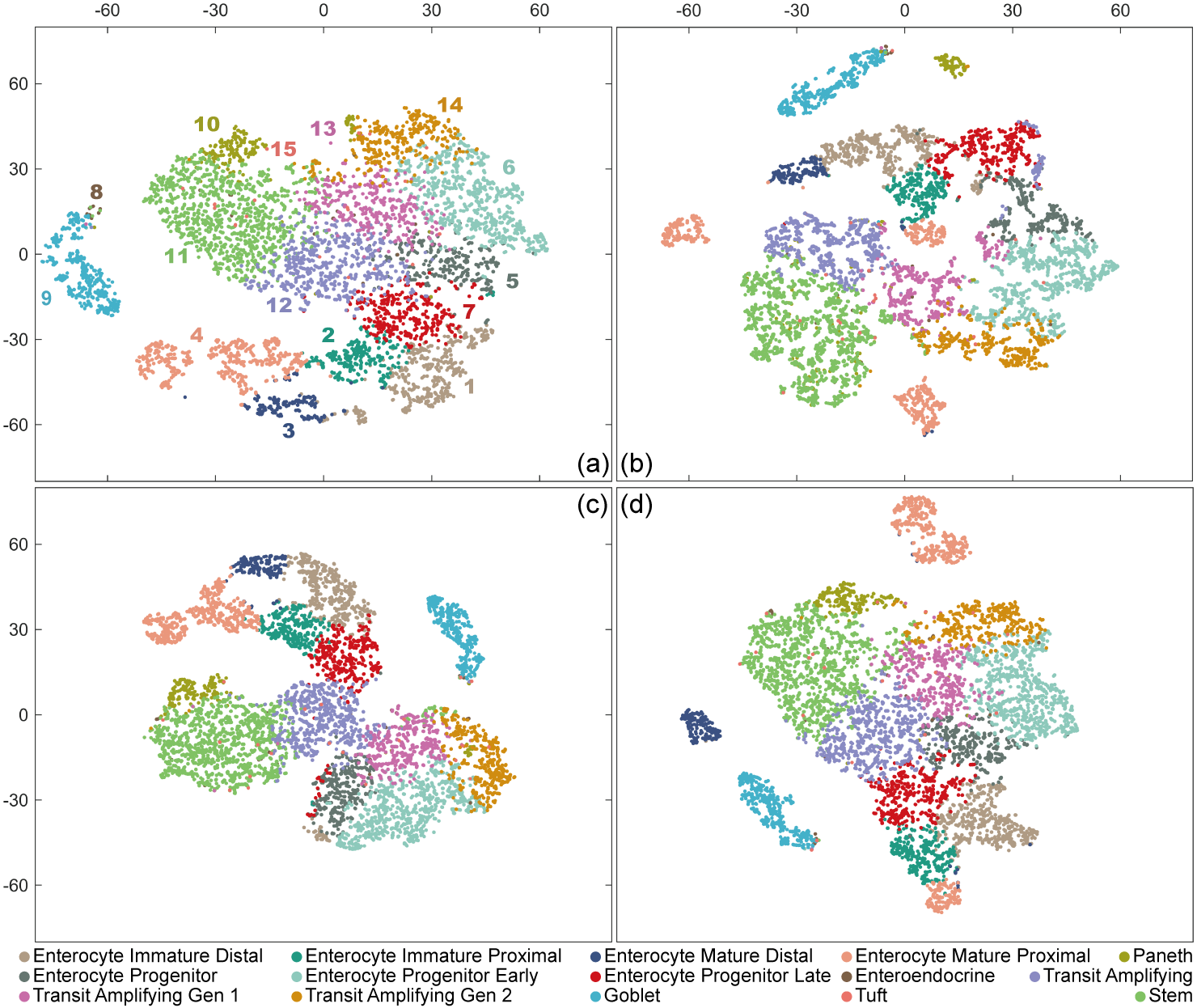}
    \caption{Scatter plots for the two-dimensional t-distributed stochastic neighbour embedding of (a) cleaned data set; (b) the principal nested submanifolds with $d=3$; (c) principal component analysis with $d=6$; (d) the principal nested spheres with $d=5$.}
    \label{fig:Real-tsne}
\end{figure}

To further examine the low-dimensional geometric organization of the cleaned data set, we apply our proposed method, principal component analysis, and principal nested spheres. In our method, we set the radius to $r = 6$, and sequentially fit submanifolds with codimensions ranging from 1 to 12. Here $r$ controls the locality, or smoothing scale, of the construction, and the present choice is used to retain local neighborhoods with sufficient support for stable local covariance estimation while remaining local in the data cloud. Different fixed values of $r$ would correspond to different smoothing scales and hence to different nested hierarchies. For principal component analysis, we compute the covariance matrix of the samples and project the samples onto linear subspaces of dimensions 1 to 12. To apply principal nested spheres to this data set, we normalize each sample, use the unit vectors on $\cS^{12}$ as inputs for the algorithm to fit nested spheres with dimensions 1 to 11, and then restore the norms of the output vectors to their original values. Eventually, we obtain projections of the samples in various dimensions using different methods. Rather than searching for a single optimal dimension, we use these representations to study how broad cell-type separation, lower-dimensional biological backbone structure, and finer subclass refinement appear at different levels of the hierarchy.

In the dimension reduction process, we summarize the resulting representations using three summary quantities: the average silhouette index, a geodesic-based variation summary, and the stepwise representation change at each projection step. For principal component analysis and principal nested spheres, the plotted variation summaries are the corresponding Euclidean or embedding-based analogues; for principal nested submanifolds, the quantity is instead a graph/geodesic-based summary estimated from the minimum sum of squared shortest paths on the nearest-neighbor graph of the samples as an approximation to geodesic distances. Consequently, this quantity need not be bounded by 1 or monotone in the target dimension, and is used here as a descriptive summary of the fitted hierarchy rather than as a strict variance decomposition. The stepwise representation change is likewise a descriptive quantity: it averages the Euclidean change between two consecutive levels of the nested hierarchy and should not be interpreted as a standard reconstruction mean squared error. Figure \ref{fig:Real-metric-SI} illustrates that the average silhouette index, whose value corresponds to our method initially increases, reaching a peak at $d = 3$, and then starts to decline. This pattern is consistent with the principal nested submanifolds yielding a useful low-dimensional summary of the data relative to the 15 provided labels. By comparison, the average silhouette indices for the other methods similarly rise and then fall; however, their peaks occur at higher dimensions, and their peak values are, in this example, numerically lower than those of our method. Rather than identifying a unique best dimension, these summaries indicate that different levels of the hierarchy can emphasize different aspects of the biological organization while remaining mutually compatible.

Figures \ref{fig:Real-metric-var} and \ref{fig:Real-metric-mse} display the geodesic-based variation summary in each subspace and the stepwise representation change between consecutive levels, respectively. For our method, the curve is approximately linearly decreasing at higher dimensions, with a marked shift between dimensions 3 and 4 that results in a significant increase at lower dimensions. As noted above, this graph/geodesic-based quantity is best read as a descriptive summary of how the hierarchy organizes the data across dimensions. In contrast, the other two methods exhibit only minor changes at the initial dimensions but experience more substantial changes when the dimensionality is below 5. This behavior indicates that their nested representations change more abruptly at lower dimensions. However, as Figure \ref{fig:Real-metric-mse} illustrates, all three methods show gradually increasing stepwise representation changes as the dimension is reduced, with our method taking lower values in this example. Read descriptively, this pattern is consistent with the method making substantial use of nonlinear structure and yielding a compact geometric summary of the RNA data across the nested hierarchy.

Figure \ref{fig:Real-tsne} visualizes the two-dimensional t-distributed stochastic neighbour embedding of the input data alongside the configurations with the highest average silhouette index from the three examined methods. The figure of the principal nested submanifolds suggests that the proposed method modifies the data distribution while maintaining inter-group relationships. For instance, the Paneth (group 10) appears distinctly separated from the majority. Moreover, the boundaries between the Stem, Transit Amplifying, Transit Amplifying Gen 1, and Transit Amplifying Gen 2 (groups 11--14) appear more distinct than in the other displayed embeddings, which is consistent with a different geometric separation in this representation. Furthermore, our method may reveal additional subclasses within the samples. Groups such as Enterocyte Mature Proximal, Enterocyte Progenitor, Enterocyte Progenitor Early, Stem, and Transit Amplifying Gen 2 (groups 4--6, 11, and 14) exhibit more visible subclass distinctions. While this feature increases the intra-group distance, thereby slightly diminishing the silhouette-based summary, it also provides a complementary geometric perspective on the low-dimensional organization of these single-cell RNA data that is less visible in the alternative methods. From the present perspective, the value of the example is not only that one embedding separates groups well, but that the nested sequence offers mutually compatible views of the same biological system across dimensions, allowing broad organization and finer refinement to be examined together. Additional scatter plots of the embeddings under different dimensions of these three methods are available in the Appendix. 

\section{Discussion}
In this paper, we study nonlinear dimension reduction and data decomposition through the principal nested submanifolds. Rather than producing only a nonlinear representation at one chosen target dimension, the method constructs a nested geometric hierarchy across dimensions, extending the layered interpretation of principal component analysis to a nonlinear setting. We show that the population principal submanifolds are smooth submanifolds of the ambient space and provide algorithms for fitting them. Through several simulation studies and analysis of a real-world single-cell RNA sequencing data set, we illustrate the interpretability and potential applicability of the resulting hierarchy across various fields.

The principal nested submanifolds approach is inspired by the manifold fitting framework and uses local covariance information to define each principal nested submanifold as a level set of a smooth function. These level sets are naturally nested and, under mild conditions, form low-dimensional manifolds in the ambient space. The population object is defined directly through the observed distribution together with the fixed localization scale $r$, so the present paper is not formulated as exact latent-manifold recovery, even when such language is useful as background intuition. This yields a complementary geometric alternative to methods such as \citet{PNS}, \citet{tpca-aoas}, \citet{donnell1994analysis}, \citet{panaretos2014principal}, and \citet{yao2024principalsubmanifolds}.

In numerical studies, the principal nested submanifolds method provides an informative geometric summary. The simulation section uses specially designed cases in Euclidean spaces, spheres, and tori to illustrate how the method complements traditional principal component analysis, principal nested spheres, and torus principal component analysis. It captures complex structures and highlights the role of nested nonlinear geometry. In addition, when analyzing single-cell RNA sequencing data, the principal nested submanifolds method suggests distinct patterns in the data while organizing them into a layered hierarchy across dimensions. This advances our understanding of nonlinear space decomposition and provides a fundamental framework for future advances in nonlinear embedding and complex data analysis. With potential applications in fields as diverse as neuroscience and machine learning, our framework offers an additional perspective on data structures and their dynamics.

Several challenges remain. First, controlling the dimensionality of the fitted submanifolds requires many singular value decompositions, so more efficient computational strategies are still needed. Second, unlike principal component analysis and principal nested spheres, nonlinear residuals are harder to interpret geometrically. Finally, each fixed choice of radius $r$ defines one nested hierarchy under a shared geometric scale, and it would be interesting to study how different values of $r$ correspond to different resolutions of the data, or how $r$ might vary across dimensions without losing a coherent nested interpretation.

\section*{Acknowledgement}
Zhigang Yao has been supported by the Singapore Ministry of Education Tier 2 grant (A-0008520-00-00 and A-8001562-00-00) and the Tier 1 grant (A8000987-00-00 and A-8002931-00-00) at the National University of Singapore; Jiaji Su is a postdoctoral researcher supported by grant A-8001562-00-00.

\clearpage
\section*{Appendices}
\setcounter{section}{1}
\setcounter{theorem}{0}
\setcounter{lemma}{0}
\setcounter{equation}{0}
\setcounter{definition}{0}
\setcounter{figure}{0}

\renewcommand{\thesection}{\Alph{section}}
\renewcommand{\thetheorem}{\thesection.\arabic{theorem}}
\renewcommand{\theequation}{\thesection.\arabic{equation}}
\renewcommand{\thelemma}{\thesection.\arabic{lemma}}
\renewcommand{\thedefinition}{\thesection.\arabic{definition}}
\renewcommand{\thefigure}{\thesection.\arabic{figure}}
\subsection{Proof omitted from the main text}
\begin{proof}
    Let $p$ be a fixed point on $\cM_{r,d}$, and $z\in\cB(p,\eta)$ be another point of interest. For the sake of simplicity, we denote the collection of principal directions at $z$ as $V_z = (v_{r,z,1},\dots,v_{r,z,D-d})$, the summation of projection matrices as 
    \[
        \Psi_z = V_z V_z^\top = \sum_{k=1}^{D-d}\Pi_{r,z,k},
    \]
    and the summation of bias vectors as the bias function 
    \[
        b(z) = \Psi_z (z - \mu_z) = \sum_{k=1}^{D-d}b_{r,k}(z),
    \]
    for all $z$ of interest. To prove Theorem \ref{thm:population}, we need the following lemmas.
    \begin{lemma}\label{lemma:1st-order-b}
        With Conditions 1--3 and the set up of Theorem \ref{thm:population}, the Jacobian matrix of $b(z)$ satisfies
        \[
            \|\nJ_b(z) - \Psi_z\| \leq (D-d) (r \ell_3 + \ell_5),
        \]
        for all $z\in\cB(p,\eta)$.
    \end{lemma}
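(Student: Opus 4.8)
The plan is to obtain $\nJ_b(z)$ directly from the product structure $b(z)=\Psi_z(z-\mu_z)$, and then estimate the two resulting terms using Conditions 2 and 3 together with the elementary fact that a local conditional mean stays inside the ball that defines it.

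First I would differentiate. Writing $\partial_u$ for the directional derivative along a unit vector $u\in\bR^D$, the product rule gives
\[
\partial_u b(z)=(\partial_u\Psi_z)(z-\mu_z)+\Psi_z(u-\partial_u\mu_z)=(\partial_u\Psi_z)(z-\mu_z)+\Psi_z u-\Psi_z\nJ_\mu(z)u,
\]
so that, viewing everything as linear maps on $\bR^D$,
\[
\nJ_b(z)-\Psi_z=L_z-\Psi_z\nJ_\mu(z),\qquad L_z u:=(\partial_u\Psi_z)(z-\mu_z).
\]
Here I use that $\Psi_z=\sum_{k=1}^{D-d}\Pi_{r,z,k}$ is differentiable in $z$ on $\cB(p,\eta)$ by Condition 2 — which supplies both the smoothness of each spectral projection (equivalently, a persistent eigen-gap) and the bound on its derivative — and that $\partial_u\Psi_z=\sum_{k=1}^{D-d}\partial_u\Pi_{r,z,k}$ by linearity.

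Next I would bound the two pieces separately. For $L_z$: by Condition 2, $\|\partial_u\Psi_z\|\le\sum_{k=1}^{D-d}\|\partial_u\Pi_{r,z,k}\|<(D-d)\ell_3$ for every unit $u$, while $\mu_{r,z}=\E\{X\mid X\in\cB(z,r)\}$ is an average of points of the convex set $\cB(z,r)$ and therefore lies in $\cB(z,r)$, i.e.\ $\|z-\mu_{r,z}\|\le r$; hence $\|L_z\|\le(D-d)\ell_3\cdot r$. For the second term, write $\Psi_z=\sum_{k=1}^{D-d}v_{r,z,k}v_{r,z,k}^\top$, so $\Psi_z\nJ_\mu(z)=\sum_{k=1}^{D-d}v_{r,z,k}\bigl(v_{r,z,k}^\top\nJ_\mu(z)\bigr)$ is a sum of $D-d$ rank-one matrices whose operator norms equal $\|v_{r,z,k}^\top\nJ_\mu(z)\|<\ell_5$ by Condition 3; the triangle inequality gives $\|\Psi_z\nJ_\mu(z)\|<(D-d)\ell_5$. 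Combining, $\|\nJ_b(z)-\Psi_z\|\le\|L_z\|+\|\Psi_z\nJ_\mu(z)\|\le(D-d)(r\ell_3+\ell_5)$, uniformly over $z\in\cB(p,\eta)$.

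The computation is essentially routine; the only two points that need a little care are (i) the claim $\|z-\mu_{r,z}\|\le r$, which rests on convexity of the Euclidean ball and the well-definedness of $\mu_{r,z}$ for $z$ near $\cM_{r,d}$ (guaranteed because the Conditions are postulated precisely on that neighbourhood), and (ii) invoking differentiability of the projections $\Pi_{r,z,k}$, which is exactly what Condition 2 buys us, so no eigenvalue-crossing subtleties need to be addressed separately. If a slightly sharper constant were wanted, orthonormality of the $v_{r,z,k}$ yields $\|\Psi_z\nJ_\mu(z)\|\le\sqrt{D-d}\,\ell_5$, but the stated bound already suffices for Theorem \ref{thm:population}.
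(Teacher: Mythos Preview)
Your argument is correct and matches the paper's proof essentially line for line: both differentiate $b(z)=\Psi_z(z-\mu_z)$ via the product rule, bound $\|(\partial_u\Psi_z)(z-\mu_z)\|$ by $(D-d)r\ell_3$ using Condition~2 together with $\mu_{r,z}\in\cB(z,r)$, and bound $\|\Psi_z\nJ_\mu(z)\|$ by $(D-d)\ell_5$ via the rank-one decomposition and Condition~3. Your closing remark that orthonormality actually yields the sharper $\sqrt{D-d}\,\ell_5$ is a valid refinement that the paper does not exploit.
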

    \begin{proof}
        Let $u$ be a unit vector in $\bR^D$, the directional derivative of $b(z)$ along $u$ is
    \[
        \partial_u b(z) = (\partial_u \Psi_z) (z - \mu_z) + \Psi_z u - \Psi_z (\partial_u\mu_z),
    \]
    which leads to
    \[
         \|\partial_u b(z) - \Psi_z u\| \leq  \|(\partial_u \Psi_z) (z - \mu_z)\| + \|\Psi_z (\partial_u\mu_z)\|.
    \]
    The first term can be upper bounded as
     \begin{align*}
        \|(\partial_u \Psi_z) (z - \mu_z)\|
        &\leq \| z - \mu_z\| \left\|\sum_{k=1}^{D-d} \partial_u \Pi_{r,z,k}\right\|\\
        &\leq \sum_{k=1}^{D-d} \| z - \mu_z\| \left\|\partial_u \Pi_{r,z,k} \right\|\\
        &\leq (D-d) r \ell_3,
    \end{align*}
    where the last inequality is due to the fact that $\mu_z \in \cB(z,r)$.
    For the second term, there is
    \begin{align*}
    \|\Psi_z (\partial_u\mu_z)\|
        &=  \left\|\sum_{k=1}^{D-d} v_{r,z,k}v_{r,z,k}^\top \nJ_\mu(z) u\right\|\\
        &\leq \sum_{k=1}^{D-d}  \|v_{r,z,k}\|\left\|v_{r,z,k}^\top \nJ_\mu(z) \right\| \|u\|\\
        &< (D-d) \ell_5,
    \end{align*}
    where $\nJ_\mu(z)$ denotes the Jacobian matrix of $\mu_z$ at point $z$. Then, by definition
    \begin{align*}
        \|\nJ_b(z) - \Psi_z\| 
        &= \sup_{u\in\bR^D} \frac{\|\nJ_b(z)u - \Psi_z u\|}{\|u\|}\\
        &< (D-d) (r \ell_3 + \ell_5),
    \end{align*}
    for all $z\in\cB(p,\eta)$, which completes the proof.
    \end{proof}
    
    \begin{lemma}\label{lemma:2nd-order-b}
        With Conditions 1--3 and the set up of Theorem \ref{thm:population}, for two unit vectors $u,u^\prime \in \bR^D$, 
        \[
            \|\partial_u \partial_{u^\prime} b(z)\| \leq (D-d) \bigl\{r\ell_4 + \ell_2 +2 \ell_3(1+\ell_1)\bigl\}
        \]
        for all $z\in\cB(p,\eta)$.
    \end{lemma}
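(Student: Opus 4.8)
The plan is to obtain the claimed second-order bound by differentiating once more the first-order identity already established in the proof of Lemma~\ref{lemma:1st-order-b}, namely
\[
    \partial_u b(z) = (\partial_u \Psi_z)(z - \mu_z) + \Psi_z u - \Psi_z(\partial_u\mu_z).
\]
Fixing the direction $u$ and differentiating along an arbitrary unit vector $u^\prime$, the product rule applied to each summand (using $\partial_{u^\prime}(z-\mu_z) = u^\prime - \partial_{u^\prime}\mu_z$ and that $u$ is constant) gives
\begin{align*}
    \partial_{u^\prime}\partial_u b(z)
    &= (\partial_{u^\prime}\partial_u \Psi_z)(z - \mu_z)
       + (\partial_u \Psi_z)(u^\prime - \partial_{u^\prime}\mu_z) \\
    &\quad + (\partial_{u^\prime}\Psi_z) u
       - (\partial_{u^\prime}\Psi_z)(\partial_u\mu_z)
       - \Psi_z(\partial_{u^\prime}\partial_u\mu_z).
\end{align*}

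It then remains to bound the operator norm of each of the five pieces. I would use: $\|z-\mu_z\|\le r$ because $\mu_z\in\cB(z,r)$; $\|u\|=\|u^\prime\|=1$; Condition~\ref{asm:1}, which yields $\|\partial_{u^\prime}\mu_z\| = \|\nJ_\mu(z)u^\prime\| < \ell_1$ and $\|\partial_{u^\prime}\partial_u\mu_z\| < \ell_2$; Condition~\ref{asm:2} combined with sub-additivity of the operator norm over the $D-d$ summands $\Pi_{r,z,1},\dots,\Pi_{r,z,D-d}$ of $\Psi_z$, which yields $\|\partial_u\Psi_z\|\le(D-d)\ell_3$ and $\|\partial_{u^\prime}\partial_u\Psi_z\|\le(D-d)\ell_4$; and $\|\Psi_z\|\le D-d$, since $\Psi_z$ is a sum of $D-d$ orthogonal projections (indeed $\|\Psi_z\|=1$, but the cruder bound is what matches the stated form). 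Applying the triangle inequality then bounds the five pieces by $(D-d)r\ell_4$, $(D-d)\ell_3(1+\ell_1)$, $(D-d)\ell_3$, $(D-d)\ell_3\ell_1$, and $(D-d)\ell_2$; summing and using $\ell_3(1+\ell_1)+\ell_3+\ell_3\ell_1 = 2\ell_3(1+\ell_1)$ gives exactly $(D-d)\{r\ell_4+\ell_2+2\ell_3(1+\ell_1)\}$, uniformly over unit $u,u^\prime$ and over $z\in\cB(p,\eta)$.

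There is no essential obstacle here; this is a routine second-order refinement of Lemma~\ref{lemma:1st-order-b}. The only points demanding a little care are the bookkeeping of the product rule --- in particular the term $(\partial_u\Psi_z)u^\prime$, which arises from $\partial_{u^\prime}(z)=u^\prime$ and is what contributes the ``$1$'' inside the factor $1+\ell_1$ --- and promoting the per-eigenvector bounds of Conditions~\ref{asm:1}--\ref{asm:2} to bounds on $\Psi_z$ and its first and second directional derivatives by summing over $k\in\{1,\dots,D-d\}$. Note that, in contrast to Lemma~\ref{lemma:1st-order-b}, Condition~\ref{asm:3} is not invoked for this estimate: the term involving the Hessian of the local mean is handled directly via $\|\Psi_z(\partial_{u^\prime}\partial_u\mu_z)\|\le\|\Psi_z\|\,\|\partial_{u^\prime}\partial_u\mu_z\|$.
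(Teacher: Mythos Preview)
Your proof is correct and follows essentially the same approach as the paper: both apply the product rule to $b(z)=\Psi_z(z-\mu_z)$ and bound the resulting pieces using Conditions~\ref{asm:1}--\ref{asm:2} and $\|z-\mu_z\|\le r$. The only cosmetic difference is that the paper groups the second derivative into four terms (treating $\partial_u(z-\mu_z)=u-\partial_u\mu_z$ as a single factor), whereas you split one of those into two before bounding; the arithmetic and the final estimate are identical.
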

    \begin{proof}
        By definition and the triangle inequality, an upper bound of $ \|\partial_u\partial_{u^\prime} b(z)\|$ is given by
        \begin{align*}
        \|\partial_u\partial_{u^\prime} b(z)\| 
        \leq& \|(\partial_u\partial_{u^\prime} \Psi_z) (z-\mu_z)\| + \|\Psi_z \partial_u\partial_{u^\prime}  (z-\mu_z)\| \\
        &+ \|(\partial_u \Psi_z) \partial_{u^\prime}(z-\mu_z)\| + \|(\partial_{u^\prime} \Psi_z) \partial_u(z-\mu_z)\|.
    \end{align*}
    
    For the first term,
    \begin{align*}
        \|(\partial_u\partial_{u^\prime} \Psi_z) (z-\mu_z)\|
        & = \left\| \sum_{k=1}^{D-d}  \partial_u\partial_{u^\prime} \Pi_{r,z,k}\right\| \| z - \mu_z\| \\
        &\leq \sum_{k=1}^{D-d} r\left\| \partial_u\partial_{u^\prime} \Pi_{r,z,k}\right\| \\
        &\leq (D-d) r \ell_4.
    \end{align*}
    For the second term,
    \begin{align*}
        \|\Psi_z \partial_u\partial_{u^\prime}  (z-\mu_z)\|
        &= \left\|\sum_{k=1}^{D-d} v_{r,z,k}v_{r,z,k}^\top (\partial_u \partial_{u^\prime}\mu_z)\right\|\\
        &\leq \sum_{k=1}^{D-d} \left\|v_{r,z,k}^\top (\partial_u \partial_{u^\prime}\mu_z)\right\|\\
        &\leq (D-d) \ell_2.
    \end{align*}
    For the last two terms,
    \begin{align*}
        \|(\partial_u \Psi_z) \partial_{u^\prime}(z-\mu_z)\|
        &=\|(\partial_u \Psi_z) (u^\prime-\partial_{u^\prime}\mu_z)\|\\
        &\leq \|\partial_u \Psi_z\| \bigl(1 + \|\partial_{u^\prime}\mu_z\|\bigl)\\
        &\leq (D-d)\ell_3 (1+\ell_1).
    \end{align*}
    Hence, 
    \[
        \|\partial_u\partial_{u^\prime} b(z)\| \leq (D-d) \bigl\{r\ell_4 + \ell_2 +2 \ell_3(1+\ell_1)\bigl\},
    \]
    for all $z\in\cB(p,\eta)$.
    \end{proof}
    \begin{lemma}[Constant-Rank Level Set Theorem (Theorem 5.12 of \citep{lee2012smooth})]
        Let $\mathcal{S}_1$ and $\mathcal{S}_2$ be smooth manifolds, and let $\Phi:\,\mathcal{S}_1 \to\mathcal{S}_2$ be a smooth map with constant rank $k$. Each level set of $\Phi$ is a properly embedded submanifold of codimension $k$ in $\mathcal{S}_1$.
    \end{lemma}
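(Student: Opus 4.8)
The plan is to reduce the statement to the Rank Theorem (the local normal form for constant-rank maps) and then recognize each level set as a union of coordinate slices. First I would fix $c \in \mathcal{S}_2$ and set $L = \Phi^{-1}(c)$; if $L = \varnothing$ the claim holds vacuously, so I may assume $L \neq \varnothing$ and pick an arbitrary $p \in L$. Writing $m = \dim\mathcal{S}_1$, the hypothesis that $\Phi$ has constant rank $k$ on a neighbourhood of $p$ lets me invoke the Rank Theorem: there exist smooth charts $(U,\varphi)$ centered at $p$ and $(W,\psi)$ centered at $c$ with $\Phi(U) \subseteq W$ such that the coordinate representation $\psi \circ \Phi \circ \varphi^{-1}$ takes the form
\[
(x_1,\dots,x_k,x_{k+1},\dots,x_m) \longmapsto (x_1,\dots,x_k,0,\dots,0).
\]

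With this normal form in hand, the second step is a direct identification: a point $q \in U$ lies in $L$ if and only if its coordinates $\varphi(q) = (x_1,\dots,x_m)$ satisfy $x_1 = \cdots = x_k = 0$. Hence $\varphi(L \cap U)$ is exactly the slice $\{x \in \varphi(U) : x_1 = \cdots = x_k = 0\}$, which exhibits $L \cap U$ as a $k$-slice of the chart $U$. Since $p \in L$ was arbitrary, every point of $L$ admits a slice chart, so by the local slice characterization of embedded submanifolds, $L$ is an embedded submanifold of $\mathcal{S}_1$ of dimension $m - k$, i.e.\ of codimension $k$.

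Finally, to upgrade ``embedded'' to ``properly embedded,'' I would observe that $L = \Phi^{-1}(c)$ is closed in $\mathcal{S}_1$: singletons are closed in the Hausdorff manifold $\mathcal{S}_2$ and $\Phi$ is continuous. A closed embedded submanifold is properly embedded, since the inclusion $L \hookrightarrow \mathcal{S}_1$ is proper: the preimage of any compact $K \subseteq \mathcal{S}_1$ is $K \cap L$, a closed subset of the compact set $K$, hence compact. This completes the argument.

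The only genuine input is the Rank Theorem itself; everything after it is bookkeeping with charts. So the ``hard part'' is not in this proof but is exported to the Rank Theorem, whose own proof rests on the Inverse Function Theorem together with a two-stage choice of coordinates that first straightens the image and then the fibres. Granting it, as the paper does by citing \citep{lee2012smooth}, the level-set statement follows immediately; a fully self-contained treatment would make reproving the Rank Theorem the main obstacle.
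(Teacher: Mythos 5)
Your argument is correct and is essentially the standard proof of this result — indeed it is the proof given in \citet{lee2012smooth} itself; the paper offers no proof of its own, simply importing the theorem as a cited tool inside the proof of Theorem \ref{thm:population}. One cosmetic slip: the set $\{x \in \varphi(U) : x_1 = \cdots = x_k = 0\}$ is an $(m-k)$-dimensional slice (in Lee's terminology an $(m-k)$-slice, after permuting coordinates so the constant ones come last), not a ``$k$-slice''; your subsequent dimension count $m-k$ is nonetheless right, and the closedness argument for proper embeddedness is fine.
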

    \begin{lemma}[Implicit Function Theorem (Theorem C.40 of \citep{lee2012smooth})]
        Let $\mathcal U\subseteq \bR^{d_1} \times \bR^{d_2}$ be an open subset, and let $(s,t) = (s_1,\dots,s_{d_1}, t_1,\dots,t_{d_2})$ denote the standard coordinates on $\mathcal U$. Suppose $\Phi:\,\mathcal U\to \bR^{d_2}$ is a smooth function, $(s^*,t^*)\in \mathcal U$, and $\Phi(s^*,t^*) = 0$. If the $d_2\times d_2$ matrix 
        \[
            \left(\frac{\partial \Phi_i}{\partial t_j} (s^*,t^*)\right)
        \]
        is nonsingular, then there exist neighbourhoods $\mathcal{V}_1 \subseteq \bR^{d_1}$ of $s^*$ and $\mathcal{V}_2 \subseteq \bR^{d_2}$ of $t^*$ and a smooth function $\varphi:\, \mathcal{V}_1 \to \mathcal{V}_2$ such that $\Phi^{-1}(0) \cap \mathcal{V}_1 \times \mathcal{V}_2$ is the graph of $\varphi$, that is, $\Phi(s,t) = 0$ for $(x,y) \in \mathcal{V}_1 \times \mathcal{V}_2$ if and only if $t = \varphi(s)$.
    \end{lemma}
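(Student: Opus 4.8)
The plan is to deduce this from the Inverse Function Theorem by the classical \emph{graph trick}. First I would introduce the auxiliary map $F:\mathcal U\to\bR^{d_1}\times\bR^{d_2}$ defined by $F(s,t)=\bigl(s,\Phi(s,t)\bigr)$, which is smooth since $\Phi$ is. Its Jacobian at $(s^*,t^*)$ has the block form
\[
    \nJ_F(s^*,t^*)=\begin{pmatrix} I_{d_1} & 0\\ \nD_s\Phi(s^*,t^*) & \nD_t\Phi(s^*,t^*)\end{pmatrix},
\]
which is block lower triangular, hence has determinant $\det\bigl(\nD_t\Phi(s^*,t^*)\bigr)\neq 0$ by hypothesis. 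Thus $\nJ_F(s^*,t^*)$ is invertible, and the Inverse Function Theorem (see, e.g., \citep{lee2012smooth}) yields open neighbourhoods $\mathcal O$ of $(s^*,t^*)$ and $\mathcal W$ of $F(s^*,t^*)=(s^*,0)$ such that $F|_{\mathcal O}:\mathcal O\to\mathcal W$ is a diffeomorphism with smooth inverse $G:\mathcal W\to\mathcal O$.

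Next I would exploit the structure of $F$: since $F$ fixes the first $d_1$ coordinates, so does $G$, i.e.\ $G(s,w)=\bigl(s,\psi(s,w)\bigr)$ for a smooth map $\psi$ on $\mathcal W$. Set $\varphi(s):=\psi(s,0)$; then $\varphi(s^*)=\psi(s^*,0)=t^*$ and $\varphi$ is smooth. I would then shrink to a genuine product neighbourhood: pick $\mathcal V_1\subseteq\bR^{d_1}$ around $s^*$ and $\mathcal V_2\subseteq\bR^{d_2}$ around $t^*$ small enough that $\mathcal V_1\times\mathcal V_2\subseteq\mathcal O$, that $\{(s,0):s\in\mathcal V_1\}\subseteq\mathcal W$, and that $\varphi(\mathcal V_1)\subseteq\mathcal V_2$ (possible by continuity of $\varphi$ and $\varphi(s^*)=t^*$). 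For $(s,t)\in\mathcal V_1\times\mathcal V_2$ one then has the chain $\Phi(s,t)=0\iff F(s,t)=(s,0)\iff(s,t)=G(s,0)\iff t=\varphi(s)$, which is exactly the assertion that $\Phi^{-1}(0)\cap(\mathcal V_1\times\mathcal V_2)$ is the graph of $\varphi$. Making these nested shrinkings mutually consistent is a small amount of bookkeeping but otherwise routine.

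The only genuinely substantive input above is the Inverse Function Theorem. If a self-contained argument is preferred, I would instead run a contraction-mapping scheme: writing $L=\nD_t\Phi(s^*,t^*)$, for $s$ near $s^*$ define $T_s(t)=t-L^{-1}\Phi(s,t)$, and use the mean value inequality together with continuity of $\nD_t\Phi$ to show that on a small closed ball $\overline{\cB}_{d_2}(t^*,\rho)$ the map $T_s$ is a self-map and a contraction, uniformly for $s$ in a small ball around $s^*$; take $\varphi(s)$ to be its unique fixed point, equivalently the unique zero of $\Phi(s,\cdot)$ in that ball. In this route the main obstacle is establishing regularity of $s\mapsto\varphi(s)$: continuity follows from the uniform contraction estimate, $\mathcal C^1$ regularity from verifying that the formally expected derivative $-\bigl[\nD_t\Phi(s,\varphi(s))\bigr]^{-1}\nD_s\Phi(s,\varphi(s))$ (read off by differentiating the identity $\Phi(s,\varphi(s))\equiv 0$) genuinely is the derivative, and $\mathcal C^\infty$ regularity by induction, since that formula exhibits $\nJ_\varphi$ as a smooth function of $s$ and $\varphi(s)$. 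Either way, the part I expect to be most delicate to write carefully is exactly this — the consistent neighbourhood bookkeeping in the first route, or the differentiability bootstrap in the second.
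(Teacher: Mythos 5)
The paper does not prove this lemma at all --- it is quoted verbatim as a known result, with the proof deferred to the cited source (Theorem C.40 of Lee's \emph{Introduction to Smooth Manifolds}). Your argument is correct and is in fact essentially the same proof that the cited source gives: the graph trick $F(s,t)=(s,\Phi(s,t))$ reducing the statement to the Inverse Function Theorem, so there is nothing to reconcile with the paper.
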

    After introducing these preparative lemmas, let us consider an ancillary function $h(z): \cB(p,\eta)\to\bR^{D-d}$ given by 
        \[
            h(z) = V_p^\top b(z) = V_p^\top \Psi_z(z - \mu_z).
        \]
    
    \vspace{5pt}
    \underline{We first prove for all $z\in\cB(p,\eta)$, $b(z) = 0$ if and only if $h(z) = 0$.}
    
    Since it is obvious that
    \[
    \{z:b(z) = 0\} \subset \{z:h(z) = 0\},
    \]
    it is sufficient to show the other direction.\\
    Assume there is a point $q\in \cB(p,\eta)$ such that $h(q) = 0$ but $b(q)\neq 0$. Then, there is
    \[
    \begin{aligned}
        \|\Psi_p - \Psi_q\| 
        &= \sup_{u\in\bR^D}\frac{\|(\Psi_p - \Psi_q)u\|}{\|u\|} \\
        &\geq \frac{\|(\Psi_p - \Psi_q)b(q)\|}{\|b(q)\|} \\
        &= \frac{\|V_pV_p^\top b(q)-\Psi_q b(q)\|}{\|b(q)\|}\\
        &= \frac{\|V_p h(q)-b(q)\|}{\|b(q)\|} \\
        &= \frac{\|0-b(q)\|}{\|b(q)\|}\\
        &= 1,
    \end{aligned}
    \]
    which, together with \eqref{eq:def-eta} and the fact that $\|p-q\|<\eta$, leads to the result that
    \begin{align*}
        \sup_{z \in \cB(p,\eta)} \|\partial_u \Psi_{z}\|
        &\geq \|\Psi_p - \Psi_q\| / \|p-q\|\\
        &\geq \eta^{-1}\\
        &\geq \frac{(D-d) \bigl\{r\ell_4 + \ell_2 +2 \ell_3(1+\ell_1)\bigl\}}{1 - (D-d) (r \ell_3 + \ell_5)}\\
        &>(D-d) \bigl\{r\ell_4 + \ell_2 +2 \ell_3(1+\ell_1)\bigl\}\\
        &>2(D-d)\ell_3
    \end{align*}  
    is contradictory to the assumption that 
    \begin{align*}
        \sup_{z \in \cB(p,\eta)} \|\partial_u \Psi_{z}\| 
        &\leq \sum_{k=1}^{D-d}  \sup_{z \in \cB(p,\eta)} \left\|\partial_u \Pi_{r,z,k} \right\|\\
        &\leq (D-d)\ell_3.
    \end{align*}
    Hence, $b(z)=0$ if and only if $h(z) = 0$ for $z\in\cB(p,\eta)$.

    \vspace{5pt}
    \underline{Next, we investigate the Jacobian matrix of $h(z)$.}
    
    For any $z \in \cB(p,\eta)$,
    \begin{align*}
        \nJ_h(z)
        & = V_p^\top \nJ_b(z)\\
        & = V_p^\top \{\nJ_b(z) - \nJ_b(p)\} + V_p^\top \{\nJ_b(p) - \Psi_p\} + V_p^\top \Psi_p.
    \end{align*}
    Since $V_p^\top \Psi_p = V_p^\top$, there is
    \begin{align*}
        \|\nJ_h(z) - V_p^\top\|
        &\leq  \|V_p^\top \{\nJ_b(z) - \nJ_b(p)\}\| + \|V_p^\top \{\nJ_b(p) - \Psi_p\}\|\\
        &\leq  \|\nJ_b(z) - \nJ_b(p)\| + \|\nJ_b(p) - \Psi_p\|\\
        &\leq \|z - p\| \sup_{\substack{z^\prime \in\cB(p,\eta) \\\|u\|= \|u^\prime\| =1}} \|\partial_u \partial_{u^\prime} b(z^\prime)\| + \sup_{z^\prime \in\cB(p,\eta)}\|\nJ_b(z^\prime) - \Psi_{z^\prime}\|\\
        &< \eta (D-d) \bigl\{r\ell_4 + \ell_2 +2 \ell_3(1+\ell_1)\bigl\}+ (D-d) (r \ell_3 + \ell_5)\\
        &<1,
    \end{align*}
    which suggests that the maximal difference between the singular values of $\nJ_h(z)$ and $V_p^\top$ is less than $1$. Because the column of $V_p$ are orthogonal unit vectors and thus its singular values are all $1$, there is $\rank\bigl( \nJ_h(z) \bigl) = D-d$. According to the \textit{Constant-Rank Level Set Theorem}, the level set 
    \[
    \{z\in\cB(p,\eta):b(z) = 0\} = \{z\in\cB(p,\eta):h(z) = 0\} = \cM_{r,d} \cap \cB(p,\eta)
    \]
    is a $d$-dimensional embedded submanifold of $\cB(p,\eta) \subseteq \bR^D$. Since this result holds for every $p\in\cM_{r,d}$, we can state that $\cM_{r,d}$ is topologically a $d$-dimensional submanifold of $\bR^D$.

    \vspace{5pt}
    \underline{Moreover, if $\cM_{r,d}$ is a connected set, $\reach(\cM_{r,d})$ can be bounded below.}
    
    Recall that, the reach of $\reach(\cM_{r,d})$ can be given by
    \[
        \reach(\cM_{r,d}) = \inf_{z,z^\prime \in \cM_{r,d}} \frac{\|z - z^\prime\|^2}{2\nd (z^\prime,T_z\cM_{r,d})}. 
    \]
    We first consider the case where the two points are close, that is, for $p\in\cM_{r,d}$, we consider another point $q \in\cM_{r,d}$ with $p\neq q$ and $\|p - q\| \leq \eta$. Since we have already shown that $\cB(p,\eta) \cap \cM_{r,d}$ homeomorphic to a $d$-dimensional Euclidean space, let the collection of the unit basis vectors of $T_p \cM_{r,d}$ be $W_p \in \bR^{D\times d}$, and $W_p^\perp$ be its complement in $\bR^D$.
    Without loss of generality, let the column span space of $W_p$ be corresponding to the first $d$ coordinates, the column span space of $W_p^\perp$ be corresponding to the last $D-d$ coordinates, and for each $z\in\cB(p,\eta)$ let
    \begin{equation}\label{eq:def-coord}
        s_z = W_p^\top (z - p)\in \bR^d,\quad t_z = (W_p^{\perp})^\top (z - p) \in \bR^{D-d}, 
    \end{equation}
    denote the subset of coordinates. Since $\cM_{r,d}$ is locally given by the root set of $h(z)$, there is
    \[
        \nJ_h(p) W_p = 0.
    \]

    According to the \textit{Implicit Function Theorem} and the results above, there exists a function $g(\cdot)$ such that, for  $z\in\cB(p,\eta)$, $h(z) = h(s_z, t_z) = 0$ if and only if $t_z = g(s_z)$. Then, by calculating the derivate of $h(s,g(s)) = 0$, there is
    \begin{align*}
        0 & = \nD_s h (s,g(s)) + \nD_t h (s,g(s)) \nD_s g (s)\\
        & = \nJ_h\bigl(s,g(s)\bigl) W_p + \nJ_h\bigl(s,g(s)\bigl) W_p^\perp  \nJ_g(s),
    \end{align*}
    which leads to the result that the Jacobian of $g(s)$ can be expressed as
    \[
        \nJ_g(s) = -\left\{\nJ_h\bigl(s,g(s)\bigl) W_p^\perp\right\}^{-1} \nJ_h\bigl(s,g(s)\bigl) W_p,
    \]
    and hence,
    \begin{align*}
        \|\nJ_g(s)\| 
        &= \|\left\{\nJ_h\bigl(s,g(s)\bigl) W_p^\perp\right\}^{-1} \nJ_h\bigl(s,g(s)\bigl) W_p\|\\
        &\leq \|\left\{\nJ_h\bigl(s,g(s)\bigl) W_p^\perp\right\}^{-1}\| \| \bigl\{\nJ_h\bigl(s,g(s)\bigl) - \nJ_h(p)\bigl\} W_p\|\\
        &\leq \left\{ \lambda_{D-d} \Bigl( \nJ_h\bigl(s,g(s)\bigl) W_p^\perp\Bigl) \right\}^{-1} \|V_p^\top\|\| \bigl\{\nJ_b\bigl(s,g(s)\bigl) - \nJ_b(p)\bigl\}\| \|W_p\|,
    \end{align*}
    where $\lambda_{D-d}(A)$ denotes the $(D-d)$th largest singular of matrix $A$. Since
    \begin{align*}
        \lambda_{D-d} \Bigl( \nJ_h\bigl(s,g(s)\bigl) W_p^\perp\Bigl)
        &\geq \lambda_{D-d} \Bigl( V^\top_p \nJ_b\bigl(s,g(s)\bigl)\Bigl) \lambda_{\min}(W_p^\perp)\\
        &\geq \lambda_{\min}(V^\top_p) \lambda_{D-d} \Bigl( \nJ_b\bigl(s,g(s)\bigl)\Bigl) \lambda_{\min}(W_p^\perp)\\
        &\geq 1 - (D-d) (r \ell_3 + \ell_5),
    \end{align*}
    the norm of $\nJ_g(s)$ is further bounded as
    \begin{align*}
        \|\nJ_g(s)\|
        &\leq \left\{\sigma_{D-d} \bigl( J_b\bigl(s,g(s) \bigl)\right\}^{-1} \|\nJ_b\bigl(s,g(s)\bigl) - \nJ_b(p)\|\\
        &\leq \|\bigl(s,g(s)\bigl) - p\|  \frac{(D-d) \bigl\{r \ell_4 + \ell_2 +2 \ell_3(1+\ell_1)\bigl\}}{1 - (D-d) (r \ell_3 + \ell_5)}.
    \end{align*}
    
    Therefore, for any $p, q \in \cM_{r,d}$, let $s_p$ and $s_q$ be defined as in \eqref{eq:def-coord}, there is
    \begin{equation}\label{eq:reach-bound-1}
    \begin{aligned}
        \frac{\|p - q\|^2}{2\nd (q,T_p\cM_{r,d})}
        &= \frac{\|p - q\|^2}{2\|g(s_p) - g(s_q)\|}  \\
        &\geq \frac{\|p - q\|^2}{2\|s_p - s_q\| \|\bigl(s_q,g(s_q)\bigl) - p\|} \frac{1 - (D-d) (r \ell_3 + \ell_5)}{(D-d) \bigl\{r \ell_4 + \ell_2 +2 \ell_3(1+\ell_1)\bigl\}} \\
        &>\frac{\|p - q\|^2 \eta}{2\|p - q\| \|\bigl(s_q,g(s_q)\bigl) - p\|} \\
        &\geq \frac{\eta}{2}.
    \end{aligned}   
    \end{equation}
    Moreover, since $\nd (q,T_p\cM_{r,d}) \leq \|p - q\|$, for $p, q \in \cM_{r,d}$ with $\|p - q\|> \eta$ there is
    \begin{equation}\label{eq:reach-bound-2}
         \frac{\|p - q\|^2}{2\nd (q,T_p\cM_{r,d})} \geq \frac{\|p - q\|}{2} \geq \frac{\eta}{2}.   
    \end{equation}
    Combining the result in \eqref{eq:reach-bound-1} and \eqref{eq:reach-bound-2}, for any $p\neq q \in \cM_{r,d}$, 
    \[
        \frac{\|p - q\|^2}{2\nd (q,T_p\cM_{r,d})} \geq \frac{\eta}{2},
    \]
    which suggests that $\cM_{r,d}$ is a $\mathcal{C}^1$ manifold with $\reach(\cM_{r,d}) \geq \eta/2$.
\end{proof}

\subsection{Consistency analysis of empirical principal submanifolds}
\label{sec:appendix-consistency}

In this section, we study the discrepancy between the hard population principal submanifold $\cM_{r,d}$ and a smooth empirical estimator. The key point is to insert an auxiliary \emph{smooth population} object between the hard population field and the empirical field, and then decompose the total error into a stochastic term and a deterministic kernel-mismatch bias term.

\subsubsection{Hard and smooth localization}

Let
\[
W(u)=\frac{1}{\vol\{\cB(0,1)\}}\ID(\|u\|\le 1),
\qquad
W_r(u)=r^{-D}W(u/r).
\]
Since the normalization cancels in the numerator and denominator, the original hard-neighborhood definitions of $\mu_{r,z}$ and $\Sigma_{r,z}$ can be equivalently written as
\begin{align}
\mu_{r,z}
&=
\frac{\int x W_r(z-x)\,dF_X(x)}
{\int W_r(z-x)\,dF_X(x)},
\label{eq:def-hard-mu-kernel}
\\
\Sigma_{r,z}
&=
\frac{\int (x-\mu_{r,z})(x-\mu_{r,z})^\top W_r(z-x)\,dF_X(x)}
{\int W_r(z-x)\,dF_X(x)}.
\label{eq:def-hard-sigma-kernel}
\end{align}

Let $K:\bR^D\to[0,\infty)$ be a compactly supported, radially symmetric kernel satisfying
\[
\supp(K)\subseteq \cB(0,1),
\qquad
\int_{\bR^D}K(u)\,du=1,
\]
and define
\[
K_r(u)=r^{-D}K(u/r).
\]
The associated \emph{smooth population} local moments are
\begin{align}
\mu^{\sm}_{r,z}
&=
\frac{\int x K_r(z-x)\,dF_X(x)}
{\int K_r(z-x)\,dF_X(x)},
\label{eq:def-smooth-pop-mu}
\\
\Sigma^{\sm}_{r,z}
&=
\frac{\int (x-\mu^{\sm}_{r,z})(x-\mu^{\sm}_{r,z})^\top K_r(z-x)\,dF_X(x)}
{\int K_r(z-x)\,dF_X(x)}.
\label{eq:def-smooth-pop-sigma}
\end{align}
Let $v^{\sm}_{r,z,1},\dots,v^{\sm}_{r,z,D}$ be the unit eigenvectors of $\Sigma^{\sm}_{r,z}$ arranged in non-decreasing order of the corresponding eigenvalues, and let
\[
\Pi^{\sm}_{r,z,k}
=
v^{\sm}_{r,z,k}(v^{\sm}_{r,z,k})^\top.
\]
Define
\[
b^{\sm}_{r,k}(z)=\Pi^{\sm}_{r,z,k}(z-\mu^{\sm}_{r,z}),
\qquad
g^{\sm}_r(z)=\sum_{k=1}^{D-d}b^{\sm}_{r,k}(z).
\]

For the empirical estimator, we consider the kernel-smoothed version
\begin{align}
\widehat q_r(z)
&=
\frac{1}{n}\sum_{i=1}^n K_r(z-x_i),
\label{eq:def-qhat}
\\
\widehat \mu_{r,z}
&=
\frac{\frac{1}{n}\sum_{i=1}^n x_i K_r(z-x_i)}
{\widehat q_r(z)},
\label{eq:def-muhat-appendix}
\\
\widehat \Sigma_{r,z}
&=
\frac{\frac{1}{n}\sum_{i=1}^n (x_i-\widehat\mu_{r,z})(x_i-\widehat\mu_{r,z})^\top K_r(z-x_i)}
{\widehat q_r(z)}.
\label{eq:def-sigmahat-appendix}
\end{align}
Let $\widehat v_{r,z,1},\dots,\widehat v_{r,z,D}$ be the unit eigenvectors of $\widehat\Sigma_{r,z}$ arranged in non-decreasing order, let
\[
\widehat\Pi_{r,z,k}
=
\widehat v_{r,z,k}\widehat v_{r,z,k}^\top,
\]
and define
\[
\widehat b_{r,k}(z)=\widehat\Pi_{r,z,k}(z-\widehat\mu_{r,z}),
\qquad
\widehat g_r(z)=\sum_{k=1}^{D-d}\widehat b_{r,k}(z).
\]
The empirical principal submanifold is then
\begin{equation}
\label{eq:def-empifold-kernel}
\widehat{\cM}_{r,d}
=
\left\{
z\in\cZ:\widehat g_r(z)=0
\right\},
\end{equation}
where $\cZ\subseteq\{z\in\bR^D:\nd(z,\supp(F_X))<r\}$ is a compact set containing $\cM_{r,d}$.

\subsubsection{Assumptions}

\begin{assumption}\label{asm:consistency-mass}
There exists a compact set $\cZ\subseteq\{z\in\bR^D:\nd(z,\supp(F_X))<r\}$ containing $\cM_{r,d}$ and a constant $c_0>0$ such that for all $z\in\cZ$,
\[
\int W_r(z-x)\,dF_X(x)\ge c_0,
\qquad
\int K_r(z-x)\,dF_X(x)\ge c_0.
\]
\end{assumption}

\begin{assumption}\label{asm:consistency-smooth}
The density $f$ of $F_X$ is bounded on
\[
\cZ^{+}:=\{x\in\bR^D:\nd(x,\cZ)\le r\},
\]
and is Lipschitz continuous on $\cZ^{+}$. The kernel $K$ is nonnegative, compactly supported, radially symmetric, and belongs to $\cC^2(\bR^D)$.
\end{assumption}

\begin{assumption}\label{asm:consistency-gap}
There exists $\gamma_0>0$ such that for all $z\in\cZ$,
\[
\lambda_{D-d+1}(\Sigma_{r,z})-\lambda_{D-d}(\Sigma_{r,z})\ge \gamma_0,
\qquad
\lambda_{D-d+1}(\Sigma^{\sm}_{r,z})-\lambda_{D-d}(\Sigma^{\sm}_{r,z})\ge \gamma_0.
\]
\end{assumption}

\begin{assumption}\label{asm:consistency-transverse}
There exists $\eta_0>0$ such that $g_r$ is of class $\cC^1$ on the tubular neighborhood
\[
U_{\eta_0}(\cM_{r,d})
:=
\{z\in\bR^D:\nd(z,\cM_{r,d})<\eta_0\},
\]
and there exist constants $\kappa_0>0$ and $L_0>0$ with the following properties.

\begin{enumerate}
    \item[(i)] For every $p\in\cM_{r,d}$, let $T_p\cM_{r,d}$ be the tangent space and $N_p\cM_{r,d}$ be the orthogonal complement. Then the restriction
    \[
    A_p:=\nJ_{g_r}(p)\big|_{N_p\cM_{r,d}}:N_p\cM_{r,d}\to \mathrm{Im}\{\nJ_{g_r}(p)\}
    \]
    is invertible and satisfies
    \[
    \sigma_{\min}(A_p)\ge \kappa_0.
    \]
    \item[(ii)] For all $z,z^\prime\in U_{\eta_0}(\cM_{r,d})$,
    \[
    \|\nJ_{g_r}(z)-\nJ_{g_r}(z^\prime)\|
    \le
    L_0\|z-z^\prime\|.
    \]
\end{enumerate}
\end{assumption}

\begin{remark}
Assumption~\ref{asm:consistency-transverse} is a quantitative transversality condition along the normal bundle of $\cM_{r,d}$. It is consistent with the local rank condition used in the proof of Theorem~\ref{thm:population}. In particular, if the conditions of Theorem~\ref{thm:population} hold on a tubular neighborhood of $\cM_{r,d}$ and the normal singular values of $\nJ_{g_r}$ are bounded away from zero, then Assumption~\ref{asm:consistency-transverse} follows.
\end{remark}

\subsubsection{Auxiliary local moments and deterministic bias}

Define the hard and smooth localized raw moments
\begin{align*}
q_r(z)
&=
\int W_r(z-x)\,dF_X(x),
&
q_r^{\sm}(z)
&=
\int K_r(z-x)\,dF_X(x),
\\
m_r(z)
&=
\int x W_r(z-x)\,dF_X(x),
&
m_r^{\sm}(z)
&=
\int x K_r(z-x)\,dF_X(x),
\\
S_r(z)
&=
\int xx^\top W_r(z-x)\,dF_X(x),
&
S_r^{\sm}(z)
&=
\int xx^\top K_r(z-x)\,dF_X(x).
\end{align*}
Then
\[
\mu_{r,z}=\frac{m_r(z)}{q_r(z)},
\qquad
\mu^{\sm}_{r,z}=\frac{m_r^{\sm}(z)}{q_r^{\sm}(z)},
\]
and
\[
\Sigma_{r,z}=\frac{S_r(z)}{q_r(z)}-\mu_{r,z}\mu_{r,z}^\top,
\qquad
\Sigma^{\sm}_{r,z}=\frac{S_r^{\sm}(z)}{q_r^{\sm}(z)}-\mu^{\sm}_{r,z}(\mu^{\sm}_{r,z})^\top.
\]

\begin{lemma}\label{lem:deterministic-bias-moments}
Suppose Assumptions~\ref{asm:consistency-mass}--\ref{asm:consistency-smooth} hold. Then there exists a constant $C>0$, depending only on $c_0$, $\sup_{\cZ^+}f$, $\sup_{\cZ^+}\|x\|$, and $\sup_{\cZ^+}\|xx^\top\|$, such that
\[
\sup_{z\in\cZ}\|\mu^{\sm}_{r,z}-\mu_{r,z}\|
+
\sup_{z\in\cZ}\|\Sigma^{\sm}_{r,z}-\Sigma_{r,z}\|
\le
C\|K-W\|_{L^1(\bR^D)}.
\]
Consequently, there exists a constant $C>0$ such that
\[
B_r
:=
\sup_{z\in\cZ}\|g^{\sm}_r(z)-g_r(z)\|
\le
C\|K-W\|_{L^1(\bR^D)}.
\]
\end{lemma}

\begin{proof}
We first control the raw localized moments. Since
\[
W_r(u)=r^{-D}W(u/r),
\qquad
K_r(u)=r^{-D}K(u/r),
\]
a change of variables shows that
\[
\|K_r-W_r\|_{L^1(\bR^D)}=\|K-W\|_{L^1(\bR^D)}.
\]
Hence, for every $z\in\cZ$,
\[
|q_r^{\sm}(z)-q_r(z)|
\le
\int |K_r(z-x)-W_r(z-x)|\,f(x)\,dx
\le
\|f\|_{\infty,\cZ^+}\|K-W\|_{L^1},
\]
and similarly,
\[
\|m_r^{\sm}(z)-m_r(z)\|
\le
\sup_{x\in\cZ^+}\|x\|\,\|f\|_{\infty,\cZ^+}\|K-W\|_{L^1},
\]
\[
\|S_r^{\sm}(z)-S_r(z)\|
\le
\sup_{x\in\cZ^+}\|xx^\top\|\,\|f\|_{\infty,\cZ^+}\|K-W\|_{L^1}.
\]

By Assumption~\ref{asm:consistency-mass}, $q_r(z)\ge c_0$ and $q_r^{\sm}(z)\ge c_0$ on $\cZ$. Therefore,
\[
\mu^{\sm}_{r,z}-\mu_{r,z}
=
\frac{m_r^{\sm}(z)-m_r(z)}{q_r^{\sm}(z)}
+
m_r(z)\left\{\frac{1}{q_r^{\sm}(z)}-\frac{1}{q_r(z)}\right\},
\]
and thus
\[
\sup_{z\in\cZ}\|\mu^{\sm}_{r,z}-\mu_{r,z}\|
\le
C\|K-W\|_{L^1}.
\]
Likewise,
\[
\Sigma^{\sm}_{r,z}-\Sigma_{r,z}
=
\left\{\frac{S_r^{\sm}(z)}{q_r^{\sm}(z)}-\frac{S_r(z)}{q_r(z)}\right\}
-
\left\{
\mu^{\sm}_{r,z}(\mu^{\sm}_{r,z})^\top-\mu_{r,z}\mu_{r,z}^\top
\right\},
\]
so the same argument yields
\[
\sup_{z\in\cZ}\|\Sigma^{\sm}_{r,z}-\Sigma_{r,z}\|
\le
C\|K-W\|_{L^1}.
\]

Under Assumption~\ref{asm:consistency-gap}, the spectral projector map is locally Lipschitz on the set of symmetric matrices with eigengap at least $\gamma_0$. Therefore,
\[
\sup_{z\in\cZ}\|\Pi^{\sm}_{r,z,k}-\Pi_{r,z,k}\|
\le
C\sup_{z\in\cZ}\|\Sigma^{\sm}_{r,z}-\Sigma_{r,z}\|
\le
C\|K-W\|_{L^1},
\]
uniformly in $k=1,\dots,D-d$. Since
\[
b^{\sm}_{r,k}(z)-b_{r,k}(z)
=
\Pi^{\sm}_{r,z,k}(z-\mu^{\sm}_{r,z})-\Pi_{r,z,k}(z-\mu_{r,z}),
\]
we obtain
\[
\sup_{z\in\cZ}\|b^{\sm}_{r,k}(z)-b_{r,k}(z)\|
\le
C\|K-W\|_{L^1},
\]
uniformly over $k=1,\dots,D-d$. Summing over $k$ yields the stated bound for $B_r$.
\end{proof}

\subsubsection{Uniform stochastic bounds for the empirical local moments}

Define the empirical raw localized moments
\begin{align*}
\widehat m_r(z)
&=
\frac{1}{n}\sum_{i=1}^n x_i K_r(z-x_i),
\\
\widehat S_r(z)
&=
\frac{1}{n}\sum_{i=1}^n x_i x_i^\top K_r(z-x_i).
\end{align*}
Then
\[
\widehat\mu_{r,z}=\frac{\widehat m_r(z)}{\widehat q_r(z)},
\qquad
\widehat\Sigma_{r,z}
=
\frac{\widehat S_r(z)}{\widehat q_r(z)}-\widehat\mu_{r,z}\widehat\mu_{r,z}^\top.
\]

\begin{lemma}\label{lem:uniform-raw-moment}
Suppose Assumptions~\ref{asm:consistency-mass}--\ref{asm:consistency-smooth} hold. If $nr^D/\log n\to\infty$, then
\[
\sup_{z\in\cZ}|\widehat q_r(z)-q_r^{\sm}(z)|
=
\cO_p\!\left(\sqrt{\frac{\log n}{nr^D}}\right),
\]
\[
\sup_{z\in\cZ}\|\widehat m_r(z)-m_r^{\sm}(z)\|
=
\cO_p\!\left(\sqrt{\frac{\log n}{nr^D}}\right),
\]
and
\[
\sup_{z\in\cZ}\|\widehat S_r(z)-S_r^{\sm}(z)\|
=
\cO_p\!\left(\sqrt{\frac{\log n}{nr^D}}\right).
\]
\end{lemma}

\begin{proof}
We only prove the bound for $\widehat q_r$, since the arguments for $\widehat m_r$ and $\widehat S_r$ are identical coordinatewise.

Fix $\delta\in(0,1)$. Since $\cZ$ is compact, there exists a finite $\delta r$-net $\{z_1,\dots,z_N\}\subseteq\cZ$ with cardinality
\[
N\le C(\delta,\cZ) r^{-D}.
\]
For each $j$, the random variable $K_r(z_j-X)$ is bounded by $\|K\|_\infty r^{-D}$ and has variance of order $r^{-D}$ because
\[
E\{K_r(z_j-X)^2\}
=
\int K_r(z_j-x)^2 f(x)\,dx
\le
Cr^{-D}.
\]
Therefore, Bernstein's inequality gives
\[
\Pr\left(
\left|
\widehat q_r(z_j)-q_r^{\sm}(z_j)
\right|
> t
\right)
\le
2\exp\left\{
-Cn\min(t^2r^D,tr^D)
\right\}.
\]
Taking
\[
t=M\sqrt{\frac{\log n}{nr^D}}
\]
and using the union bound over $j=1,\dots,N$, we get
\[
\max_{1\le j\le N}
\left|
\widehat q_r(z_j)-q_r^{\sm}(z_j)
\right|
=
\cO_p\!\left(\sqrt{\frac{\log n}{nr^D}}\right).
\]

To pass from the net to the whole of $\cZ$, note that
\[
|\widehat q_r(z)-\widehat q_r(z^\prime)|
\le
\frac{1}{n}\sum_{i=1}^n
|K_r(z-x_i)-K_r(z^\prime-x_i)|
\le
Cr^{-D-1}\|z-z^\prime\|,
\]
because $K\in\cC^1_c$ and hence $K_r$ is Lipschitz with constant of order $r^{-D-1}$. The same bound holds for $|q_r^{\sm}(z)-q_r^{\sm}(z^\prime)|$. Choosing $z_j$ from the $\delta r$-net so that $\|z-z_j\|\le \delta r$, we obtain
\[
|\widehat q_r(z)-q_r^{\sm}(z)|
\le
|\widehat q_r(z_j)-q_r^{\sm}(z_j)|+C\delta r\cdot r^{-D-1}
=
|\widehat q_r(z_j)-q_r^{\sm}(z_j)|+C\delta r^{-D}.
\]
Since $\delta$ is fixed and the stochastic term dominates under $nr^D/\log n\to\infty$, we conclude
\[
\sup_{z\in\cZ}|\widehat q_r(z)-q_r^{\sm}(z)|
=
\cO_p\!\left(\sqrt{\frac{\log n}{nr^D}}\right).
\]
The proofs for $\widehat m_r$ and $\widehat S_r$ are identical after applying the same argument to the bounded envelopes $xK_r(z-x)$ and $xx^\top K_r(z-x)$ on $\cZ^+$.
\end{proof}

\begin{lemma}\label{lem:uniform-moments}
Suppose Assumptions~\ref{asm:consistency-mass}--\ref{asm:consistency-smooth} hold. If $nr^D/\log n\to\infty$, then
\[
\sup_{z\in\cZ}\|\widehat\mu_{r,z}-\mu^{\sm}_{r,z}\|
=
\cO_p\!\left(\sqrt{\frac{\log n}{nr^D}}\right),
\]
and
\[
\sup_{z\in\cZ}\|\widehat\Sigma_{r,z}-\Sigma^{\sm}_{r,z}\|
=
\cO_p\!\left(\sqrt{\frac{\log n}{nr^D}}\right).
\]
\end{lemma}

\begin{proof}
By Lemma~\ref{lem:uniform-raw-moment},
\[
\sup_{z\in\cZ}|\widehat q_r(z)-q_r^{\sm}(z)|
=
\cO_p\!\left(\sqrt{\frac{\log n}{nr^D}}\right).
\]
Since $q_r^{\sm}(z)\ge c_0$ on $\cZ$ by Assumption~\ref{asm:consistency-mass}, it follows that
\[
\inf_{z\in\cZ}\widehat q_r(z)\ge \frac{c_0}{2}
\]
with probability tending to one. Therefore,
\[
\widehat\mu_{r,z}-\mu^{\sm}_{r,z}
=
\frac{\widehat m_r(z)-m_r^{\sm}(z)}{\widehat q_r(z)}
+
m_r^{\sm}(z)\left\{
\frac{1}{\widehat q_r(z)}-\frac{1}{q_r^{\sm}(z)}
\right\},
\]
and hence
\[
\sup_{z\in\cZ}\|\widehat\mu_{r,z}-\mu^{\sm}_{r,z}\|
\le
C\left\{
\sup_{z\in\cZ}\|\widehat m_r(z)-m_r^{\sm}(z)\|
+
\sup_{z\in\cZ}|\widehat q_r(z)-q_r^{\sm}(z)|
\right\}.
\]
The asserted rate follows from Lemma~\ref{lem:uniform-raw-moment}.

Similarly,
\[
\widehat\Sigma_{r,z}-\Sigma^{\sm}_{r,z}
=
\left\{
\frac{\widehat S_r(z)}{\widehat q_r(z)}
-
\frac{S_r^{\sm}(z)}{q_r^{\sm}(z)}
\right\}
-
\left\{
\widehat\mu_{r,z}\widehat\mu_{r,z}^\top
-
\mu^{\sm}_{r,z}(\mu^{\sm}_{r,z})^\top
\right\},
\]
so the same argument yields
\[
\sup_{z\in\cZ}\|\widehat\Sigma_{r,z}-\Sigma^{\sm}_{r,z}\|
=
\cO_p\!\left(\sqrt{\frac{\log n}{nr^D}}\right).
\]
\end{proof}

\subsubsection{Spectral perturbation and the bias--variance decomposition}

\begin{lemma}\label{lem:spectral-perturbation}
Suppose Assumptions~\ref{asm:consistency-gap} and the conclusion of Lemma~\ref{lem:uniform-moments} hold. Then, for each $k=1,\dots,D-d$,
\[
\sup_{z\in\cZ}\|\widehat\Pi_{r,z,k}-\Pi^{\sm}_{r,z,k}\|
=
\cO_p\!\left(\sqrt{\frac{\log n}{nr^D}}\right).
\]
Consequently,
\[
\sup_{z\in\cZ}\|\widehat g_r(z)-g_r^{\sm}(z)\|
=
\cO_p\!\left(\sqrt{\frac{\log n}{nr^D}}\right).
\]
\end{lemma}

\begin{proof}
By Assumption~\ref{asm:consistency-gap}, the eigengap separating the first $D-d$ eigenvalues from the remaining $d$ eigenvalues of $\Sigma^{\sm}_{r,z}$ is bounded below by $\gamma_0$ uniformly on $\cZ$. Hence the Davis--Kahan theorem gives
\[
\|\widehat\Pi_{r,z,k}-\Pi^{\sm}_{r,z,k}\|
\le
\frac{2}{\gamma_0}
\|\widehat\Sigma_{r,z}-\Sigma^{\sm}_{r,z}\|
\]
uniformly on $\cZ$. Lemma~\ref{lem:uniform-moments} therefore yields
\[
\sup_{z\in\cZ}\|\widehat\Pi_{r,z,k}-\Pi^{\sm}_{r,z,k}\|
=
\cO_p\!\left(\sqrt{\frac{\log n}{nr^D}}\right).
\]

Since
\[
\widehat b_{r,k}(z)-b^{\sm}_{r,k}(z)
=
\widehat\Pi_{r,z,k}(z-\widehat\mu_{r,z})
-
\Pi^{\sm}_{r,z,k}(z-\mu^{\sm}_{r,z}),
\]
we have
\begin{align*}
\|\widehat b_{r,k}(z)-b^{\sm}_{r,k}(z)\|
&\le
\|\widehat\Pi_{r,z,k}-\Pi^{\sm}_{r,z,k}\|
\cdot
\|z-\widehat\mu_{r,z}\|
+
\|\Pi^{\sm}_{r,z,k}\|
\cdot
\|\widehat\mu_{r,z}-\mu^{\sm}_{r,z}\|.
\end{align*}
On $\cZ$, the term $\|z-\widehat\mu_{r,z}\|$ is uniformly bounded with probability tending to one. Thus, by Lemma~\ref{lem:uniform-moments},
\[
\sup_{z\in\cZ}\|\widehat b_{r,k}(z)-b^{\sm}_{r,k}(z)\|
=
\cO_p\!\left(\sqrt{\frac{\log n}{nr^D}}\right).
\]
Summing over $k=1,\dots,D-d$ gives
\[
\sup_{z\in\cZ}\|\widehat g_r(z)-g_r^{\sm}(z)\|
=
\cO_p\!\left(\sqrt{\frac{\log n}{nr^D}}\right).
\]
\end{proof}

\begin{proposition}[Bias--variance decomposition]
\label{prop:bias-variance}
Suppose Assumptions~\ref{asm:consistency-mass}--\ref{asm:consistency-gap} hold. Then
\[
\sup_{z\in\cZ}\|\widehat g_r(z)-g_r(z)\|
\le
R_{n,r}+B_r,
\]
where
\[
R_{n,r}
=
\sup_{z\in\cZ}\|\widehat g_r(z)-g_r^{\sm}(z)\|,
\qquad
B_r
=
\sup_{z\in\cZ}\|g_r^{\sm}(z)-g_r(z)\|.
\]
Moreover, if $nr^D/\log n\to\infty$, then
\[
R_{n,r}
=
\cO_p\!\left(\sqrt{\frac{\log n}{nr^D}}\right).
\]
In addition, let
\[
q_{\min}
=
\inf_{z\in\cZ}q_r(z)\wedge \inf_{z\in\cZ}q_r^{\sm}(z),
\qquad
f_\infty
=
\sup_{x\in\cZ^+}f(x),
\]
\[
M_1
=
\sup_{x\in\cZ^+}\|x\|,
\qquad
M_2
=
\sup_{x\in\cZ^+}\|xx^\top\|,
\qquad
R_*
=
\sup_{z\in\cZ}\|z\|+M_1.
\]
Then
\[
B_r
\le
C_B \|K-W\|_{L^1(\bR^D)},
\]
where
\[
C_B
=
(D-d)\left[
\frac{2R_*}{\gamma_0}
\left(
\frac{2M_2f_\infty}{q_{\min}}
+
\frac{4M_1^2f_\infty}{q_{\min}}
\right)
+
\frac{2M_1f_\infty}{q_{\min}}
\right].
\]
\end{proposition}

\begin{proof}
The decomposition follows immediately from the triangle inequality:
\[
\|\widehat g_r(z)-g_r(z)\|
\le
\|\widehat g_r(z)-g_r^{\sm}(z)\|
+
\|g_r^{\sm}(z)-g_r(z)\|.
\]
Taking the supremum over $z\in\cZ$ gives the first assertion. The bound for $R_{n,r}$ follows from Lemma~\ref{lem:spectral-perturbation}. It remains to bound $B_r$ explicitly.

Write
\[
q_r(z)
=
\int W_r(z-x)\,dF_X(x),
\qquad
q_r^{\sm}(z)
=
\int K_r(z-x)\,dF_X(x),
\]
\[
m_r(z)
=
\int xW_r(z-x)\,dF_X(x),
\qquad
m_r^{\sm}(z)
=
\int xK_r(z-x)\,dF_X(x),
\]
and
\[
S_r(z)
=
\int xx^\top W_r(z-x)\,dF_X(x),
\qquad
S_r^{\sm}(z)
=
\int xx^\top K_r(z-x)\,dF_X(x).
\]
Since
\[
\|K_r-W_r\|_{L^1(\bR^D)}=\|K-W\|_{L^1(\bR^D)},
\]
we have, uniformly over $z\in\cZ$,
\[
|q_r^{\sm}(z)-q_r(z)|
\le
f_\infty \|K-W\|_{L^1(\bR^D)},
\]
\[
\|m_r^{\sm}(z)-m_r(z)\|
\le
M_1 f_\infty \|K-W\|_{L^1(\bR^D)},
\]
and
\[
\|S_r^{\sm}(z)-S_r(z)\|
\le
M_2 f_\infty \|K-W\|_{L^1(\bR^D)}.
\]

Since
\[
\mu_{r,z}=\frac{m_r(z)}{q_r(z)},
\qquad
\mu_{r,z}^{\sm}=\frac{m_r^{\sm}(z)}{q_r^{\sm}(z)},
\]
we obtain
\[
\mu_{r,z}^{\sm}-\mu_{r,z}
=
\frac{m_r^{\sm}(z)-m_r(z)}{q_r^{\sm}(z)}
+
m_r(z)\left\{\frac{1}{q_r^{\sm}(z)}-\frac{1}{q_r(z)}\right\}.
\]
Using $\|m_r(z)\|\le M_1 q_r(z)$ and $q_r(z)\wedge q_r^{\sm}(z)\ge q_{\min}$, it follows that
\[
\sup_{z\in\cZ}\|\mu_{r,z}^{\sm}-\mu_{r,z}\|
\le
\frac{2M_1f_\infty}{q_{\min}}
\|K-W\|_{L^1(\bR^D)}.
\]
Next, since
\[
\Sigma_{r,z}
=
\frac{S_r(z)}{q_r(z)}-\mu_{r,z}\mu_{r,z}^\top,
\qquad
\Sigma_{r,z}^{\sm}
=
\frac{S_r^{\sm}(z)}{q_r^{\sm}(z)}-\mu_{r,z}^{\sm}(\mu_{r,z}^{\sm})^\top,
\]
we have
\begin{align*}
\|\Sigma_{r,z}^{\sm}-\Sigma_{r,z}\|
&\le
\left\|
\frac{S_r^{\sm}(z)}{q_r^{\sm}(z)}-\frac{S_r(z)}{q_r(z)}
\right\|
+
\left\|
\mu_{r,z}^{\sm}(\mu_{r,z}^{\sm})^\top-\mu_{r,z}\mu_{r,z}^\top
\right\|.
\end{align*}
The first term is bounded by
\[
\frac{2M_2f_\infty}{q_{\min}}
\|K-W\|_{L^1(\bR^D)}.
\]
For the second term, using
\[
aa^\top-bb^\top=(a-b)a^\top+b(a-b)^\top
\]
and $\|\mu_{r,z}\|\vee \|\mu_{r,z}^{\sm}\|\le M_1$, we obtain
\[
\left\|
\mu_{r,z}^{\sm}(\mu_{r,z}^{\sm})^\top-\mu_{r,z}\mu_{r,z}^\top
\right\|
\le
2M_1
\|\mu_{r,z}^{\sm}-\mu_{r,z}\|
\le
\frac{4M_1^2f_\infty}{q_{\min}}
\|K-W\|_{L^1(\bR^D)}.
\]
Therefore,
\[
\sup_{z\in\cZ}\|\Sigma_{r,z}^{\sm}-\Sigma_{r,z}\|
\le
\left(
\frac{2M_2f_\infty}{q_{\min}}
+
\frac{4M_1^2f_\infty}{q_{\min}}
\right)
\|K-W\|_{L^1(\bR^D)}.
\]

Now, by Assumption~\ref{asm:consistency-gap} and the Davis--Kahan theorem,
\[
\sup_{z\in\cZ}\|\Pi_{r,z,k}^{\sm}-\Pi_{r,z,k}\|
\le
\frac{2}{\gamma_0}
\sup_{z\in\cZ}\|\Sigma_{r,z}^{\sm}-\Sigma_{r,z}\|
\]
for each $k=1,\dots,D-d$. Since
\[
g_r^{\sm}(z)-g_r(z)
=
\sum_{k=1}^{D-d}
\left[
(\Pi_{r,z,k}^{\sm}-\Pi_{r,z,k})(z-\mu_{r,z}^{\sm})
+
\Pi_{r,z,k}(\mu_{r,z}-\mu_{r,z}^{\sm})
\right],
\]
we have
\begin{align*}
\|g_r^{\sm}(z)-g_r(z)\|
&\le
\sum_{k=1}^{D-d}
\|\Pi_{r,z,k}^{\sm}-\Pi_{r,z,k}\|
\cdot
\|z-\mu_{r,z}^{\sm}\|
+
(D-d)\|\mu_{r,z}^{\sm}-\mu_{r,z}\|.
\end{align*}
Using $\|z-\mu_{r,z}^{\sm}\|\le R_*$ on $\cZ$, we conclude that
\begin{align*}
B_r
&\le
(D-d)\left[
\frac{2R_*}{\gamma_0}
\left(
\frac{2M_2f_\infty}{q_{\min}}
+
\frac{4M_1^2f_\infty}{q_{\min}}
\right)
+
\frac{2M_1f_\infty}{q_{\min}}
\right]
\|K-W\|_{L^1(\bR^D)}.
\end{align*}
This proves the claimed bound.
\end{proof}

\subsubsection{Hausdorff stability of the root set}

We now convert the uniform field bound in Proposition~\ref{prop:bias-variance} into a Hausdorff bound between $\widehat{\cM}_{r,d}$ and $\cM_{r,d}$.

\begin{theorem}\label{thm:hausdorff-stability}
Suppose Assumptions~\ref{asm:consistency-mass}--\ref{asm:consistency-transverse} hold. Then there exist constants $\varepsilon_0>0$ and $C>0$ such that, whenever
\[
\Delta_{n,r}
=
\sup_{z\in\cZ}\|\widehat g_r(z)-g_r(z)\|
\le \varepsilon_0,
\]
the empirical root set $\widehat{\cM}_{r,d}$ is contained in $U_{\eta_0/2}(\cM_{r,d})$ and
\[
\textnormal{d}_\textnormal{H}(\widehat{\cM}_{r,d},\cM_{r,d})
\le
C\Delta_{n,r}.
\]
Consequently, under Assumptions~\ref{asm:consistency-mass}--\ref{asm:consistency-transverse} and $nr^D/\log n\to\infty$,
\[
\textnormal{d}_\textnormal{H}(\widehat{\cM}_{r,d},\cM_{r,d})
\le
C\left\{
\cO_p\!\left(\sqrt{\frac{\log n}{nr^D}}\right)+B_r
\right\}.
\]
\end{theorem}

\begin{proof}
We divide the proof into two steps.

\medskip
\noindent
\textbf{Step 1: a normal growth bound for $g_r$.}
Fix $p\in\cM_{r,d}$ and write $N_p=N_p\cM_{r,d}$. For $\nu\in N_p$ with $\|\nu\|<\eta_0$, let $z=p+\nu$. By the fundamental theorem of calculus,
\[
g_r(p+\nu)-g_r(p)
=
\int_0^1 \nJ_{g_r}(p+t\nu)\nu\,dt.
\]
Since $g_r(p)=0$ and $\nJ_{g_r}(p)|_{N_p}=A_p$, we obtain
\[
g_r(p+\nu)
=
A_p\nu
+
\int_0^1\{\nJ_{g_r}(p+t\nu)-\nJ_{g_r}(p)\}\nu\,dt.
\]
Hence, by Assumption~\ref{asm:consistency-transverse},
\[
\|g_r(p+\nu)\|
\ge
\kappa_0\|\nu\|
-
\int_0^1 L_0 t\|\nu\|^2\,dt
=
\kappa_0\|\nu\|-\frac{L_0}{2}\|\nu\|^2.
\]
Choose
\[
\rho_0
=
\min\left\{
\frac{\eta_0}{2},\frac{\kappa_0}{2L_0}
\right\}.
\]
Then for all $p\in\cM_{r,d}$ and all $\nu\in N_p$ with $\|\nu\|\le \rho_0$,
\begin{equation}
\label{eq:normal-growth}
\|g_r(p+\nu)\|
\ge
\frac{\kappa_0}{2}\|\nu\|.
\end{equation}

\medskip
\noindent
\textbf{Step 2: the two-sided Hausdorff bound.}
Assume $\Delta_{n,r}\le \varepsilon_0$, where $\varepsilon_0>0$ will be chosen later.

First, let $\widehat z\in\widehat{\cM}_{r,d}$. If $\nd(\widehat z,\cM_{r,d})<\rho_0$, let $p=\pi_{\cM_{r,d}}(\widehat z)$ be the nearest point projection and write $\widehat z=p+\nu$ with $\nu\in N_p$. Since $\widehat g_r(\widehat z)=0$,
\[
\|g_r(\widehat z)\|
=
\|g_r(\widehat z)-\widehat g_r(\widehat z)\|
\le
\Delta_{n,r}.
\]
By \eqref{eq:normal-growth},
\[
\frac{\kappa_0}{2}\|\nu\|
\le
\Delta_{n,r},
\]
and therefore
\[
\nd(\widehat z,\cM_{r,d})=\|\nu\|
\le
\frac{2}{\kappa_0}\Delta_{n,r}.
\]
Thus,
\begin{equation}
\label{eq:one-sided-hd}
\sup_{\widehat z\in\widehat{\cM}_{r,d}}\nd(\widehat z,\cM_{r,d})
\le
\frac{2}{\kappa_0}\Delta_{n,r}.
\end{equation}

Next, fix $p\in\cM_{r,d}$. Consider the map
\[
\Phi_p(\nu)=P_p g_r(p+\nu),
\qquad
\widehat\Phi_p(\nu)=P_p \widehat g_r(p+\nu),
\qquad
\nu\in N_p,\ \|\nu\|\le \rho_0,
\]
where $P_p:\bR^D\to N_p$ denotes the orthogonal projection onto $N_p$. Since $A_p=\nJ_{g_r}(p)|_{N_p}$ is invertible and \eqref{eq:normal-growth} holds, there exists $c_1>0$ such that
\[
\|\Phi_p(\nu)\|
\ge
c_1\|\nu\|,
\qquad
\|\nu\|=\rho
\]
for every $0<\rho\le \rho_0$. Choose
\[
\rho=\frac{4}{\kappa_0}\Delta_{n,r},
\]
and assume $\varepsilon_0$ is small enough so that $\rho\le \rho_0$. Then on the boundary of the normal ball $\{\nu\in N_p:\|\nu\|=\rho\}$,
\[
\|\Phi_p(\nu)\|
\ge
2\Delta_{n,r},
\]
while
\[
\|\widehat\Phi_p(\nu)-\Phi_p(\nu)\|
\le
\|\widehat g_r(p+\nu)-g_r(p+\nu)\|
\le
\Delta_{n,r}.
\]
Hence $\widehat\Phi_p$ does not vanish on the boundary and has the same topological degree as $\Phi_p$ on the normal ball of radius $\rho$. Since $A_p$ is invertible, the degree of $\Phi_p$ is nonzero. Therefore $\widehat\Phi_p$ has a zero $\nu_p$ in the normal ball of radius $\rho$. By the same local chart construction used in the proof of Theorem~\ref{thm:population}, for $\rho$ sufficiently small the zero set of $\widehat\Phi_p$ in the normal ball coincides with the zero set of $\widehat g_r$ in a neighborhood of $p$. Thus there exists $\widehat z_p\in\widehat{\cM}_{r,d}$ such that
\[
\|\widehat z_p-p\|\le \rho=\frac{4}{\kappa_0}\Delta_{n,r}.
\]
Taking the supremum over $p\in\cM_{r,d}$ yields
\begin{equation}
\label{eq:other-side-hd}
\sup_{p\in\cM_{r,d}}\nd(p,\widehat{\cM}_{r,d})
\le
\frac{4}{\kappa_0}\Delta_{n,r}.
\end{equation}

Combining \eqref{eq:one-sided-hd} and \eqref{eq:other-side-hd}, we obtain
\[
\textnormal{d}_\textnormal{H}(\widehat{\cM}_{r,d},\cM_{r,d})
\le
\frac{4}{\kappa_0}\Delta_{n,r}.
\]
Finally, substituting Proposition~\ref{prop:bias-variance} into the last display gives the stated rate.
\end{proof}

\begin{remark}
Theorem~\ref{thm:hausdorff-stability} is intentionally stated for a fixed radius $r$. This matches the focus of the paper on an empirical nested sequence defined under a common geometric scale. In this regime, the stochastic error vanishes as $n\to\infty$, while the deterministic term $B_r$ reflects the discrepancy between the hard population localization and the smooth empirical localization. If one instead defines the population target through the same smooth kernel as the empirical procedure, then $B_r=0$. In the present paper, we retain the hard-neighborhood population definition for geometric interpretability and explicitly account for the resulting deterministic bias. If the empirical hierarchy were instead constructed with a different radius, say $r^\ast\neq r$, then the target would itself change from $\cM_{r,d}$ to $\cM_{r^\ast,d}$ rather than merely introducing additional estimation error around the same object. This is one reason the paper adopts a common fixed-$r$ viewpoint. Allowing the radius to vary with the dimension may be of future interest, but it would require rethinking how to preserve a coherent nested interpretation across dimensions.
\end{remark}

\subsection{Cell counts for the single-cell RNA data set}
\begin{table}[htbp]
\centering
    \tbl{Number of cells in each group before and after cleaning}
    {\begin{tabular}{clcc}
        \toprule\\[-13pt]
        Group     & Cell Type                          & Count (Original) & Count (Cleaned) \\\hline
        1         & Enterocyte Immature Distal         & 512              & 508             \\
        2         & Enterocyte Immature Proximal       & 297              & 297             \\
        3         & Enterocyte Mature Distal           & 241              & 179             \\
        4         & Enterocyte Mature Proximal         & 581              & 555             \\
        5         & Enterocyte Progenitor              & 356              & 356             \\
        6         & Enterocyte Progenitor Early        & 829              & 828             \\
        7         & Enterocyte Progenitor Late         & 404              & 404             \\
        8         & Enteroendocrine                    & 310              & 37              \\
        9         & Goblet                             & 510              & 396             \\
        10        & Paneth                             & 260              & 217             \\
        11        & Stem                               & 1267             & 1263            \\
        12        & Transit Amplifying                 & 665              & 665             \\
        13        & Transit Amplifying Gen 1           & 408              & 408             \\
        14        & Transit Amplifying Gen 2           & 410              & 394             \\
        15        & Tuft                               & 166              & 21              \\\hline\\[-10pt]
        \textbf{} & {\textbf{Total}} & \textbf{7216}    & \textbf{6528}  \\[-3pt]\bottomrule
    \end{tabular}}
    \label{tab:Cell-count}
\end{table}
\clearpage

\subsection{Additional figures of the real data study}
\begin{figure}[htbp]
    \centering
    \subfigure[]{\includegraphics[width=0.3\textwidth]{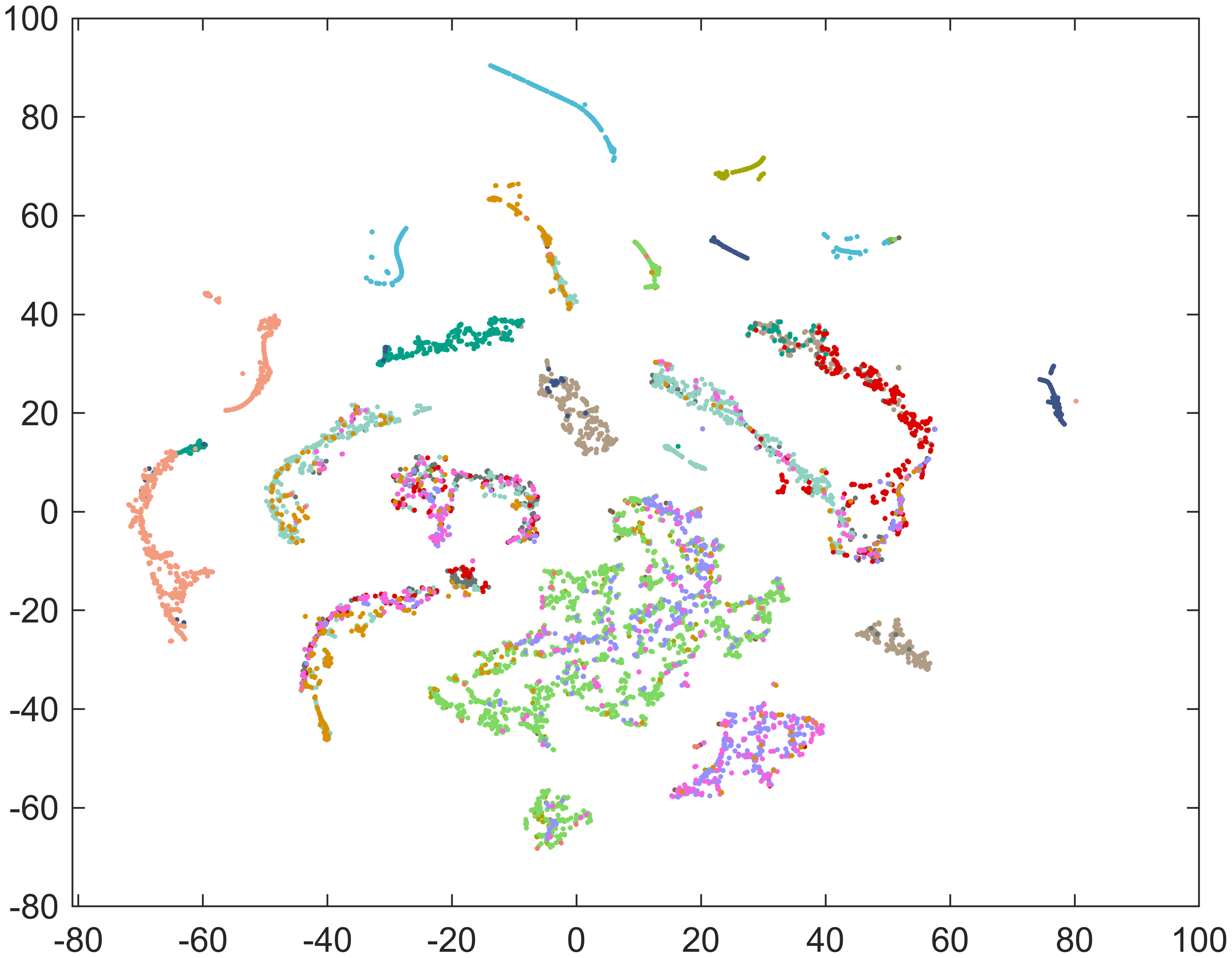}}
    \subfigure[]{\includegraphics[width=0.3\textwidth]{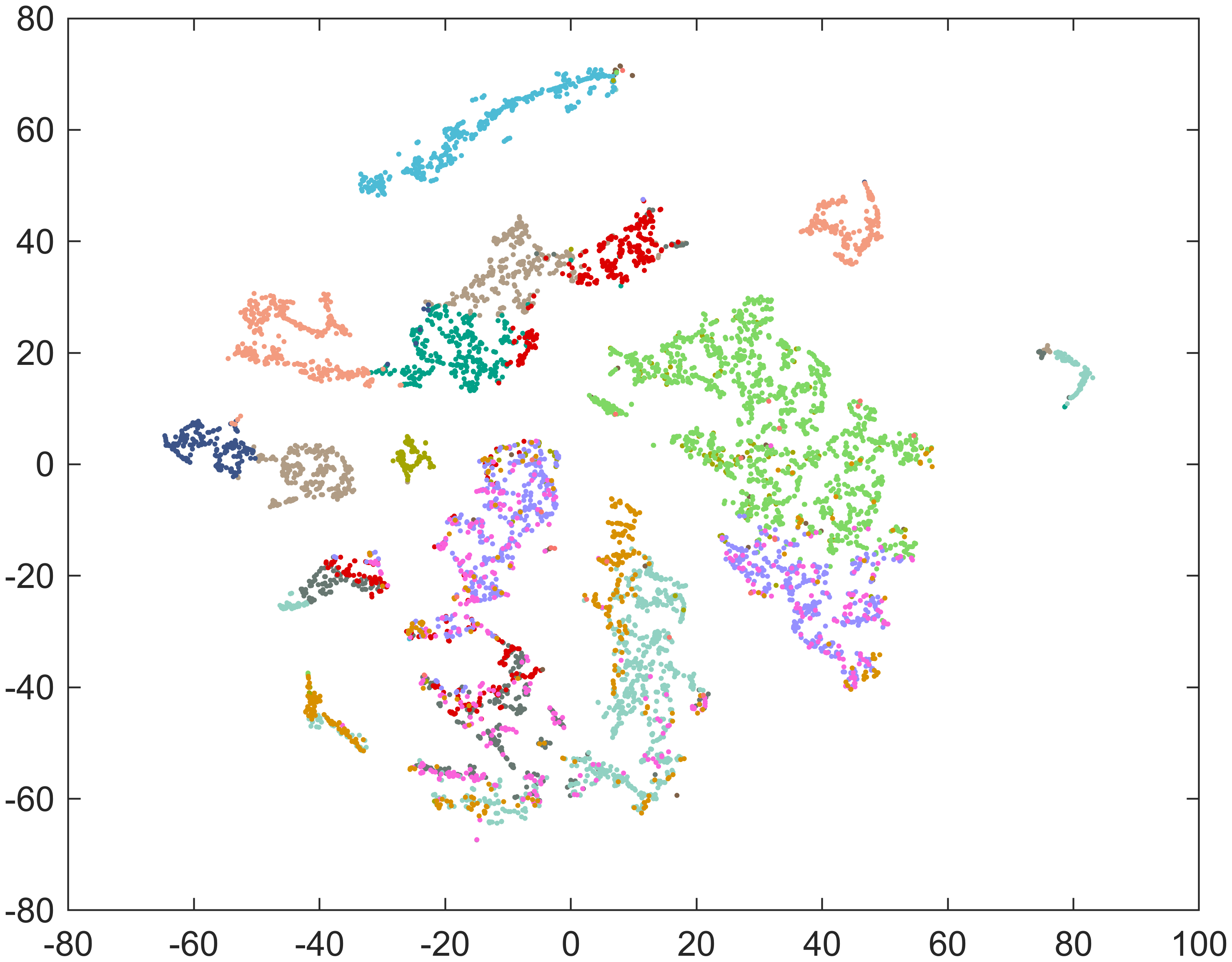}}
    \subfigure[]{\includegraphics[width=0.3\textwidth]{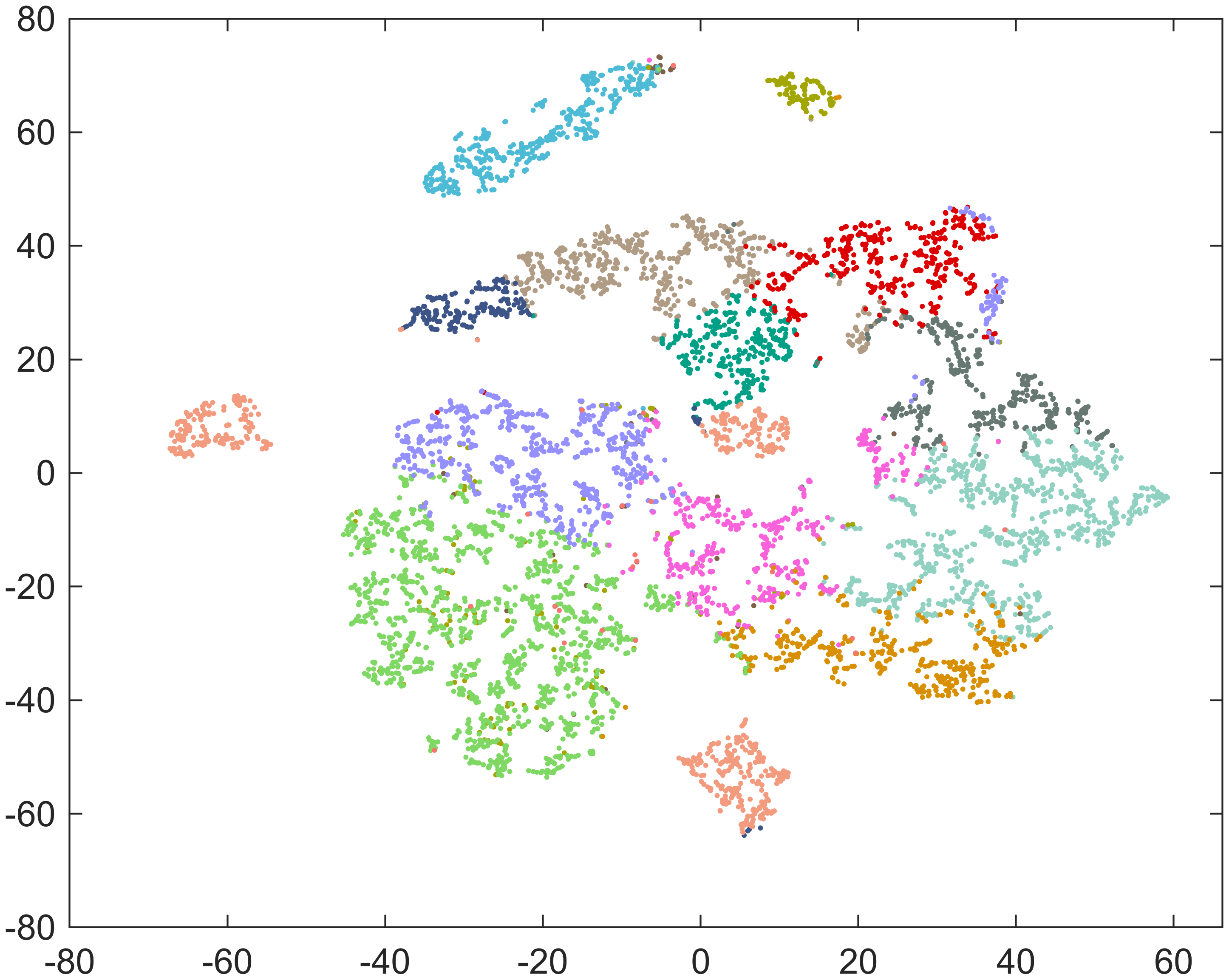}}
    \subfigure[]{\includegraphics[width=0.3\textwidth]{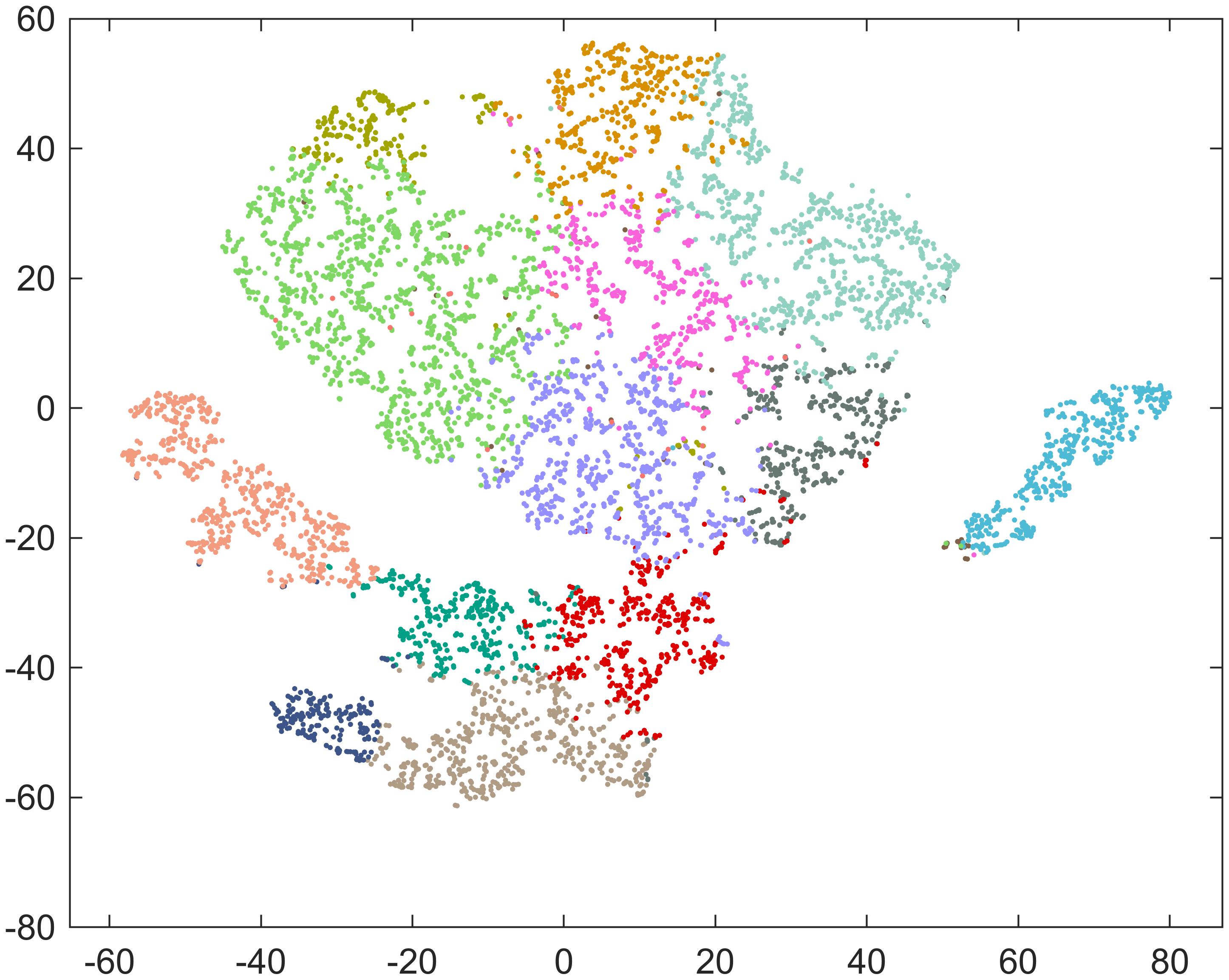}}
    \subfigure[]{\includegraphics[width=0.3\textwidth]{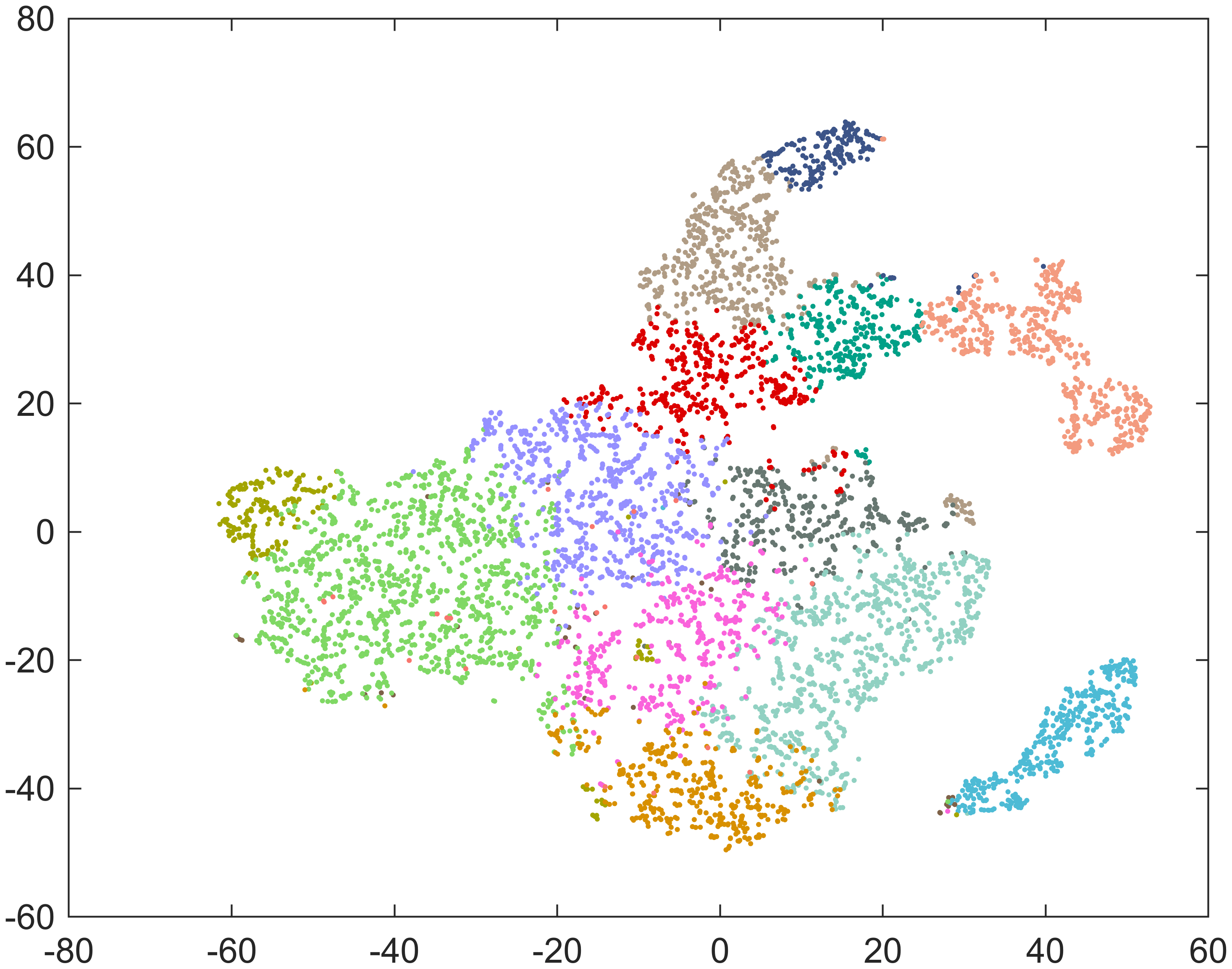}}
    \subfigure[]{\includegraphics[width=0.3\textwidth]{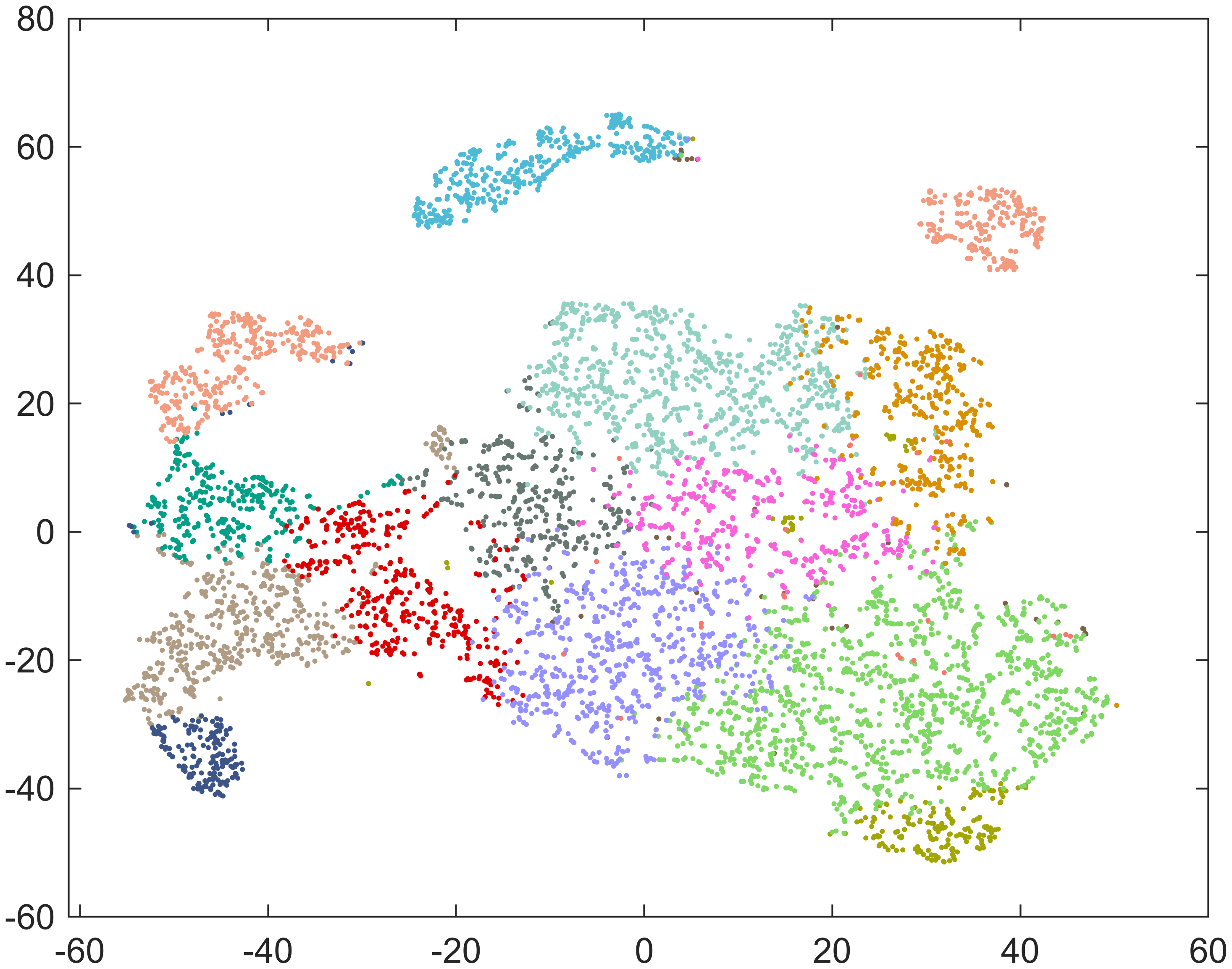}}
    \subfigure[]{\includegraphics[width=0.3\textwidth]{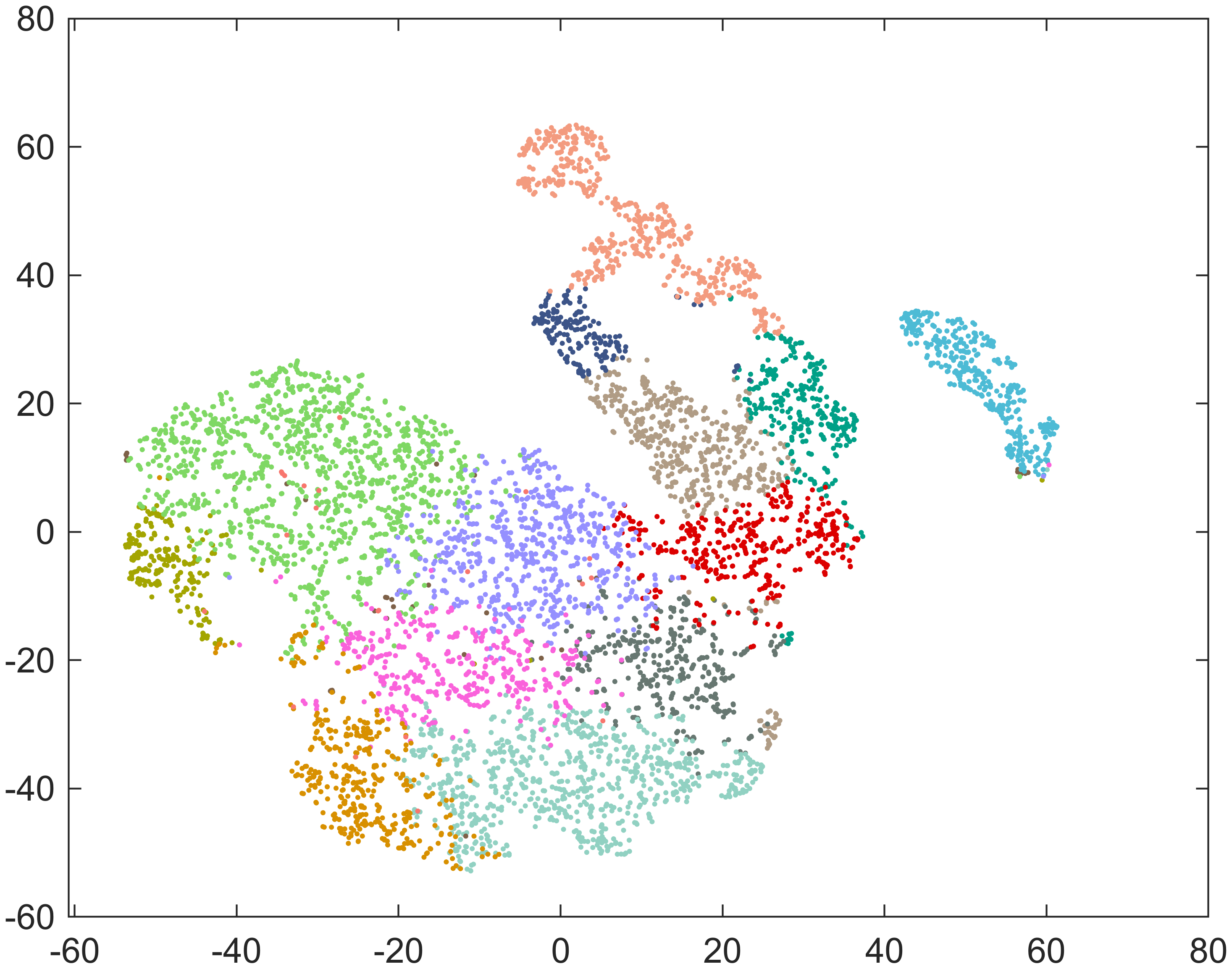}}
    \subfigure[]{\includegraphics[width=0.3\textwidth]{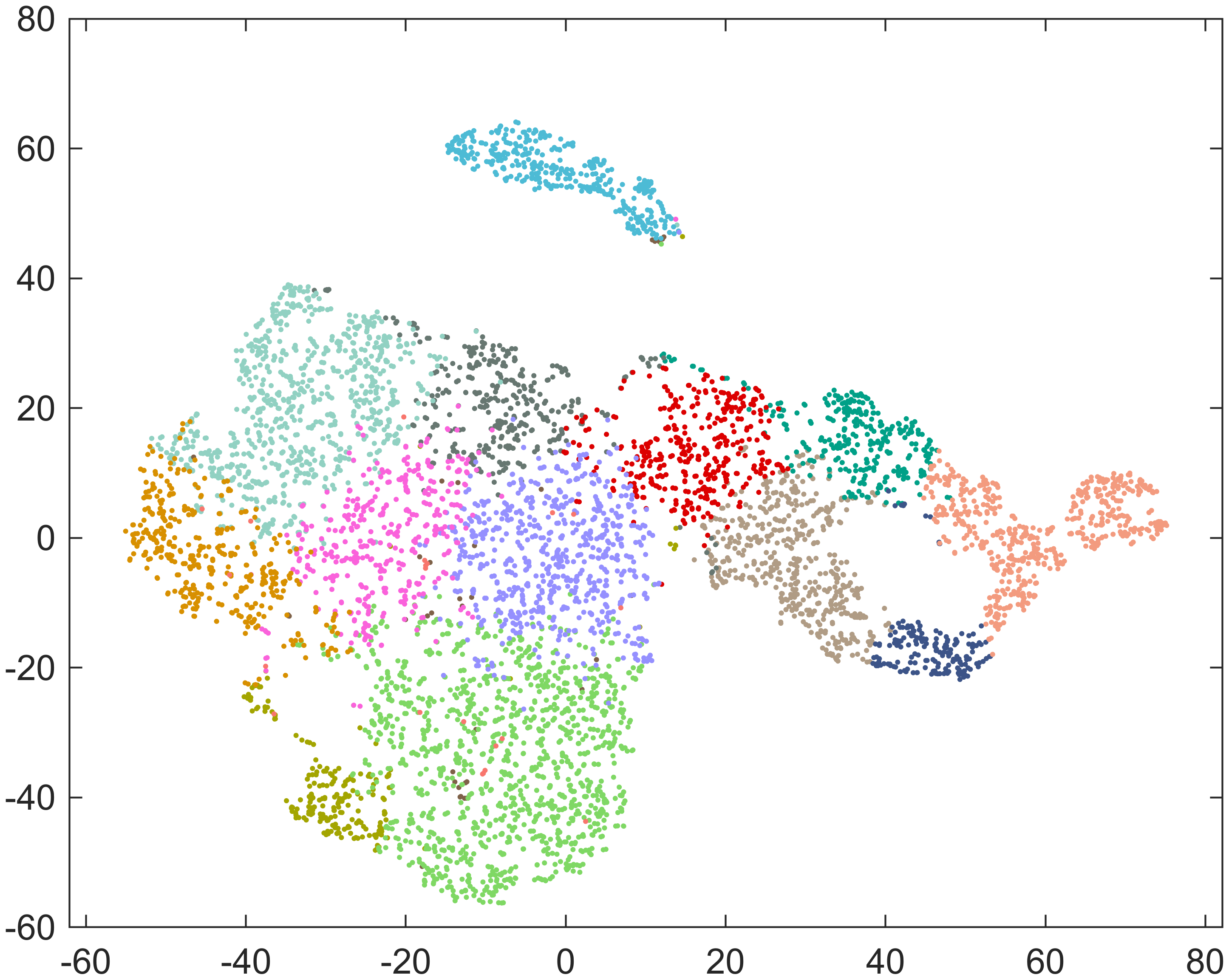}}
    \subfigure[]{\includegraphics[width=0.3\textwidth]{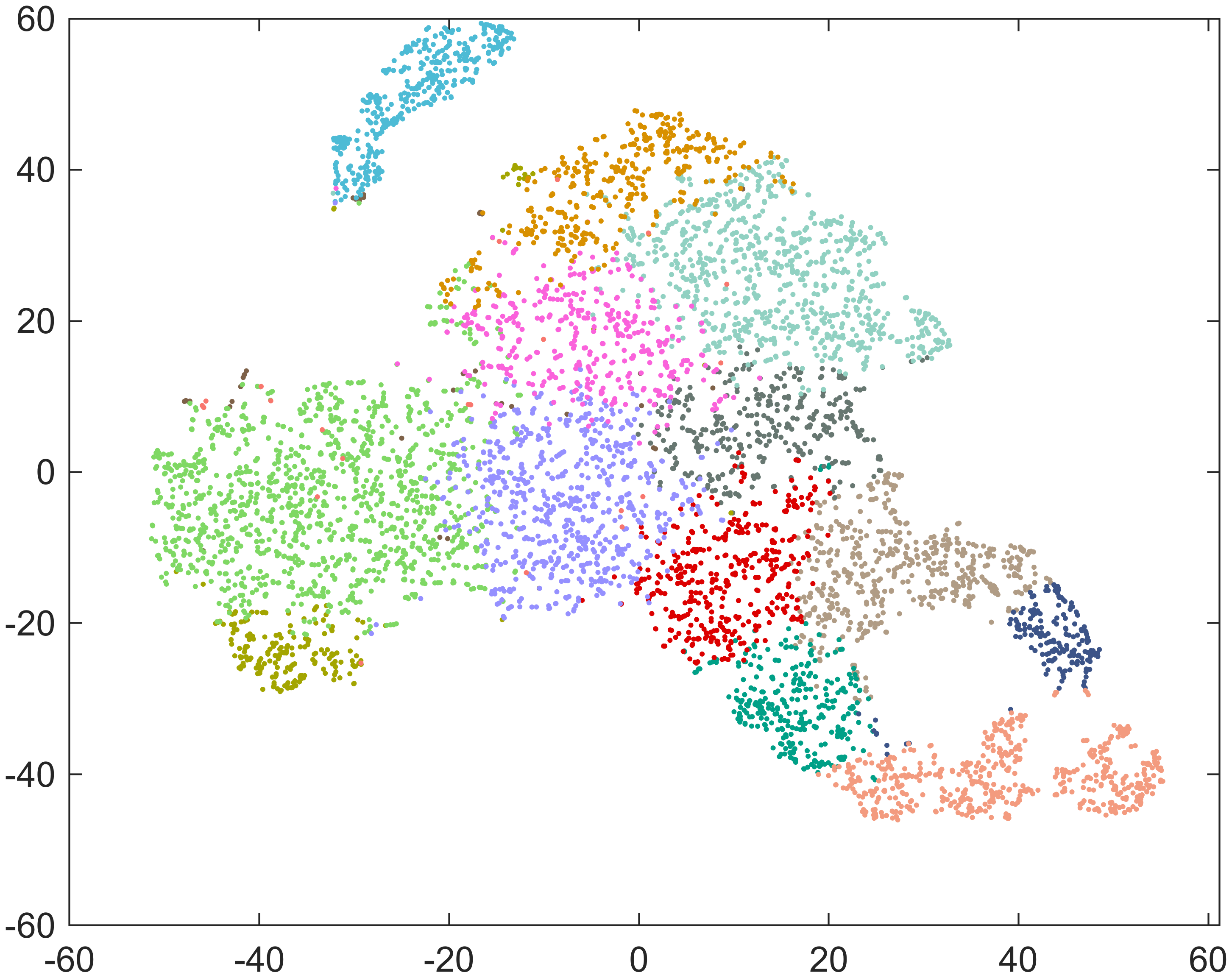}}
    \subfigure[]{\includegraphics[width=0.3\textwidth]{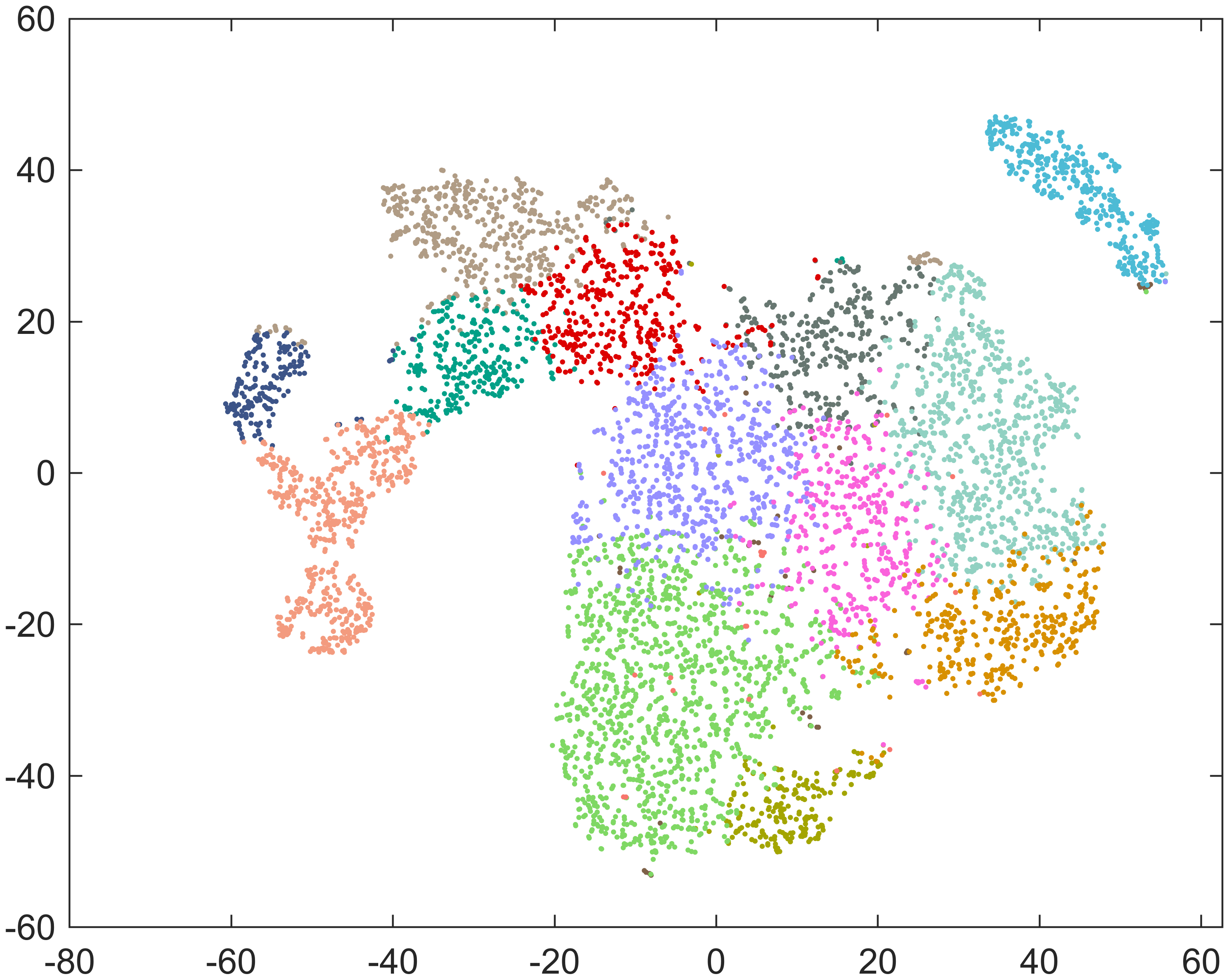}}
    \subfigure[]{\includegraphics[width=0.3\textwidth]{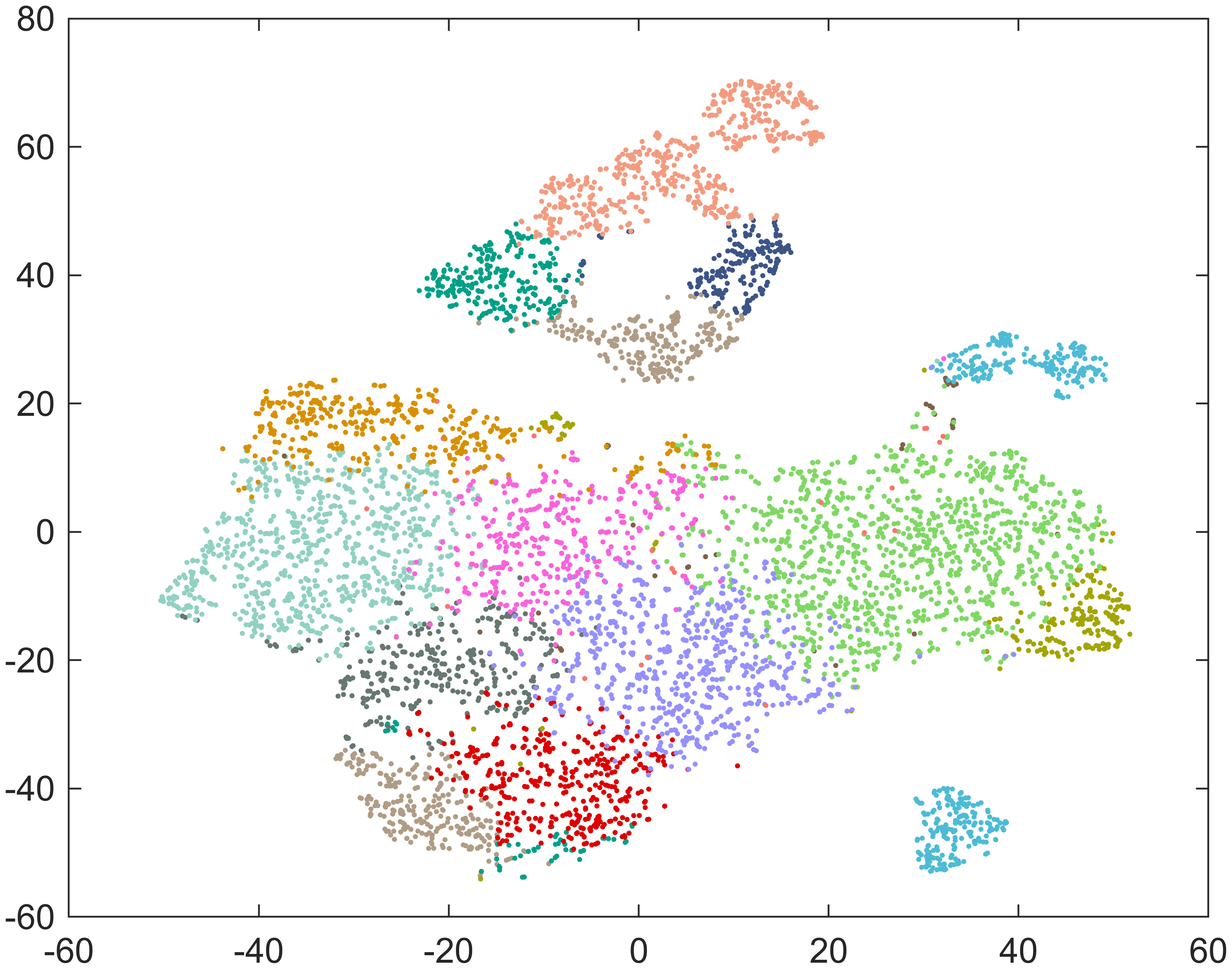}}
    \subfigure[]{\includegraphics[width=0.3\textwidth]{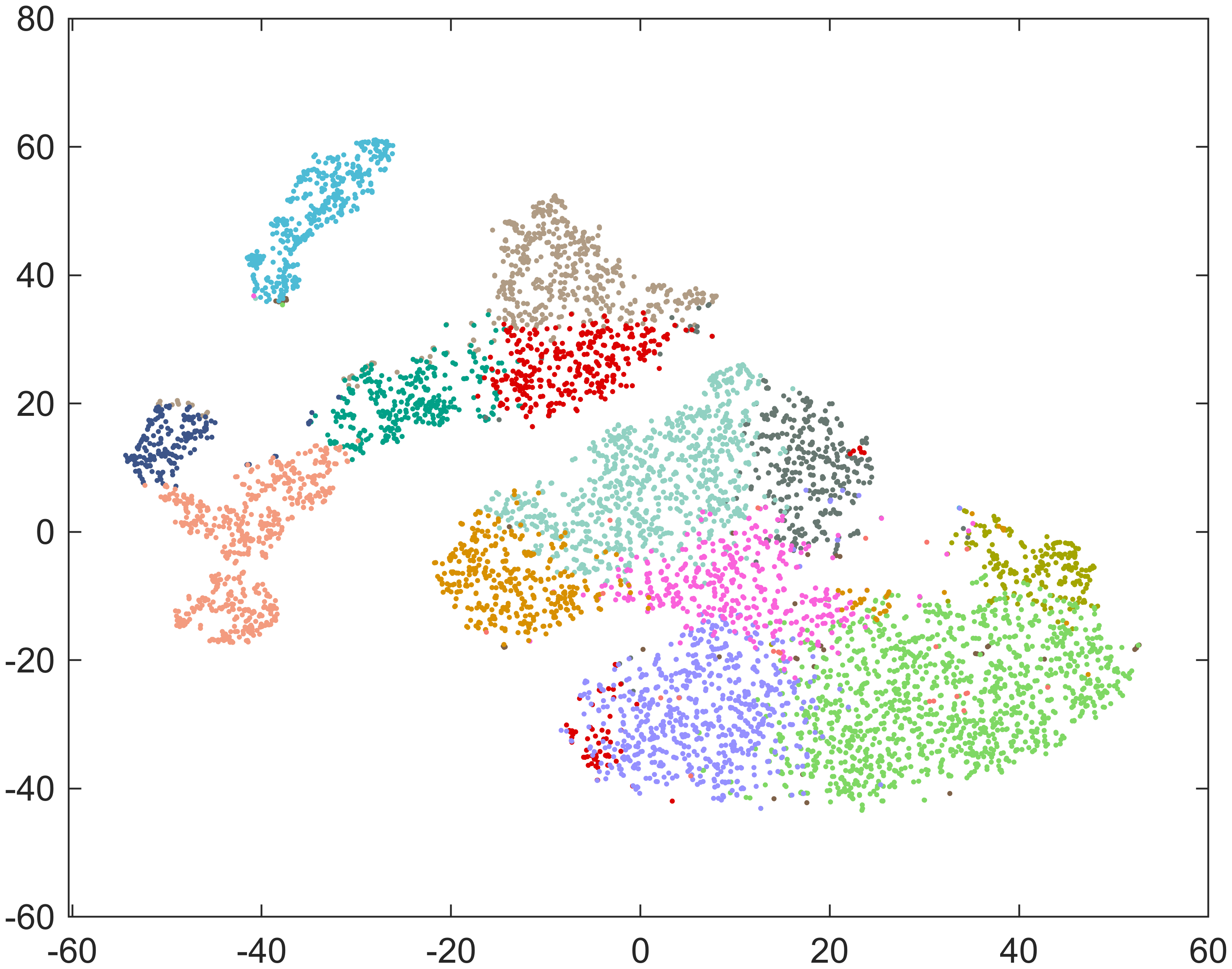}}
    \caption{Scatter plots for the two-dimensional t-distributed stochastic neighbour embedding of the result from the principal nested submanifolds with $d=1,\dots,12$ in (a)--(l).}
\end{figure}
\begin{figure}[htbp]
    \centering
    \subfigure[]{\includegraphics[width=0.3\textwidth]{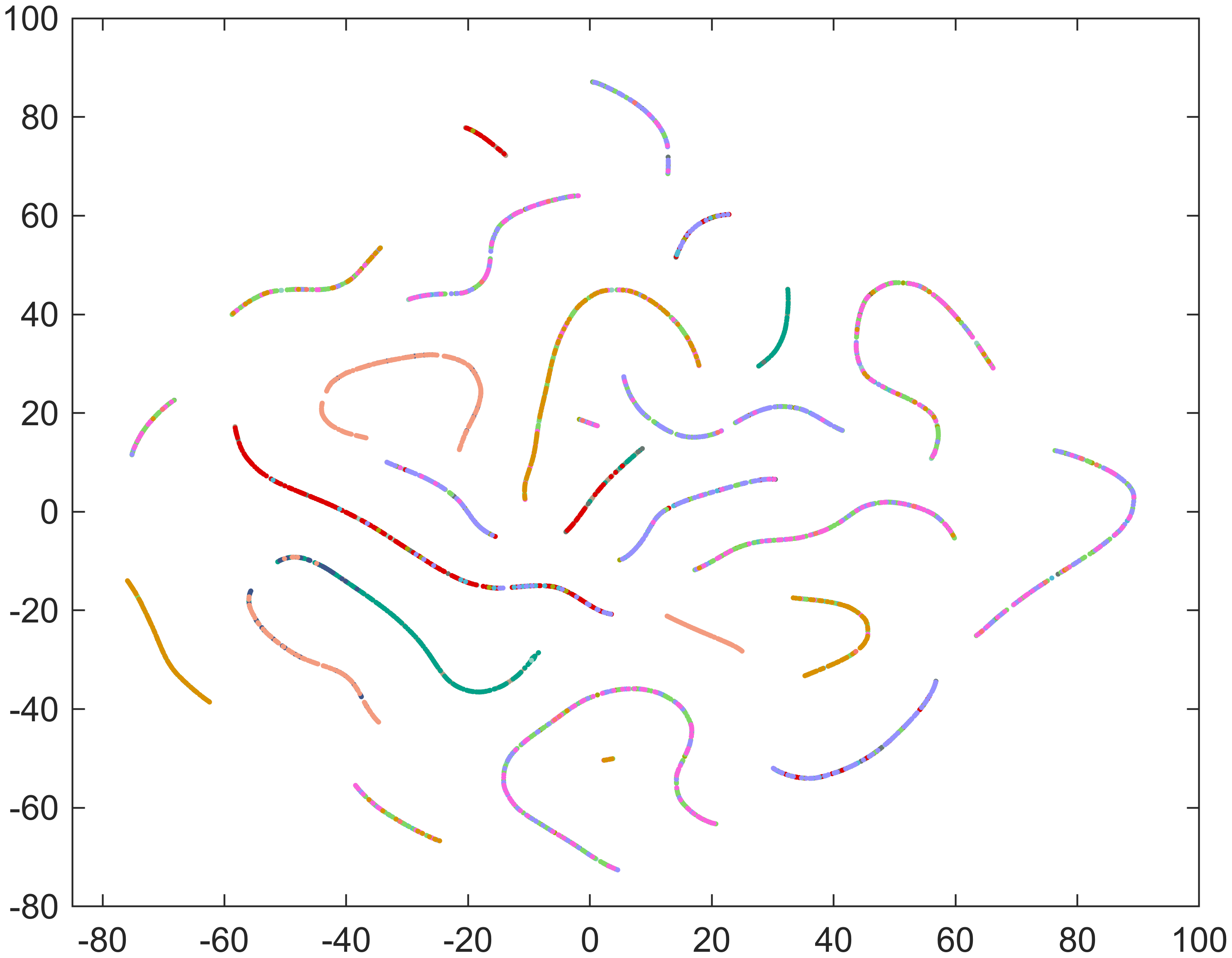}}
    \subfigure[]{\includegraphics[width=0.3\textwidth]{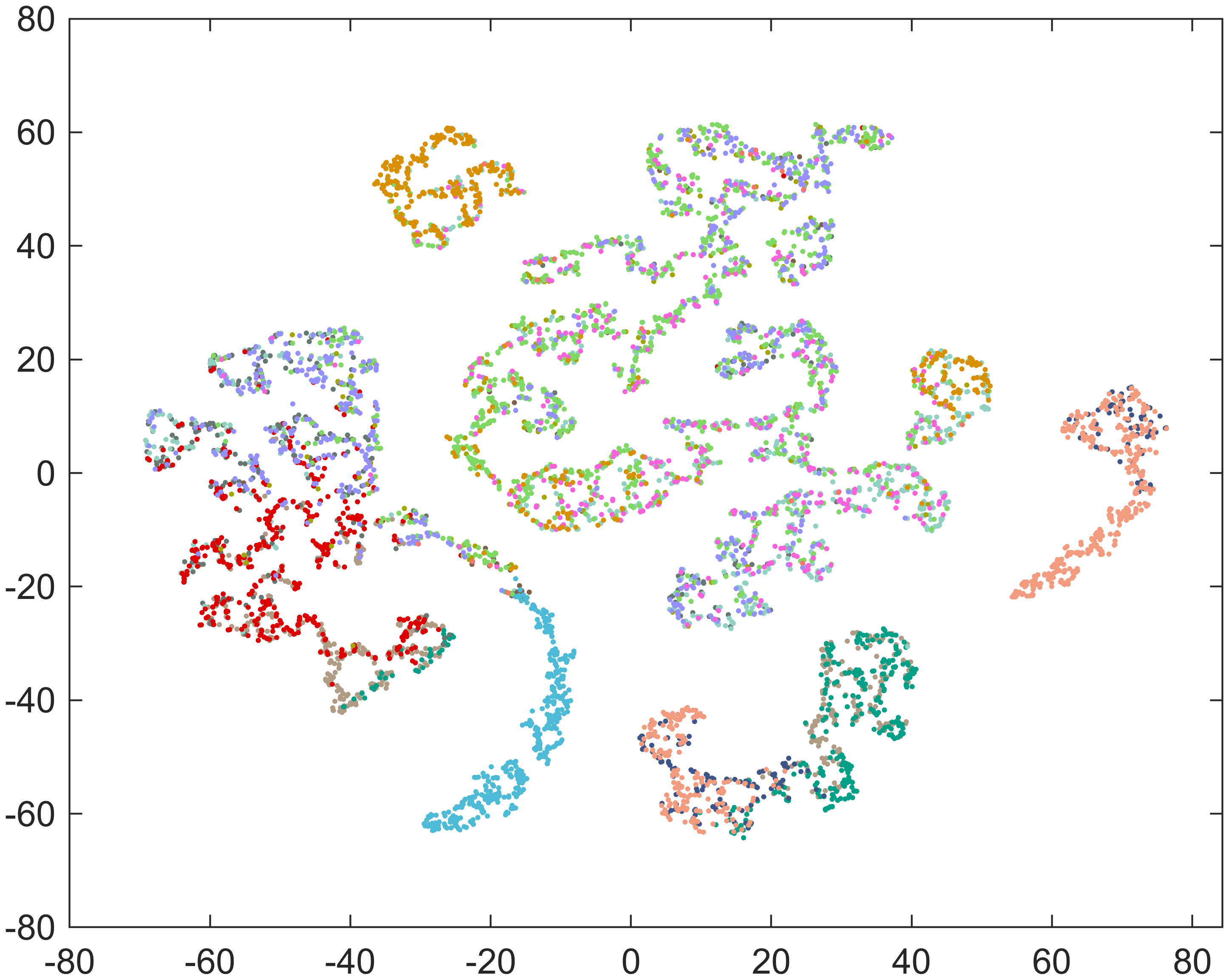}}
    \subfigure[]{\includegraphics[width=0.3\textwidth]{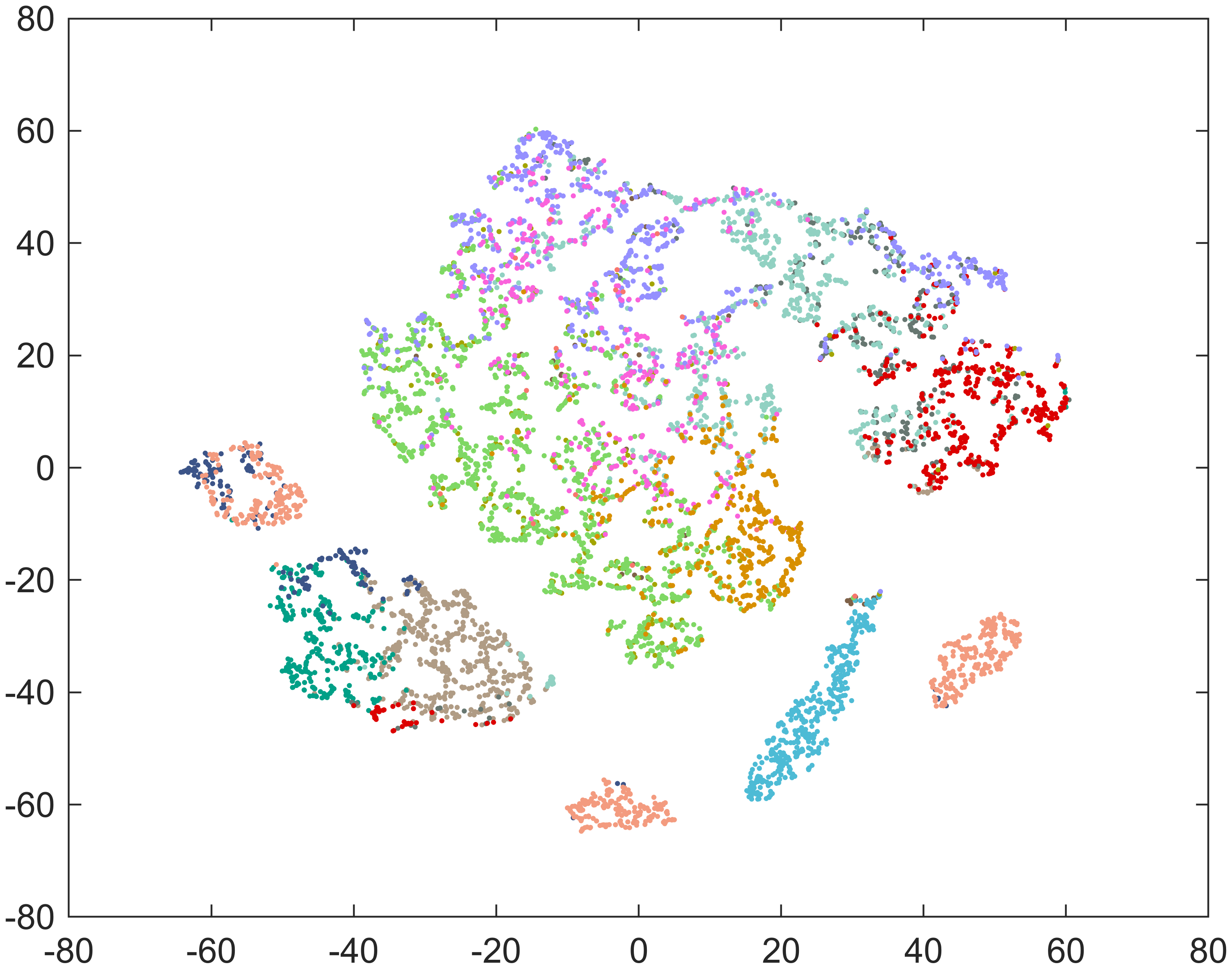}}
    \subfigure[]{\includegraphics[width=0.3\textwidth]{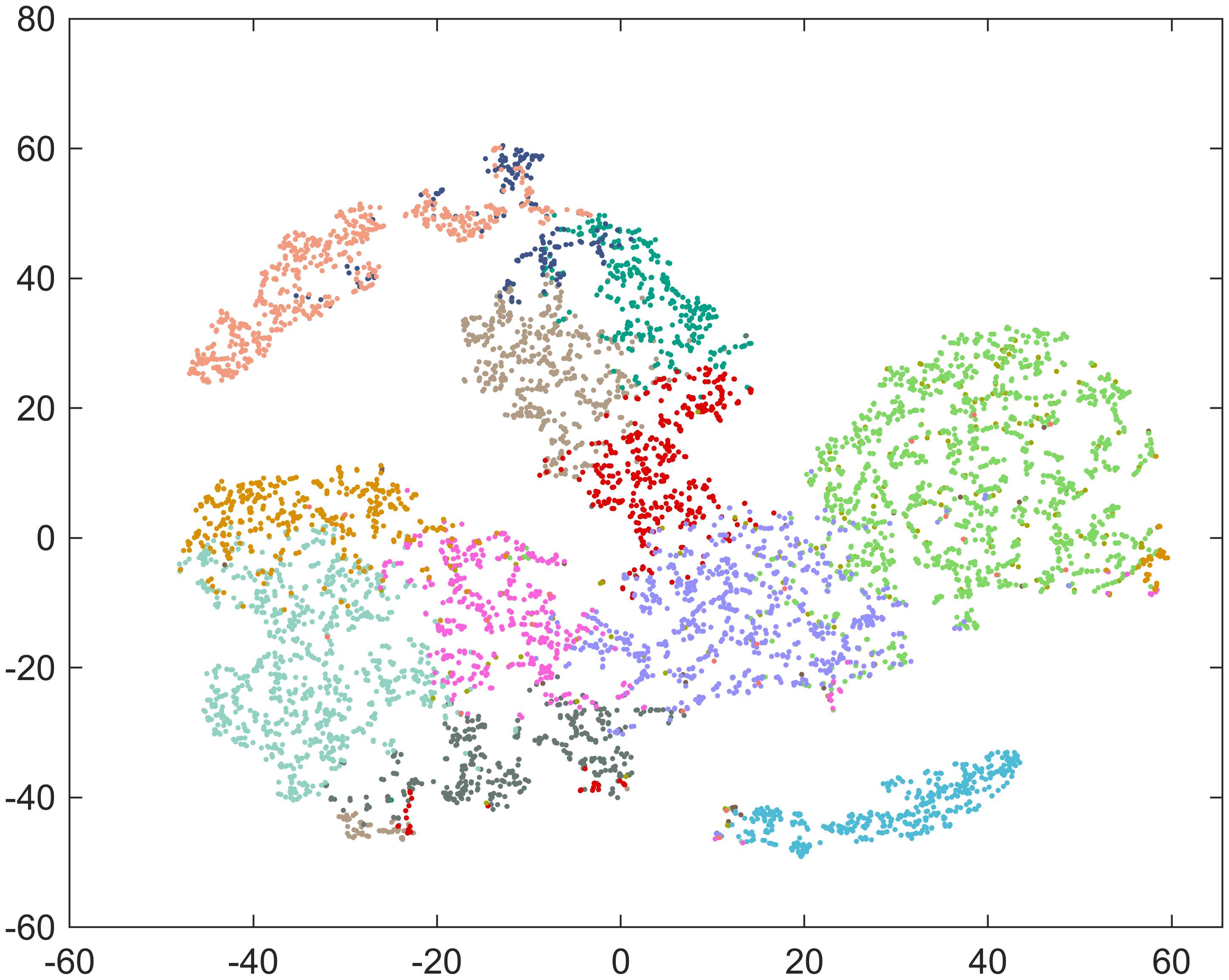}}
    \subfigure[]{\includegraphics[width=0.3\textwidth]{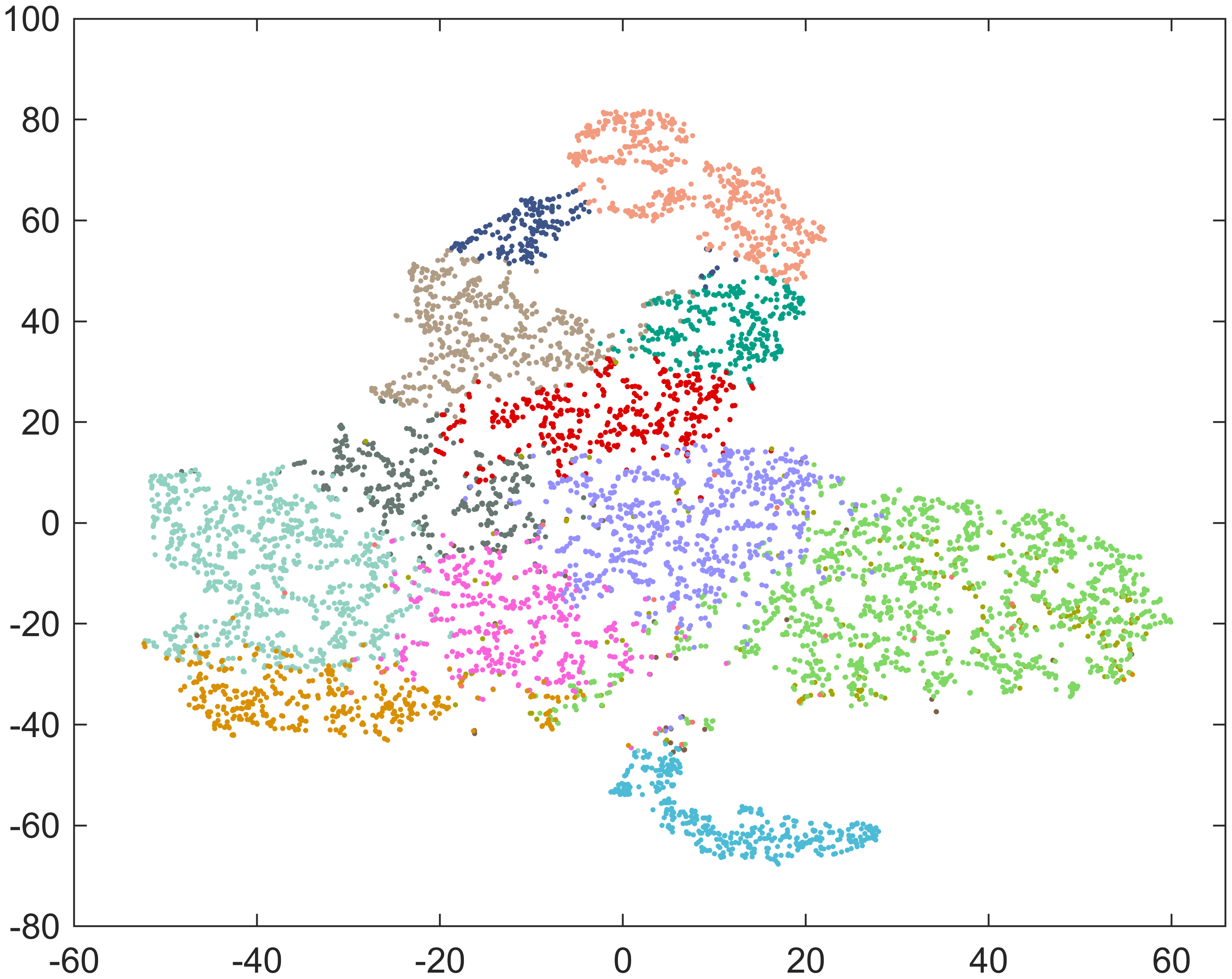}}
    \subfigure[]{\includegraphics[width=0.3\textwidth]{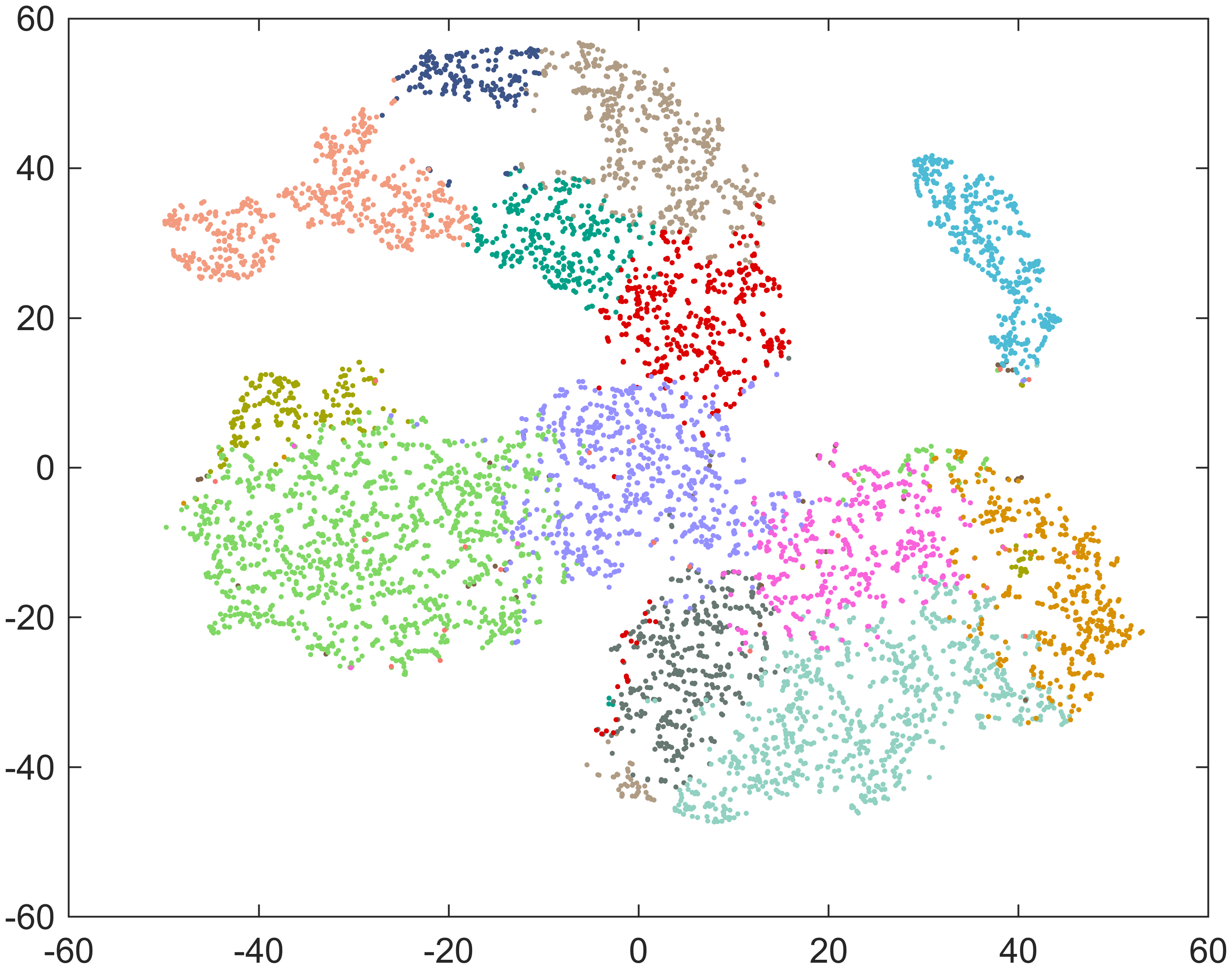}}
    \subfigure[]{\includegraphics[width=0.3\textwidth]{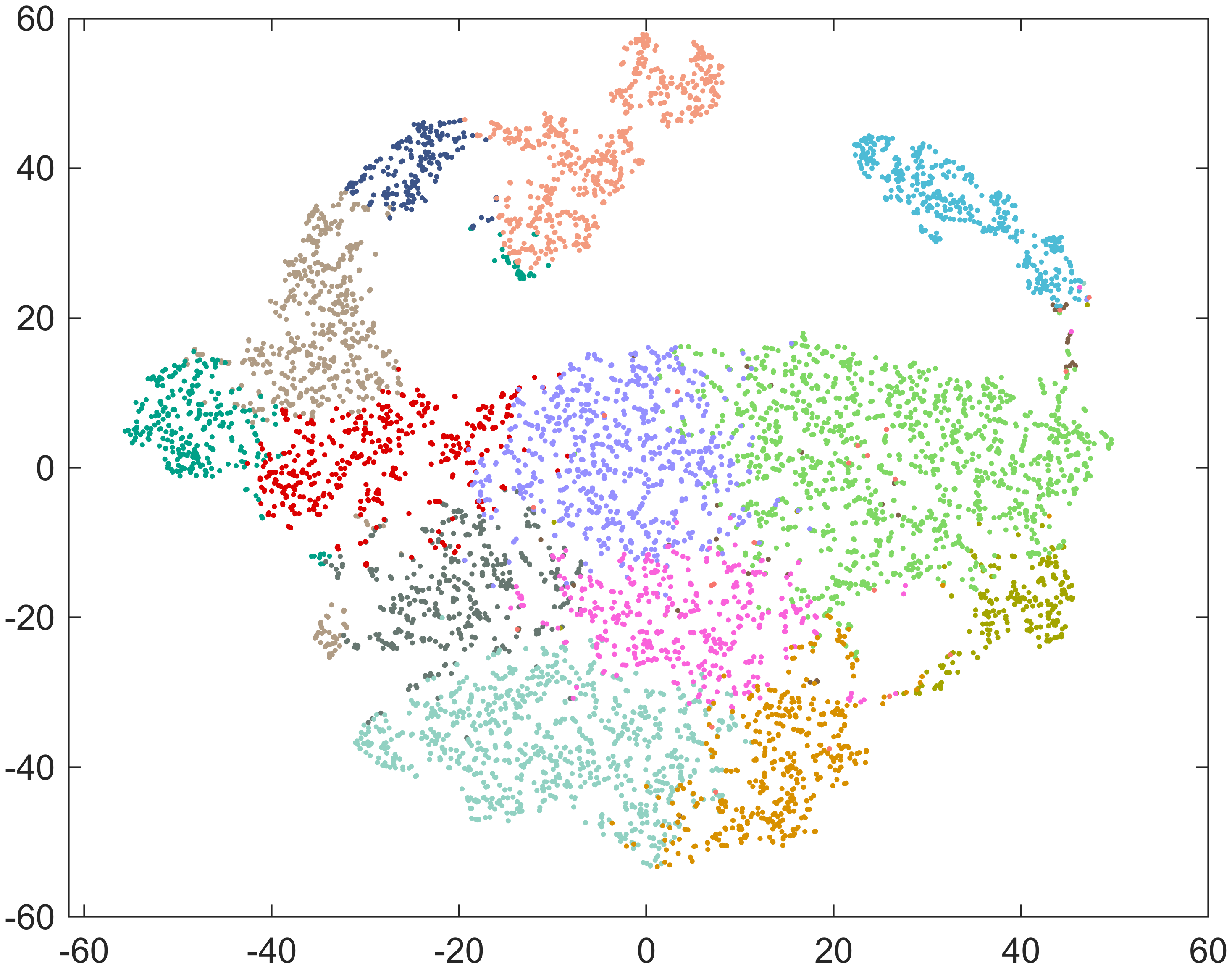}}
    \subfigure[]{\includegraphics[width=0.3\textwidth]{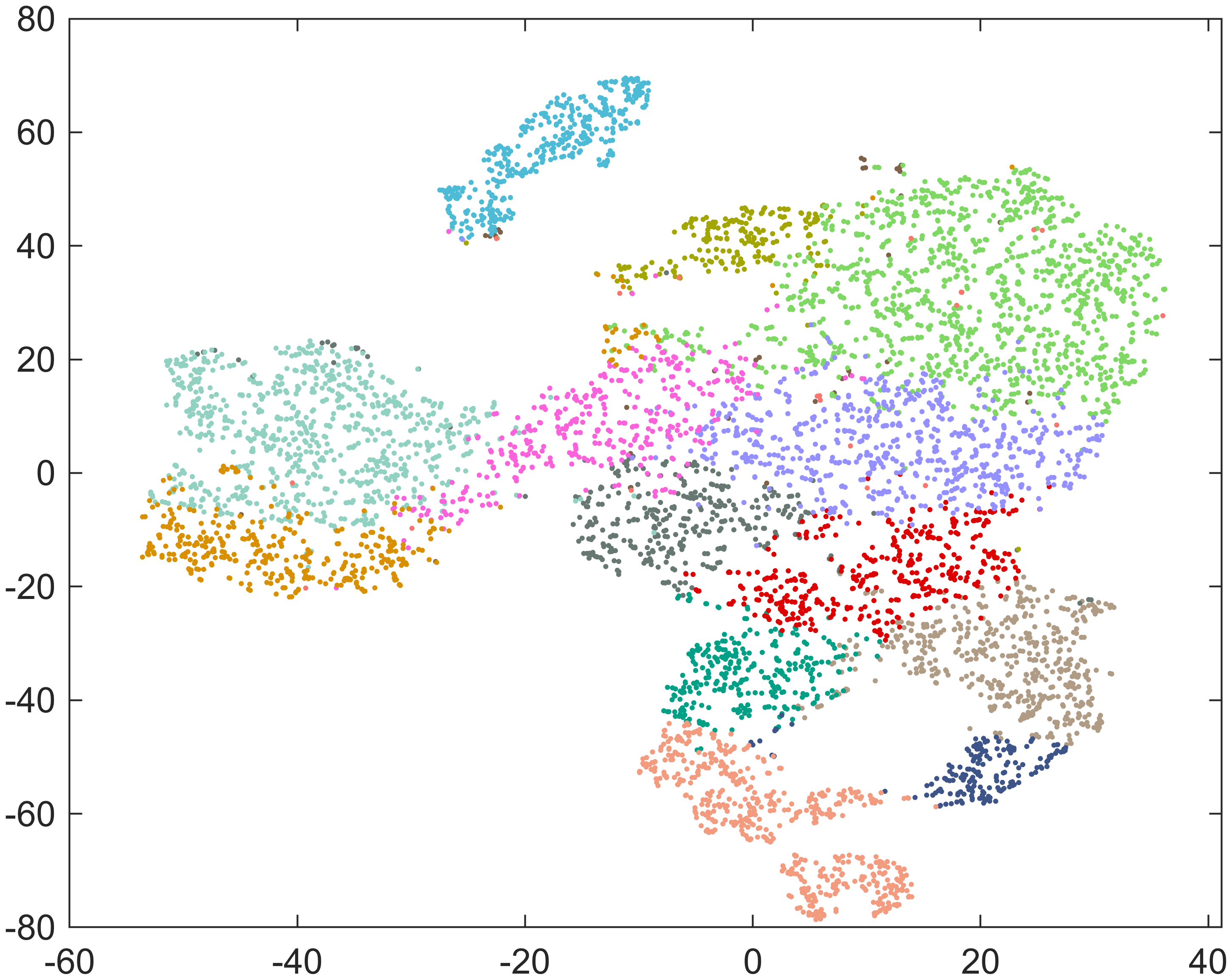}}
    \subfigure[]{\includegraphics[width=0.3\textwidth]{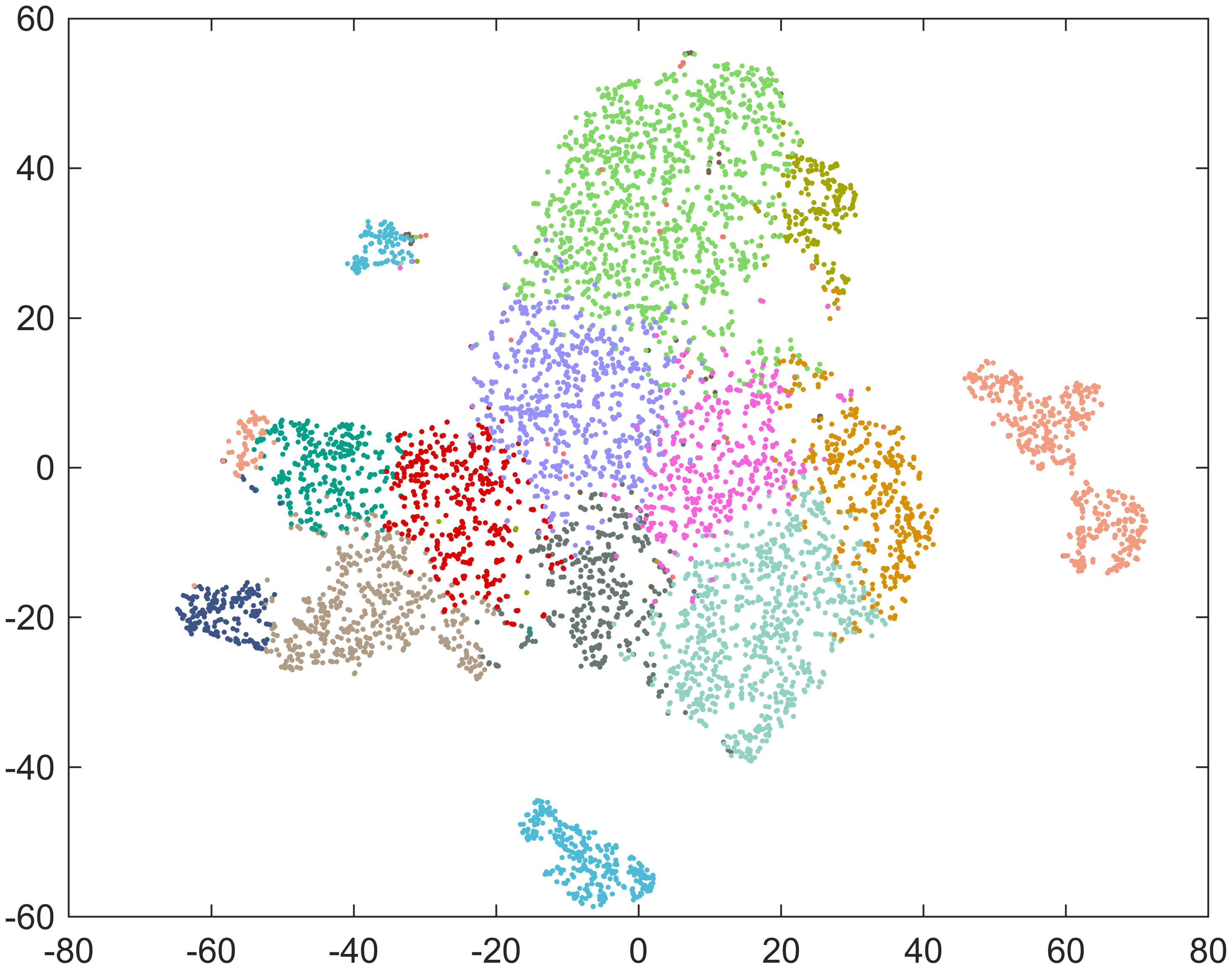}}
    \subfigure[]{\includegraphics[width=0.3\textwidth]{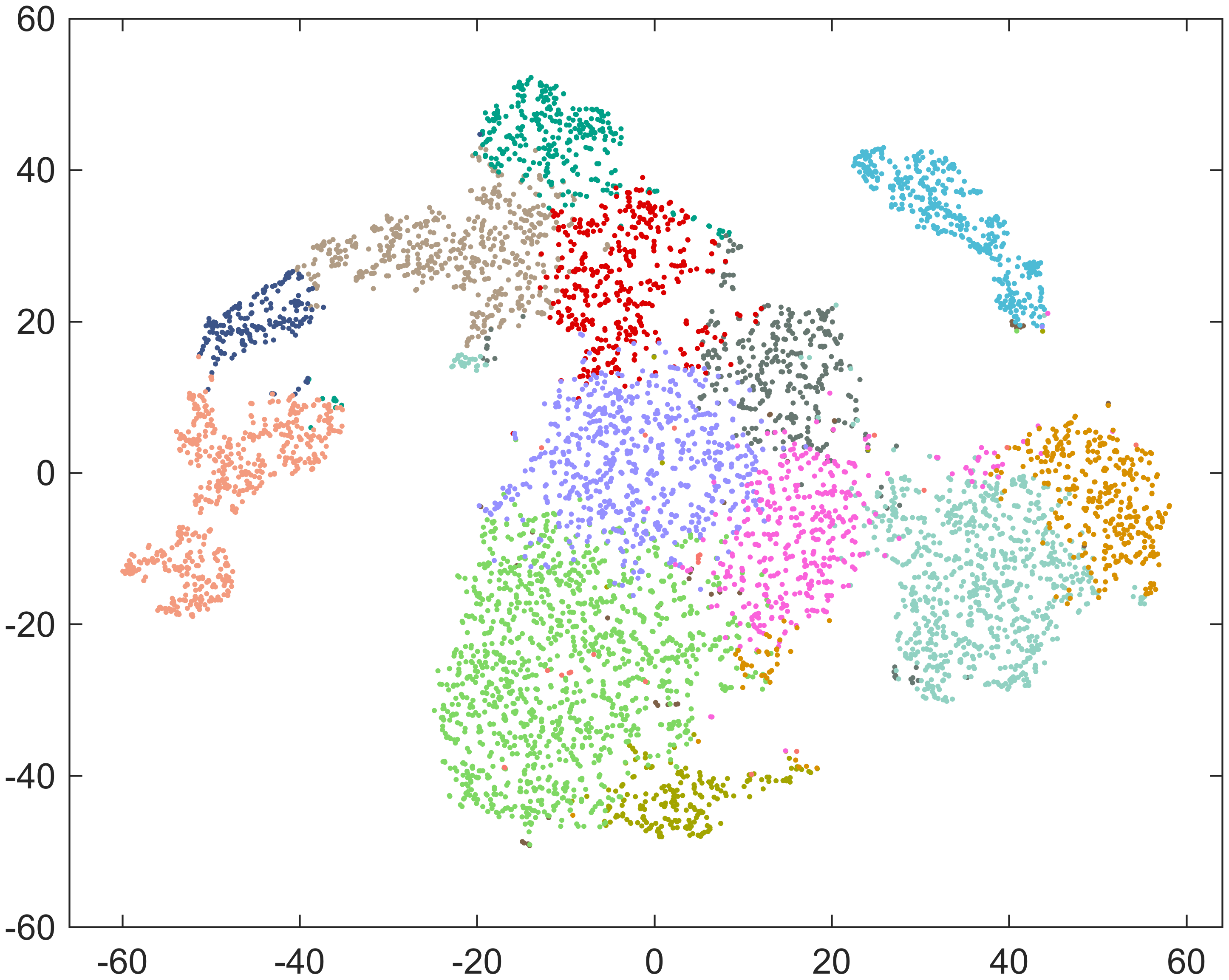}}
    \subfigure[]{\includegraphics[width=0.3\textwidth]{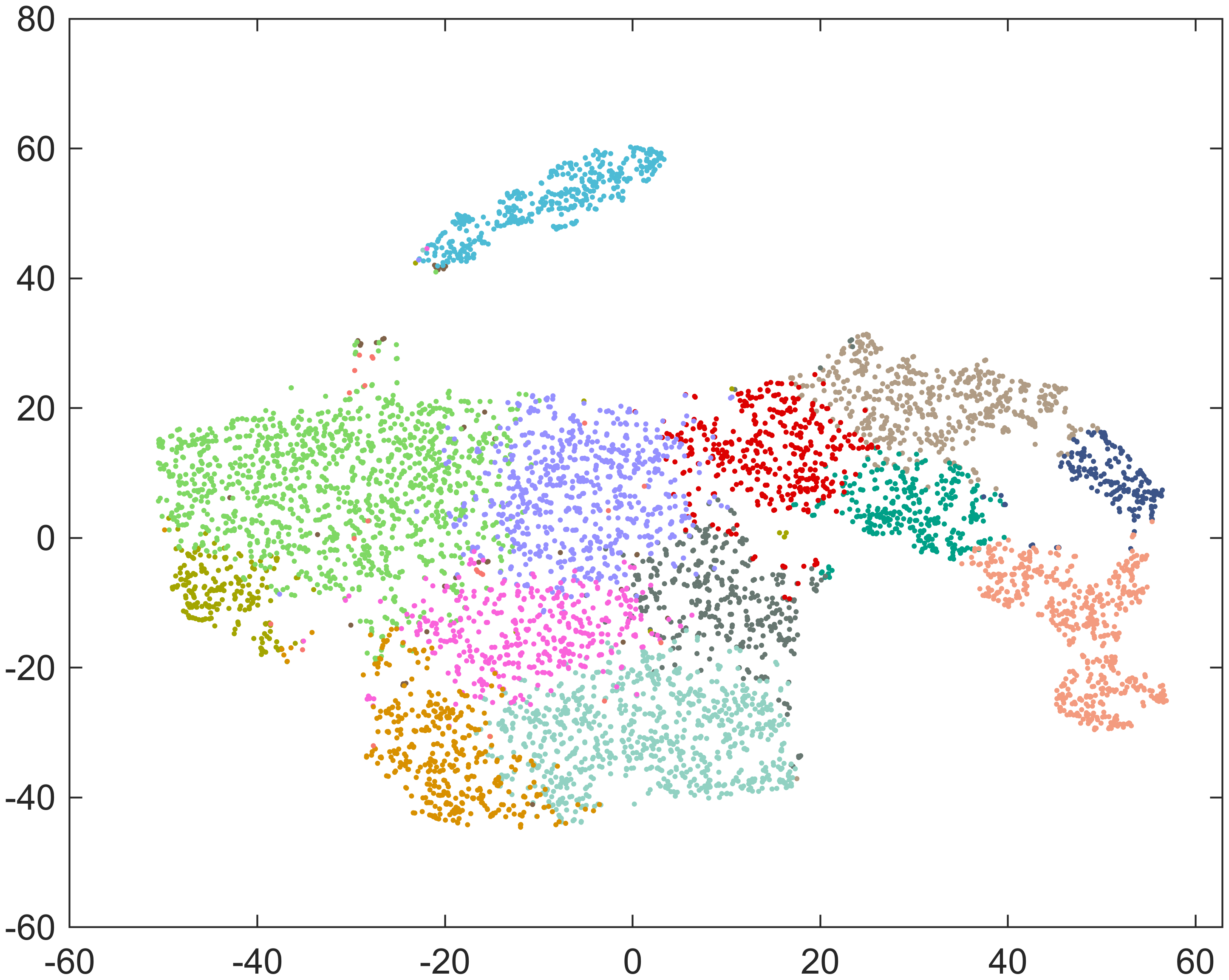}}
    \subfigure[]{\includegraphics[width=0.3\textwidth]{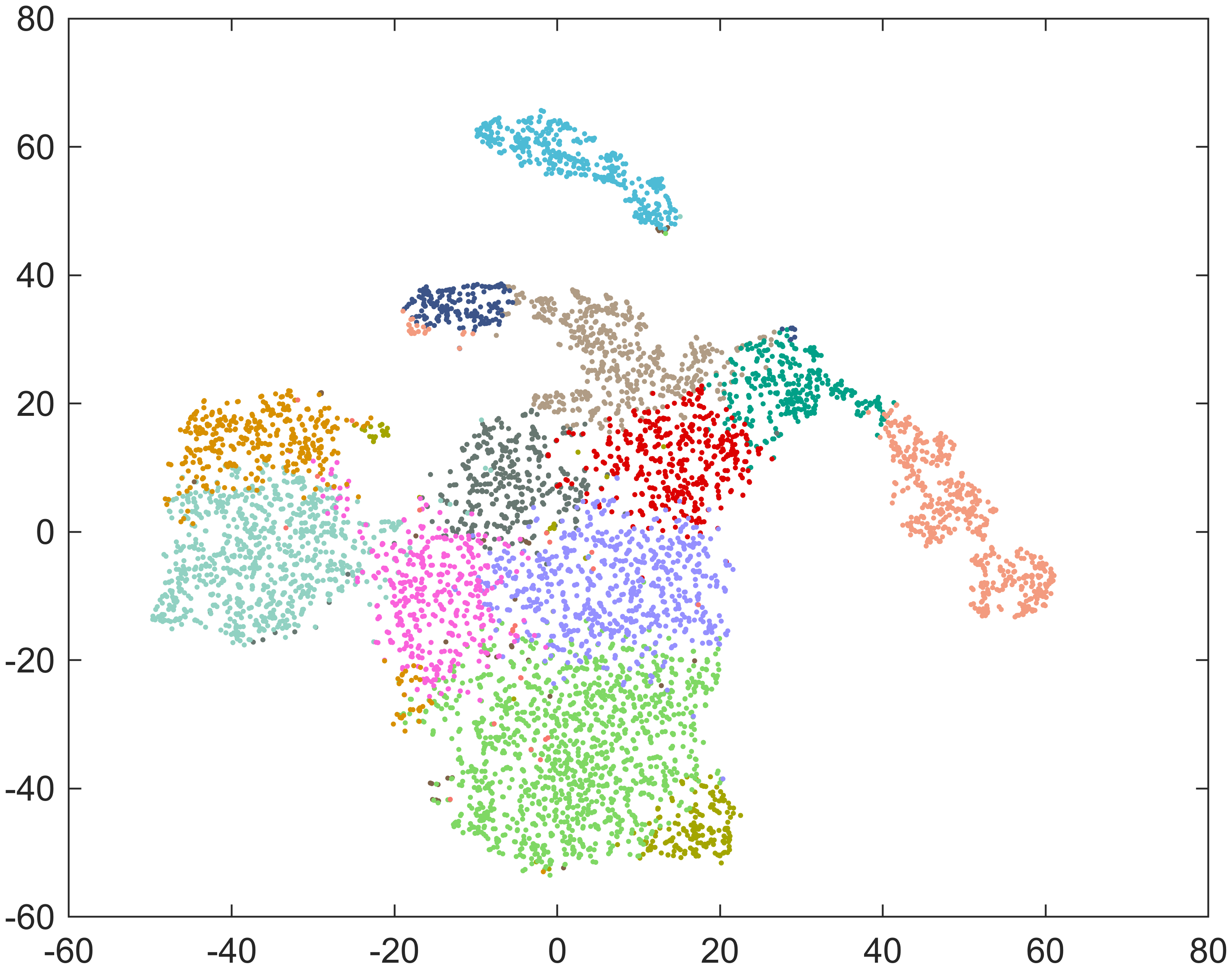}}
    \caption{Scatter plots for the two-dimensional t-distributed stochastic neighbour embedding of the result from the principal component analysis with $d=1,\dots,12$ in (a)--(l).}
\end{figure}
\begin{figure}[htbp]
    \centering
    \subfigure[]{\includegraphics[width=0.3\textwidth]{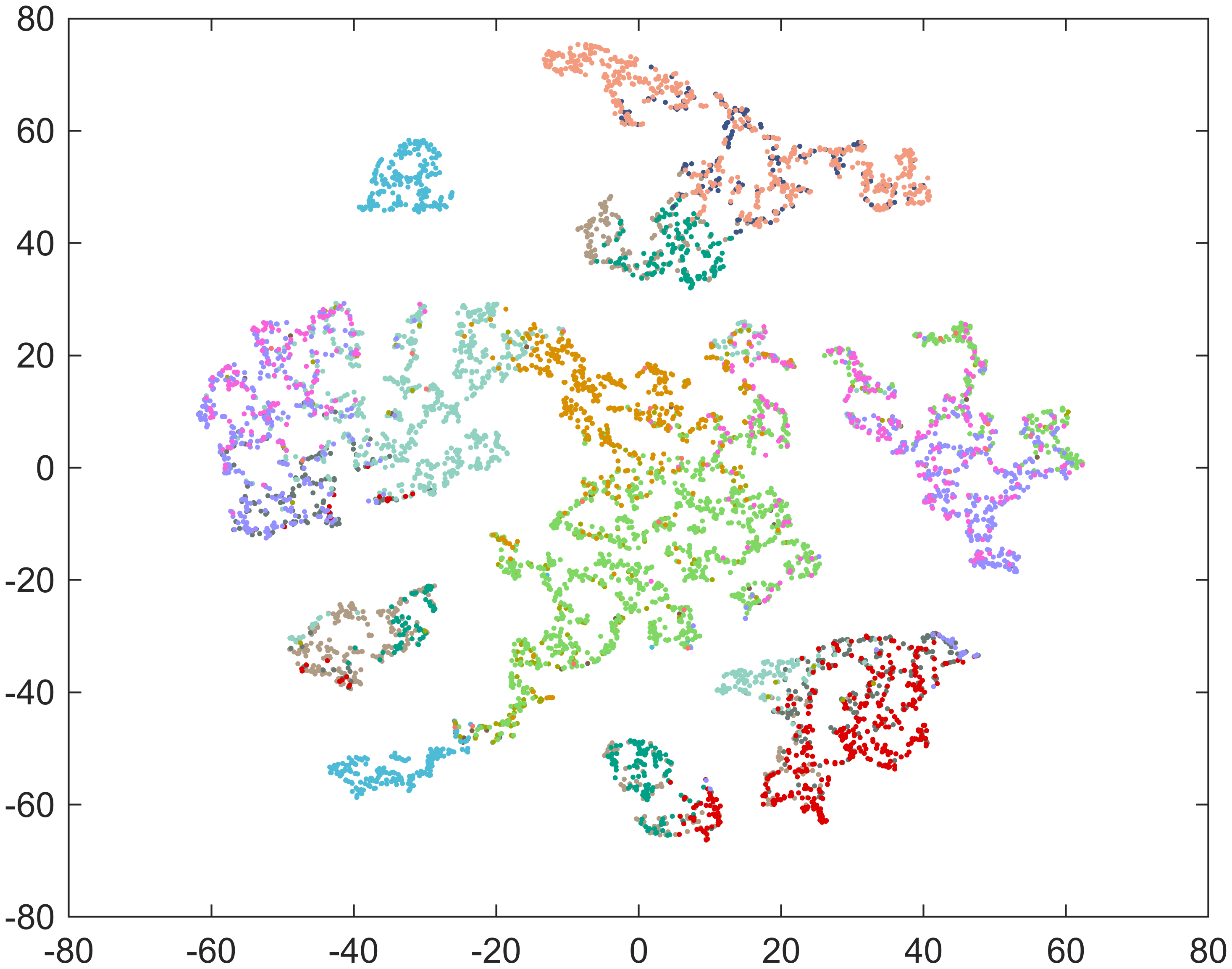}}
    \subfigure[]{\includegraphics[width=0.3\textwidth]{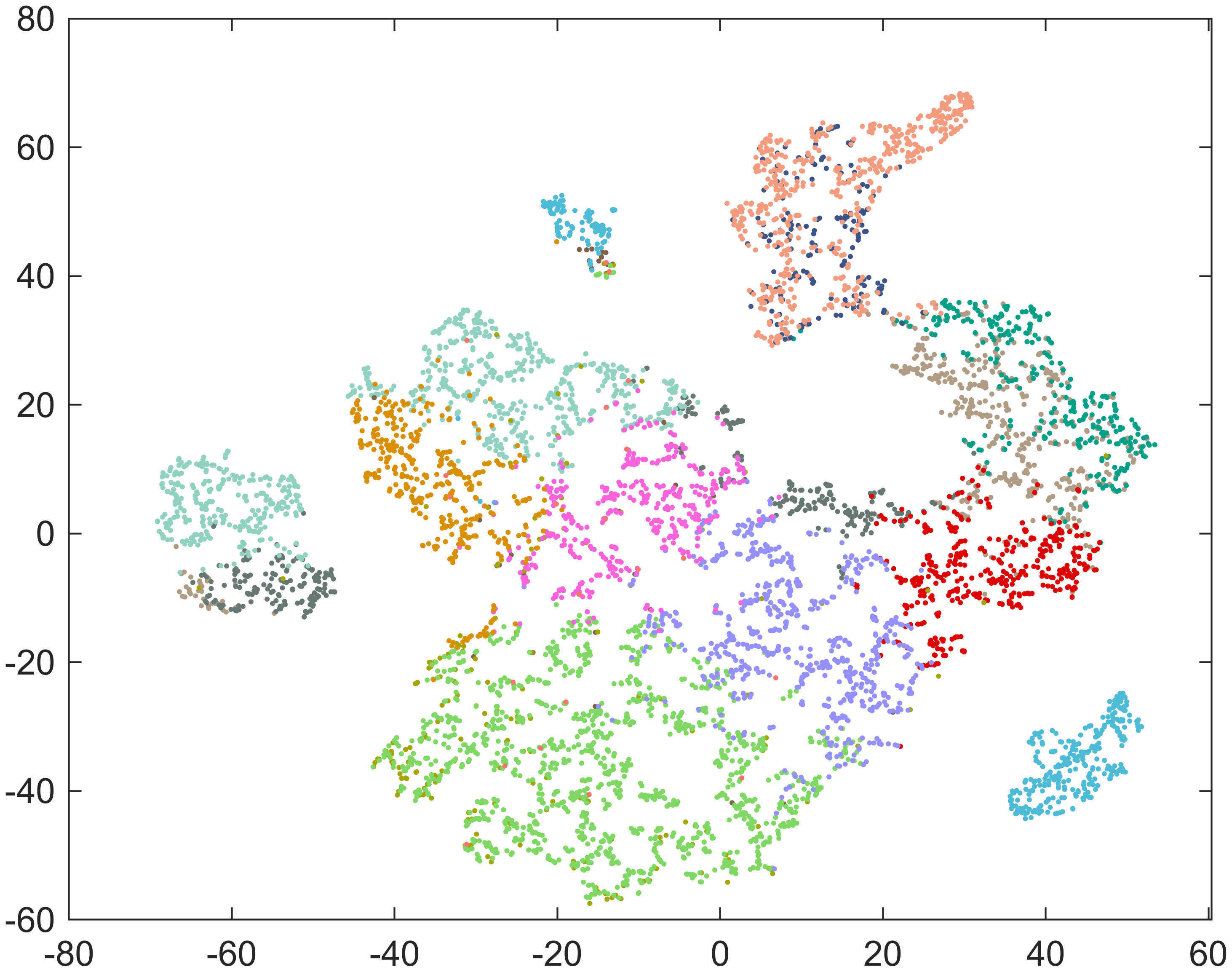}}
    \subfigure[]{\includegraphics[width=0.3\textwidth]{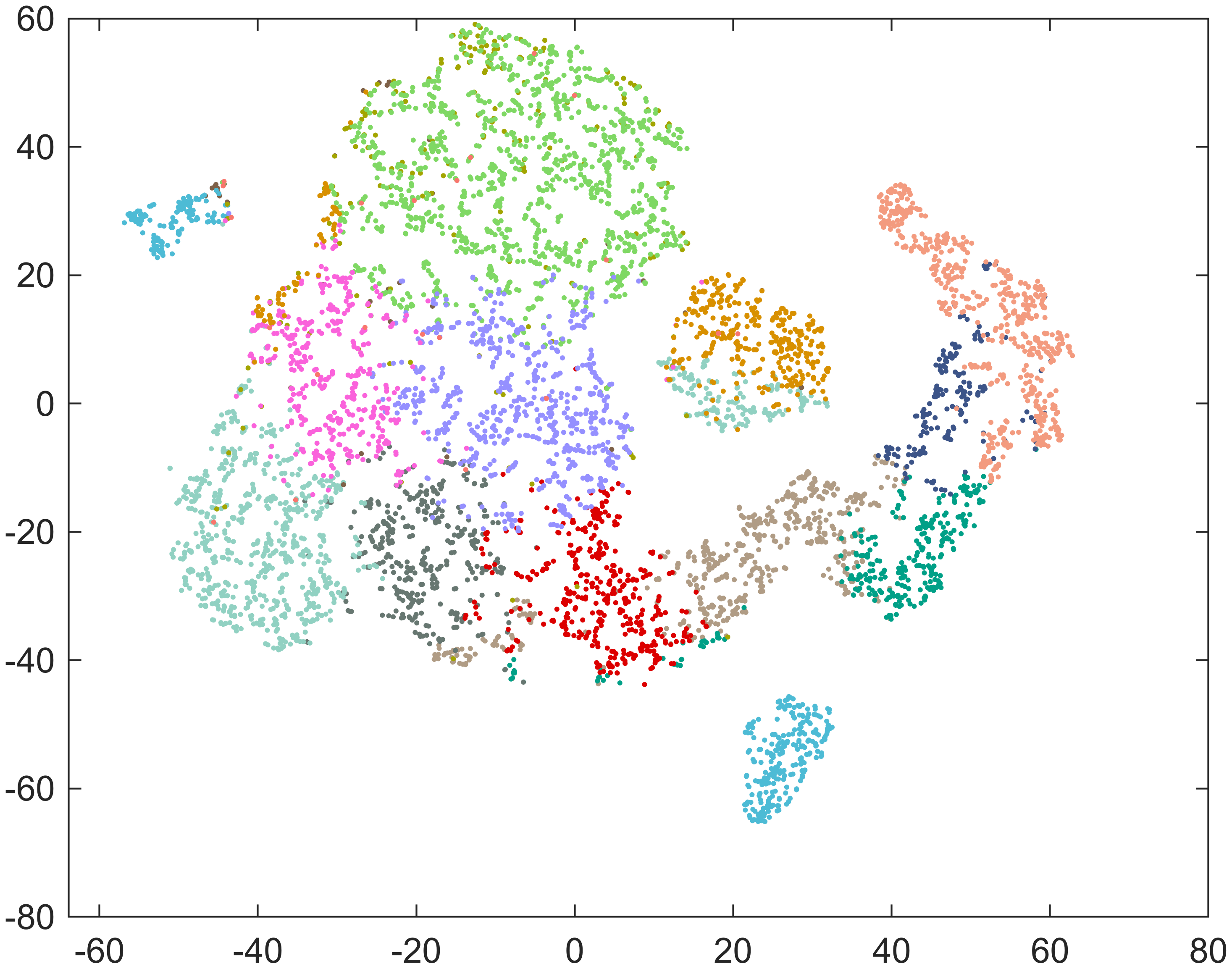}}
    \subfigure[]{\includegraphics[width=0.3\textwidth]{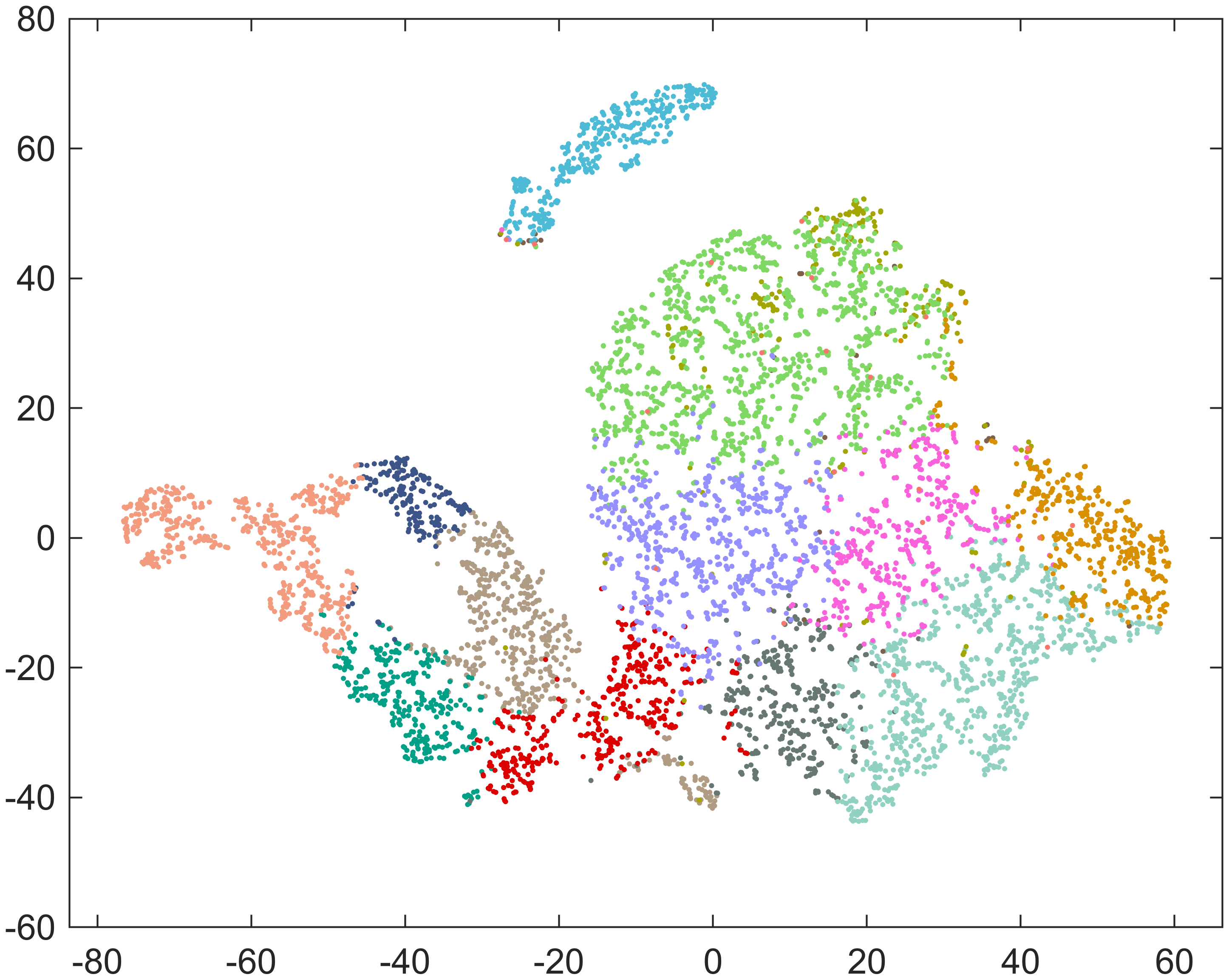}}
    \subfigure[]{\includegraphics[width=0.3\textwidth]{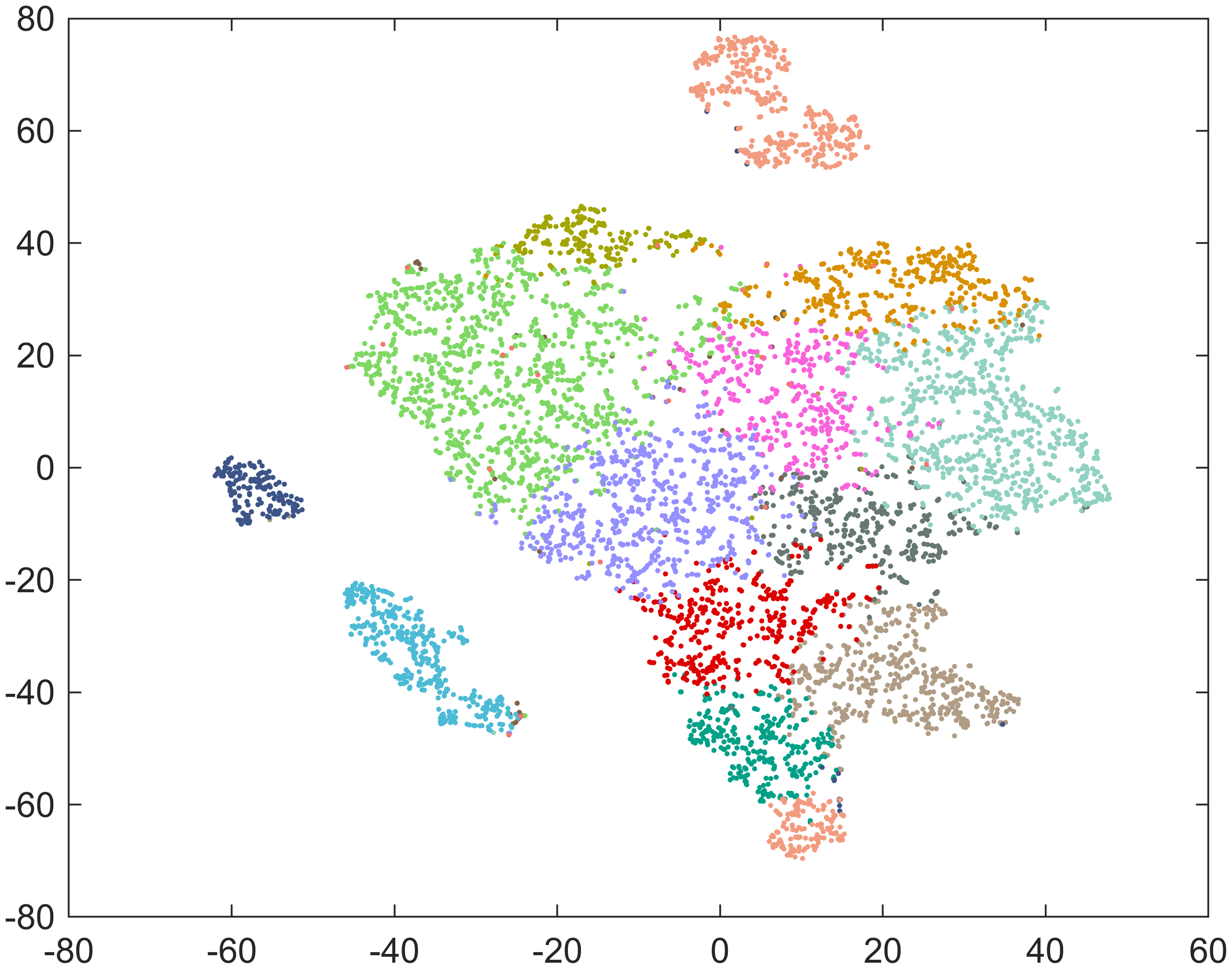}}
    \subfigure[]{\includegraphics[width=0.3\textwidth]{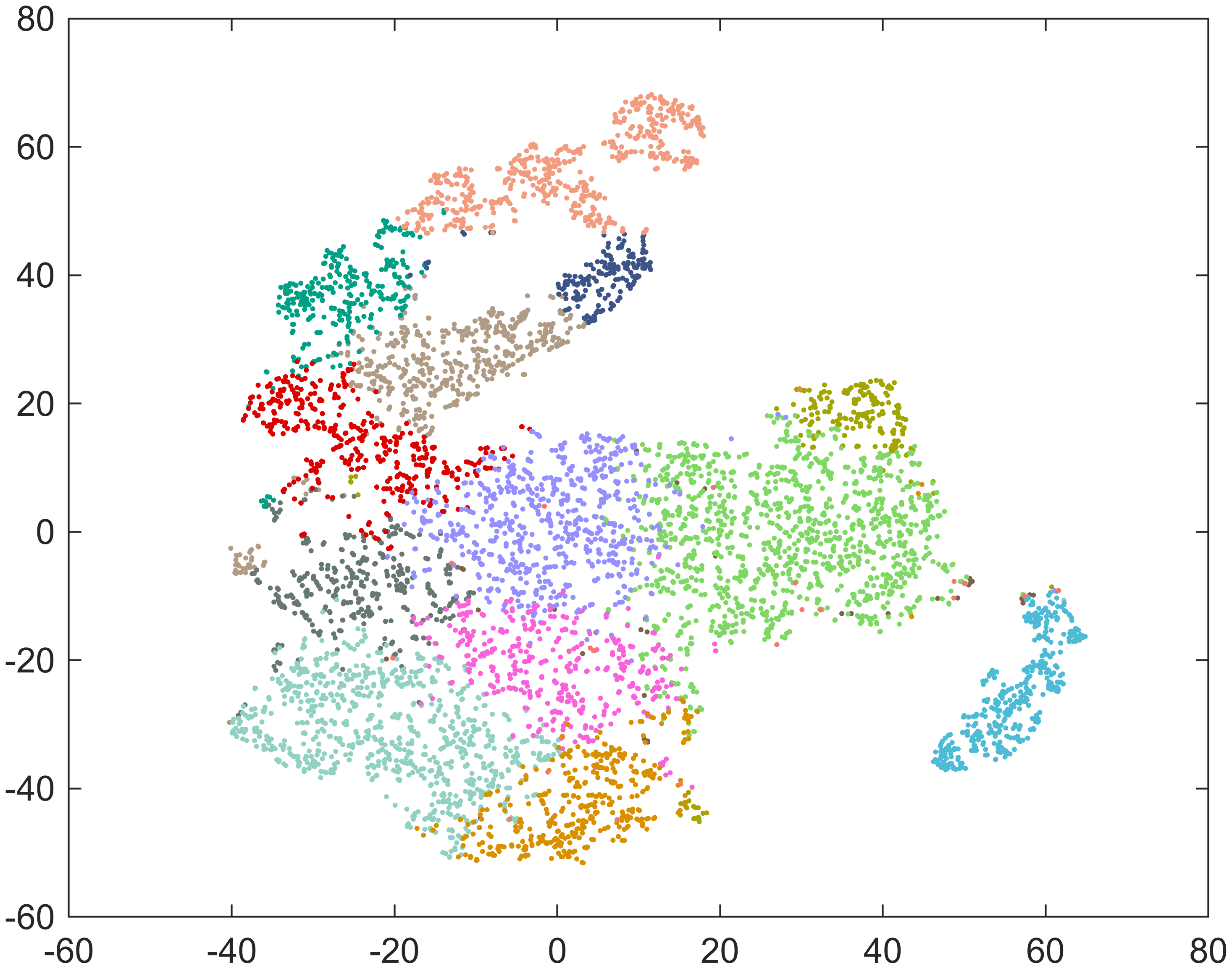}}
    \subfigure[]{\includegraphics[width=0.3\textwidth]{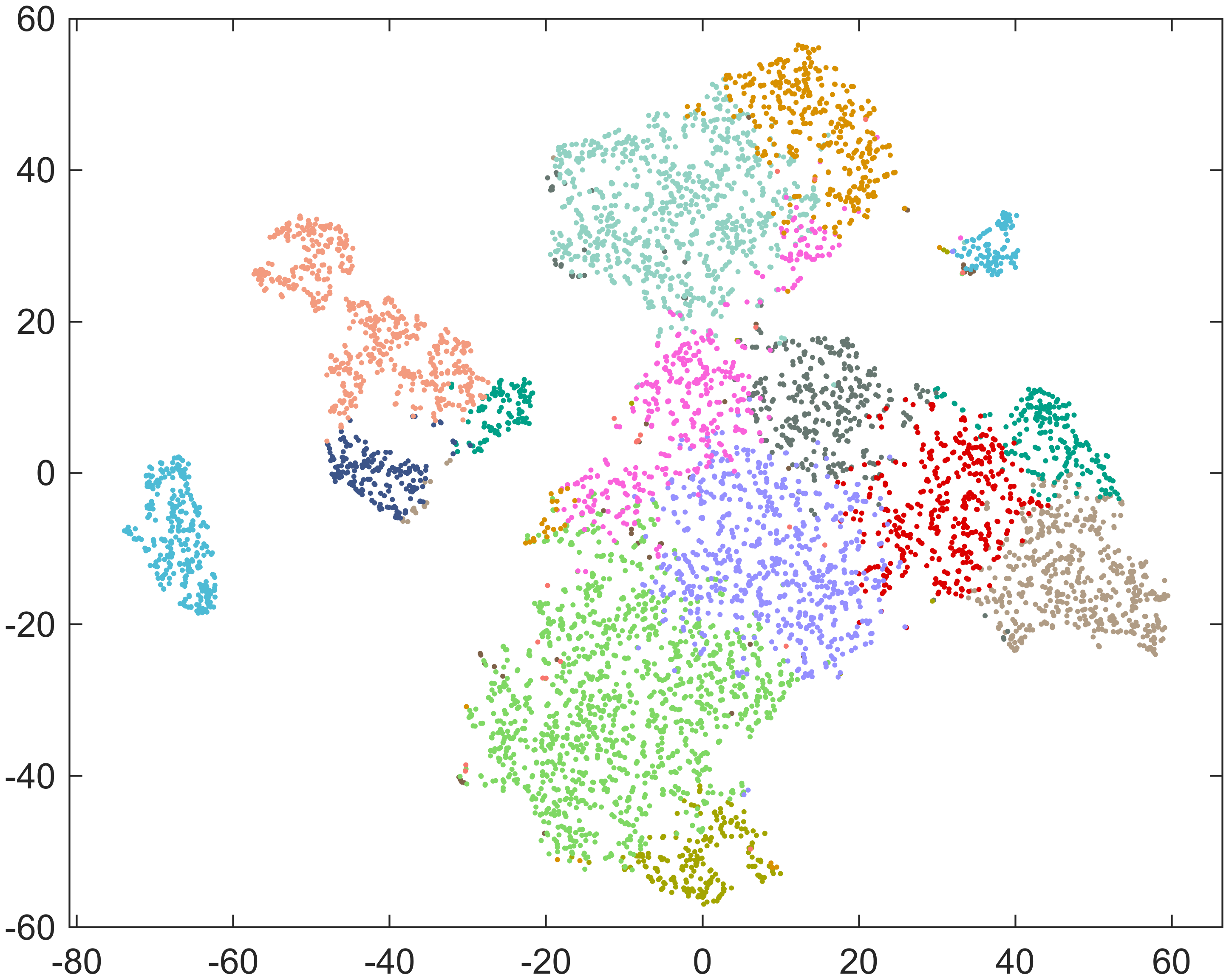}}
    \subfigure[]{\includegraphics[width=0.3\textwidth]{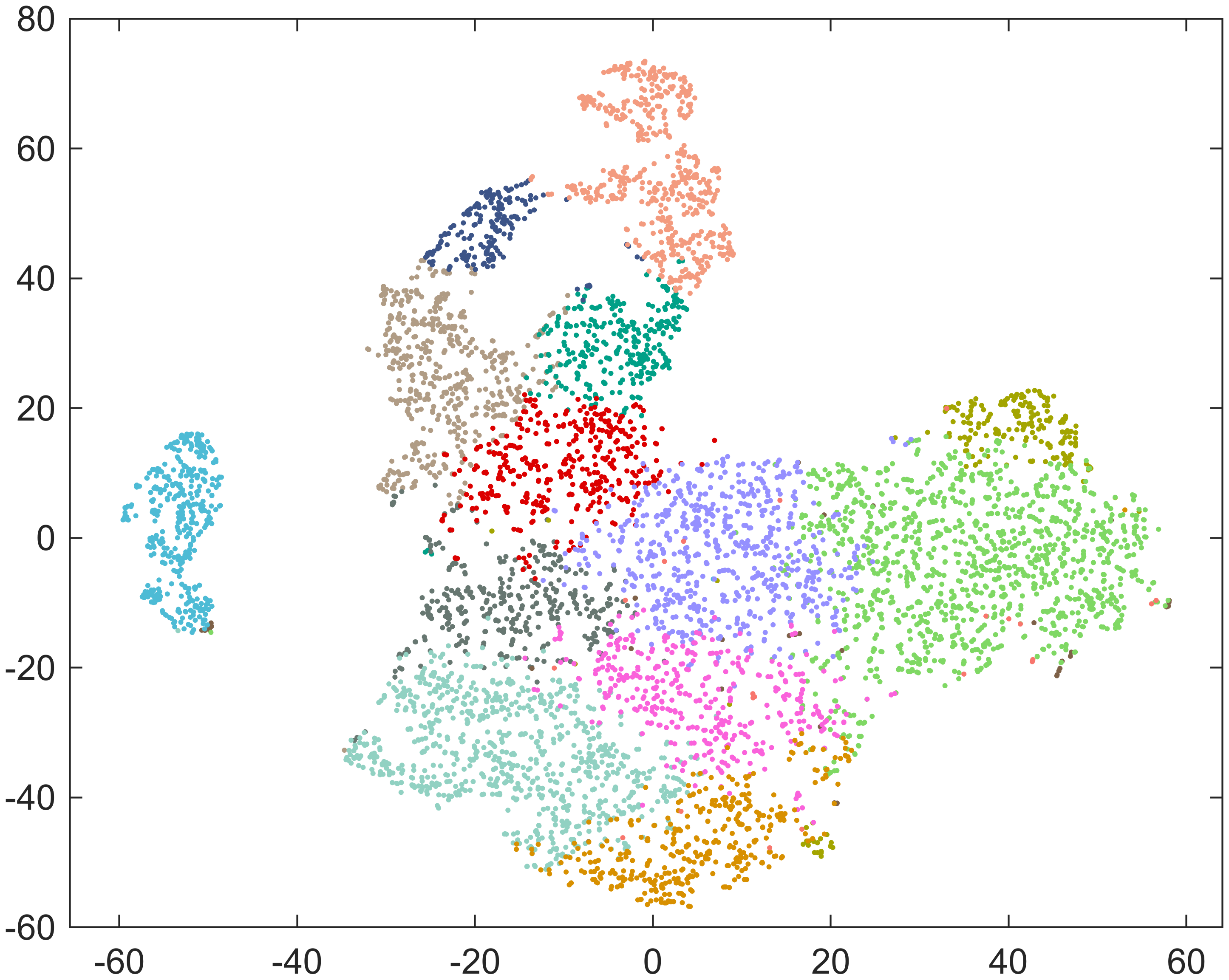}}
    \subfigure[]{\includegraphics[width=0.3\textwidth]{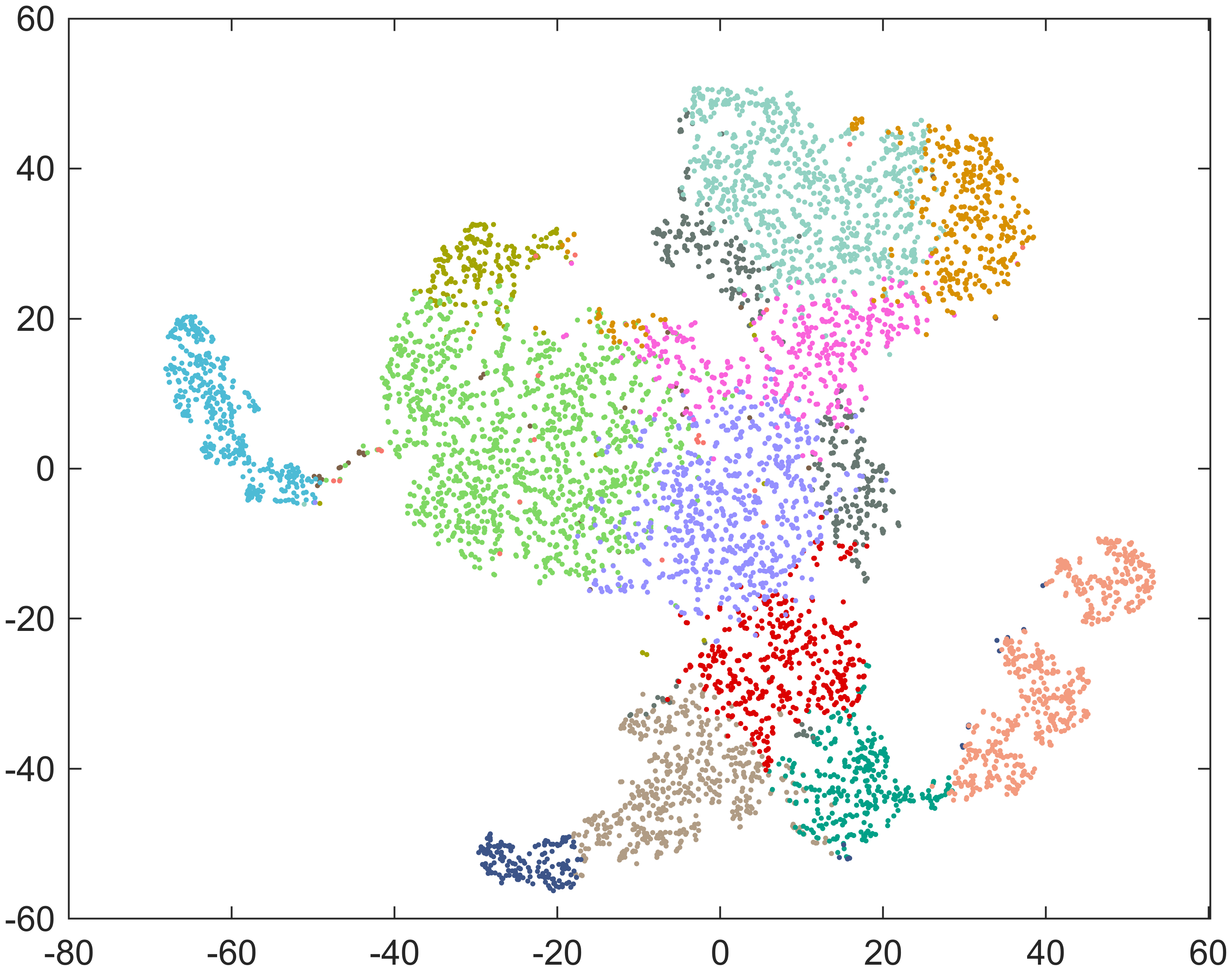}}
    \subfigure[]{\includegraphics[width=0.3\textwidth]{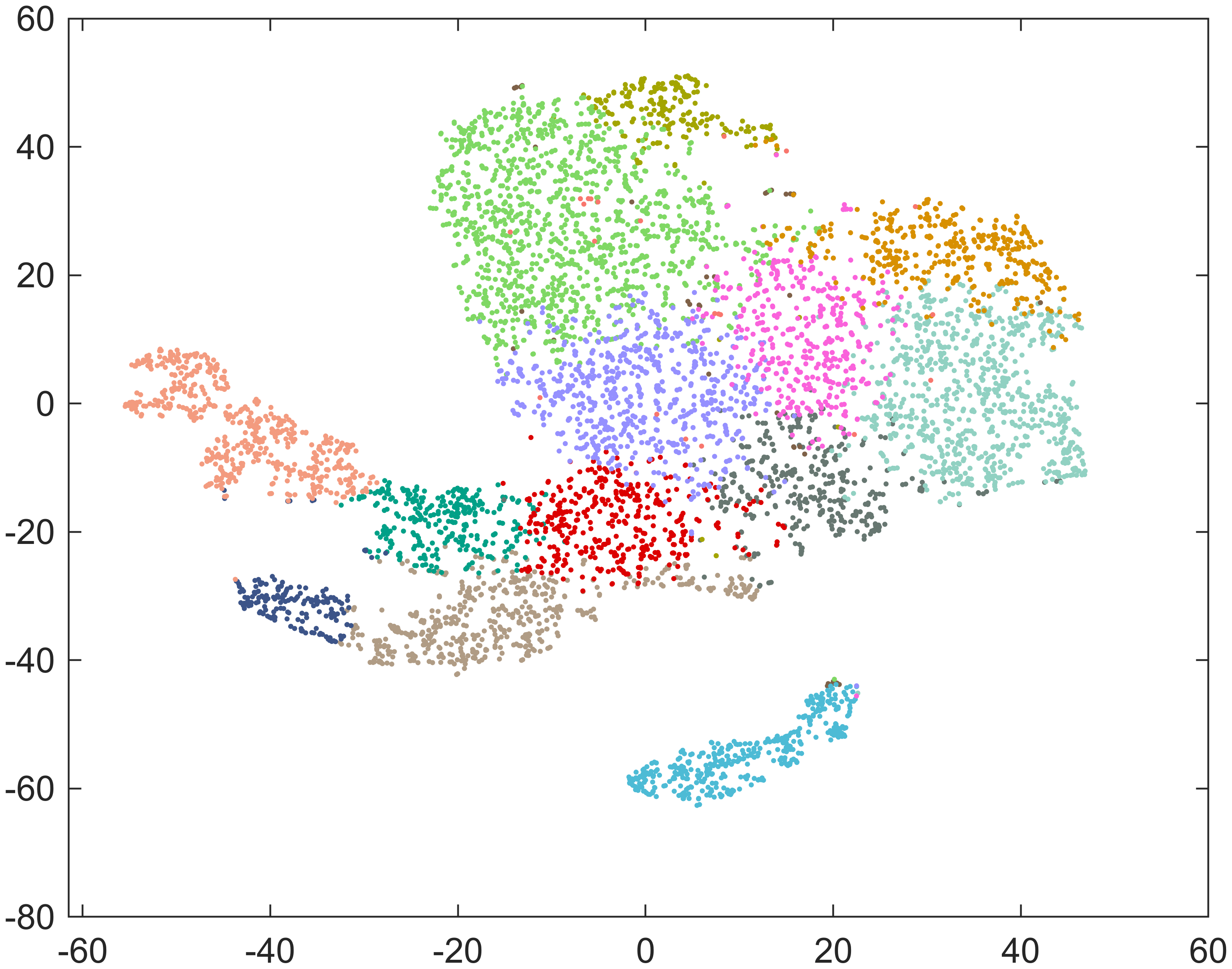}}
    \subfigure[]{\includegraphics[width=0.3\textwidth]{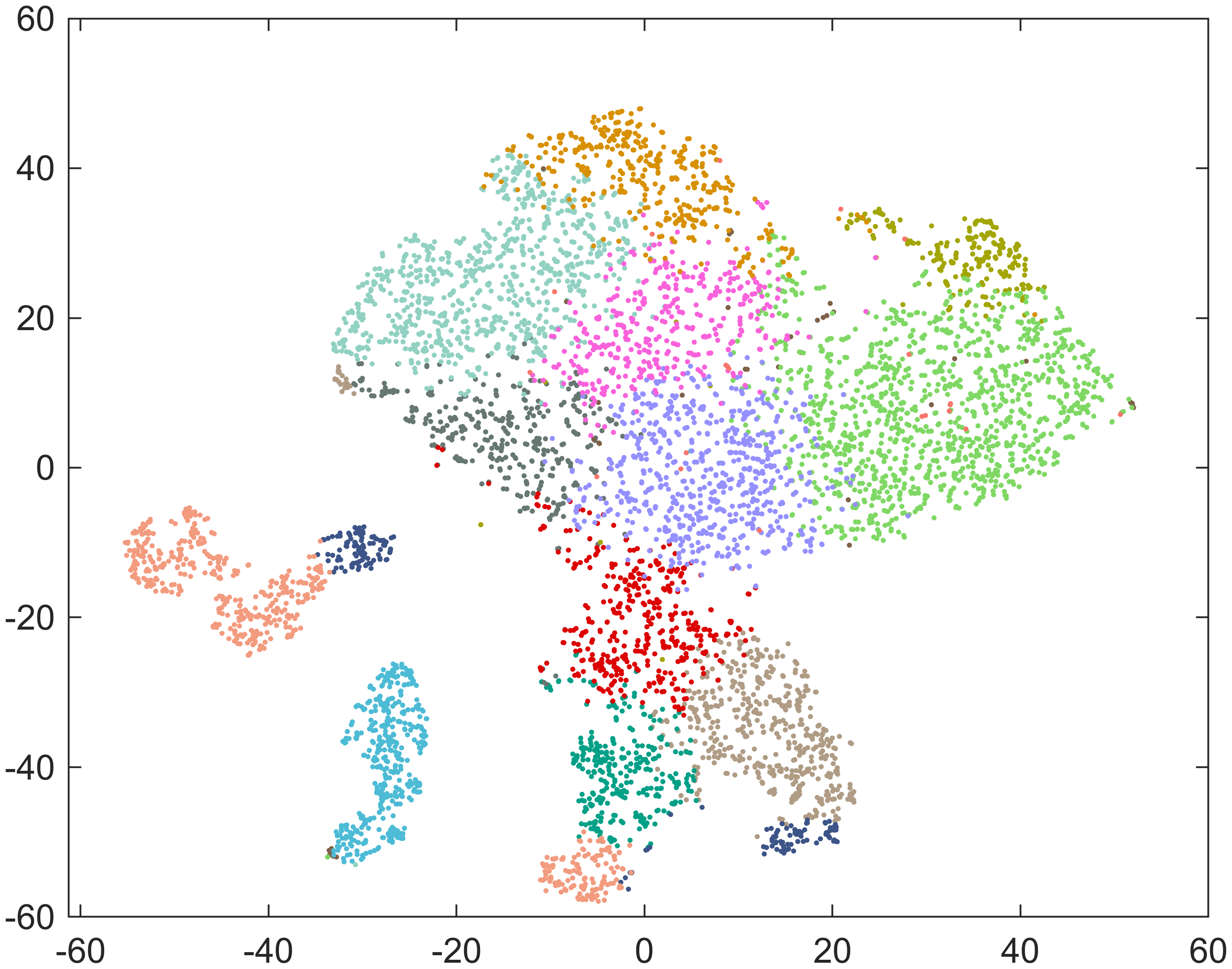}}
    \subfigure[]{\includegraphics[width=0.3\textwidth]{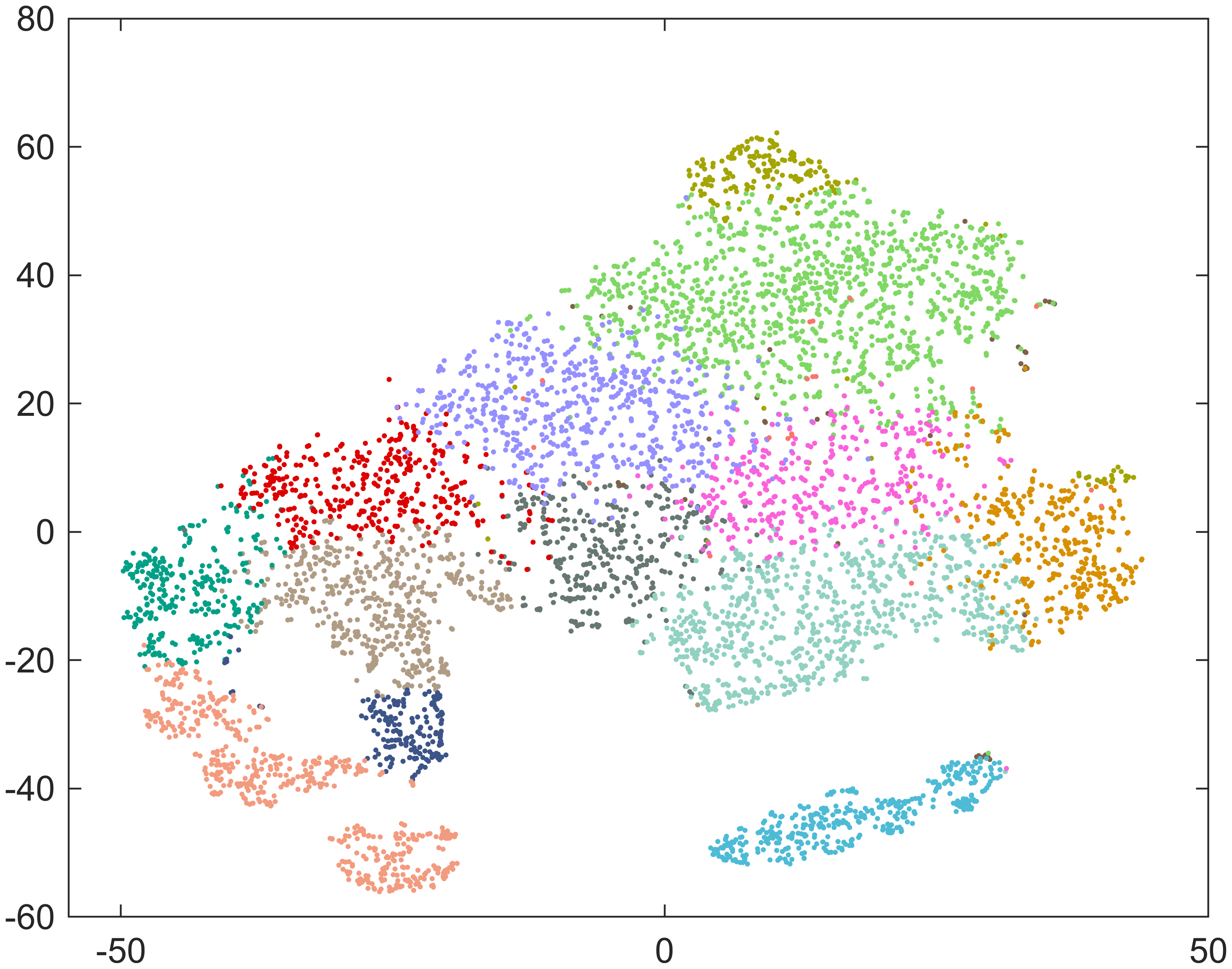}}
    \caption{Scatter plots for the two-dimensional t-distributed stochastic neighbour embedding of the result from the principal nested spheres with $d=1,\dots,12$ in (a)--(l).}
\end{figure}

\subsection{Algorithms for fitting the principal nested submanifolds}
\begin{algorithm}[H]
    \caption{An algorithm for the principal nested submanifolds}\label{alg:pnsb}
    \begin{algorithmic}
    \Require The noisy data set $\cY=\{y_1,\dots,y_n\} \subset \bR^D$,
    the radius parameter $r$, embedding set $\mathcal{E}\subset\bR^{D^\prime}$
    \State $\cX_D=\{x_1,\dots,x_n\}\gets$ embedding of $\cY$ in $\mathcal{E}$
    \For{$x_i$ in $\cX_D$}
        \State Calculate $\{\widehat\Pi_{r,i,1},\dots,\widehat\Pi_{r,i,D^\prime}\}$ with $(x_i,\ \cX_D,\ r)$
    \EndFor
    \For{$d = D-1 : 1$}
        \State $\cX_d \gets$ projections of $\cX_{d+1}$ on $\cM_d$ with $\bigl(\cX_D,\ r,\ \{\widehat\Pi_{r,i,1},\dots,\widehat\Pi_{r,i,D^\prime-d}\}_{i=1}^n\bigl)$
        \State $\cY_d \gets$ map $\cX_d$ back to $\bR^{D}$
    \EndFor
    \Ensure $\{\cY_1,\dots,\cY_{D-1}\}$
    \end{algorithmic}
\end{algorithm}

\begin{algorithm}[H]
    \caption{Project a point $z$ onto $\cM_d$}\label{alg:proj}
    \begin{algorithmic}
    \Require The point to project $z$, the noisy data set $\cX_D=\{x_1,\dots,x_n\}$, 
    the radius parameter $r$, the collection of projection matrices at each point in $\cX_D$, $\{\widehat\Pi_{r,i,1},\dots,\widehat\Pi_{r,i,D^\prime-d}\}_{i=1}^n$, the embedding set $\mathcal{E}$ 
    \For{$\text{iter} = 1$: MaxIter}
        \State $\alpha_i(z) \gets$ calculate weights defined in \eqref{eq:alpha} with $(z,\ \cX_D,\ r)$
        \State $\Delta_z \gets \sum\alpha_i(z)x_i - z$
        \State $F_z \gets 0$
        \For{$t = 1$ to $D^\prime-d$}
            \State $\widehat\Pi_z^{(t)} \gets \dsp_1\left(\sum\alpha_i(z)\Pi_i^{(t)}\right)$
            \State $F_z \gets F_z + \widehat\Pi_{r,i,t} \Delta_z$
        \EndFor
        \If{$\| F_z\| < \epsilon$}
            \State \textbf{break}
        \EndIf
        \State $z \gets z + F_z$
    \EndFor
    \State $z\gets \argmin_{z^\prime \in \mathcal{E}} \|z - z^\prime\|$
    \Ensure $z$
    \end{algorithmic}
\end{algorithm}

\bibliographystyle{unsrtnat}
\bibliography{_references}

@article{PNS,
  title={Analysis of principal nested spheres},
  author={Jung, Sungkyu and Dryden, Ian L and Marron, James Stephen},
  journal={Biometrika},
  volume={99},
  number={3},
  pages={551--568},
  year={2012},
  publisher={Oxford University Press}
}

@article{infomap,
  title={Maps of random walks on complex networks reveal community structure},
  author={Rosvall, Martin and Bergstrom, Carl T},
  journal={Proceedings of the National Academy of Sciences},
  volume={105},
  number={4},
  pages={1118--1123},
  year={2008},
  publisher={National Acad Sciences}
}

@article{tpca-aoas,
author = {Benjamin Eltzner and Stephan Huckemann and Kanti V. Mardia},
title = {{Torus principal component analysis with applications to RNA structure}},
volume = {12},
journal = {The Annals of Applied Statistics},
number = {2},
publisher = {Institute of Mathematical Statistics},
pages = {1332--1359},
year = {2018}
}

@article{yao2019manifold,
  author  = {Zhigang Yao and Yuqing Xia},
  title   = {Manifold Fitting under Unbounded Noise},
  journal = {Journal of Machine Learning Research},
  year    = {2025},
  volume  = {26},
  number  = {45},
  pages   = {1--55}
}

@article{yao2023manifold,
  title={Manifold Fitting},
  author={Yao, Zhigang and Su, Jiaji and Li, Bingjie and Yau, Shing-Tung},
  journal={arXiv preprint arXiv:2304.07680},
  year={2023}
}

@inproceedings{fefferman2018fitting,
  title={Fitting a putative manifold to noisy data},
  author={Fefferman, Charles and Ivanov, Sergei and Kurylev, Yaroslav and Lassas, Matti and Narayanan, Hariharan},
  booktitle={Conference on Learning Theory},
  pages={688--720},
  year={2018},
  organization={PMLR}
}

@article{zhang2004LTSA,
  title={Principal manifolds and nonlinear dimensionality reduction via tangent space alignment},
  author={Zhang, Zhenyue and Zha, Hongyuan},
  journal={SIAM Journal on Scientific Computing},
  volume={26},
  number={1},
  pages={313--338},
  year={2004},
  publisher={SIAM}
}

@article{tsne,
  author  = {Laurens van der Maaten and Geoffrey Hinton},
  title   = {Visualizing Data using t-SNE},
  journal = {Journal of Machine Learning Research},
  year    = {2008},
  volume  = {9},
  number  = {86},
  pages   = {2579--2605}
}

@article{isomap,
  title={A global geometric framework for nonlinear dimensionality reduction},
  author={Tenenbaum, Joshua B and Silva, Vin de and Langford, John C},
  journal={Science},
  volume={290},
  number={5500},
  pages={2319--2323},
  year={2000},
  publisher={American Association for the Advancement of Science}
}

@article{lle,
  title={Nonlinear dimensionality reduction by locally linear embedding},
  author={Roweis, Sam T and Saul, Lawrence K},
  journal={Science},
  volume={290},
  number={5500},
  pages={2323--2326},
  year={2000},
  publisher={American Association for the Advancement of Science}
}

@article{mcinnes2018umap,
  title={Umap: Uniform manifold approximation and projection for dimension reduction},
  author={McInnes, Leland and Healy, John and Melville, James},
  journal={arXiv preprint arXiv:1802.03426},
  year={2018}
}

@article{panaretos2014principal,
  title={Principal flows},
  author={Panaretos, Victor M and Pham, Tung and Yao, Zhigang},
  journal={Journal of the American Statistical Association},
  volume={109},
  number={505},
  pages={424--436},
  year={2014},
  publisher={Taylor \& Francis}
}

@article{fefferman2021fitting,
  title={Fitting a manifold of large reach to noisy data},
  author={Fefferman, Charles and Ivanov, Sergei and Lassas, Matti and Narayanan, Hariharan},
  journal={arXiv:1910.05084},
  year={2021}
}

@article{yao2023submanifold,
  title={Hunting principal sub-manifolds: new theories and methods},
  author={Yao, Zhigang and Xia, Yuqing and Tran, Do Van and Zhang, Zhenyue},
  journal={Technical Report},
  year={2023}
}

@article{pearson1901liii,
  title={LIII. On lines and planes of closest fit to systems of points in space},
  author={Pearson, Karl},
  journal={The London, Edinburgh, and Dublin Philosophical Magazine and Journal of Science},
  volume={2},
  number={11},
  pages={559--572},
  year={1901},
  publisher={Taylor \& Francis}
}

@article{yao2024principalsubmanifolds,
  title={Principal Sub-manifolds}, 
  author={Zhigang Yao and Benjamin Eltzner and Tung Pham},
  journal={Statistica Sinica},
  year={2026},
  volume={36},
  number={3},
  pages={1--41}
}

@article{hotelling1936simplified,
  title={Simplified calculation of principal components},
  author={Hotelling, Harold},
  journal={Psychometrika},
  volume={1},
  number={1},
  pages={27--35},
  year={1936},
  publisher={Springer-Verlag New York}
}

@article{hotelling1933analysis,
  title={Analysis of a complex of statistical variables into principal components.},
  author={Hotelling, Harold},
  journal={Journal of Educational Psychology},
  volume={24},
  number={6},
  pages={417--441},
  year={1933},
  publisher={Warwick \& York}
}

@article{girshick1936principal,
  title={Principal components},
  author={Girshick, Meyer Abraham},
  journal={Journal of the American Statistical Association},
  volume={31},
  number={195},
  pages={519--528},
  year={1936},
  publisher={Taylor \& Francis}
}

@article{girshick1939sampling,
  title={On the sampling theory of roots of determinantal equations},
  author={Girshick, Meyer Abraham},
  journal={The Annals of Mathematical Statistics},
  volume={10},
  number={3},
  pages={203--224},
  year={1939},
  publisher={JSTOR}
}

@article{anderson1963asymptotic,
  title={Asymptotic theory for principal component analysis},
  author={Anderson, Theodore Wilbur},
  journal={The Annals of Mathematical Statistics},
  volume={34},
  number={1},
  pages={122--148},
  year={1963},
  publisher={Institute of Mathematical Statistics}
}

@article{donnell1994analysis,
  title={Analysis of additive dependencies and concurvities using smallest additive principal components},
  author={Donnell, Deborah J and Buja, Andreas and Stuetzle, Werner},
  journal={The Annals of Statistics},
  volume={22},
  number={4},
  pages={1635--1668},
  year={1994},
  publisher={Institute of Mathematical Statistics}
}

@article{scRNAData,
  title={A single-cell survey of the small intestinal epithelium},
  author={Haber, Adam L and Biton, Moshe and Rogel, Noga and Herbst, Rebecca H and Shekhar, Karthik and Smillie, Christopher and Burgin, Grace and Delorey, Toni M and Howitt, Michael R and Katz, Yarden and others},
  journal={Nature},
  volume={551},
  number={7680},
  pages={333--339},
  year={2017},
  publisher={Nature Publishing Group UK London}
}

@article{federer1959curvature,
  title={Curvature measures},
  author={Federer, Herbert},
  journal={Transactions of the American Mathematical Society},
  volume={93},
  number={3},
  pages={418--491},
  year={1959},
  publisher={JSTOR}
}

@article{niyogi2008finding,
  title={Finding the homology of submanifolds with high confidence from random samples},
  author={Niyogi, Partha and Smale, Stephen and Weinberger, Shmuel},
  journal={Discrete \& Computational Geometry},
  volume={39},
  number={1},
  pages={419--441},
  year={2008},
  publisher={Springer}
}

@book{lee2012smooth,
  title={Smooth manifolds},
  author={Lee, John M},
  year={2012},
  publisher={Springer}
}

@article{huckemann2006principal,
  title={Principal component analysis for {R}iemannian manifolds, with an application to triangular shape spaces},
  author={Huckemann, Stephan and Ziezold, Herbert},
  journal={Advances in Applied Probability},
  volume={38},
  number={2},
  pages={299--319},
  year={2006},
  publisher={Cambridge University Press}
}

@article{huckemann2010intrinsic,
  title={Intrinsic shape analysis: Geodesic {PCA} for {R}iemannian manifolds modulo isometric Lie group actions},
  author={Huckemann, Stephan and Hotz, Thomas and Munk, Axel},
  journal={Statistica Sinica},
  volume={20},
  number={1},
  pages={1--58},
  year={2010},
  publisher={JSTOR}
}

@article{10.1214/17-AOS1636,
author = {Xavier Pennec},
title = {{Barycentric subspace analysis on manifolds}},
volume = {46},
journal = {The Annals of Statistics},
number = {6A},
publisher = {Institute of Mathematical Statistics},
pages = {2711 -- 2746},
keywords = {Barycenter, flag of subspaces, Fréchet mean, Manifold, PCA},
year = {2018}
}

@article{10.1214/17-AOS1609,
author = {Stephan Huckemann and Benjamin Eltzner},
title = {{Backward nested descriptors asymptotics with inference on stem cell differentiation}},
volume = {46},
journal = {The Annals of Statistics},
number = {5},
publisher = {Institute of Mathematical Statistics},
pages = {1994 -- 2019},
year = {2018}
}
\end{document}